\numberwithin{equation}{section} 
\newtheorem{assumption}{Assumption}[section]
\newtheorem{theorem}{Theorem}[section]
\newtheorem{example}{Example}
\newtheorem{lemma}[theorem]{Lemma}
\newtheorem{corollary}[theorem]{Corollary}
\newtheorem{remark}{Remark}
\newcommand{\plim}{\operatorname*{plim}}
\newcommand{\diag}{{\rm diag}}
\newcommand{\argmin}{\operatorname*{argmin}}
\newcommand{\argmax}{\operatorname*{argmax}}
\newcommand{\E}{\mathbb{E}_{\phi}}
\newcommand{\EE}{\overline{\mathbb{E}}}
\newcommand{\Ep}{\mathbb{E}}
\newcommand{\T}{\mathcal{T}}
\newcommand{\N}{\mathcal{N}}
\def\ft#1#2{{\textstyle {\frac{#1}{#2}} }}
\begin{document}

\title{Individual and Time Effects
       in  Nonlinear Panel Models with Large $N$, $T$\thanks{
         \scriptsize
      We would like to thank  the editor Jianqing Fan, an associate editor, two anonymous referees, Richard Blundell, Ben Boury, Mingli Chen, Geert Dhaene, Andreas Dzemski, Antonio Galvao, 
       Jinyong Hahn, Bo Honor\'{e}, Koen Jochmans, Hiroaki Kaido,
        Dennis Kristensen,  Guido Kuersteiner, Kevin Lang, Roger Moon, Fabien Postel-Vinay, Zhongjun Qu, Bernard Salani{\'e}, Sami Stouli,
       Francis Vella,   Fa Wang, and Jeff Wooldridge for very useful comments that helped improve the paper.
       We also benefitted from discussions with
       seminar participants at Berkeley, Boston University, Brown, Cambridge, Columbia, Georgetown, Iowa, John Hopkins, Maryland, Michigan State, Princeton,  Stanford, UCL, and USC,
               and by conference participants
       at the 16th and 19th International Conference on Panel Data, as well as the 
       2013 North American Summer Meetings of the Econometric Society.
       Financial support from the National Science Foundation and the Economic and Social Research Council through the ESRC Centre for Microdata Methods and Practice grant RES-589-28-0001 is gratefully
       acknowledged.
       }}

\author{\setcounter{footnote}{2} Iv\'{a}n Fern\'{a}ndez-Val\footnote{
     Department of Economics, Boston University,
     270 Bay State Road,
     Boston, MA 02215-1403, USA.
     Email:~{\tt ivanf@bu.edu}
      }
   \and Martin Weidner\footnote{
                   Department of Economics,
                   University College London,
                   Gower Street,
                   London WC1E~6BT,
                   UK,
                   and and CeMMaP.
                   Email:~{\tt m.weidner@ucl.ac.uk}
                   }
                   }

\date{\today}
\maketitle

\abstract{
\medskip
\begin{center}
\begin{minipage}{0.8\textwidth}
We derive fixed effects estimators of parameters and average partial effects in (possibly dynamic) nonlinear panel data models with individual and time effects. They cover logit, probit, ordered probit, Poisson and Tobit models that are important for many empirical applications in micro and macroeconomics. Our estimators use analytical and jackknife bias corrections to  deal with the incidental parameter problem, and are asymptotically unbiased under asymptotic sequences where $N/T$ converges to a constant. We develop inference methods and show that they perform well in numerical examples.
\end{minipage}
\end{center}
\medskip


%
}

\bigskip
\bigskip

{\footnotesize

\noindent{\bfseries Keywords:} Panel data, nonlinear model, dynamic model, asymptotic bias correction, fixed effects, time effects.\\
{\bfseries JEL:} C13, C23.
}

\newpage



\newpage

\section{Introduction}

Fixed effects estimators of nonlinear panel data models can be severely biased
because of the  incidental parameter problem \cite{Neyman:1948p881}.
A growing literature, surveyed in Arellano and Hahn \cite*{ArellanoHahn2007}, 
shows that the leading term of an asymptotic expansion of the bias as both the cross-sectional dimension $N$ and time series dimension $T$ of the panel
grow,  can be characterized and corrected for. In models with individual effects,  the leading bias term is of order $1/T$ and comes from the estimation of the individual effects. 
This result, however, does not apply to models with  individual and  time effects, where both of these effects are treated as  parameters to be estimated. In this paper we show that the estimation of the time effects causes an additional incidental parameter bias of order $1/N$. Thus, if $N$ and $T$ are similarly large, the bias produced by the estimation of the time effects is of similar order of magnitude to the bias produced by the estimation of the individual effects, and both biases need to be corrected. We provide the corresponding  analytical and jackknife bias corrections.

The asymptotic approximation to the fixed effects estimators that lets the two dimensions of the panel grow with the sample size is motivated by the recent availability of long panels and other large pseudo-panel data structures where the indexes might not correspond to individuals and time periods.  Examples of  these datasets  include traditional microeconomic panel surveys with a long history of data such as the PSID and  NLSY,   international cross-country panels  such as the Penn World Table, U.S. state level  panels over time such as the CPS, and square pseudo-panels of trade flows across countries such as the Feenstra's World Trade Flows and CEPII, where the indexes correspond to the same countries indexed as importers and exporters.

We focus on semi-parametric models with log-likelihood functions that are concave in all parameters, and where each individual effect $\alpha_i$ and  time effect $\gamma_t$ enter the log-likelihood for observation $(i,t)$ additively as $\alpha_i + \gamma_t$.
This is the most common specification for the individual and time effects  in linear models and is also a natural specification in the nonlinear models that we consider.   Imposing concavity of the log-likelihood function
greatly  facilitates showing consistency in our setting where the dimension of the parameter space 
grows with the sample size.  The
most popular limited dependent variable models, including logit,
probit, ordered probit, Tobit and Poisson models have concave log-likelihood functions, possibly after reparametrization (Olsen \cite*{Olsen:1978p3375}, and Pratt \cite*{Pratt:1981p654}).  We note here that the general expansion that we derive in Appendix B  do not impose additivity and concavity, but we use these restrictions to apply the expansion to fixed effects estimators. 
The models that we consider are semi-parametric because the joint distribution of the explanatory variables and the unobserved effects is left unspecified. The explanatory variables can be either strictly exogenous or predetermined. 

We derive bias expansions and corrections for fixed effects estimators of common parameters $\beta$ and average partial effects (APEs). The vector $\beta$ includes all the unknown parameters that enter the log-likelihood function other than the individual and time effects, such as index coefficients in a probit model.  The APEs are functions of the data, the common parameters, and the individual and time effects in nonlinear models. We find that the properties of the fixed effects estimators of $\beta$ and the APEs are different. For $\beta$, the order of the  bias is  $1/T + 1/N$, which is of the same as the rate of convergence $1/\sqrt{NT}$ under sequences where $N/T$ converge to a constant.  For the APEs,  we uncover that the incidental parameter problem is negligible asymptotically 
because the order of the bias, $1/N + 1/T$, is smaller than the rate of convergence, which is $1/\sqrt{N} + 1/\sqrt{T}$, slower  than for model parameters. To the best of our knowledge, this rate result is new for fixed effects estimators of average partial effects in nonlinear panel models with individual and time effects.\footnote{Galvao and Kato~\cite*{GalvaoKato:2013} also found slow rates of convergence for fixed effects estimators in linear models with individual effects under misspecification. Fernandez-Val and Lee~\cite*{FL13}  pointed out this issue in nonlinear models with only individual effects.} In numerical examples we find that the bias corrections, while not necessary to center the asymptotic distribution  of  APE estimators, do improve their finite-sample properties, specially in dynamic models.

The bias correction  eliminates the bias terms of orders $1/T$ and $1/N$  from the fixed effects estimators. We considerer two methods to implement the correction: an analytical bias correction similar to Hahn
and Newey \cite*{Hahn:2004p882} and Hahn and Kuersteiner \cite*{HahnKuersteiner2011}, and a suitable modification of the split panel jackknife of  Dhaene and Jochmans \cite*{DhaeneJochmans2015}.\footnote{A similar split panel jackknife bias correction method was outlined in Hu~\cite*{Hu2002}.}
However, the theory of the previous papers does not cover the models that we consider, because, in addition to not allowing for
time effects, it
assumes either identical distribution or stationarity over time for the processes of the observed variables,
conditional on the unobserved effects. These assumptions are  violated
in our models due to the presence of the time effects,
 so we need to adjust the asymptotic theory accordingly. 
 The individual and time effects introduce strong correlation in both dimensions of the panel.
Conditional on the unobserved effects, we impose cross-sectional independence and weak time-serial dependence, and we allow for heterogeneity in both dimensions.

Simulation evidence indicates that our corrections improve the estimation and inference performance of the fixed effects estimators of parameters and average effects. The analytical corrections dominate the jackknife corrections in a probit model for sample sizes that are relevant for empirical practice.  
In the online supplement, Fern{\'a}ndez-Val and Weidner \cite*{Supp2015}, we illustrate the corrections with an empirical application on the relationship between competition and innovation using a panel of U.K. industries, following Aghion, Bloom, Blundell, Griffith and Howitt \cite*{AghionBloomBlundellGriffithHowitt2005}. We find that the inverted-U pattern relationship found by Aghion \textit{et al} is robust to relaxing the strict exogeneity assumption of competition with respect to the innovation process and to the inclusion of innovation dynamics. We also uncover substantial state dependence in the innovation process.

\paragraph{Literature review.} The Neyman and Scott incidental parameter problem has been extensively discussed in the econometric literature; see, for example, Heckman \cite*{Heckman:1981p2940},
Lancaster \cite*{Lancaster:2000p879}, and Greene \cite*{Greene:2004p3125}. There is also a vast literature that shows how to tackle the problem in specific models under asymptotic sequences where $T$ is fixed and $N$ grows to infinity. However, there are results,
e.g.~from Honor{\'e} and Tamer~\cite*{HonoreTamer2006}, 
Chamberlain~\cite*{Chamberlain2010},
and Chernozhukov, Fern{\'a}ndez-Val, Hahn and Newey \cite*{CFHN13}, showing 
that model parameters and APEs are not point identified in important nonlinear panel data models under fixed-$T$ asymptotic sequences, implying  that no fixed-$T$ consistent point estimators exist in  these models.

A recent response to the incidental parameter problem is to adopt an alternative asymptotic approximation where  both $N$ and $T$ grow with the sample size. Under these large-$T$ sequences,  the fixed effects estimator is consistent  but has bias  in the asymptotic distribution. This asymptotic bias is the large-$T$ version of the incidental parameter problem and has motivated the development of bias corrections. Examples of papers that use this approximation include  Phillips and Moon \cite*{Phillips:1999p733}, Hahn and Kuersteiner \cite*{Hahn:2002p717},
Lancaster \cite*{Lancaster:2002p875}, Woutersen~\cite*{Woutersen:2002p3683}, 
Alvarez and Arellano~\cite*{AlvarezArellano2003},
Hahn and Newey~\cite*{Hahn:2004p882},
Carro \cite*{Carro:2007p3601}, 
Arellano and Bonhomme~\cite*{ArellanoBonhomme2009},
Fernandez-Val~\cite*{FernandezVal:2009p3313},
Hahn and Kuersteiner~\cite*{HahnKuersteiner2011}, 
Fernandez-Val and Vella~\cite*{FernandezValVella2011},
and Kato, Galvao and Montes-Rojas~\cite*{KatoGalvaoMontes-Rojas2012}.
This previous work, however, does not cover models with time effects.\footnote{An 
exception is
Woutersen~\cite*{Woutersen:2002p3683}, which considers a special type of grouped  time effects whose number is  fixed with $T$. We instead consider an unrestricted set of $T$ time effects, one for each time period.} Our contribution to this literature is to extend the large-$T$ bias corrections to models with two-way unobserved effects such as the individual and time effects commonly included in linear models.



The large-$T$ panel literature on models with both individual and time effects is
sparse.  Pesaran~\cite*{Pesaran2006}, Bai~\cite*{Bai:2009p3321}, and Moon and Weidner~\cite*{MoonWeidner2015a,MoonWeidner2015b} study linear regression models  with interactive individual and time fixed effects. The fixed effects estimators in these models also have asymptotic bias of order $1/T + 1/N$, but the methods used to derive this bias rely on linearity and therefore cannot be applied to the nonlinear models that we consider. Hahn and Moon \cite*{HahnMoon2006} consider bias corrected fixed effects estimators in panel linear autoregressive models with additive individual and
time effects. Regarding non-linear models, there is 
 independent and contemporaneous work by Charbonneau~\cite*{Charbonneau2011,Charbonneau2014}, which extends the conditional fixed effects estimators to logit and Poisson models
with  individual and time effects. She differences out the individual and time effects
by conditioning on sufficient statistics. The conditional approach 
completely eliminates the asymptotic bias coming from the estimation of the incidental parameters, 
but it does not permit estimation of average partial effects and has not been developed for models with predetermined regressors. We instead consider
estimators of model parameters and average partial effects in nonlinear models with predetermined
regressors. The two approaches can therefore be considered as complementary. 



\paragraph{Outline of the paper.} The rest of the paper is organized as follows. Section \ref{sec:model} introduces the model and fixed effects estimators. Section  \ref{sec:bc}  describes the bias corrections to deal with the incidental parameters problem and illustrates how the bias corrections work through an example. Section  \ref{sec:app_panel} provides
 the asymptotic theory.  
 Section \ref{sec: MC} presents Monte Carlo results. The Appendix collects the proofs of  the main results, and  an online supplement to the paper   contains additional technical derivations, numerical examples, and an empirical application \cite{Supp2015}.



\section{Model and Estimators}
\label{sec:model}

\subsection{Model} The data consist of $N \times T$ observations $\{(Y_{it}, X'_{it})': 1 \leq i \leq N, 1 \leq t \leq T \},$ for a scalar outcome variable of interest $Y_{it}$ and a vector
of explanatory variables $X_{it}$.  We assume that the outcome for
individual $i$ at time $t$ is generated by the sequential process:
\begin{equation*}
Y_{it}  \mid X^t_i, \alpha, \gamma, \beta   \sim f_{Y}(\cdot \mid X_{it},\alpha_i, \gamma_t, \beta), \ \ (i = 1,...,N; t = 1, ..., T),
\end{equation*}
where  $X^t_i = (X_{i1}, \ldots, X_{it}),$  $\alpha = (\alpha_1, \ldots, \alpha_N)$, $\gamma = (\gamma_1, \ldots, \gamma_T)$,
$f_{Y}$ is a known probability function, and $\beta$ is a finite dimensional parameter vector.
The variables $\alpha_i$ and $\gamma_t$ are unobserved individual and time effects that in economic applications capture individual heterogeneity and aggregate shocks, respectively. The model is semiparametric because  we do not specify the distribution of these effects nor their relationship with the explanatory variables. The conditional distribution $f_{Y}$ represents the parametric part of the model. 
The vector $X_{it}$ contains predetermined variables
 with respect to $Y_{it}$. Note that $X_{it}$ can include lags of $Y_{it}$ to accommodate dynamic models.

We consider two running examples  throughout the analysis:
\begin{example}[Binary response model]\label{example: probit} Let $Y_{it}$ be a binary outcome and $F$ be a cumulative
distribution function, e.g. the standard normal or standard logistic distribution. We can model the conditional distribution of $Y_{it}$ using the single-index specification with individual and time effects
\begin{equation*}
f_{Y}(y \mid X_{it}, \alpha_i, \gamma_t, \beta) = F(X_{it}'\beta+ \alpha_i + \gamma_t)^y[1 - F(X_{it}'\beta+ \alpha_i + \gamma_t)]^{1-y}, \ \ y \in \{0,1\}.
\end{equation*}
In a labor economics application, $Y$ can be an indicator for female labor force participation and $X$ can include fertility indicators and other socio-economic characteristics.
\end{example}


\begin{example}[Poisson model]\label{example: poisson} Let $Y_{it}$ be a non-negative interger-valued outcome, and $f(\cdot; \lambda)$ be the
probability mass function of a Poisson random variable with mean $\lambda > 0$. We can model the conditional distribution of $Y_{it}$ using the single index specification  with individual and time effects
\begin{equation*}
f_{Y }(y \mid X_{it}, \alpha_i, \gamma_t, \beta) = f(y;  \exp[X_{it}'\beta + \alpha_i +  \gamma_t ]), \ \ y \in \{0, 1,  2, .... \}.
\end{equation*}
In an industrial organization application, $Y$ can be the number of patents that a firm produces and $X$ can include investment in R\&D and other firm characteristics.
\end{example}

For estimation, we adopt a fixed effects approach, treating the realization of the unobserved individual and time effects as parameters to be estimated. We collect all these effects in the vector $\phi_{NT} = (\alpha_1, ..., \alpha_N, \gamma_1, ..., \gamma_T)'$.  The model parameter  $\beta$ usually includes regression coefficients of interest, while the vector  $\phi_{NT}$ is  treated as a nuisance parameter. 
The true values of the parameters, denoted by $\beta^0$
and $\phi_{NT}^0 = ({\alpha^{0}_1}, ..., {\alpha^{0}_N}, {\gamma^{0}_1}, ..., {\gamma^{0}_T})'$, are the solution to the population conditional maximum likelihood problem
\begin{align}\label{population_prob}
& \max_{(\beta, \phi_{NT})   \in \mathbb{R}^{\dim \beta + \dim \phi_{NT}}} \; \E[ \mathcal{L}_{NT}(\beta,\phi_{NT})], 
\nonumber \\ 
&  \qquad \mathcal{L}_{NT}(\beta,\phi_{NT}) := (NT)^{-1/2} \left\{ \sum_{i,t}  \log f_{Y}(Y_{it}  \mid X_{it}, \alpha_i, \gamma_t, \beta) -  b  (v_{NT}'\phi_{NT})^2/2 \right\},
\end{align}
for every $N,T$, where $\E$ denotes the expectation with respect to
the distribution of the data conditional on the unobserved effects and initial conditions including strictly exogenous variables, $b > 0$ is an arbitrary constant, 
$v_{NT} = ( 1_N' ,- 1_T' )'$, and $1_N$ and $1_T$ denote vectors of ones with dimensions $N$ and $T$.
Existence and uniqueness of the solution to the population problem will be 
guaranteed by our assumptions in Section~\ref{sec:app_panel} below, including concavity 
of the objective function in all parameters.
 The second term of $\mathcal{L}_{NT}$  is a penalty that imposes a normalization needed to identify  $\phi_{NT}$ in 
models with scalar individual and time effects that enter 
additively into the log-likelihood function
as $\alpha_i + \gamma_t$.\footnote{
In Appendix~\ref{app:expansion} we derive asymptotic expansions that apply to  general models with multiple unobseved effects.
In order to use these expansions to obtain the asymptotic distribution of the panel fixed effects estimators, we need to derive  the properties of the expected Hessian of the incidental parameters, a matrix with increasing dimension, and to show the consistency of the estimator of the incidental parameter vector. The additive specification $\alpha_i + \gamma_t$ is useful to characterize the Hessian and we impose strict concavity of the objective function to show the consistency.
}
In this case, adding a constant  to all $\alpha_i$, while
subtracting it from all $\gamma_t$,  does not change $\alpha_{i} + \gamma_t$.
To eliminate this ambiguity, we normalize
$\phi^0_{NT}$ to satisfy $v_{NT}' \phi^0_{NT} = 0$, i.e. $\sum_i \alpha_i^0 = \sum_t
\gamma_t^0$. The penalty produces a maximizer of ${\cal L}_{NT}$ that  is automatically normalized.
We could  equivalently impose $v_{NT}' \phi_{NT} = 0$ as a constraint, but for technical reasons we prefer to work with an unconstrained optimization problem. There are other possible normalizations for $\phi_{NT}$, such as $\alpha_1 = 0$. The model parameter $\beta$ is invariant to the choice of normalization, that is, our asymptotic results on the estimator for $\beta$
are independent of this choice of normalization.
 Our choice
is convenient for certain intermediate results that involve the incidental parameter~$\phi_{NT}$, its score vector and its Hessian matrix.
The pre-factor $(NT)^{-1/2}$ in $\mathcal{L}_{NT}(\beta,\phi_{NT})$ is just a rescaling.

Other quantities of interest involve averages over the data and unobserved effects
\begin{equation} \label{eq: meffs}
 \delta^0_{NT} =    \mathbb{E} [\Delta_{NT}(\beta^0, \phi^0_{NT})], \ \ \Delta_{NT}(\beta, \phi_{NT}) = (NT)^{-1} \sum_{i,t}   \Delta(X_{it}, \beta, \alpha_i, \gamma_t),
\end{equation}
where  $\mathbb{E}$ denotes the expectation with respect to
the joint distribution of the data and the unobserved  effects, provided that the expectation exists. 
$\delta^0_{NT} $ is indexed by $N$ and $T$ because the marginal distribution of $\{(X_{it}, \alpha_i, \gamma_t)  : 1 \leq i \leq N, 1 \leq t \leq T\}$ can be heterogeneous across $i$ and/or $t$; see Section~\ref{subsec:apes}. These averages
include average partial effects (APEs), which are often the ultimate
quantities of interest in nonlinear models. The APEs are invariant to the choice of normalization for $\phi_{NT}$ if $\alpha_i$ and $\gamma_t$ enter $\Delta(X_{it}, \beta, \alpha_i, \gamma_t)$ as $\alpha_i + \gamma_t$.
Some examples of partial effects that satisfy this condition are the following:

\medskip

\noindent  \textbf{Example \ref{example: probit}} (Binary response model).  \textit{If $X_{it,k}$, the $k$th element of $X_{it}$, is
binary, its partial effect on the conditional probability of $Y_{it}$ is
\begin{equation}\label{example: probit: meff1}
\Delta(X_{it}, \beta , \alpha_i, \gamma_t) =  F(\beta_k +
X_{it,-k}'\beta_{-k} +  \alpha_i + \gamma_t) -
F(X_{it,-k}'\beta_{-k} + \alpha_i +  \gamma_t),
\end{equation}
where $\beta_k$ is the $k$th element of $\beta$, and $X_{it,-k}$ and $\beta_{-k}$ include all elements of $X_{it}$ and $\beta$ except for the $k$th element. If $X_{it,k}$ is continuous and $F$ is differentiable, the partial effect of $X_{it,k}$
on the conditional probability of $Y_{it}$ is
\begin{equation}\label{example: probit: meff2}
\Delta(X_{it}, \beta,\alpha_i, \gamma_t) = \beta_k  \partial F(X_{it}'\beta +
\alpha_i + \gamma_t),
\end{equation}
where $\partial F$ is the derivative of $F$.
}

\medskip

\noindent  \textbf{Example \ref{example: poisson}} (Poisson model). \textit{If $X_{it}$ includes $Z_{it}$ 
and some known transformation $H(Z_{it})$ with coefficients $\beta_k$ and $\beta_j$, the partial effect of $Z_{it}$
on the conditional expectation  of $Y_{it}$ is
\begin{equation}\label{example: poisson: meff}
\Delta(X_{it}, \beta, \alpha_i, \gamma_t) = [\beta_k +  \beta_j
\partial H(Z_{it})] \exp(X_{it}'\beta + \alpha_i +  \gamma_t).
\end{equation}
}



\subsection{Fixed effects estimators} 

We estimate the parameters by solving the  sample analog of problem \eqref{population_prob}, i.e.
\begin{equation}
\max_{(\beta, \phi_{NT}) \in  \mathbb{R}^{\dim \beta + \dim \phi_{NT}}} \;  {\cal L}_{NT}(\beta,\phi_{NT}).
   \label{LobjMAX}
\end{equation}
As in the population case, we shall impose conditions guaranteeing that the solution to this maximization problem
 exists  and is unique with probability approaching one as $N$ and $T$ become large. 
For computational purposes, we note that the solution to the program~\eqref{LobjMAX} for $\beta$ is  the same as the solution to the program that imposes $v_{NT}' \phi_{NT} = 0$ directly as a constraint in the optimization, and is invariant to the normalization. 
  In our numerical examples we impose either $\alpha_1 = 0$ or $\gamma_1 = 0$ directly by dropping  the first individual or time effect. This constrained program  has good computational properties because its objective function is concave and smooth in all the parameters. We have developed the commands  \texttt{probitfe} and \texttt{logitfe}  in \texttt{Stata} to implement the methods of the paper for probit and logit models  \cite{Stata2015}.\footnote{We refer to this companion work for computational details.} When $N$ and $T$ are  large, e.g., $N > 2,000$ and $T > 50$, we recommend the use of optimization routines  that exploit the  sparsity of the design matrix of the model to speed up computation such as the package \texttt{Speedglm} in \texttt{R} \cite{Speedglm2012}. For a probit model  with $N=2,000$ and $T = 52$, \texttt{Speedglm} computes the fixed effects estimator in less than 2 minutes with a 2 x 2.66 GHz 6-Core Intel Xeon processor, more than 7.5 times faster than our  \texttt{Stata} command \texttt{probitfe} and more than 30 times faster than the \texttt{R} command \texttt{glm}.\footnote{Additional comparisons of computational times  are available from the authors upon request.}

To analyze the statistical properties of the estimator of $\beta$ it is
convenient to first concentrate out the nuisance parameter $\phi_{NT}$. For
given $\beta$, we define the optimal $\widehat \phi_{NT}(\beta)$  as
\begin{equation}
   \widehat \phi_{NT}(\beta) = \argmax_{\phi_{NT} \in \mathbb{R}^{\dim \phi_{NT}}}
       \,  {\cal L}_{NT}(\beta,\,\phi_{NT}) \;.
   \label{DefEstPhi}
\end{equation}
The fixed effects estimators of $\beta^0$ and $\phi_{NT}^0$ are
\begin{equation}
   \widehat \beta_{NT} \, = \,
      \argmax_{\beta \in  \mathbb{R}^{\dim \beta}} \; {\cal L}_{NT}(\beta,\widehat \phi_{NT}(\beta)) \;, 
       \qquad
   \widehat \phi_{NT} \, = \, \widehat \phi_{NT}(\widehat \beta).
   \label{DefEst}
\end{equation}
%


Estimators of APEs can be
formed by plugging-in the estimators of the model parameters in the sample version of \eqref{eq: meffs}, i.e.
\begin{equation}
    \label{eq:ape}
\widehat \delta_{NT}  =  \Delta_{NT}(\widehat \beta, \widehat \phi_{NT}).
\end{equation}
Again, $\widehat \delta_{NT}$ is invariant to the normalization chosen for $\phi_{NT}$ if $\alpha_i$ and $\gamma_t$ enter $\Delta(X_{it}, \beta, \alpha_i, \gamma_t)$ as $\alpha_i + \gamma_t$.

\section{Incidental parameter problem and bias corrections} \label{sec:bc} 

In this section we give a heuristic discussion of the main results, leaving the technical details to Section~\ref{sec:app_panel}. We illustrate the analysis with numerical calculations based on a variation of the classical Neyman and Scott~\cite*{Neyman:1948p881} variance example.

\subsection{Incidental parameter problem}

Fixed effects estimators in nonlinear  models
suffer from the
incidental parameter problem \cite{Neyman:1948p881}.  The source of the problem is that the dimension of the nuisance parameter $\phi_{NT}$ increases with the sample size under asymptotic approximations where either $N$ or $T$ pass to infinity. 
To describe the problem  let
\begin{equation}\label{eq:pop_program}
\overline{\beta}_{NT} := \argmax_{\beta \in  \mathbb{R}^{\dim \beta}}  \;
    \E\left[ {\cal L}_{NT}(\beta,\widehat \phi_{NT}(\beta)) \right].
\end{equation}
 The fixed effects estimator  is inconsistent under the traditional Neyman and Scott asymptotic sequences where $N \to \infty$  and $T$ is fixed,  i.e.,  $\plim_{N \to \infty} \overline{\beta}_{NT} \neq \beta^0$. Similarly,  the fixed effects estimator is inconsistent under asymptotic sequences where $T \to \infty$  and $N$ is fixed,  i.e.,  $\plim_{T \to \infty} \overline{\beta}_{NT} \neq \beta^0$. Note that $\overline \beta_{NT} = \beta^0$  if $\widehat \phi_{NT}(\beta)$ is replaced by $\phi_{NT}(\beta) = \argmax_{\phi_{NT} \in \mathbb{R}^{\dim \phi_{NT}}}
       \,  \E[ {{\cal L}}_{NT}(\beta,\,\phi_{NT})]$. Under asymptotic approximations where either $N$ or $T$ are fixed, there is only  a fixed number of observations to estimate some of the components of $\phi_{NT}$, $T$ for each individual effect or $N$ for each time effect, rendering the  estimator $\widehat \phi_{NT}(\beta)$ inconsistent for $\phi_{NT}(\beta)$. The nonlinearity of the model propagates the inconsistency to the estimator of $\beta$.  

A key insight of  the large-$T$ panel data literature is that the incidental parameter problem becomes an asymptotic bias problem under an asymptotic approximation where $N \to \infty$ and $T \to \infty$  (e.g., Arellano and Hahn, \citeyear{ArellanoHahn2007}). For models with only individual effects,  this literature derived the expansion $\overline{\beta}_{NT} = \beta^0 + B/T + o_P(T^{-1})$ as $N,T \to \infty$, for some constant $B$. The fixed effects estimator is  consistent because $\plim_{N,T \to \infty} \overline{\beta}_{NT} = \beta^0$, but has bias in the asymptotic distribution if $B/T$ is not negligible relative to $1/\sqrt{NT}$, 
 the order of the standard deviation of the estimator.
This asymptotic bias problem, however, is easier to correct than the
  inconsistency problem that arises under the traditional Neyman and Scott asymptotic approximation.  We show that the same insight still applies to models with individual and time effects, but with a different expansion for $\overline{\beta}_{NT}$. We characterize the expansion and develop bias corrections.

\subsection{Bias Expansions and Bias Corrections}
Some expansions can be used to explain our corrections. 
For smooth likelihoods and under appropriate regularity conditions, as $N,T \to \infty$,
\begin{equation}\label{eq: expansion}
\overline{\beta}_{NT} = \beta^0 + \overline B_{\infty}^{\beta}/T
+ \overline D_{\infty}^{\beta} /N + o_P(T^{-1} \vee N^{-1}),
\end{equation}
for some $\overline B_{\infty}^{\beta}$ and $\overline D_{\infty}^{\beta}$ that we characterize in Theorem \ref{th:BothEffects} and explain in Remark \ref{remark:ad},
where $a \vee b := \max(a,b)$.
 Unlike in nonlinear models without incidental parameters, the order of the bias is higher than the inverse of the sample size $(NT)^{-1}$ due to the slow rate of convergence of  $\widehat \phi_{NT}$. Note also that by the properties of the maximum likelihood estimator
$$
\sqrt{NT}(\widehat \beta_{NT} - \overline{\beta}_{NT}) \to_d \mathcal{N}(0, \overline V_{\infty}),
$$
for some $\overline V_{\infty}$ that we also characterize in Theorem \ref{th:BothEffects}.
Under asymptotic sequences where $N/T \to \kappa^2$ as $N,T \to \infty$, the fixed effects estimator is asymptotically 
biased because
\begin{align}
\sqrt{NT}(\widehat \beta_{NT} - \beta^0) &= \sqrt{NT}(\widehat \beta_{NT} - \overline{\beta}_{NT}) + \sqrt{NT}(\overline B_{\infty}^{\beta}/T
+ \overline D_{\infty}^{\beta} /N + o_P(T^{-1} \vee N^{-1})) \nonumber \\
 &\to_d \mathcal{N}( \kappa \overline B_{\infty}^{\beta}
+ \kappa^{-1} \overline D_{\infty}^{\beta} , \overline V_{\infty}).
\label{eq:MainResult}
\end{align}
Relative to fixed effects estimators with only individual effects, the presence of time effects 
introduces additional asymptotic bias through $\overline D_{\infty}^{\beta}$. 
This asymptotic result predicts that the fixed effects estimator can have significant bias relative to its dispersion. Moreover, confidence intervals constructed around the fixed effects estimator can severely undercover the true value of the parameter even in large samples. We show that these predictions provide a good approximations to the finite sample behavior of the fixed effects estimator through analytical and simulation examples in Sections \ref{subsec: ns} and \ref{sec: MC}.

The analytical bias correction consists of subtracting estimates of the leading terms of the bias from the fixed effect estimator
of $\beta^0$. Let $\widehat B_{NT}^{\beta}$ and $\widehat D_{NT}^{\beta}$ be  estimators of $\overline B_{\infty}^{\beta}$ and $\overline D_{\infty}^{\beta}$ as defined in \eqref{eq: bias_parameter}. The bias corrected estimator can be formed as
\begin{equation*}\label{abc}
\widetilde{\beta}_{NT}^A = \widehat{\beta}_{NT} -  \widehat{B}_{NT}^{\beta}/T - \widehat D_{NT}^{\beta}/N.
\end{equation*}
If $N/T \to \kappa^2$,  $\widehat B_{NT}^{\beta} \to_P \overline B_{\infty}^{\beta}$, and $\widehat D_{NT}^{\beta} \to_P \overline D_{\infty}^{\beta},$ then
\begin{equation*}
\sqrt{NT}(\widetilde \beta_{NT}^A - \beta^0) 
\to_d \mathcal{N}(0, \overline V_{\infty}).
\end{equation*}
The analytical correction therefore centers the asymptotic distribution at the true value of the parameter, without increasing asymptotic variance. This asymptotic result predicts that in large samples the corrected estimator has small bias relative to dispersion, the correction does not increase dispersion, and the confidence intervals constructed around the corrected estimator have coverage probabilities close to the nominal levels. We show that these predictions provide a good approximations to the behavior of the corrections in Sections \ref{subsec: ns} and \ref{sec: MC} even in small panels with $N < 60$ and $T < 15$.

We also consider a jackknife bias correction method
that does not require explicit estimation of the bias.
 This method is based on the split panel jackknife (SPJ) of Dhaene and Jochmans~\cite*{DhaeneJochmans2015} applied to the
 time and cross-section dimension of the panel.
Alternative jackknife corrections based on the leave-one-observation-out panel jackknife (PJ) of Hahn and Newey \cite*{Hahn:2004p882}  and combinations of PJ and SPJ are also possible. We do not consider corrections based on PJ because they are theoretically justified by second-order expansions of $\overline{\beta}_{NT}$ that are beyond the scope of this paper.

To describe  our generalization of the SPJ, define the fixed effects estimator of $\beta$ in the subpanel with cross sectional indexes $A$ and time series indexes $B$ as
$$
\widehat \beta_{A,B} \in \argmax_{\beta \in \mathbb{R}^{\dim \beta} } \;
   \max_{\alpha(A) \in \mathbb{R}^{|A|}}  \;
    \max_{\gamma(B) \in \mathbb{R}^{|B|}} 
\; \sum_{i,t}  d_{it}(A,B) \, \log f_{Y}(Y_{it}  \mid X_{it}, \alpha_i, \gamma_t, \beta) ,
$$
where $\alpha(A)=( \alpha_i  :  i \in A )$,
$\gamma(B)=( \gamma_t   :   t \in B )$, and
$d_{it}(A,B) = 1(i \in A) \times 1(t \in B)$.
Let  $\widetilde{\beta}_{N,T/2}$ be the average of the 2 split jackknife
estimators in the subpanels with $A = \{1,2,\ldots,N\}$, and $B = \{1,2,\ldots,T/2\}$ or $B = \{T/2+1, T/2+2, \ldots, T\}$, i.e. including all the individuals and leaving out the first and second halves of the time
periods. Let $\widetilde{\beta}_{N/2,T}$ be the average of the 2 split jackknife
estimators in the subpanels with $B = \{1,2,\ldots,T\}$, and $A = \{1,2,\ldots,N/2\}$ or $A = \{N/2+1,N/2+2,\ldots,N\}$, i.e. including all the time periods and leaving out half of the individuals of the panel.\footnote{When $T$ is odd we define $\widetilde{\beta}_{N,T/2}$ as the average of the 2 split jackknife
estimators that use overlapping subpanels with $B = \{1,2,\ldots,(T+1)/2\}$ and $B = \{(T+1)/2, (T+1)/2+1, \ldots,T\}$. We define $\widetilde{\beta}_{N/2,T}$ similarly when $N$ is odd.}
In choosing the cross sectional
indexing of the panel, one might want to take into account individual clustering structures and other dependencies  to
preserve them in the SPJ. For example, all the individuals belonging to the same cluster should be indexed such that they remain in the same subpanel after the cross sectional split.  
If there are no cross sectional dependencies,  the indexing of the individuals is unrestricted. We recommend to construct $\widetilde{\beta}_{N/2,T}$ as the average
of the estimators obtained from all possible partitions of $N/2$ individuals to avoid ambiguity and arbitrariness in the choice of the division.\footnote{There are $P =  {N \choose N/2}$ different cross sectional partitions with $N/2$ individuals. When $N$
is large, we can approximate the average over all possible partitions  by the average over $S \ll P$ randomly chosen partitions to speed up computation.}  
The bias corrected estimator is
\begin{equation}\label{eq: jackknife2}
\widetilde{\beta}_{NT}^{J} = 3  \widehat \beta_{NT} - \widetilde{\beta}_{N,T/2} - \widetilde{\beta}_{N/2,T}.
\end{equation}
To give some intuition about how the corrections works, note that
\begin{equation*}
\widetilde{\beta}_{NT}^{J}  - \beta_0 =  (\widehat \beta_{NT} - \beta_0)  - (\widetilde{\beta}_{N,T/2} - \widehat \beta_{NT}) - (\widetilde{\beta}_{N/2,T} - \widehat \beta_{NT}),
\end{equation*}
where 
 $ \widetilde{\beta}_{N,T/2} - \widehat \beta_{NT} = \overline B_{\infty}^{\beta}/T + o_P(T^{-1} \vee N^{-1})$ and  $\widetilde{\beta}_{N/2,T} - \widehat \beta_{NT} = \overline D_{\infty}^{\beta}/N + o_P(T^{-1} \vee N^{-1}).$ Relative to $\widehat \beta_{NT}$, $ \widetilde{\beta}_{N,T/2} $  has double the bias coming from the estimation of the individual effects  because it is based on subpanels with half of the time periods, and $ \widetilde{\beta}_{N/2,T} $  has double the bias coming from the estimation of the time effects because it is based on subpanels with half of the individuals. The time series split  removes the bias term $\overline B_{\infty}^{\beta}$ and the cross sectional split removes the bias term $\overline D_{\infty}^{\beta}.$

\subsection{Illustrative Example}\label{subsec: ns}

To illustrate how the bias corrections work in finite samples, we consider a simple model where  the solution to the population program  \eqref{eq:pop_program} 
has closed form. This model corresponds to a variation of the classical Neyman and Scott~\cite*{Neyman:1948p881} variance example that includes both  individual and time effects, $Y_{it} \mid \alpha, \gamma, \beta \sim \mathcal{N}(\alpha_i + \gamma_t, \beta)$.
It is well-know that in this case
$$
\widehat \beta_{NT} = (NT)^{-1} \sum_{i,t} \left( Y_{it} - \bar Y_{i.} - \bar Y_{.t} + \bar Y_{..} \right)^2,
$$
where $\bar Y_{i.} = T^{-1} \sum_t Y_{it}$, $\bar Y_{.t} = N^{-1} \sum_i Y_{it},$ and $\bar Y_{..} = (NT)^{-1} \sum_{i,t} Y_{it}.$ Moreover, 
 from the well-known results on the degrees of freedom adjustment of the estimated variance
$$
\overline{\beta}_{NT} = \E [\widehat \beta_{NT}] =  \beta^0 \frac{(N-1)(T-1)}{NT}  = \beta^0 \left(1 - \frac{1}{ T} - \frac{1}{ N} + \frac{1}{NT} \right),
$$
so that $\overline{B}_{\infty}^{\beta} = -\beta^0$ and $\overline{D}_{\infty}^{\beta} = -\beta^0$.\footnote{Okui~\cite*{Okui2013} derived the bias of fixed effects estimators of autocovariances and autocorrelations in this model.}

To form the analytical bias correction we can set $\widehat{B}_{NT}^{\beta} = - \widehat \beta_{NT}$ and $\widehat{D}_{NT}^{\beta} = - \widehat \beta_{NT}$. This yields $\widetilde \beta^A_{NT} = \widehat \beta_{NT} (1 + 1/T + 1/N)$ with
$$
\overline{\beta}^A_{NT} = \E[\widetilde \beta^A_{NT}] = \beta^0 \left( 1 - \frac{1}{T^2} - \frac{1}{N^2}  - \frac{1}{NT} + \frac{1}{NT^2} + \frac{1}{N^2T}\right).
$$
This correction reduces the order of the bias from $(T^{-1} \vee N^{-1})$ to $(T^{-2} \vee N^{-2}),$ and introduces additional higher order
terms. The analytical correction increases finite-sample variance because the factor $(1 + 1/T + 1/N) > 1$. We compare the biases and standard deviations of the fixed effects estimator and the corrected estimator in a numerical example below.

For the Jackknife correction, straightforward calculations give
$$
\overline{\beta}^{J}_{NT} = \E[\widetilde \beta^{J}_{NT}] =  3 \overline{\beta}_{NT} - \overline{\beta}_{N,T/2} - \overline{\beta}_{N/2,T}  = \beta^0 \left( 1 - \frac{1}{NT} \right).$$
The correction therefore reduces the order of the bias from $(T^{-1} \vee N^{-1})$ to $(TN)^{-1}.$\footnote{ In this example it is possible to develop higher-order jackknife corrections that completely eliminate the bias because we know the entire expansion of $\overline{\beta}_{NT}$. For example, $\E[ 4 \widehat \beta_{NT} - 2 \widetilde \beta_{N,T/2} - 2 \widetilde \beta_{N/2,T} + \widetilde \beta_{N/2,T/2}] = \beta^0,$ where $\widetilde \beta_{N/2,T/2}$ is the average of the four split jackknife estimators that leave out half of the individuals and the first or the second halves of the time periods. See Dhaene and Jochmans~\cite*{DhaeneJochmans2015} for a discussion on higher-order bias corrections of panel fixed effects estimators.}

Table \ref{table:ns1} presents numerical results for the bias and standard deviations of the fixed effects and bias corrected estimators in finite samples. We consider panels with $N,T \in \{10,25, 50\},$ and only report the results for $T \leq N$ since all the expressions are symmetric in $N$ and $T$. All the numbers  in the table
are  in percentage of the true  parameter value, so we do not need to specify the value of $\beta^0$. We find that the analytical and jackknife corrections
offer substantial improvements over the fixed effects estimator in terms of bias. The first and fourth row of the table show that the bias of the fixed effects estimator is of the same order of magnitude as the standard deviation, where $\overline{V}_{NT} = {\rm Var}[\widehat \beta_{NT}] = 2 (N-1)(T-1) (\beta^0)^2 / (NT)^2$  under independence of $Y_{it}$ over $i$ and $t$ conditional on the unobserved effects. The fifth row shows the increase in standard deviation due to analytical bias correction is small compared to the bias reduction, where $\overline{V}_{NT}^A = {\rm Var}[\widetilde \beta_{NT}^A] = (1+1/N+1/T)^2 \overline{V}_{NT}$. The last row shows that the jackknife yields less precise estimates than the analytical correction when $T=10$.  


\begin{table}[htp]
\begin{center}\caption{\label{table:ns1} Biases and Standard Deviations for $Y_{it} \mid \alpha, \gamma, \beta \sim \mathcal{N}(\alpha_i + \gamma_t, \beta)$}
\begin{tabular}{ccccccc} \hline\hline
&   N = 10 &  \multicolumn{2}{c}{N=25} & \multicolumn{3}{c}{N=50}  \\
   &   T = 10 & T=10  & T=25  & T=10  & T=25  & T=50  \\\hline
  $ (\overline{\beta}_{NT} - \beta^0) /\beta^0$   &  -.19 & -.14  & -.08  & -.12  & -.06  & -.04  \\
  $ (\overline{\beta}^A_{NT} - \beta^0)/\beta^0$   & -.03 & -.02  & .00  & -.01  & -.01  & .00  \\
  $ (\overline{\beta}^{J}_{NT}- \beta^0)/\beta^0$   & -.01   & .00  & .00  & .00  & .00  & .00  \\\hline
  $ \sqrt{\overline{V}_{NT}}/\beta^0$   &  .13 & .08  & .05  & .06  & .04  & .03  \\
  $ \sqrt{\overline{V}^A_{NT}}/\beta^0$   &  .14 & .09  & .06  & .06  & .04  & .03  \\
  $ \sqrt{\overline{V}^J_{NT}}/\beta^0$   &  .17 & .10  & .06  & .07  & .04  & .03 \\\hline\hline
  \multicolumn{7}{l}{ \begin{footnotesize} Notes: $\overline{V}^J_{NT}$ obtained by 50,000 simulations with $\beta^0 = 1$\end{footnotesize}}
  \end{tabular}
\end{center}
\end{table}

Table~\ref{table:ns2} 
illustrates the effect of the bias on the inference based on the asymptotic distribution. It shows the 
coverage probabilities of 95\% asymptotic confidence intervals for $\beta^0$ constructed in the usual way as $$\text{CI}_{.95}(\widehat \beta) = \widehat \beta \pm 1.96  \widehat V_{NT}^{1/2} = 
\widehat \beta (1 \pm 1.96  \sqrt{2/(NT)}),$$
where $\widehat \beta = \{\widehat \beta_{NT}, \widetilde \beta_{NT}^{A}, \widetilde \beta_{NT}^{J}\}$ and $\widehat V_{NT} = 2 \widehat \beta^2/(NT)$ is an estimator of the asymptotic variance $\overline{V}_{\infty}/(NT) = 2 (\beta^0)^2/(NT)$.  To find the coverage probabilities, we use that $NT \widehat \beta_{NT}/ \beta^0 \sim \chi^2_{(N-1)(T-1)}$ and $\widetilde \beta_{NT}^{A} = (1 + 1/N + 1/T) \widehat \beta_{NT}$. These probabilities do not depend on the value of $\beta^0$ because the limits of the intervals are proportional to $\widehat \beta$.  For the Jackknife we compute the probabilities numerically by simulation with $\beta^0=1$. As a benchmark of comparison, we also consider confidence intervals constructed from the unbiased estimator $\widetilde \beta_{NT} = NT \widehat \beta_{NT}/[(N-1)(T-1)]$. Here we find that the confidence intervals based on the fixed effect estimator display severe undercoverage  for all the sample sizes. The confidence intervals based on the corrected estimators have high coverage probabilities, which approach the nominal level as the sample size grows. Moreover, the bias corrected  estimators produce confidence intervals with very similar coverage probabilities to the ones from the unbiased estimator.

\begin{table}[htp]
\begin{center} \caption{\label{table:ns2} Coverage probabilities for $Y_{it} \mid \alpha, \gamma, \beta \sim \mathcal{N}(\alpha_i + \gamma_t, \beta)$}
\bigskip

\begin{tabular}{ccccccc} \hline\hline
&   N = 10 &  \multicolumn{2}{c}{N=25} & \multicolumn{3}{c}{N=50}  \\
   &   T = 10 & T=10  & T=25  & T=10  & T=25  & T=50  \\\hline
  $\text{CI}_{.95}(\widehat \beta_{NT})$   &  .56 & .55  & .65  & .44  & .63  & .68  \\
  $\text{CI}_{.95}(\widetilde \beta^A_{NT})$   &  .89 & .92  & .93  & .92  & .94  & .94   \\
 $\text{CI}_{.95}(\widetilde \beta^J_{NT})$   &   .89 & .91  & .93  & .92  & .93  & .94  \\
  $\text{CI}_{.95}(\widetilde \beta_{NT})$   &  .91 & .93  & .94  & .93  & .94  & .94   \\\hline\hline
   \multicolumn{7}{l}{\begin{footnotesize} Notes: Nominal coverage probability is .95. $\text{CI}_{.95}(\widetilde \beta^J_{NT})$ obtained by \end{footnotesize}}\\
   \multicolumn{7}{l}{\begin{footnotesize}  50,000   simulations with $\beta^0 = 1$\end{footnotesize}}
  \end{tabular}
\end{center}
\end{table}

\section{Asymptotic Theory for Bias Corrections}
\label{sec:app_panel}

In nonlinear panel data models the population problem (\ref{eq:pop_program}) generally does not have closed form solution, so we need to
rely on asymptotic arguments to characterize the terms in the expansion of the bias (\ref{eq: expansion}) and to justify the validity of the corrections.
 
\subsection{Asymptotic distribution of model parameters}
We consider panel models with scalar individual and time effects that enter the likelihood function
additively through  $\pi_{it} = \alpha_i + \gamma_t$. In these models the dimension of the incidental parameters is $\dim \phi_{NT} = N+T$.
The leading cases are single index models, where the dependence of the likelihood function on the parameters is  through an index
$ X_{it}'\beta + \alpha_i + \gamma_t$.   These models cover  the
probit and Poisson specifications of Examples~\ref{example: probit}
and \ref{example: poisson}.  The additive structure
only applies to the unobserved effects, so we can allow for scale parameters to cover the Tobit and negative
binomial models.
 We focus on these additive models for computational tractability and because we can establish
the consistency of the fixed effects estimators under a concavity assumption in the log-likelihood function with respect to all the parameters.

The parametric part of our panel models  takes the form
\begin{equation}\label{eq: index_model}
   \log f_{Y}(Y_{it} \mid X_{it}, \alpha_i, \gamma_t, \beta)
      =: \ell_{it}(\beta, \, \pi_{it} ).
\end{equation}
We denote  the derivatives  of the log-likelihood function $\ell_{it}$ by $\partial_\beta \ell_{it}(\beta, \pi) := \partial  \ell_{it}(\beta,\pi)/\partial \beta$,
$\partial_{\beta \beta'} \ell_{it}(\beta, \pi) := \partial^2  \ell_{it}(\beta,\pi)/(\partial \beta \partial \beta')$,
$\partial_{\pi^q} \ell_{it}(\beta, \pi) := \partial^q \ell_{it}(\beta,\pi)/\partial \pi^q$, $q = 1,2,3$, \textit{etc}.
We drop the arguments $\beta$ and $\pi$ when the derivatives are evaluated at the true
parameters $\beta^0$ and $\pi^0_{it} := \alpha_i^0 + \gamma_t^0$, e.g.
$\partial_{\pi^q} \ell_{it} := \partial_{\pi^q}\ell_{it}(\beta^0,\pi^0_{it})$. We also drop the dependence on
$NT$ from all the sequences of functions and parameters, e.g. we use $\mathcal{L}$ for $\mathcal{L}_{NT}$ and $\phi$
for $\phi_{NT}$.


We make the following assumptions:

\begin{assumption}[Panel models]
   \label{ass:PanelA1}
   Let $\nu > 0$ and $\mu>4 (8+\nu)/\nu$.
   Let $\varepsilon>0$ and let 
   ${\cal B}^0_{\varepsilon}$  be a subset of $\mathbb{R}^{\dim \beta+1}$
     that contains an $\varepsilon$-neighbourhood of $(\beta^0,\pi^0_{it})$
     for all $i,t,N,T$.\footnote{ 
     For example, ${\cal B}^0_{\varepsilon} $ can be chosen to be the Cartesian product of the $\varepsilon$-ball around $\beta^0$
     and the interval $[ \pi_{\min}, \pi_{\max}]$, with $\pi_{\min} \leq \pi_{it} - \varepsilon$ and $\pi_{\max} \geq \pi_{it} + \varepsilon$
     for all $i,t,N,T$. We can have $\pi_{\min} = -\infty$ and $\pi_{\max} = \infty$, as long as this is compatible with
     Assumption~\ref{ass:PanelA1}~(iv) and (v).
     }
   \begin{itemize}
      \item[(i)] Asymptotics: we consider limits of sequences where  $N/T \rightarrow \kappa^2$, $0<\kappa<\infty$, as $N,T \rightarrow \infty$.

      \item[(ii)] Sampling:  conditional on $\phi$, $\{(Y_i^T, X_i^T) : 1 \leq i \leq N \}$ is independent across $i$ 
          and, for each $i$, $\{ (Y_{it}, X_{it}) : 1 \leq t \leq T\}$ is $\alpha$-mixing with mixing coefficients
          satisfying $\sup_i a_i(m) = {\cal O}( m^{-\mu} )$ as $m \rightarrow \infty$, where         
\begin{equation*}
 a_i(m) := \sup_t \sup_{A \in \mathcal{A}_{t}^i, B \in \mathcal{B}_{t+m}^i} |P(A \cap B) - P(A) P(B)|,
 \end{equation*}
and for $Z_{it} = (Y_{it}, X_{it})$,  $\mathcal{A}_{t}^i$ is the sigma field generated by $(Z_{it}, Z_{i,t-1}, \ldots)$, and
 $\mathcal{B}_{t}^i$ is the sigma field generated by $(Z_{it}, Z_{i,t+1}, \ldots)$.

     \item[(iii)] Model: for $X^t_i = \{X_{is}:  s = 1, ..., t\}$, we assume that for all $i,t,N,T,$
\begin{equation*}
Y_{it}  \mid X^t_i, \phi, \beta   \sim \exp[\ell_{it}(\beta,\alpha_{i} + \gamma_t)].
\end{equation*}
The realizations of the parameters and unobserved effects that generate the observed data are denoted by
$\beta^0$ and $\phi^0$.\



      \item[(iv)]  Smoothness and moments:
      We assume that
      $(\beta,\pi) \mapsto \ell_{it}(\beta,\pi)$ is
      four times continuously differentiable over ${\cal B}^0_{\varepsilon}$ a.s.
      The partial derivatives of  $\ell_{it}(\beta,\pi)$
      with  respect to the elements of $(\beta,\pi)$
      up to fourth order
      are bounded in absolute value uniformly over $(\beta,\pi) \in {\cal B}^0_{\varepsilon}$
      by a function $M(Z_{it})>0$ a.s.,
      and $\max_{i,t} \E[M(Z_{it})^{8+\nu}]$
     is a.s. uniformly bounded over  $N,T$.

        \item[(v)] Concavity: 
        For all $N,T,$
    $(\beta,\phi) \mapsto \mathcal{L}(\beta,\phi) =  (NT)^{-1/2} \{\sum_{i,t} \ell_{it} (\beta,\alpha_i + \gamma_t) - b (v'\phi)^2/2\}$
    is strictly concave over $\mathbb{R}^{\dim \beta + N + T}$ a.s.
    Furthermore, there exist constants $b_{\min}$
    and $b_{\max}$ such that for all $(\beta,\pi) \in {\cal B}^0_{\varepsilon}$,
    $0< b_{\min} \leq - \E\left[ \partial_{\pi^2} \ell_{it}(\beta,\pi) \right]   \leq b_{\max}$ a.s.  uniformly over $i,t,N,T$.

    \end{itemize}

\end{assumption}

\begin{remark}[Assumption \ref{ass:PanelA1}]
Assumption \ref{ass:PanelA1}$(i)$ defines the large-$T$ asymptotic framework and is the same as in Hahn and Kuersteiner \cite*{HahnKuersteiner2011}.  The relative rate of $N$ and $T$ exactly balances the order of the bias and variance producing a non-degenerate asymptotic distribution. 

Assumption \ref{ass:PanelA1}$(ii)$ does not impose identical distribution nor stationarity  over the time series dimension, conditional on the unobserved effects, unlike most of the large-$T$ panel literature, e.g.,
Hahn and Newey \cite*{Hahn:2004p882} and
Hahn and Kuersteiner \cite*{HahnKuersteiner2011}. 
These assumptions are violated by the presence of the time effects, because they are treated as parameters.
 The mixing condition is used to bound covariances and moments in the  application of laws of large numbers and central limit theorems -- it could replaced by other conditions that guarantee the applicability of these results.

Assumption \ref{ass:PanelA1}$(iii)$ is the parametric part of the panel model.  We rely on this assumption to guarantee that $\partial_\beta \ell_{it}$ and  $\partial_\pi \ell_{it}$ have martingale difference properties. Moreover, we use certain Bartlett identities implied by this assumption  to simplify some expressions, but those simplifications are not crucial for our results.
We provide expressions for the asymptotic bias and variance that do not apply these simplifications in Remark~\ref{rem:NoBartlett} below.

 Assumption \ref{ass:PanelA1}$(iv)$ imposes smoothness and moment conditions in the log-likelihood function and its derivatives. These conditions 
 guarantee that the higher-order stochastic expansions of the fixed effect estimator that we use to characterize the asymptotic bias are well-defined, and that the remainder terms of these expansions are bounded. 
 
 The most commonly used nonlinear models in applied economics such as logit, probit, ordered probit, Poisson, and Tobit models have
smooth log-likelihoods functions that satisfy the concavity condition of Assumption \ref{ass:PanelA1}$(v)$, provided that all the elements of $X_{it}$ have cross sectional and time series variation.
Assumption \ref{ass:PanelA1}$(v)$ guarantees that $\beta^0$ and $\phi^0$ are the unique solution to the
population problem \eqref{population_prob}, that is all the parameters are point identified.
\end{remark}

To describe the asymptotic distribution of the fixed effects estimator $\widehat \beta,$ it is convenient to introduce some additional
notation. Let $\overline{\cal H}$ be the $(N+T) \times (N+T)$ expected Hessian matrix of
the  log-likelihood with respect to the nuisance parameters evaluated at the true parameters, i.e.
\begin{align}
\overline{\cal H} = \E[ -  \partial_{\phi \phi'} {\cal L}] =  
\left(\begin{array}{cc}  \overline{\mathcal{H}}_{(\alpha\alpha)}^* & \overline{\mathcal{H}}_{(\alpha\gamma)}^*  \\  {[\overline{\mathcal{H}}_{(\alpha\gamma)}^*]}'  & \overline{\mathcal{H}}_{(\gamma\gamma)}^*
\end{array}\right) 
+  \frac{b}{\sqrt{NT}} \, vv' ,
   \label{ExpectedHessian-MAIN}
\end{align}
where $\overline{\mathcal{H}}_{(\alpha\alpha)}^* =  \text{diag}(  \sum_{t} \E[- \partial_{\pi^2} \ell_{it}])/\sqrt{NT}$,
$\overline{\mathcal{H}}_{(\alpha\gamma)it}^* = \E[- \partial_{\pi^2} \ell_{it}]/\sqrt{NT}$,
and
$\overline{\mathcal{H}}_{(\gamma\gamma)} ^*= \linebreak \text{diag}( \sum_{i} \E[- \partial_{\pi^2} \ell_{it}])/\sqrt{NT}$.
Furthermore, let
$\overline{\cal H}^{-1}_{(\alpha\alpha)}$, $\overline{\cal H}^{-1}_{(\alpha\gamma)}$,
$\overline{\cal H}^{-1}_{(\gamma\alpha)}$,
and $\overline{\cal H}^{-1}_{(\gamma\gamma)}$ denote the $N\times N$, $N\times T$, $T\times N$
and $T\times T$ blocks of the inverse $\overline{\cal H}^{-1}$ of $\overline{\cal H}$.
We define the $\dim \beta$-vector $\Xi_{it}$ and the operator  $D_{\beta \pi^q}$
as
\begin{align}
   \Xi_{it}
    & := - \frac 1 {\sqrt{NT}} \sum_{j=1}^N \sum_{\tau=1}^T \left( \overline{\cal H}^{-1}_{(\alpha\alpha)ij}
    +  \overline{\cal H}^{-1}_{(\gamma\alpha)tj}
    +  \overline{\cal H}^{-1}_{(\alpha\gamma)i\tau}
    +\overline{\cal H}^{-1}_{(\gamma\gamma)t\tau} \right)
     \; \mathbb{E}_\phi \left( \partial_{\beta \pi} \ell_{j\tau}
     \right) ,
   \nonumber \\
   D_{\beta \pi^q} \ell_{it} & :=  \partial_{\beta \pi^q} \ell_{it} -  \partial_{\pi^{q+1}} \ell_{it} \Xi_{it},
   \label{DefProXi}
\end{align}
with $q=0,1,2$. The $k$-th component of  $\Xi_{it}$ corresponds to the population least squares projection of
$\E(\partial_{\beta_k \pi} \ell_{it})/\E( \partial_{\pi^2} \ell_{it} )$
                        on the space spanned by the incidental parameters under a
metric given by $ \E( - \partial_{\pi^2} \ell_{it})$, i.e.
\begin{align*}
   \Xi_{it,k} &= \alpha^{\ast}_{i,k} + \gamma^{\ast}_{t,k} ,
   &
  \left(\alpha^{\ast}_{k} , \, \gamma^{\ast}_{k} \right)
   &= \argmin_{\alpha_{i,k},\gamma_{t,k}} \sum_{i,t}
  \E( - \partial_{\pi^2} \ell_{it} )
        \left(  \frac{\E( \partial_{\beta_k \pi} \ell_{it} )}
                        {\E( \partial_{\pi^2} \ell_{it} )}
         - \alpha_{i,k} - \gamma_{t,k} \right)^2.
\end{align*}
The operator $D_{\beta \pi^q}$ partials out individual and time effects in nonlinear models. It corresponds to  individual and time differencing when the model is linear. To see this, consider the normal  linear model $Y_{it} \mid X_{i}^t, \alpha_i,\gamma_t \sim {\cal N}(X_{it}'\beta + \alpha_i + \gamma_t, 1).$ Then, $\Xi_{it} = T^{-1} \sum_{t=1}^T \E[X_{it}] + N^{-1} \sum_{i=1}^N \E[X_{it}] - (NT)^{-1} \sum_{i=1}^N \sum_{t=1}^T \E[X_{it}]$,  $D_{\beta} \ell_{it}  = - \tilde X_{it} \varepsilon_{it}, $ $D_{\beta \pi} \ell_{it}  = - \tilde X_{it},$ and  $D_{\beta \pi^2} \ell_{it}  = 0$, where $\varepsilon_{it} = Y_{it} - X_{it}'\beta - \alpha_i - \gamma_t$ and $\tilde X_{it} = X_{it} - \Xi_{it}$ is the individual and time demeaned explanatory variables.

The following theorem establishes the asymptotic distribution of the fixed effects estimator $\widehat \beta.$

\begin{theorem}[Asymptotic distribution of $\widehat \beta$]
  \label{th:BothEffects}
Suppose that Assumption \ref{ass:PanelA1} holds, that the following limits exist
\begin{align*}
   \overline B_{\infty} &=
      \EE \left[ - \frac {1} {N}  \sum_{i=1}^{N}
            \frac{  \sum_{t=1}^T  \sum_{\tau=t}^T
        \E\left(
                \partial_{\pi} \ell_{it}  D_{\beta \pi} \ell_{i\tau}
                 \right)
                 + \frac 1 2  \sum_{t=1}^T
        \E ( D_{\beta \pi^2} \ell_{it} )   }
        {  \sum_{t=1}^T \E\left(  \partial_{\pi^2} \ell_{i t} \right) }  \right]  ,
        \nonumber \\
      \overline D_{\infty} &=    \EE \left[ -
         \frac {1} {T}  \sum_{t=1}^{T}
            \frac{  \sum_{i=1}^N
        \E\left(
                \partial_{\pi} \ell_{it} D_{\beta \pi} \ell_{it}
              +  \frac 1 2  D_{\beta \pi^2} \ell_{it}  \right)    }
        {  \sum_{i=1}^N \E\left(  \partial_{\pi^2} \ell_{i t} \right) } \right], \\
          \overline W_{\infty}  &=   \EE \left[ - \frac 1 {NT}  \sum_{i=1}^N
     \sum_{t=1}^T  \E \left(
            \partial_{\beta \beta'} \ell_{it}
              -  \partial_{\pi^2} \ell_{it} \Xi_{it} \Xi'_{it} \right) \right],
   \end{align*}
and that $\overline W_{\infty}>0$.
Then,
   \begin{align*}
      \sqrt{NT} \left( \widehat \beta - \beta^0 \right)
          \; \to_d \;
       \overline{W}_{\infty}^{-1} {\cal N}( \kappa \overline B_{\infty}
                    + \kappa^{-1}  \overline D_{\infty} ,
           \;\overline W_{\infty}),
    \end{align*}
    so that  $\overline B_{\infty}^{\beta} =   \overline W_{\infty}^{-1} \overline B_{\infty}$, $\overline D_{\infty}^{\beta} =   \overline W_{\infty}^{-1} \overline D_{\infty}$,
    and  $\overline V_{\infty} = \overline W_{\infty}^{-1}$
     in  \eqref{eq: expansion} and \eqref{eq:MainResult}.
\end{theorem}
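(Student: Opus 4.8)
The plan is to profile out the $(N+T)$-dimensional incidental parameter $\phi$ and obtain a stochastic expansion of the resulting profile score at $\beta^0$, then invert it. First I would use the strict concavity in Assumption~\ref{ass:PanelA1}(v) to guarantee that $\widehat\phi(\beta)$ and $\widehat\beta$ exist, are unique, and are consistent with probability approaching one, so that a local analysis around $(\beta^0,\phi^0)$ is legitimate. The central object is the concentrated score $\partial_\beta \mathcal{L}(\beta,\widehat\phi(\beta))$; by the envelope property its behaviour at $\beta^0$ is governed entirely by how the incidental-parameter estimator $\widehat\phi(\beta^0)-\phi^0$ responds to the score, so the whole argument reduces to expanding this estimation error.

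Second, I would expand $\widehat\phi(\beta^0)-\phi^0 = \overline{\cal H}^{-1}\,\partial_\phi\mathcal{L} + (\text{quadratic and higher-order terms})$ from the first-order conditions $\partial_\phi\mathcal{L}(\beta^0,\widehat\phi(\beta^0))=0$ and the expected Hessian $\overline{\cal H}$ in \eqref{ExpectedHessian-MAIN}. The decisive step is to exploit the additive ANOVA structure of $\overline{\cal H}$: the penalty term $\tfrac{b}{\sqrt{NT}}vv'$ regularizes the otherwise rank-deficient two-way design, and the inverse blocks $\overline{\cal H}^{-1}_{(\alpha\alpha)}$, $\overline{\cal H}^{-1}_{(\gamma\gamma)}$, and the off-diagonal blocks have explicit leading orders $T^{-1}$, $N^{-1}$, and $(NT)^{-1}$. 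Contracting these against $\E(\partial_{\beta\pi}\ell)$ is precisely the weighted least-squares projection that defines $\Xi_{it}=\alpha^\ast_{i}+\gamma^\ast_{t}$ and the partialling-out operator $D_{\beta\pi^q}$; this is what lets me systematically replace $\partial_\beta\ell_{it}$ by the projected score $D_\beta\ell_{it}=\partial_\beta\ell_{it}-\partial_\pi\ell_{it}\,\Xi_{it}$, and the resulting formulas are invariant to the penalty since $\Xi$ lives in the orthogonal complement of the null direction $v$.

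Third, I would substitute this expansion into the profile score and collect terms by order. The mean-zero leading term is $(NT)^{-1/2}\sum_{i,t}D_\beta\ell_{it}$, whose variance converges to $\overline W_\infty$ and which satisfies a central limit theorem under the cross-sectional independence and $\alpha$-mixing of Assumption~\ref{ass:PanelA1}(ii); combined with the profile Hessian, also converging to $\overline W_\infty$, this produces the Gaussian limit. The deterministic bias arises from the quadratic terms, i.e. from the interaction of the nonlinearity with the estimation noise in $\phi$, through a curvature piece in $D_{\beta\pi^2}\ell_{it}$ and a piece in the covariance of $\partial_\pi\ell_{it}$ with $D_{\beta\pi}\ell$. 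Because observations are independent across $i$ but only mixing in $t$, the individual-effect contribution accumulates a one-sided within-$i$ autocovariance sum $\sum_{t}\sum_{\tau\ge t}\E(\partial_\pi\ell_{it}\,D_{\beta\pi}\ell_{i\tau})$ and produces $\overline B_\infty/T$, whereas the time-effect contribution has no cross-sectional autocovariance and produces the purely contemporaneous $\overline D_\infty/N$. With $N/T\to\kappa^2$ the rescalings $\sqrt{NT}/T\to\kappa$ and $\sqrt{NT}/N\to\kappa^{-1}$ yield the mean $\kappa\overline B_\infty+\kappa^{-1}\overline D_\infty$, and premultiplying by $\overline W_\infty^{-1}$ gives the stated limit with $\overline V_\infty=\overline W_\infty^{-1}$.

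The hard part will be controlling the remainder, because $\phi$ lives in a space whose dimension grows with the sample and $\widehat\phi-\phi^0$ is not uniformly $o_P$ of a single rate: its $\alpha$-components are of order $T^{-1/2}$ and its $\gamma$-components of order $N^{-1/2}$, so the cubic and quartic remainders couple the two dimensions and cannot be bounded by a naive one-dimensional argument. Taming them requires uniform control of $\overline{\cal H}^{-1}$ and of the empirical Hessian fluctuations over the entire growing incidental-parameter block, which is exactly what the general multi-dimensional incidental-parameter expansion of Appendix~\ref{app:expansion} supplies. I would therefore verify its regularity hypotheses using the smoothness and moment bounds of Assumption~\ref{ass:PanelA1}(iv), the spectral bounds $0<b_{\min}\le -\E\,\partial_{\pi^2}\ell\le b_{\max}$ of Assumption~\ref{ass:PanelA1}(v), and the mixing decay of Assumption~\ref{ass:PanelA1}(ii), and then read off the leading coefficients $\overline B_\infty$, $\overline D_\infty$, and $\overline W_\infty$ by specializing the general expansion to the additive index structure.
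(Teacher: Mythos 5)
Your proposal follows essentially the same route as the paper's own proof: consistency via strict concavity, the general incidental-parameter expansion of Appendix~\ref{app:expansion} with its high-level hypotheses verified from Assumption~\ref{ass:PanelA1}, the near-diagonal structure of $\overline{\cal H}^{-1}$ (Lemma~\ref{lemma:HessianAdditive}), a CLT for the projected score $D_{\beta}\ell_{it}$, and extraction of the two bias terms $\kappa\overline B_\infty$ and $\kappa^{-1}\overline D_\infty$ from the quadratic part of the expansion, so the approach is correct. One small attribution to fix: the one-sidedness of the sum $\sum_{t}\sum_{\tau\ge t}\E\left(\partial_{\pi}\ell_{it}\,D_{\beta\pi}\ell_{i\tau}\right)$ in $\overline B_\infty$ comes from the martingale-difference property implied by the sequential likelihood specification in Assumption~\ref{ass:PanelA1}(iii), which makes the terms with $\tau<t$ vanish, not from the $\alpha$-mixing condition itself.
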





\begin{remark}\label{remark:ad}
The complete proof of Theorem~\ref{th:BothEffects} is provided in the Appendix. Here we point out why the argument for the consistency proof in models with only individual effects does not apply to our setting, give a heuristic derivation of the asymptotic distribution, and highlight where some of the assumptions are used in the proof. 


\begin{itemize}

\item[(i)] The consistency proof for models with only individual effects relies on partitioning the log-likelihood in the sum of individual log-likelihoods that depend on a fixed number of parameters, the model parameter $\beta$ and the corresponding individual effect $\alpha_i$. The maximizers of the individual log-likelihood are then consistent estimators of all the parameters as $T$ becomes large  by standard arguments. This approach does not work in models with individual and time effects because there is no partition of the data that is only affected by a fixed number of parameters, and whose size grows with the sample size.

\item[(ii)] In the following we give a heuristic discussion of the asymptotic distribution result for $\widehat \beta$. A first-order Taylor series expansion to approximate the first order conditions of \eqref{DefEst} around $\beta^0$ gives
\begin{equation}\label{eq:heuristic_exp}
0 = \partial_\beta {\cal L}(\widehat \beta,\widehat \phi(\widehat \beta)) \approx \partial_\beta {\cal L}(\beta^0,\widehat \phi^0) - \overline W_{\infty} \sqrt{NT}(\widehat \beta - \beta^0),
\end{equation}
where $\widehat \phi^0 = \widehat \phi(\beta^0)$.
A second-order Taylor series expansion to approximate $\partial_\beta {\cal L}(\beta^0,\widehat \phi^0)$ around $\phi^0$ yields
$$
 \partial_{\beta} {\cal L}(\beta^0, \widehat \phi^0) \approx \partial_{\beta} {\cal L}(\beta^0,\phi^0) + \partial_{\beta \phi'} {\cal L}(\beta^0,\phi^0) [\widehat \phi^0 - \phi^0] + \sum_{g=1}^{\dim \phi} \partial_{\beta \phi' \phi_g}  {\cal L}(\beta^0,\phi^0) [\widehat \phi^0 - \phi^0][\widehat \phi_g^0 - \phi_g^0]/2,
$$
where the first term has zero mean and determines the asymptotic variance, and the second and third term determine the asymptotic bias. Thus, by the central limit theorem and the information equality,
$$
\partial_{\beta} {\cal L}(\beta^0,\phi^0)  \to_d \mathcal{N}(0, \overline W_{\infty}).
$$
The second and third terms satisfy
\begin{equation*}\label{eq:heuristic_exp2}
\partial_{\beta \phi'} {\cal L}(\beta^0,\phi^0) [\widehat \phi^0 - \phi^0] + \sum_{g=1}^{\dim \phi} \partial_{\beta \phi' \phi_g}  {\cal L}(\beta^0,\phi^0) [\widehat \phi^0 - \phi^0][\widehat \phi_g^0 - \phi_g^0]/2 \approx \sqrt{NT}(\overline B_{\infty}/T + \overline D_{\infty}/N),
\end{equation*}
where $\overline B_{\infty}$ and $\overline D_{\infty}$ are characterized from a  second-order Taylor series expansion to approximate $\widehat \phi^0$ around $\phi^0$.   We refer to the Appendix for the details of this derivation. There we show that  $\overline B_{\infty}$ and $\overline D_{\infty}$ originate from the elements of $\widehat \phi^0$ corresponding to the individual effects and time effects, respectively. Plugging those results into \eqref{eq:heuristic_exp}, and solving for $\sqrt{NT}(\widehat{\beta} - \beta^0)$ yields
$$
\sqrt{NT}(\widehat{\beta} - \beta^0) \approx \overline W_{\infty}^{-1} [\partial_{\beta} {\cal L}(\beta^0,\phi^0) + \overline B_{\infty}\sqrt{N/T}  + \overline D_{\infty} \sqrt{T/N} ] \to_d \overline W_{\infty}^{-1} \mathcal{N}(\kappa \overline B_{\infty} + \kappa^{-1} \overline D_{\infty}, W_{\infty}).
$$ 

This derivation shows that the source of the bias is that the score $\partial_{\beta} {\cal L}(\beta,\widehat \phi)$ is not centered at zero when $\beta = \beta^0$. This problem arises from the substitution of the incidental parameter $\phi$ by the sample analog $\widehat \phi^0$  that has a rate of convergence slower than $\sqrt{NT}$. Thus, $\overline B_{\infty}$ originates from the estimators of the individual effects in $\phi$, which have rate of convergence $\sqrt{T}$; whereas $\overline D_{\infty}$ originates from the estimators of the time effects in $\phi$, which have convergence rate $\sqrt{N}$.

\item[(iii)] The two key assumptions in the derivation of the asymptotic distribution are the additive separability of $\alpha_i$ and $\gamma_t$ in Assumption \ref{ass:PanelA1}(iii) and the concavity in Assumption \ref{ass:PanelA1}(v). We resort to concavity to prove consistency of $\widehat \beta$ and to bound the remainder terms in all the expansions. Additive separability is convenient to characterize the order of the inverse average Hessian, $\overline{\mathcal{H}}$, defined in \eqref{ExpectedHessian-MAIN}. This inverse Hessian features prominently in the second-order Taylor series expansion of $\widehat \phi$ around $\phi^0$ used to characterize $\overline B_{\infty}$ and $\overline D_{\infty}$.
\end{itemize}

\end{remark}

It is instructive to evaluate the expressions of the bias in our running examples.

\medskip

\noindent  \textbf{Example \ref{example: probit}} (Binary response model). \textit{In this case 
$$\ell_{it}(\beta,\pi) = Y_{it} \log F(X_{it}'\beta + \pi) + (1 - Y_{it}) \log [1 -  F(X_{it}'\beta + \pi)],$$ so that $\partial_{\pi} \ell_{it} = H_{it} (Y_{it} - F_{it}),$  $\partial_{\beta} \ell_{it} = \partial_{\pi} \ell_{it} X_{it},$ $ \partial_{{\pi}^2} \ell_{it} =   -  H_{it} \partial F_{it} + \partial H_{it} (Y_{it} - F_{it})$, $\partial_{\beta \beta'} \ell_{it} = \partial_{\pi^2} \ell_{it} X_{it} X_{it}'$,  $\partial_{\beta \pi} \ell_{it} = \partial_{\pi^2} \ell_{it} X_{it},$ $\partial_{{\pi}^3} \ell_{it} =  -  H_{it} \partial^2 F_{it} - 2 \partial H_{it} \partial F_{it} + \partial^2 H_{it} (Y_{it} - F_{it})$, and $\partial_{\beta {\pi}^2} \ell_{it} = \partial_{{\pi}^3} \ell_{it} X_{it}$, where $H_{it} = \partial F_{it} / [F_{it}(1-F_{it})], and $   $\partial^{j}  G_{it} := \partial^{j} G(Z)|_{Z = X_{it}'\beta^0 + \pi_{it}^0}$ for any function $G$ and $j = 0,1,2$. Substituting these values in the expressions of the bias of Theorem \ref{th:BothEffects} yields  
\begin{eqnarray*}
\overline B_{\infty} &=&    \EE \left[ -
         \frac {1} {2N}  \sum_{i=1}^{N}
            \frac{  \sum_{t=1}^T \left\{ \E[H_{it} \partial^2 F_{it} \tilde{X}_{it}] +  2 \sum_{\tau=t+1}^T
        \E \left[ H_{it}
               (Y_{it} - F_{it} ) \omega_{i\tau}  \tilde{X}_{i\tau}   \right] \right\} } {  \sum_{t=1}^T \E\left( \omega_{it} \right) } \right],\\
\overline D_{\infty} &=&  \EE \left[ -
         \frac {1} {2T}  \sum_{t=1}^{T}
            \frac{  \sum_{i=1}^N  \E[H_{it} \partial^2 F_{it} \tilde{X}_{it}] } {  \sum_{i=1}^N \E\left( \omega_{it} \right) }\right] ,   \\          
\overline W_{\infty} &=&     \EE \left[
         \frac {1} {NT}  \sum_{i=1}^{N} \sum_{t=1}^{T}
            \E[\omega_{it} \tilde{X}_{it} \tilde{X}_{it}']  \right],
\end{eqnarray*}
where $\omega_{it} = H_{it} \partial F_{it}$ and $\tilde X_{it}$ is the residual of the population projection of $X_{it}$ on the space spanned by the incidental parameters under a metric weighted by $\E(\omega_{it})$.  For the probit model where  all the components of $X_{it}$  are strictly exogenous,
$$
\overline B_{\infty} =    \EE \left[
         \frac {1} {2N}  \sum_{i=1}^{N}
            \frac{  \sum_{t=1}^T \E[\omega_{it} \tilde{X}_{it} \tilde{X}_{it}']  } {  \sum_{t=1}^T \E\left( \omega_{it} \right) }\right] \beta^0 , \ \ \overline D_{\infty} =    \EE \left[
         \frac {1} {2T}  \sum_{t=1}^{T}
            \frac{  \sum_{i=1}^N  \E[\omega_{it} \tilde{X}_{it} \tilde{X}_{it}'] } {  \sum_{i=1}^N \E\left( \omega_{it} \right) } \right] \beta^0. 
$$
The asymptotic bias  is therefore a positive definite matrix weighted average of the true parameter value as in the case of the probit model with only individual effects \cite{FernandezVal:2009p3313}.}

\medskip

\noindent  \textbf{Example \ref{example: poisson} } (Poisson model). \textit{In this case 
$$\ell_{it}(\beta,\pi) = (X_{it}'\beta + \pi) Y_{it} - \exp(X_{it}'\beta + \pi) - \log Y_{it}!,$$
 so that $\partial_{\pi} \ell_{it} = Y_{it} - \omega_{it},$  $\partial_{\beta} \ell_{it} =  \partial_{\pi} \ell_{it} X_{it},$ $\partial_{{\pi}^2} \ell_{it} = \partial_{{\pi}^3} \ell_{it}= - \omega_{it} $, $\partial_{\beta \beta'} \ell_{it} = \partial_{{\pi}^2} \ell_{it} X_{it} X_{it}',$ and $\partial_{\beta {\pi}} \ell_{it} = \partial_{\beta {\pi}^2} \ell_{it} = \partial_{{\pi}^3} \ell_{it} X_{it},$ where $\omega_{it} = \exp(X_{it}'\beta^0 + \pi_{it}^0)$. Substituting these values in the expressions of the bias of Theorem \ref{th:BothEffects} yields  
\begin{eqnarray*}
\overline B_{\infty} &=&   \EE \left[ -
         \frac {1} {N}  \sum_{i=1}^{N}
            \frac{  \sum_{t=1}^T  \sum_{\tau=t+1}^T
        \E\left[
               (Y_{it} - \omega_{it} ) \omega_{i\tau} \tilde X_{i\tau} \right]}
        {  \sum_{t=1}^T \E\left( \omega_{it} \right) } \right],\\
 \\          
\overline W_{\infty} &=&     \EE \left[
         \frac {1} {NT}  \sum_{i=1}^{N} \sum_{t=1}^{T}
            \E[\omega_{it} \tilde{X}_{it} \tilde{X}_{it}'] \right],        
\end{eqnarray*}
and $\overline D_{\infty} = 0$,  where $\tilde X_{it}$ is the residual of the population projection of $X_{it}$ on the space spanned by the incidental parameters under a metric weighted by $\E(\omega_{it})$.   If in addition
all the components of $X_{it}$ are strictly exogenous, then we get the no asymptotic bias result $\overline B_{\infty} = \overline D_{\infty} = 0$.}

\begin{remark}[Bias and Variance expressions for  Conditional Moment Models]
    \label{rem:NoBartlett}
In the derivation of the asymptotic distribution, we apply  Bartlett identities implied by  Assumption \ref{ass:PanelA1}$(iii)$ to simplify the expressions. The following expressions of the asymptotic bias and variance   do not make use of these identities and therefore remain valid in conditional moment models that do not specify the entire conditional distribution of $Y_{it}$:
\begin{align*}
   \overline B_{\infty} &=
     \EE \left[   -      \frac {1} {N}  \sum_{i=1}^{N}
            \frac{  \sum_{t=1}^T \sum_{\tau=t}^T
        \E\left(
               \partial_\pi \ell_{it} D_{\beta \pi} \ell_{i\tau} \right) }
        { \sum_{t=1}^T \E\left( \partial_{\pi^2} \ell_{i t} \right) }  \right]  \nonumber \\
       & \quad +  \, \frac 1 2  \,  \EE \left[ \,
        \frac 1 {N}  \sum_{i=1}^N
         \frac{   \sum_{t=1}^T
              \E [ ( \partial_{\pi} \ell_{it} )^2 ]
          \sum_{t=1}^T  \E ( D_{\beta \pi^2 } \ell_{it}  )  }
             {\left[  \sum_{t=1}^T  \E\left( \partial_{\pi^2} \ell_{i t} \right) \right]^2}  \right] \; ,
   \nonumber \\
      \overline D_{\infty} &=    \EE \left[ -
         \frac {1} {T}  \sum_{t=1}^{T}
            \frac{  \sum_{i=1}^N
        \E\left[
               \partial_{\pi} \ell_{it}  D_{\beta \pi} \ell_{it}  \right] }
        { \sum_{i=1}^N \E\left( \partial_{\pi^2} \ell_{i t} \right) } \right]
   \nonumber \\
       & \quad + \, \frac 1 2  \,  \EE \left[
        \frac 1 {T}  \sum_{t=1}^T
         \frac{   \sum_{i=1}^N
              \E [ ( \partial_{\pi} \ell_{it} )^2 ]
         \sum_{i=1}^N  \E (  D_{\beta \pi^2} \ell_{it}
             ) }
             {\left[  \sum_{i=1}^N  \E\left( \partial_{\pi^2} \ell_{i t} \right) \right]^2} \right],
   \nonumber \\
        \overline V_{\infty}  &= 
             \overline W_{\infty}^{-1} 
      \;   \overline \Omega_{\infty} \overline W_{\infty}^{-1}, \\
                 \overline \Omega_{\infty}  &= 
         \EE \left[ \frac 1 {NT}  \sum_{i=1}^N
       \sum_{t=1}^T  \sum_{\tau = 1}^T
     \E \left[
          D_{\beta} \ell_{it} (D_{\beta} \ell_{i\tau})'  \right] \right],
   \end{align*}
   and $ \overline W_{\infty}$ is the same as in Theorem \ref{th:BothEffects}.

For example, consider the Poisson fixed effects estimator in the conditional mean model $ \mathbb{E} [Y_{it} \mid X_{i}^t, \phi, \beta] = \omega_{it} = \exp( X_{it}'\beta + \alpha_i + \gamma_t)$. Applying the previous expressions to  $\ell_{it}(\beta,\pi) = (X_{it}'\beta + \pi) Y_{it} - \exp(X_{it}'\beta + \pi) - \log Y_{it}!$ yields the same expressions for $\overline B_{\infty}$, $\overline D_{\infty}$, $\overline W_{\infty}$ as in  Example 2, and 
$$
\overline \Omega_{\infty}  = 
             \EE \left[ \frac 1 {NT}  \sum_{i=1}^N
       \sum_{t=1}^T   
     \E \left[
          (Y_{it} - \omega_{it})^2  \tilde{X}_{it} \tilde{X}_{it}'   \right] \right],
$$
where $\tilde{X}_{it}$ is defined as in Example 2.  If all the components of $X_{it}$ are strictly exogenous, then we get again the no asymptotic bias result $\overline B_{\infty} = \overline D_{\infty} = 0$.
\end{remark}

\subsection{Asymptotic distribution of APEs}\label{subsec:apes}
In nonlinear models we are often interested in APEs, in addition to model parameters. These effects
are averages of the data, parameters and unobserved effects; see expression
\eqref{eq: meffs}. For the panel models of Assumption \ref{ass:PanelA1} we specify the partial effects as $\Delta(X_{it}, \beta, \alpha_i, \gamma_t) = \Delta_{it}(\beta, \pi_{it})$.
The restriction that the partial effects depend on $\alpha_i$ and $\gamma_t$ through $\pi_{it}$ is natural in our panel models since
$$
\mathbb{E} [Y_{it} \mid X^t_i, \alpha_i, \gamma_t, \beta] = \int y \exp [\ell_{it}(\beta, \, \pi_{it} )] dy,
$$
and the partial effects are usually defined as differences or derivatives of this conditional expectation with respect to the components of $X_{it}$. For example, the partial effects for the binary response and Poisson
models described in Section \ref{sec:model} satisfy this restriction.

The distribution of the unobserved individual and time effects is not ancillary for the APEs, unlike for model parameters. We therefore need to make  assumptions on this  distribution  to define and interpret the APEs, and to derive the asymptotic distribution of their estimators.  We control the heterogeneity of the partial effects assuming that the individual effects and explanatory variables are identically distributed cross sectionally and/or stationary over time. If $(X_{it}, \alpha_i, \gamma_t)$  is identically distributed over $i$ and can be heterogeneously distributed  over $t$, $\Ep[\Delta_{it}] = \delta_t^0$ and  $\delta_{NT}^0 = T^{-1} \sum_{t=1}^T \delta_t^0$ changes only with $T$. If $(X_{it}, \alpha_i, \gamma_t)$  is stationary over $t$ and can be heterogeneously distributed  over $i$,   $\Ep[\Delta_{it}] = \delta_i^0$ and  $\delta_{NT}^0 = N^{-1} \sum_{i=1}^N \delta_i^0$ changes only with $N$. Finally, if $(X_{it}, \alpha_i, \gamma_t)$  is identically distributed over $i$ and stationary  over $t$, $\Ep[\Delta_{it}] = \delta_{NT}^0$ and $\delta_{NT}^0 = \delta^0$ does not change with $N$ and $T.$ 
We also impose  smoothness and moment conditions on the function $\Delta$ that defines the partial effects. We use these conditions to derive  higher-order stochastic expansions for the fixed effect estimator of the APEs and to bound the remainder terms in these expansions.  Let $\{\alpha_i\}_N := \{\alpha_i : 1 \leq i \leq N\}$, $\{\gamma_t\}_T := \{\gamma_t : 1 \leq t \leq T\},$ and $\{X_{it}, \alpha_i, \gamma_t \}_{NT} := \{(X_{it}, \alpha_i, \gamma_t)  : 1 \leq i \leq N, 1 \leq t \leq T\}.$

\begin{assumption} [Partial effects]   \label{ass:PanelA2} ~
  Let $\nu>0$, $\epsilon>0$, and ${\cal B}^0_{\varepsilon}$ all be as in
  Assumption~\ref{ass:PanelA1}.
  
  \begin{itemize}
  \item[(i)] Sampling: for all $N,T,$  $\{X_{it}, \alpha_i, \gamma_t \}_{NT}$ is identically distributed across $i$ and/or stationary across $t$.\footnote{In the working paper version, 
  Fern\'andez-Val and Weidner~\cite*{ThisWorkingPaper2015}, we also consider inference conditional on the unobserved effects by assuming that $\{\alpha_i\}_N$ and $\{\gamma_t\}_T$ are deterministic sequences.}
  
  \item[(ii)] Model: for all $i,t,N,T,$
  the partial effects depend on $\alpha_i$ and $\gamma_t$ through  $\alpha_i + \gamma_t$:  
\begin{equation*}
\Delta(X_{it}, \beta, \alpha_i, \gamma_t) = \Delta_{it}(\beta, \alpha_i + \gamma_t).
\end{equation*}
The realizations of the partial effects are denoted by $ \Delta_{it} := \Delta_{it}(\beta^0, \alpha_i^0 + \gamma_t^0).$

  \item[(iii)] Smoothness and moments: The function
       $(\beta,\pi) \mapsto \Delta_{it}(\beta,\pi)$
        is
      four times continuously differentiable over ${\cal B}^0_{\varepsilon}$ a.s.
      The partial derivatives of  $\Delta_{it}(\beta,\pi)$
      with  respect to the elements of $(\beta,\pi)$
      up to fourth order
      are bounded in absolute value uniformly over $(\beta,\pi) \in {\cal B}^0_{\varepsilon}$
      by a function $M(Z_{it})>0$ a.s.,
      and $\max_{i,t} \E[M(Z_{it})^{8+\nu}]$
     is a.s. uniformly bounded over  $N,T$. 
     
    \item[(iv)] Non-degeneracy and moments:  $0 <  \min_{i,t} [\Ep (\Delta_{it}^2) - \Ep(\Delta_{it})^2] \leq \max_{i,t} [\Ep (\Delta_{it}^2) - \Ep(\Delta_{it})^2]  < \infty,$ uniformly over $N,T.$
 \end{itemize}
\end{assumption}

Analogous to $\Xi_{it}$ and $D_{\beta \pi^q} \ell_{it} $ in equation \eqref{DefProXi}  we define
\begin{align}
   \Psi_{it}
    &= - \frac 1 {\sqrt{NT}} \sum_{j=1}^N \sum_{\tau=1}^T \left(   \overline{\cal H}^{-1}_{(\alpha\alpha)ij}  
    +   \overline{\cal H}^{-1}_{(\gamma\alpha)tj} 
    +   \overline{\cal H}^{-1}_{(\alpha\gamma)i\tau} 
    +    \overline{\cal H}^{-1}_{(\gamma\gamma)t\tau}   \right)  \partial_{\pi} \Delta_{j\tau},
    \nonumber \\
   D_{\pi^q} \Delta_{it} & :=  \partial_{\pi^q} \Delta_{it} -  \partial_{\pi^{q+1}} \ell_{it} \; \E(\Psi_{it}), 
\end{align}
for $q \in \{1,2\}$.
Here, $\Psi_{it}$ is the population projection of
$ \partial_{\pi} \Delta_{it} /
                        \E[ \partial_{\pi^2} \ell_{it} ]$ 
                        on the space spanned by the incidental parameters under the
metric given by $ \E[ - \partial_{\pi^2} \ell_{it} ]$.
We use analogous notation to the previous section  for the derivatives with respect to $\beta$ and higher order derivatives with respect to $\pi$.

Let $\delta_{NT}^0$ and $\widehat \delta$ be the APE and its fixed effects estimator, defined as in equations  \eqref{eq: meffs} and  \eqref{eq:ape} with $\Delta(X_{it}, \beta, \alpha_i, \gamma_t) = \Delta_{it}(\beta, \alpha_i + \gamma_t).$\footnote{We keep the dependence of $\delta_{NT}^0$ on $NT$ to distinguish $\delta_{NT}^0$ from $\delta^0 = \lim_{N,T \to \infty} \delta_{NT}^0$.} 
The following theorem establishes the asymptotic distribution of $\widehat \delta.$

\begin{theorem}[Asymptotic distribution of $\widehat \delta$]
  \label{th:DeltaLimit}
   Suppose that the assumptions of Theorem~\ref{th:BothEffects}
   and Assumption~\ref{ass:PanelA2} hold, and that the following limits exist:\footnote{We thank Fa Wang for pointing out errors in the expressions  
          for $ \overline B_{\infty}^{\delta} $, $ \overline D_{\infty}^{\delta} $, and
          $\overline{V}_{\infty}^{\delta}$ in the published version of the paper.}
  \begin{align*}
       \overline {(D_{\beta} \Delta)}_{\infty} &= \EE \left[
       \frac 1 {NT} \sum_{i=1}^N \sum_{t=1}^T
         \E(  \partial_{\beta} \Delta_{it} -  \Xi_{it} \partial_{\pi} \Delta_{it}  ) \right],
       \nonumber \\
    \overline B_{\infty}^{\delta} &=
     \overline {(D_{\beta} \Delta)}_{\infty}' \overline W_\infty^{-1}  \overline B_{\infty}
      -  \EE \left[   \frac {1} {N}  \sum_{i=1}^{N}
            \frac{  \sum_{t=1}^T  \sum_{\tau=t}^T
        \E\left(
                \partial_{\pi} \ell_{it}  D_{\pi} \Delta_{i\tau}
                 \right)
                 + \frac 1 2  \sum_{t=1}^T
        \E ( D_{\pi^2} \Delta_{it} )   }
        {  \sum_{t=1}^T \E\left(  \partial_{\pi^2} \ell_{i t} \right) }  \right] ,
        \nonumber \\
      \overline D_{\infty}^{\delta} &=   \overline {(D_{\beta} \Delta)}_{\infty}' \overline W_\infty^{-1}  \overline D_{\infty} - 
       \EE \left[  
         \frac {1} {T}  \sum_{t=1}^{T}
            \frac{  \sum_{i=1}^N
        \E\left(
                \partial_{\pi} \ell_{it} D_{\pi} \Delta_{it}
              +  \frac 1 2  D_{\pi^2} \Delta_{it}  \right)    }
        {  \sum_{i=1}^N \E\left(  \partial_{\pi^2} \ell_{i t} \right) } \right], 
       \nonumber \\
          \overline{V}_{\infty}^{\delta} &= \EE \left\{
       \frac {r_{NT}^2} {N^2T^2}
        \Ep \left[  \left(\sum_{i=1}^N \sum_{t = 1}^T  \widetilde \Delta_{it} \right)\left(\sum_{i=1}^N \sum_{t = 1}^T  \widetilde \Delta_{it} \right)' +  \sum_{i=1}^N \sum_{t=1}^T \Gamma_{it} \Gamma_{it}'  + 2 \sum_{i=1}^N \left(\sum_{t = 1}^T  \widetilde \Delta_{it}\sum_{s = t+1}^T \Gamma_{is}' \right) \right] \right\},
  \end{align*}
for some deterministic sequence   $r_{NT} \to \infty$ such that $r_{NT} = \mathcal{O}(\sqrt{NT})$ and $ \overline{V}_{\infty}^{\delta} > 0,$  
where $ \widetilde \Delta_{it} = \Delta_{it} - \Ep(\Delta_{it})$ and $\Gamma_{it}=  \overline {(D_{\beta} \Delta)}_{\infty}' \overline W_\infty^{-1} D_{\beta} \ell_{it}
      -   \E( \Psi_{it} )
          \partial_{\pi} \ell_{it}$.
    Then,
    \begin{equation*}
    r_{NT} (\widehat \delta - \delta_{NT}^0 -  T^{-1} \overline B_{\infty}^{\delta}
                    - N^{-1}  \overline D_{\infty}^{\delta}) \to_d \mathcal{N}(0 ,
           \;\overline V_{\infty}^{\delta}) .
    \end{equation*}
\end{theorem}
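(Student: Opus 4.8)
The plan is to split $\widehat\delta - \delta_{NT}^0$ into a \emph{sampling} term and an \emph{estimation} term,
\[
\widehat\delta - \delta_{NT}^0
= \underbrace{\left[\Delta_{NT}(\beta^0,\phi^0) - \delta_{NT}^0\right]}_{\text{sampling}}
+ \underbrace{\left[\Delta_{NT}(\widehat\beta,\widehat\phi) - \Delta_{NT}(\beta^0,\phi^0)\right]}_{\text{estimation}},
\]
and to analyze each at a normalizing rate $r_{NT}$ that is in general slower than $\sqrt{NT}$. The sampling term equals $(NT)^{-1}\sum_{i,t}\widetilde\Delta_{it}$ and is the source of the $\widetilde\Delta_{it}$ contribution to $\overline V_\infty^\delta$. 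The estimation term I would treat by a second-order Taylor expansion of $\Delta_{NT}(\beta,\phi)$ around $(\beta^0,\phi^0)$, keeping first-order terms in $\widehat\beta-\beta^0$ and terms up to second order in $\widehat\phi-\phi^0$ (since the incidental-parameter estimation error enters the bias at second order), and then substituting the stochastic expansions \eqref{eq:exp_score} and \eqref{eq:exp_phi} already established for Theorem~\ref{th:BothEffects}.

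First I would process the $\beta$-channel: after combining the direct derivative $\partial_\beta\Delta_{NT}$ with the indirect effect through $\widehat\phi(\beta)$, which brings in the $\Xi_{it}\,\partial_\pi\Delta_{it}$ correction, one obtains the term $\overline{(D_\beta\Delta)}_\infty'(\widehat\beta-\beta^0)$. Inserting the expansion of $\widehat\beta$ from Theorem~\ref{th:BothEffects} splits this into the bias piece $\overline{(D_\beta\Delta)}_\infty'\overline W_\infty^{-1}(\overline B_\infty/T+\overline D_\infty/N)$ and the variance piece $\overline{(D_\beta\Delta)}_\infty'\overline W_\infty^{-1}D_\beta\ell_{it}$ forming the first half of $\Gamma_{it}$. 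Next I would process the $\phi$-channel: replacing $\widehat\phi-\phi^0$ by $\overline{\mathcal H}^{-1}\mathcal S$ and its higher-order terms and using the block structure of $\overline{\mathcal H}^{-1}$ from Lemma~\ref{lemma:HessianAdditive}, the terms in $\partial_\pi\Delta_{NT}$ and $\partial_{\pi^2}\Delta_{NT}$ collapse into expressions in the projection $\Psi_{it}$. Their leading stochastic parts produce the $-\E(\Psi_{it})\partial_\pi\ell_{it}$ half of $\Gamma_{it}$, while the expectations of their quadratic parts, together with the $\partial_{\pi^2}\Delta_{it}$ curvature term, yield the remaining incidental-parameter bias terms in $\overline B_\infty^\delta$ and $\overline D_\infty^\delta$. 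Assumption~\ref{ass:PanelA2}(iii) bounds the third- and higher-order Taylor remainders, and the bounds on $\widehat\phi-\phi^0$ from the proof of Theorem~\ref{th:BothEffects} show these are $o_P(r_{NT}^{-1})$.

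The decisive step is the determination of $r_{NT}$ and the central limit theorem for the leading term, which behaves very differently from $\widehat\beta$. Under Assumption~\ref{ass:PanelA2}(i)(a), where the effects are deterministic, $\widetilde\Delta_{it}$ is cross-sectionally independent and time-mixing, so $\Ep[(\sum_{i,t}\widetilde\Delta_{it})(\sum_{i,t}\widetilde\Delta_{it})']=\mathcal O(NT)$ and the usual rate $r_{NT}=\sqrt{NT}$ obtains, with both the $\widetilde\Delta_{it}$ and $\Gamma_{it}$ terms contributing to $\overline V_\infty^\delta$. Under Assumption~\ref{ass:PanelA2}(i)(b), the common random effects $\alpha_i$ and $\gamma_t$ induce strong dependence: decomposing $\widetilde\Delta_{it}$ into an $\alpha_i$-component, a $\gamma_t$-component, and an idiosyncratic remainder shows its second moment is of order $NT^2+N^2T$, so that $r_{NT}\asymp\sqrt{NT/(N+T)}$, strictly slower than $\sqrt{NT}$. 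In this regime the $\Gamma_{it}$ term, and hence the entire incidental-parameter effect, is asymptotically negligible in the variance, and the bias $T^{-1}\overline B_\infty^\delta+N^{-1}\overline D_\infty^\delta$ is of smaller order than $r_{NT}^{-1}$. Assumption~\ref{ass:PanelA2}(iv) guarantees $\overline V_\infty^\delta>0$ in both regimes, and treating $r_{NT}$ as a free normalizing sequence unifies them.

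The main obstacle is exactly this central limit theorem for a doubly-indexed array carrying common individual and time factors: under case (b) the summands are neither independent nor martingale differences across the two indices, so no standard panel CLT applies directly. I would handle it with a H\'{a}jek-type projection, writing the dominant variation as the sum of a cross-sectionally independent individual part and an $\alpha$-mixing time part plus an asymptotically negligible remainder, applying a CLT for cross-sectionally independent sequences and a CLT for mixing sequences respectively, and controlling their joint covariance; Assumption~\ref{ass:PanelA1}(ii) supplies the mixing control and Assumptions~\ref{ass:PanelA1}(iv) and \ref{ass:PanelA2}(iii)--(iv) the moment bounds. Combining the Gaussian limit of the leading term with the bias centering and the negligible remainders delivers the stated convergence.
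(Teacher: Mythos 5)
Your proposal is correct and follows essentially the same route as the paper's proof: the identical sampling/estimation decomposition of $\widehat\delta-\delta_{NT}^0$, the same second-order expansion of $\Delta_{NT}$ with the stochastic expansions of $\widehat\beta$ and $\widehat\phi$ substituted in (this is Theorem~\ref{th:DeltaExpansion}, particularized to the panel model in Theorem~\ref{th:connection}(iii)), and the same moment calculations that pin down $r_{NT}$ in the two sampling regimes and show that the bias and the $\Gamma_{it}$-variance are negligible under Assumption~\ref{ass:PanelA2}(i)(b); your H\'{a}jek-projection plan for the case-(b) CLT simply fleshes out a step the paper leaves terse. The one ingredient of the paper's argument absent from yours is the explicit claim (Part 3 of the paper's proof) that the sampling term $\delta-\delta_{NT}^0$ and the bias-centered estimation term $\widehat\delta-\delta-T^{-1}\overline B_{\infty}^{\delta}-N^{-1}\overline D_{\infty}^{\delta}$ are asymptotically independent, which is what licenses writing $\overline V_{\infty}^{\delta}$ as the sum of the $\widetilde\Delta_{it}$ and $\Gamma_{it}$ contributions with no cross term in the regime where both are of the same order (case (i)(a)).
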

 
 
\begin{remark}[Convergence rate, bias and variance] \label{remark: conv_rate} To understand the asymptotic distribution of $\widehat \delta$ is useful to decompose 
$$
\widehat \delta - \delta_{NT}^0 = [\widehat \delta - \delta] + [\delta  - \delta_{NT}^0],
$$ 
where $\delta := (NT)^{-1} \sum_{i=1}^N \sum_{t=1}^T \Delta_{it}$. In this decomposition the first term captures variation due to parameter estimation, whereas the second term captures variation due to estimation of a population mean by a sample mean.  Under Assumption~\ref{ass:PanelA2}(iv) the convergence rate $r_{NT}$ is determined by the convergence rate of  $\delta - \delta_{NT}^0$, which depends on the sampling properties of the unobserved effects. For example, if  $\{\alpha_i\}_N$  and  $\{\gamma_t\}_T$ are independent sequences, and  $\alpha_i$ and $\gamma_t$ are independent for all $i,t$, then $r_{NT} = \sqrt{NT/(N+T-1)}$, and
$$
          \overline{V}_{\infty}^{\delta} = \EE \left\{
       \frac {r_{NT}^2} {N^2 T^2}
       \sum_{i=1}^N \left[ \sum_{t, \tau = 1}^T \Ep( \widetilde \Delta_{it} \widetilde \Delta_{i \tau}' ) + \sum_{j \neq i}\sum_{t = 1}^T \Ep( \widetilde \Delta_{it} \widetilde \Delta_{j t} ') +  \sum_{t=1}^T \Ep(\Gamma_{it} \Gamma_{it}') +  2 \sum_{s>t} \Ep(\widetilde \Delta_{it} \Gamma_{is}')\right] \right\}.
$$
In the expression of $ \overline{V}_{\infty}^{\delta}$, the first two terms come from  $\delta  - \delta_{NT}^0$,  the third term comes from $\widehat \delta - \delta$, and the last term is the asymptotic covariance between $\delta  - \delta_{NT}^0$ and  $\widehat \delta - \delta$. The last term drops out when all the components of $X_{it}$ are strictly exogenous. 
The first two terms of $ \overline{V}_{\infty}^{\delta}$ are of order $NT(T + N - 1)r_{NT}^2/(NT)^2 = \mathcal{O}(1)$ by construction, the last term of  $ \overline{V}_{\infty}^{\delta}$ is of order $NT r_{NT}^2/(NT)^2 = \mathcal{O}(T^{-1}+N^{-1})$, and the asymptotic bias $ r_{NT} (T^{-1} \overline B_{\infty}^{\delta} + N^{-1}  \overline D_{\infty}^{\delta})$ is  of order $r_{NT}(T^{-1} + N^{-1}) = \mathcal{O}(T^{-1/2}+N^{-1/2})$. Thus, the bias and variance coming from parameter estimation are asymptotically negligible relative to the variances coming from the estimation of a population mean by a sample mean. In numerical examples, however, we find that correcting the mean and variance for parameter estimation improves the finite-sample estimation and inference properties of the APE estimators. 
\end{remark}

\begin{remark}[Average effects from bias corrected estimators] The first term in the expressions of the biases $ \overline B_{\infty}^{\delta}$ and $  \overline D_{\infty}^{\delta}$ comes from the bias of the estimator of $\beta$. It drops out when the APEs are 
constructed from asymptotically unbiased or bias corrected estimators of the parameter $\beta$, i.e.
\begin{equation*}
\widetilde \delta =  \Delta( \widetilde
\beta, \widehat \phi(\widetilde \beta)),
\end{equation*}
where $\widetilde \beta$ is such that $\sqrt{NT}(\widetilde \beta - \beta^0) \to_d N(0, \overline W_{\infty}^{-1})$. 
The asymptotic variance of $\widetilde \delta$ is the same as in Theorem \ref{th:DeltaLimit}.
\end{remark}



%

In the following examples we assume that the APEs are constructed from asymptotically unbiased estimators of the model parameters.

\noindent  \textbf{Example \ref{example: probit} } (Binary response model). \textit{Consider the partial effects defined in (\ref{example: probit: meff1}) and (\ref{example: probit: meff2}) with
$$
\Delta_{it}(\beta , \pi) =  F(\beta_k +
X_{it,-k}'\beta_{-k} +  \pi) -
F(X_{it,-k}'\beta_{-k} + \pi)  \text{ and }  \Delta_{it}(\beta, \pi) = \beta_k  \partial F(X_{it}'\beta +
\pi).
$$
Using the notation previously introduced for this example, the components of the asymptotic bias of $\widetilde \delta$ are
\begin{align*}
    \overline B_{\infty}^{\delta} &=  \EE \left[  \frac {1} {2N}  \sum_{i=1}^{N} \textstyle
            \frac{  \sum_{t=1}^T \left[  2 \sum_{\tau=t+1}^T
         \E\left( H_{it}(Y_{it} - F_{it})  \omega_{i\tau} \tilde \Psi_{i\tau}
                 \right)   - \E(\Psi_{it}) \E(H_{it} \partial^2 F_{it}) + \E(\partial_{\pi^2} \Delta_{it}) \right]}
        {  \sum_{t=1}^T \E\left( \omega_{it} \right) }\right],
       \\
\overline D_{\infty}^{\delta} &= \EE \left[
         \frac {1} {2T}  \sum_{t=1}^{T}
            \frac{  \sum_{i=1}^N
         \left[    - \E(\Psi_{it}) \E(H_{it} \partial^2 F_{it}) + \E(\partial_{\pi^2} \Delta_{it}) \right]}
        {  \sum_{i=1}^N \E\left(  \omega_{it} \right) } \right],
\end{align*}
where $\tilde \Psi_{it}$ is the residual of the population regression of
$ - \partial_{\pi} \Delta_{it} /
                        \E[ \omega_{it}]$ 
                        on the space spanned by the incidental parameters under the
metric given by $\E[\omega_{it}]$.  If all the components of $X_{it}$  are strictly exogenous,
the first term of $ \overline B_{\infty}^{\delta}$ is zero.}

\medskip

\noindent  \textbf{Example \ref{example: poisson} } (Poisson model). \textit{Consider the partial effect
$$
\Delta_{it}(\beta, \pi) = g_{it}(\beta) \exp(X_{it}'\beta + \pi),
$$
where $g_{it}$ does not depend on $\pi$. For example, $g_{it}(\beta) = \beta_k +  \beta_j
h(Z_{it})$ in  (\ref{example: poisson: meff}). Using the notation previously introduced for this example, the components of the asymptotic bias are
\begin{equation*}
\overline B_{\infty}^{\delta} =     \EE \left[
         \frac {1} {N}  \sum_{i=1}^{N}
            \frac{  \sum_{t=1}^T  \sum_{\tau=t+1}^T
        \E\left[
               (Y_{it} - \omega_{it} ) \omega_{i\tau} \tilde g_{i\tau} \right]}
        {  \sum_{t=1}^T \E\left( \omega_{it} \right) } \right] ,\\
\end{equation*}
and $\overline D_{\infty}^{\delta} = 0$,  where $\tilde g_{it}$ is the residual of the population projection of $g_{it}$ on the space spanned by the incidental parameters under a metric weighted by $\E[\omega_{it}]$. The asymptotic bias is zero if all the components of $X_{it}$ are strictly exogenous or $g_{it}(\beta)$ is constant. The latter arises in the leading case of the partial effect of the $k$-th component of $X_{it}$ since $g_{it}(\beta) = \beta_k$. This no asymptotic bias result applies to any type of regressor, strictly exogenous or predetermined.}

\subsection{Bias corrected estimators}
The results of the previous sections show that the asymptotic
distributions of the fixed effects estimators of the model parameters and APEs can have biases of the same
order as the  variances under sequences where $T$ grows at
the same rate as $N$. This is the large-$T$ version of the
incidental parameters problem that invalidates 
any inference based on the fixed effect estimators even in large samples.
In this section we describe how to construct  analytical and jackknife bias corrections for 
the fixed effect estimators and give conditions for the asymptotic validity of
these corrections. 

The jackknife correction for the model parameter $\beta$  in equation \eqref{eq: jackknife2} is generic and applies to the panel model. For the APEs, the jackknife correction is formed similarly as
\begin{equation*}\label{eq: jack_apes}
\widetilde{\delta}_{NT}^{J} = 3  \widehat \delta_{NT} - \widetilde{\delta}_{N,T/2} - \widetilde{\delta}_{N/2,T},
\end{equation*}
where $\widetilde{\delta}_{N,T/2}$ is the average of the 2 split jackknife
estimators of the APE that use all the individuals and leave out the first and second halves of the time
periods, and $\widetilde{\delta}_{N/2,T}$ is the average of the 2 split jackknife
estimators of the APE that use all the time periods and leave out half of the individuals. 


The analytical corrections are constructed using sample analogs of the expressions in Theorems \ref{th:BothEffects} and \ref{th:DeltaLimit}, replacing the true values of $\beta$ and $\phi$ by the  fixed effects estimators. To describe these corrections, we introduce some additional notation. For any function of the data, unobserved effects and parameters $g_{itj}(\beta,\alpha_i + \gamma_t,\alpha_i + \gamma_{t-j})$ with $0 \leq j < t$,  let $\widehat g_{itj} = g_{it}(\widehat \beta, \widehat \alpha_i + \widehat \gamma_t, \widehat \alpha_i + \widehat \gamma_{t-j})$ denote the fixed effects estimator, 
e.g., $\widehat{\E[\partial_{\pi^2} \ell_{it}]}$ denotes the fixed effects estimator of $\E[\partial_{\pi^2} \ell_{it}].$
Let $\widehat{\cal H}^{-1}_{(\alpha\alpha)}$, $\widehat{\cal H}^{-1}_{(\alpha\gamma)}$,
$\widehat{\cal H}^{-1}_{(\gamma\alpha)}$,
and $\widehat{\cal H}^{-1}_{(\gamma\gamma)}$ denote the 
blocks of the matrix $\widehat{\cal H}^{-1}$, where  
$$
\widehat{\cal H} =  
 \left(\begin{array}{cc}  \widehat{\mathcal{H}}_{(\alpha\alpha)}^* & \widehat{\mathcal{H}}_{(\alpha\gamma)}^*  \\ {[\widehat{\mathcal{H}}_{(\alpha\gamma)}^*]}' & \widehat{\mathcal{H}}_{(\gamma\gamma)}^*
\end{array}\right) 
+  \frac{b} {\sqrt{NT}} \, vv' ,$$
$\widehat{\mathcal{H}}_{(\alpha\alpha)}^* =  \text{diag}( -  \sum_{t} \widehat{\E[\partial_{\pi^2} \ell_{it}]})/\sqrt{NT}$, $\widehat{\mathcal{H}}_{(\alpha\alpha)} ^*=  \text{diag}( -  \sum_{i} \widehat{\E[\partial_{\pi^2} \ell_{it}]})/\sqrt{NT}$, and $\widehat{\mathcal{H}}_{(\alpha\gamma)it}^* =
-  \widehat{\E[\partial_{\pi^2} \ell_{it}]}/\sqrt{NT}$.
Let
\begin{align*}
  \widehat  \Xi_{it}
    &= - \frac 1 {\sqrt{NT}} \sum_{j=1}^N \sum_{\tau=1}^T \left( \widehat{\cal H}^{-1}_{(\alpha\alpha)ij}  
    +   \widehat{\cal H}^{-1}_{(\gamma\alpha)tj}  
    +    \widehat{\cal H}^{-1}_{(\alpha\gamma)i\tau}  
    +  \widehat{\cal H}^{-1}_{(\gamma\gamma)t\tau}   \right) 
     \; \widehat{\E \left( \partial_{\beta \pi} \ell_{j\tau} \right)}.
\end{align*}
The $k$-th component of  $\widehat \Xi_{it}$ corresponds to a least squares regression of
$ \widehat{\E \left( \partial_{\beta_k \pi} \ell_{it} \right)}/\widehat{\E( \partial_{\pi^2} \ell_{it} )}$ on the space spanned by the incidental parameters weighted by $ \widehat{\E( - \partial_{\pi^2} \ell_{it})}.$

The analytical bias corrected estimator of $\beta^0$ is 
\begin{equation}\label{eq: bias_parameter}
\widetilde \beta^A = \widehat \beta - \widehat B_{NT}^{\beta} /T - \widehat D_{NT}^{\beta}/N,
\end{equation}
where $\widehat B_{NT}^{\beta} = \widehat W^{-1} \widehat B$, $\widehat D_{NT}^{\beta} = \widehat W^{-1}  \widehat D$,
  \begin{align}
   \widehat{B} &=
      -    \frac {1} {N}  \sum_{i=1}^{N}
            \frac{   \sum_{j=0}^L [T/(T-j)]  \sum_{t=j+1}^{T}
        \widehat{\E\left(
                \partial_{\pi} \ell_{i,t-j}  D_{\beta \pi} \ell_{it}
                 \right)}
                 + \frac 1 2  \sum_{t=1}^T
        \widehat{\E ( D_{\beta \pi^2} \ell_{it} )}   }
        {  \sum_{t=1}^T \widehat{\E\left(  \partial_{\pi^2} \ell_{i t} \right) }}     ,
        \nonumber \\
      \widehat D &=  -  
         \frac {1} {T}  \sum_{t=1}^{T}
            \frac{  \sum_{i=1}^N \left[
        \widehat{\E\left(
                \partial_{\pi} \ell_{it} D_{\beta \pi} \ell_{it}\right)}
              +  \frac 1 2  \widehat{\E\left(D_{\beta \pi^2} \ell_{it}  \right) } \right]  }
        {  \sum_{i=1}^N \widehat{ \E\left(  \partial_{\pi^2} \ell_{i t} \right) }}, \nonumber \\
        \widehat W  &=  -  (NT)^{-1}  \sum_{i=1}^N
     \sum_{t=1}^T  \left[\widehat{\E \left(
            \partial_{\beta \beta'} \ell_{it}\right)}
              -   \widehat{\E \left(\partial_{\pi^2} \ell_{it} \Xi_{it} \Xi'_{it} \right)}\right], \label{eq: se_parameter}
   \end{align}
%
and $L$ is a trimming parameter for estimation of spectral expectations  such that $L \to \infty$ and $L/T \to 0$ \cite{HahnKuersteiner2011}.  Here we use truncation instead of kernel smoothing in the estimation of spectral expectations following Hahn and Kuersteiner \cite*{HK2007}. Note that, unlike for variance estimation, a kernel is not needed to ensure that the bias estimator be positive. Instead of choosing a value of $L$,  our recommendation for practice is to conduct a sensitivity analysis by reporting estimates for multiple values of $L$ starting from $L=1$.
From our experience  based on extensive Monte Carlo simulations, we do not recommend values of $L$ greater than $4$,  because the finite-sample dispersion of the estimator quickly increases with $L$. We refer to Section \ref{sec: MC} for an example of sensitivity analysis with respect to $L$. The factor $T/(T-j)$ is a degrees of freedom adjustment that rescales the time series averages $T^{-1} \sum_{t=j+1}^T$ by the number of observations instead of by $T$. Similar corrections  for conditional mean models can be formed using the sample analogs of the expressions of $\overline{B}_{\infty}$ and $\overline{D}_{\infty}$ in Remark \ref{rem:NoBartlett}.  We do not spell out these estimators for the sake of brevity. 

Asymptotic $(1-p)$--confidence intervals for the components of $\beta^0$ can be formed as
$$
\widetilde \beta_k^A \pm z_{1-p} \sqrt{\widehat W_{kk}^{-1} / (NT)},  \ \ k = \{1, ..., \dim \beta^0\},
$$
where $z_{1-p}$ is the $(1-p)$--quantile of the standard normal distribution, and $\widehat W_{kk}^{-1} $ is the $(k,k)$-element of the 
matrix $\widehat W^{-1}$. In conditional moment models we replace $\widehat W_{kk}$ by the $(k,k)$-element of the matrix $\widehat W^{-1} \widehat \Omega \widehat W^{-1}$, where
$$
\widehat \Omega = \frac 1 {NT}  \sum_{i=1}^N
       \sum_{t=1}^T \sum_{\tau=1}^T  
     \widehat{\E \left[
          D_{\beta} \ell_{it} (D_{\beta} \ell_{i\tau})'  \right]}.
$$
  
We have implemented the analytical correction at the level of the estimator.  Alternatively,  we can implement the correction at the level of the score or first order conditions by solving 
\begin{equation}\label{eq : sbc}
(NT)^{-1/2} \partial_{\beta} {\cal L}(\beta, \widehat \phi(\beta)) = \widehat B/T + \widehat D/N,
\end{equation}
for $\beta$. 
Global concavity of the objective function guarantees that the solution to \eqref{eq : sbc}  is unique. Other possible extensions such as continuously updated score corrections where $\overline B_{\infty}$ and $\overline D_{\infty}$ are estimated together with $\beta$, corrections at the level of the objective function, or iterative corrections are left to future research.

The analytical bias corrected estimator of $\delta^0_{NT}$ is 
$$
\widetilde \delta^A = \widehat \delta - \widehat B^\delta/T - \widehat D^\delta/N,
$$
where $\widetilde \delta$ is the APE constructed from a bias corrected estimator of $\beta$. Let
\begin{align*}
  \widehat \Psi_{it}
    &= - \frac 1 {\sqrt{NT}} \sum_{j=1}^N \sum_{\tau=1}^T \left(
     \widehat {\cal H}^{-1}_{(\alpha\alpha)ij}  
    +     \widehat  {\cal H}^{-1}_{(\gamma\alpha)tj} 
     +   \widehat  {\cal H}^{-1}_{(\alpha\gamma)i\tau}   
         +   \widehat  {\cal H}^{-1}_{(\gamma\gamma)t\tau}     \right)  \widehat{\partial_{\pi} \Delta_{j\tau}}.
\end{align*}
The fixed effects estimators of the components of the asymptotic bias  are
\begin{align*}
       \widehat B^{\delta} &=
      -    \frac {1} {N}  \sum_{i=1}^{N}
            \frac{   \sum_{j=0}^L [T/(T-j)]  \sum_{t=j+1}^{T}
        \widehat{\E\left(
                \partial_{\pi} \ell_{i,t-j}  D_{\pi} \Delta_{it}
                 \right)}
                 + \frac 1 2  \sum_{t=1}^T
        \widehat{\E ( D_{\pi^2} \Delta_{it} )}   }
        {  \sum_{t=1}^T \widehat{\E\left(  \partial_{\pi^2} \ell_{i t} \right) }}     ,
        \nonumber \\
        \widehat D^\delta  &=  -  
         \frac {1} {T}  \sum_{t=1}^{T}
            \frac{  \sum_{i=1}^N \left[
        \widehat{\E\left(
                \partial_{\pi} \ell_{it} D_{\pi} \Delta_{it}\right)}
              +  \frac 1 2  \widehat{\E\left(D_{\pi^2} \Delta_{it}  \right) } \right]  }
        {  \sum_{i=1}^N \widehat{ \E\left(  \partial_{\pi^2} \ell_{i t} \right) }}.
\end{align*}
The estimator of the asymptotic variance depends on the sampling properties of the unobserved effects. Under the independence assumption  of Remark  \ref{remark: conv_rate}  with all the components of $X_{it}$  strictly exogenous, 
 \begin{equation}\label{eq: se_ameff2}
         \widehat{V}^{\delta} = 
       \frac {r_{NT}^2} {N^2 T^2}
       \sum_{i=1}^N  \left[  \sum_{t,\tau = 1}^T \widehat{\tilde \Delta}_{it} \widehat{\tilde \Delta}_{i\tau}' +  \sum_{t = 1}^T \sum_{j \neq i}  \widehat{\tilde \Delta}_{it} \widehat{\tilde \Delta}_{jt}' + \sum_{t=1}^{T} \widehat{\E( \Gamma_{it} \Gamma_{it}' )} \right],
       \end{equation}
  where $  \widehat{\tilde \Delta}_{it}  = \widehat \Delta_{it} - N^{-1} \sum_{i=1}^N \widehat \Delta_{it}$ under identical distribution over $i$, $  \widehat{\tilde \Delta}_{it}  = \widehat \Delta_{it} - T^{-1} \sum_{t=1}^T \widehat \Delta_{it}$ under stationarity over $t$, and $ \widehat{\tilde \Delta}_{it} = \widehat \Delta_{it} - \widehat \delta$ under both. 
Note that we do not need to specify the convergence rate $r_{NT}$ to make inference because the standard errors $\sqrt{\widehat{V}^{\delta}}/r_{NT}$ do not depend on $r_{NT}$. Bias corrected estimators and confidence intervals can be constructed in the same fashion as for the model parameter. 

We use the following homogeneity assumption to show the validity of the jackknife corrections for the model parameters and APEs.
It implies that $ \widetilde{\beta}_{N,T/2} - \widehat \beta_{NT} = \overline B_{\infty}^{\beta}/T + o_P(T^{-1} \vee N^{-1})$ and  $\widetilde{\beta}_{N/2,T} - \widehat \beta_{NT} = \overline D_{\infty}^{\beta}/N + o_P(T^{-1} \vee N^{-1})$, which are weaker but higher level sufficient conditions for the validity of the jackknife for the model parameter.  For APEs, Assumption \ref{assumption: stationarity} also ensures that these effects do not change with $T$ and $N$, i.e. $\delta_{NT}^0 = \delta^0$. The analytical corrections
 {\it do not} require this assumption.   
\begin{assumption}[Unconditional homogeneity] \label{assumption: stationarity} The sequence $\{(Y_{it}, X_{it}, \alpha_i, \gamma_t): 1 \leq i \leq N, 1 \leq t \leq T\}$ is  identically distributed across $i$ and strictly stationary
across $t,$ for each $N,T.$
\end{assumption}
This assumption might seem  restrictive for dynamic models where  $X_{it}$ includes lags of the dependent variable because in this case it restricts the unconditional distribution of the initial conditions of $Y_{it}$. Note, however, that Assumption \ref{assumption: stationarity} allows  the initial conditions to depend on the unobserved effects. In other words, it does not impose that the initial conditions are generated from the stationary distribution of  $Y_{it}$ conditional on $X_{it}$ and $\phi$. Assumption \ref{assumption: stationarity} rules out time trends and structural breaks in the processes for the unobserved effects and observed variables.  

\begin{remark}[Test of homogeneity] Assumption \ref{assumption: stationarity}  is a sufficient condition for the validity of the jackknife corrections. It has the testable implications that the probability limits of the fixed effects estimator are the same in all the partitions of the panel. For example, it implies that $\beta_{N,T/2}^1 =\beta_{N,T/2}^2$, where $\beta_{N,T/2}^1$ and $\beta_{N,T/2}^2$ are the probability limits of the fixed effects estimators of $\beta$ in the subpanels that include all the individuals and the first and second halves of the time periods, respectively.   These implications can be tested using  variations of the Chow-type test proposed in Dhaene and Jochmans  \cite*{DhaeneJochmans2015}. We provide an example of the application of these tests to our setting in Section  \ref{sec: empirics} of the supplemental material.
\end{remark}

The following theorems are the main result of this section. They show that the analytical and jackknife  bias corrections eliminate the bias from the asymptotic
distribution of the fixed effects estimators of the model parameters and APEs without increasing variance, and that the estimators of the asymptotic variances  are consistent. 

\begin{theorem} [Bias corrections for $\widehat \beta$] \label{th:bc} Under the conditions of Theorems \ref{th:BothEffects}, 
$$
\widehat W \to_P \overline{W}_{\infty},
$$
and, if $L \to \infty$ and $L/T \to 0,$
$$
\sqrt{NT}(\widetilde \beta^A - \beta^0)  \to_d \mathcal{N}(0,
\overline{W}_{\infty}^{-1}).
$$
Under the conditions of Theorems \ref{th:BothEffects} and Assumption \ref{assumption: stationarity},
$$
\sqrt{NT}(\widetilde \beta^J - \beta^0)  \to_d \mathcal{N}(0,
\overline{W}_{\infty}^{-1}).
$$
\end{theorem}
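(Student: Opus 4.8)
The plan is to prove the three conclusions in turn, taking as the starting point the stochastic expansion $\overline W_\infty\,\sqrt{NT}(\widehat\beta-\beta^0)=U^{(0)}+U^{(1)}+o_P(1)$ established in the proof of Theorem~\ref{th:BothEffects} (display~\eqref{ApproxFOC}), with $U^{(0)}\to_d\mathcal{N}(0,\overline W_\infty)$ and $U^{(1)}\to_P\kappa\overline B_\infty+\kappa^{-1}\overline D_\infty$. For the first conclusion, $\widehat W\to_P\overline W_\infty$, I would decompose $\widehat W-\overline W_\infty$ into the error from replacing the conditional expectations $\E$ by their realized analogs and the error from replacing $(\beta^0,\phi^0)$ by $(\widehat\beta,\widehat\phi)$. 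The first is controlled by a law of large numbers for $\alpha$-mixing arrays under the moment bounds of Assumption~\ref{ass:PanelA1}(iv); the second by a mean-value expansion in $(\beta,\phi)$ combined with consistency of $\widehat\beta$ and uniform consistency of $\widehat\phi$ from the proof of Theorem~\ref{th:BothEffects}. The only nonstandard ingredient is $\widehat\Xi_{it}$, which requires that the blocks of $\widehat{\cal H}^{-1}$ converge to those of $\overline{\cal H}^{-1}$; here I would lean on the structural characterization of Lemma~\ref{lemma:HessianAdditive} (strong diagonal of order one, off-diagonal of order $(NT)^{-1/2}$) to bound the inversion and transfer consistency from $\widehat{\cal H}$ to $\widehat{\cal H}^{-1}$.

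Given $\widehat W\to_P\overline W_\infty$, the analytical correction reduces to showing $\widehat B\to_P\overline B_\infty$ and $\widehat D\to_P\overline D_\infty$, so that $\widehat B^\beta_{NT}=\widehat W^{-1}\widehat B\to_P\overline B^\beta_\infty$ and $\widehat D^\beta_{NT}\to_P\overline D^\beta_\infty$. Writing $\sqrt{NT}(\widetilde\beta^A-\beta^0)=\sqrt{NT}(\widehat\beta-\beta^0)-\sqrt{N/T}\,\widehat B^\beta_{NT}-\sqrt{T/N}\,\widehat D^\beta_{NT}$ and using $\sqrt{N/T}\to\kappa$, $\sqrt{T/N}\to\kappa^{-1}$, the estimated bias exactly offsets the drift $\kappa\overline B^\beta_\infty+\kappa^{-1}\overline D^\beta_\infty$ in the limit of Theorem~\ref{th:BothEffects}, and Slutsky yields $\mathcal{N}(0,\overline W_\infty^{-1})$. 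The substantive step is consistency of the spectral part of $\widehat B$: the truncated sum $\sum_{j=0}^L[T/(T-j)]\sum_{t=j+1}^T\widehat{\E(\partial_\pi\ell_{i,t-j}D_{\beta\pi}\ell_{it})}$ targets $\sum_t\sum_{\tau\ge t}\E(\partial_\pi\ell_{it}D_{\beta\pi}\ell_{i\tau})$, and I would argue consistency through a HAC-type bias--variance tradeoff under $\alpha$-mixing, where the predeterminedness of Assumption~\ref{ass:PanelA1}(iii) produces the one-sided covariance structure, the mixing decay of Assumption~\ref{ass:PanelA1}(ii) ensures absolute summability, $L\to\infty$ kills the truncation bias, and $L/T\to0$ controls the estimation variance.

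For the jackknife, I would apply the expansion of Theorem~\ref{th:BothEffects} separately to each of the four subpanels. Assumption~\ref{assumption: stationarity} is what makes this succeed: identical distribution across $i$ and stationarity across $t$ guarantee that every subpanel shares the same limiting coefficients $\overline B^\beta_\infty,\overline D^\beta_\infty$ and the same limiting influence function. A subpanel that halves $T$ thus carries individual-effect bias $\overline B^\beta_\infty/(T/2)=2\overline B^\beta_\infty/T$ but unchanged time-effect bias $\overline D^\beta_\infty/N$, and symmetrically for halving $N$; substituting into $\widetilde\beta^J=3\widehat\beta-\widetilde\beta_{N,T/2}-\widetilde\beta_{N/2,T}$ cancels both leading bias terms, as the heuristic of Section~\ref{sec:bc} anticipates. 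For the variance, because the two time-halves partition the $T$ periods and (under homogeneity) share a common influence representation, the variance term of $\widetilde\beta_{N,T/2}$ aggregates to exactly that of $\widehat\beta$, so $\widetilde\beta_{N,T/2}-\widehat\beta$ and $\widetilde\beta_{N/2,T}-\widehat\beta$ contribute only bias plus $o_P((NT)^{-1/2})$ remainders and $\widetilde\beta^J$ inherits the variance $\overline W_\infty^{-1}$. I expect the main obstacle to be exactly this remainder control: verifying that the higher-order terms in each subpanel expansion are genuinely $o_P(T^{-1}\vee N^{-1})$ uniformly, so that after multiplication by $\sqrt{NT}$ they vanish. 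This amounts to re-running the bounding arguments of Theorem~\ref{th:BothEffects} on every subpanel and exploiting the asymptotic independence of the time-halves under mixing (and cross-sectional independence for the individual-halves) to handle the cross-subpanel covariances.
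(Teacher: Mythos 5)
Your proposal is correct and, in substance, follows the same route as the paper's proof. For the first two claims your argument coincides with the paper's: consistency of $\widehat W$, $\widehat B$ and $\widehat D$ is obtained by combining consistency of $(\widehat\beta,\widehat\phi)$ (Euclidean norm for $\widehat\beta$, sup norm for $\widehat\phi$, both from Theorem~\ref{th:connection}) with a Lipschitz/mean-value argument and a law of large numbers --- the paper packages exactly this in Lemma~\ref{lemma:abc} --- and the truncated spectral sums in $\widehat B$ are handled by the same HAC-type argument under $L \to \infty$, $L/T \to 0$ (the paper simply invokes Lemma 6 of Hahn and Kuersteiner, 2011, rather than redoing the bias--variance tradeoff). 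For the jackknife the ingredients are also identical --- apply the expansion of Theorem~\ref{th:connection} to each subpanel, with Assumption~\ref{assumption: stationarity} guaranteeing that all subpanels share the same influence function and the same limiting bias coefficients --- but the packaging differs: the paper derives the joint asymptotic distribution of the five subpanel estimators (mean vector with $\overline B_\infty$-weights $(1,1,1,2,2)$ and $\overline D_\infty$-weights $(1,2,2,1,1)$, and a $5\times 5$ covariance matrix reflecting subpanel overlap) and then evaluates the linear combination $(3,-\tfrac12,-\tfrac12,-\tfrac12,-\tfrac12)$, whereas you verify directly that $\widetilde\beta_{N,T/2}-\widehat\beta$ and $\widetilde\beta_{N/2,T}-\widehat\beta$ are pure bias up to $o_P((NT)^{-1/2})$, which is precisely the ``weaker but higher-level sufficient condition'' route that the paper itself flags just before stating Assumption~\ref{assumption: stationarity}. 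The two closings are equivalent because, as you note, the influence-function sums over the two complementary halves add up exactly to the full-panel sum. One small remark: that cancellation is an algebraic identity once the subpanels share a common influence function, so the asymptotic independence of the time halves that you plan to exploit is not actually needed in your formulation; it matters only if one wants the full joint covariance matrix (the zero covariance between complementary halves), as in the paper's version.
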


\begin{theorem} [Bias corrections for $\widehat \delta$] \label{th:bc_ameff} Under the conditions of Theorems \ref{th:BothEffects} and \ref{th:DeltaLimit}, 
$$
\widehat V^{\delta} \to_P \overline{V}^{\delta}_{\infty},
$$
and, if  $L \to \infty$ and $L/T \to 0,$ 
$$
r_{NT}(\widetilde \delta^A - \delta_{NT}^0)  \to_d \mathcal{N}(0,
\overline V_{\infty}^{\delta}).
$$
Under the conditions of Theorems \ref{th:BothEffects} and \ref{th:DeltaLimit},  and Assumption \ref{assumption: stationarity},
$$
r_{NT}(\widetilde \delta^{J} - \delta^0)  \to_d \mathcal{N}(0,
\overline V_{\infty}^{\delta}).
$$
\end{theorem}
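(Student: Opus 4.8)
The plan is to handle the three conclusions of Theorem~\ref{th:bc_ameff} separately, leaning throughout on the expansion of $\widehat\delta$ already furnished by Theorem~\ref{th:DeltaLimit} and on the uniform consistency of the fixed effects estimators (i.e.\ $\max_i|\widehat\alpha_i-\alpha_i^0|=o_P(1)$ and $\max_t|\widehat\gamma_t-\gamma_t^0|=o_P(1)$) obtained while proving Theorems~\ref{th:BothEffects} and~\ref{th:DeltaLimit}. For the analytical correction I would first show that the plug-in estimators are consistent, $\widehat B^\delta\to_P\overline B_\infty^{\delta\ast}$ and $\widehat D^\delta\to_P\overline D_\infty^{\delta\ast}$, where $\overline B_\infty^{\delta\ast}$ and $\overline D_\infty^{\delta\ast}$ denote $\overline B_\infty^{\delta}$ and $\overline D_\infty^{\delta}$ with their leading $\overline{(D_\beta\Delta)}'_\infty\overline W_\infty^{-1}\overline B_\infty$ (resp.\ $\overline D_\infty$) pieces deleted, these having already been purged by forming $\widetilde\delta$ from a bias-corrected $\widehat\beta$. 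Since each summand of $\widehat B^\delta,\widehat D^\delta$ is a smooth function of $(\widehat\beta,\widehat\pi_{it})$ and of $\widehat\Psi_{it}$, consistency follows from a law of large numbers over the $(i,t)$ index once two non-routine pieces are controlled: the trimmed time-series sum $\sum_{j=0}^{L}[T/(T-j)]\sum_t\widehat{\E(\partial_\pi\ell_{i,t-j}\partial_{\pi^2}\ell_{it}\Psi_{it})}$, a HAC-type spectral estimator whose truncation bias vanishes because $L\to\infty$ and whose sampling variance vanishes because $L/T\to0$, using the $\alpha$-mixing rate of Assumption~\ref{ass:PanelA1}(ii); and the convergence $\widehat\Psi_{it}\to_P\Psi_{it}$, which I would obtain from $\widehat{\cal H}^{-1}\to\overline{\cal H}^{-1}$ via the estimated analog of Lemma~\ref{lemma:HessianAdditive}. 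Granting this, I would decompose
\[
r_{NT}(\widetilde\delta^A-\delta_{NT}^0)=r_{NT}\bigl(\widetilde\delta-\delta_{NT}^0-T^{-1}\overline B_\infty^{\delta\ast}-N^{-1}\overline D_\infty^{\delta\ast}\bigr)+r_{NT}\bigl[T^{-1}(\overline B_\infty^{\delta\ast}-\widehat B^\delta)+N^{-1}(\overline D_\infty^{\delta\ast}-\widehat D^\delta)\bigr],
\]
where the first term is $\mathcal{N}(0,\overline V_\infty^{\delta})$ by Theorem~\ref{th:DeltaLimit} applied to $\widetilde\delta$, and the second is $o_P(1)$ because $r_{NT}/T$ and $r_{NT}/N$ are bounded under $r_{NT}=\mathcal{O}(\sqrt{NT})$ with $N$ and $T$ of the same order, while the bracketed differences are $o_P(1)$.

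For the variance estimator, $\widehat V^{\delta}\to_P\overline V_\infty^{\delta}$ splits into the convergence of the quadratic form in the demeaned $\widehat{\tilde\Delta}_{it}$ to the leading term of $\overline V_\infty^{\delta}$ and of $\sum_t\widehat{\E(\Gamma_{it}\Gamma_{it}')}$ to its population counterpart. The latter I would treat exactly as $\widehat W$ in Theorem~\ref{th:bc}, using uniform consistency together with $\widehat\Psi_{it}\to_P\Psi_{it}$. For the former I would show that replacing the infeasible centering $\Ep(\Delta_{it})$ by the sample cross-sectional (or time) mean costs only $o_P(1)$, then apply a law of large numbers over $i$ under the independence in Assumption~\ref{ass:PanelA1}(ii); the inner time double sum is the exact finite-$T$ variance and so requires no trimming. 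The two forms \eqref{eq: se_ameff} and \eqref{eq: se_ameff2} are then matched to the two branches of Assumption~\ref{ass:PanelA2}(i).

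For the jackknife, the role of Assumption~\ref{assumption: stationarity} is to force $\delta_{NT}^0=\delta^0$ and to make the leading bias and variance coefficients common to the full panel and to each of its halves. The plan is to establish the subpanel expansions $\widetilde\delta_{N,T/2}-\widehat\delta_{NT}=\overline B_\infty^{\delta}/T+o_P(r_{NT}^{-1})$ and $\widetilde\delta_{N/2,T}-\widehat\delta_{NT}=\overline D_\infty^{\delta}/N+o_P(r_{NT}^{-1})$: applying Theorem~\ref{th:DeltaLimit} to a half-length time panel doubles the $1/T$ bias while leaving the $1/N$ bias intact, and symmetrically for the cross-sectional split, with the \emph{leading stochastic terms of these differences cancelling} because the direct average $(NT)^{-1}\sum_{i,t}\widetilde\Delta_{it}$ splits additively (so its time-halves reassemble it exactly) and the $\Gamma$-influence term recombines as in the $\widehat\beta$ argument of Theorem~\ref{th:bc}. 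Writing
\[
r_{NT}(\widetilde\delta^{J}-\delta^0)=r_{NT}(\widehat\delta_{NT}-\delta^0)-r_{NT}(\widetilde\delta_{N,T/2}-\widehat\delta_{NT})-r_{NT}(\widetilde\delta_{N/2,T}-\widehat\delta_{NT}),
\]
the deterministic $1/T$ and $1/N$ bias contributions cancel, while the variance term resides entirely in $r_{NT}(\widehat\delta_{NT}-\delta^0)$ and is therefore preserved, yielding $\mathcal{N}(0,\overline V_\infty^{\delta})$.

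The main obstacle I anticipate is twofold. Controlling $\widehat\Psi_{it}$ — and hence $\widehat B^\delta,\widehat D^\delta,\widehat V^\delta$ — requires a uniform approximation of the inverse of the large estimated Hessian $\widehat{\cal H}$, which is delicate precisely because its off-diagonal blocks are of order $(NT)^{-1/2}$ yet too numerous to discard; the estimated counterpart of Lemma~\ref{lemma:HessianAdditive} is the linchpin here. For the jackknife, the subtle point is verifying that the $o_P(r_{NT}^{-1})$ remainders in the subpanel expansions survive multiplication by $r_{NT}$ and that Assumption~\ref{assumption: stationarity} genuinely equalizes the bias coefficients across halves, since any drift in $\delta_{NT}^0$ or in $\overline B_\infty^\delta,\overline D_\infty^\delta$ between subpanels would leave an uncancelled $O(r_{NT}/T)$ term.
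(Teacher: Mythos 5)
Your proposal is correct and follows essentially the same route as the paper: both proofs reduce the theorem to (a) consistency of the plug-in bias and variance estimators (via uniform consistency of $(\widehat\beta,\widehat\phi)$, the Hahn--Kuersteiner trimming argument with $L\to\infty$, $L/T\to 0$ for the spectral terms, and control of $\widehat{\mathcal H}^{-1}$ as in Lemma~\ref{lemma:HessianAdditive}), and (b) the CLT of Theorem~\ref{th:DeltaLimit} together with the boundedness of $r_{NT}/T$ and $r_{NT}/N$, exactly as in parts I and II of the proof of Theorem~\ref{th:bc}. Your jackknife step---telescoping through the subpanel-difference expansions $\widetilde\delta_{N,T/2}-\widehat\delta_{NT}=\overline B^{\delta}_\infty/T+o_P(r_{NT}^{-1})$---is a repackaging of the paper's argument, which instead decomposes $r_{NT}(\widetilde\delta^{J}-\delta^0)$ into the heterogeneity term $r_{NT}(\delta-\delta^0)$ and the estimation-error term and applies the joint CLT for the vector of subpanel estimators (part III of the proof of Theorem~\ref{th:bc}); establishing your expansions requires precisely the leading-order cancellation of the influence terms that this joint limit delivers, so the two presentations rest on the same ingredients.
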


\begin{remark}[Rate of convergence] The rate of convergence  $r_{NT}$ depends on the properties of the sampling process for the explanatory variables and unobserved effects (see remark \ref{remark: conv_rate}).

\end{remark}

\section{Monte Carlo Experiments}\label{sec: MC}

This section reports evidence on the finite sample behavior
of fixed effects estimators of model parameters and APEs in static models with strictly exogenous regressors and dynamic models with predetermined regressors such as lags of the dependent variable. We analyze
the performance of uncorrected and bias-corrected
fixed effects estimators in terms of bias and inference accuracy of
their asymptotic distribution.  
In particular we compute the biases, standard deviations, and root mean squared errors of the estimators, the ratio of
average standard errors to the simulation standard deviations (SE/SD); and  the empirical coverages of confidence intervals  with 95\% nominal value (p; .95).\footnote{The standard errors are computed using the expressions (\ref{eq: se_parameter})  and (\ref{eq: se_ameff2}) with $ \widehat{\tilde \Delta}_{it} = \widehat \Delta_{it} - \widehat \delta$,  evaluated at uncorrected estimates of the parameters. We find little difference in performance of constructing standard errors based on corrected estimates.} Overall, we find that the analytically corrected estimators dominate the uncorrected and jackknife corrected estimators.\footnote{Kristensen and Salani{\'e}~\cite*{KS2013} also found that analytical corrections dominate jackknife corrections to reduce the bias of approximate estimators.}  A possible explanation for the better finite-sample performance of the analytical over the jackknife corrections is that the jackknife increases dispersion because the components of the bias are estimated from subsamples that include half of the observations of the panel.  We observe this variance increase in all our numerical examples, specially in short panels.
The jackknife corrections are also more sensitive than the analytical corrections to  Assumption \ref{assumption: stationarity}. All the results are based on 500 replications. 
The designs correspond to static and dynamic probit models. 
As in the analytical example of Section \ref{subsec: ns}, we find that our large $T$ asymptotic approximations capture well the behavior of the fixed effects estimator and the bias corrections in moderately long panels with $N=56$ and $T=14$.


\subsection{Static probit model}

The data generating process is
\begin{equation*}
Y_{it} = \mathbf{1}\left\{  X_{it} \beta + \alpha _{i} +
\gamma_{t} > \varepsilon _{it} \right\} ,
\ \ (i=1,...,N; \
t=1,...,T),
\end{equation*}%
where $\alpha_{i} \sim \mathcal{N}(0,1/16)$, $\gamma_{t} \sim
\mathcal{N}(0,1/16)$, $\varepsilon_{it} \sim \mathcal{N}(0,1)$, and
$\beta = 1$. We consider two alternative designs for $X_{it}$:
autoregressive process and linear trend process both with individual and time effects. In
the first design,  $X_{it} = X_{i,t-1} / 2 + \alpha_{i} +
\gamma_{t} + \upsilon_{it}$, $\upsilon_{it} \sim
\mathcal{N}(0,1/2)$, and $X_{i0} \sim \mathcal{N}(0,1)$. In the
second design, $X_{it} = 2 t / T + \alpha_{i} +
\gamma_{t} + \upsilon_{it}$,
$\upsilon_{it} \sim \mathcal{N}(0,3/4)$, which violates  Assumption \ref{assumption: stationarity}. In  both designs $X_{it}$ is strictly exogenous with respect to $\varepsilon_{it}$ conditional on the individual and time effects. The variables $\alpha_i$, $\gamma_t$,
$\varepsilon_{it}$, $\upsilon_{it}$, and $X_{i0}$ are independent
and  $i.i.d.$ across individuals and time periods.  We generate
panel data sets with $N=56$ individuals and three different numbers
of time periods $T$: 14, 28 and 56.\footnote{Following a suggestion from an anonymous referee, we obtained results for panel data sets with $T=56$ and $N$ in $\{14, 28, 56\}$. These results are similar to the results reported and are available from the authors upon request.}

Table 3 reports the results for the probit coefficient $\beta$, and the APE of $X_{it}$. We compute the  APE using  (\ref{example: probit: meff2}).
Throughout the table, MLE-FETE corresponds to the probit maximum
likelihood estimator with individual and  time fixed effects, Analytical is the bias corrected estimator that uses the analytical correction, 
and Jackknife is the bias corrected estimator that
uses SPJ in both the individual and time
dimensions. The cross-sectional division in the jackknife follows the
order of the observations. All the results  are
reported in percentage of the true parameter value.

We find that the bias is of the same order of magnitude as the standard deviation for the uncorrected estimator of the probit coefficient causing severe undercoverage  of the confidence intervals. This result holds for both designs and all the sample sizes considered. The bias corrections, specially Analytical, remove the bias without increasing dispersion, and produce substantial improvements in rmse and coverage probabilities. For example, Analytical reduces rmse by  50\%  and increases coverage by 26\% in the first design with $T=14$.  As in Hahn and Newey~\cite*{Hahn:2004p882} and Fernandez-Val~\cite*{FernandezVal:2009p3313}, we find very little bias in the uncorrected estimates of the APE, despite the large bias in the probit coefficients.  Jackknife performs relatively worse in the second design that does not satisfy Assumption~\ref{assumption: stationarity}.

\subsection{Dynamic probit model}

The data generating process is
\begin{eqnarray*}
Y_{it} &=& \mathbf{1}\left\{  Y_{i,t-1} \beta_Y + Z_{it} \beta_Z + \alpha _{i} +
\gamma_{t} > \varepsilon _{it} \right\} ,
\ \ (i=1,...,N; \
t=1,...,T),\\
Y_{i0} &=&  \mathbf{1}\left\{  Z_{i0} \beta_Z + \alpha _{i} +
\gamma_{0} > \varepsilon _{i0} \right\},
\end{eqnarray*}%
where $\alpha_{i} \sim \mathcal{N}(0,1/16)$, $\gamma_{t} \sim
\mathcal{N}(0,1/16)$, $\varepsilon_{it} \sim \mathcal{N}(0,1)$,
$\beta_Y = 0.5$, and $\beta_Z = 1$.  We consider two alternative designs for $Z_{it}$:
autoregressive process and linear trend process both with individual and time effects. In
the first design,  $Z_{it} = Z_{i,t-1} / 2 + \alpha_{i} +
\gamma_{t} + \upsilon_{it}$, $\upsilon_{it} \sim
\mathcal{N}(0,1/2)$, and $Z_{i0} \sim \mathcal{N}(0,1)$. In the
second design, $Z_{it} = 1.5 t / T + \alpha_{i} +
\gamma_{t} + \upsilon_{it}$,
$\upsilon_{it} \sim \mathcal{N}(0,3/4)$, which violates  Assumption \ref{assumption: stationarity}.
 The variables $\alpha_i$, $\gamma_t$,
$\varepsilon_{it}$, $\upsilon_{it}$, and $Z_{i0}$ are independent
and  $i.i.d.$ across individuals and time periods.  We generate
panel data sets with $N=56$ individuals and three different numbers
of time periods $T$: 14, 28 and 56.

Table 4 reports the simulation results for the probit coefficient $\beta_Y$ and the APE of $Y_{i,t-1}$. We compute the partial effect  of $Y_{i,t-1}$ using the
expression in equation (\ref{example: probit: meff1}) with $X_{it,k} = Y_{i,t-1}$.
This effect is commonly reported as a measure of state dependence for dynamic binary processes. Table 5 reports the simulation results for the estimators of the probit coefficient $\beta_Z$ and the APE  of $Z_{it}$. We compute the partial effect using  (\ref{example: probit: meff2}) with $X_{it,k} = Z_{it}$.
Throughout the tables, we compare the same estimators as for the static model.  For the analytical correction we consider two versions, Analytical (L=1) sets the trimming parameter to estimate spectral expectations $L$ to one, whereas Analytical (L=2) sets $L $ to two.\footnote{In results not reported for brevity, we find little difference in performance of increasing  the trimming parameters to $L=3$ and $L=4$. These results are available from the authors upon request.} 
Again, all the results in the tables are
reported in percentage of the true parameter value.

The results in table 4 show important biases toward zero for \textit{both} the probit coefficient and the APE of $Y_{i,t-1}$ in the two designs. This bias can indeed be substantially larger than the corresponding standard deviation for short panels yielding coverage probabilities below 70\% for $T=14$. The analytical corrections significantly  reduce  biases and  rmse, bring coverage probabilities close to their nominal level, and have little sensitivity to the trimming parameter $L$.   The jackknife corrections reduce bias but increase dispersion, producing less drastic improvements in rmse and coverage than the analytical corrections. The results for  the APE of $Z_{it}$ in table 5 are similar to the static probit model. There are significant bias and undercoverage of  confidence intervals for the coefficient $\beta_Z$, which are removed by the corrections, whereas there are little bias and undercoverage in the APE. As in the static model,  Jackknife performs relatively worse in the second design.

\begin{table}[ht] 
\begin{center}
\scalebox{0.9}{\includegraphics{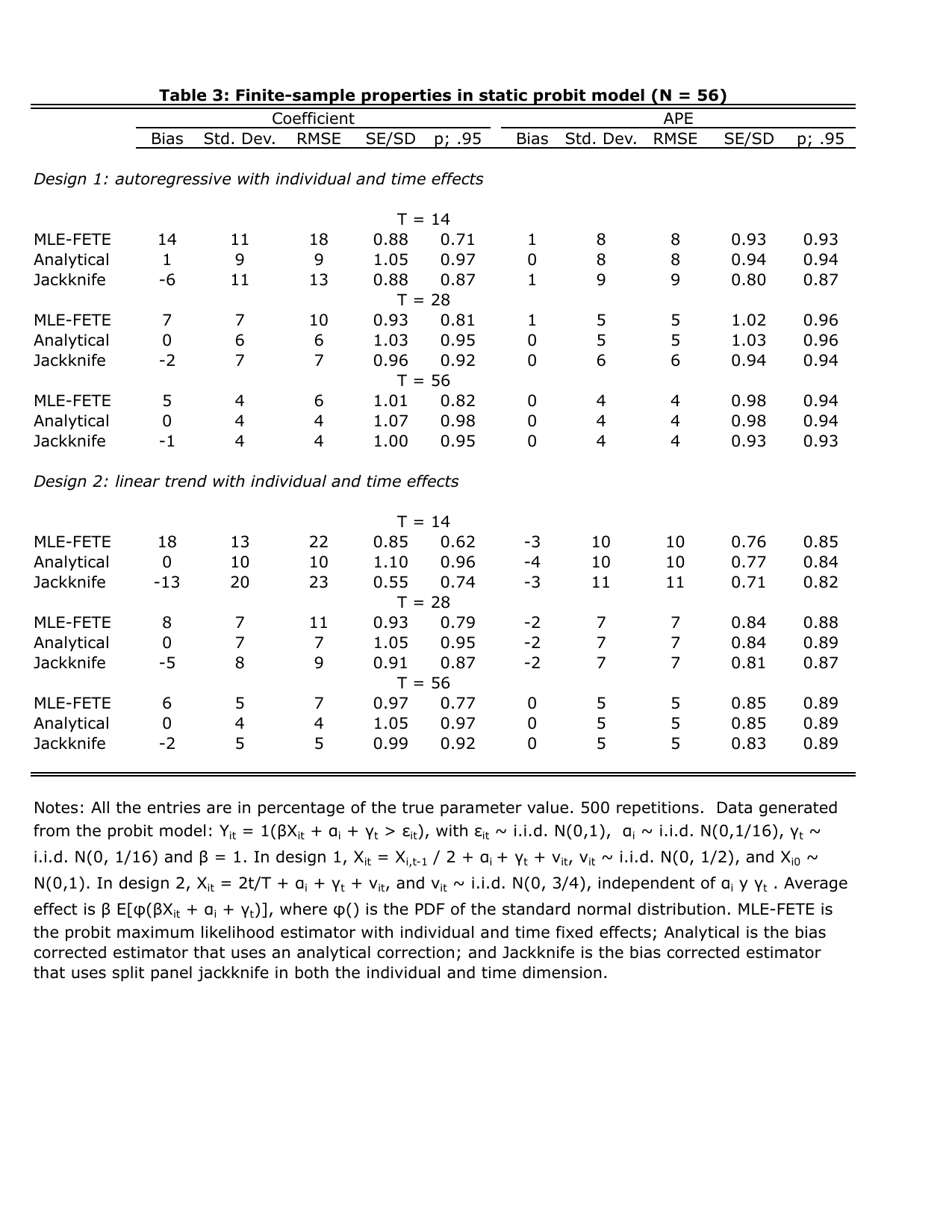}}
\end{center}
\end{table}

\begin{table}[ht]
\begin{center}
\scalebox{0.85}{\includegraphics{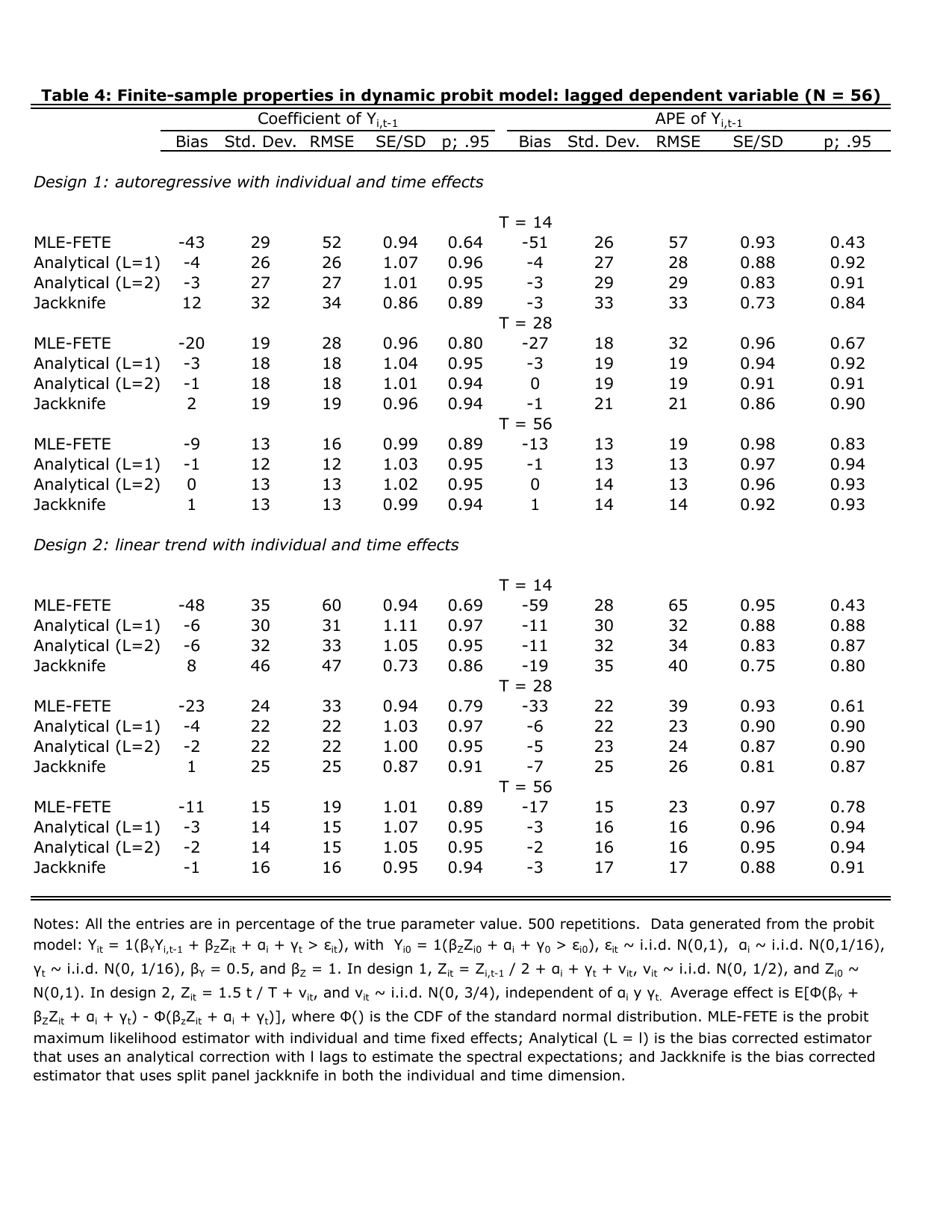}}
\end{center}
\end{table}

\begin{table}[ht]
\begin{center}
\scalebox{0.85}{\includegraphics{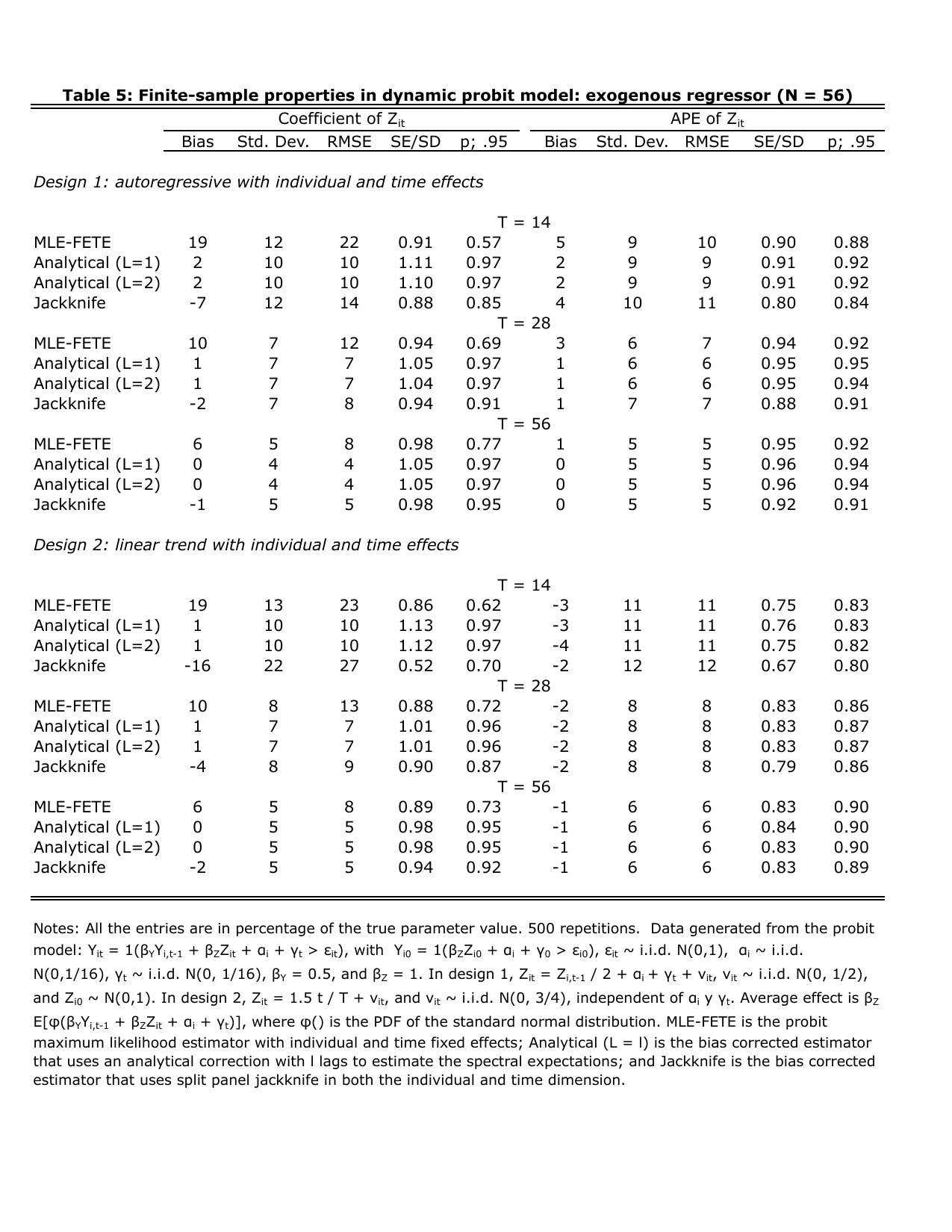}}
\end{center}
\end{table}

\section{Concluding remarks}

In this paper we develop analytical and jackknife corrections for fixed effects estimators of model parameters and APEs in semiparametric nonlinear panel models with additive individual and time effects. Our analysis applies to conditional maximum likelihood estimators with concave log-likelihood functions, and therefore covers logit, probit, ordered probit, ordered logit, Poisson, negative binomial, and Tobit estimators, which are the most popular nonlinear estimators in empirical economics. 

We are currently developing similar corrections for nonlinear models with interactive individual and time effects (Chen, Fern\'andez-Val, and Weidner \cite*{CFW2014}). Another interesting avenue of future research is to derive higher-order expansions for fixed effects estimators with individual and time effects. These expansions are needed to justify theoretically the validity of  alternative corrections based on the leave-one-observation-out panel jackknife method of Hahn and Newey~\cite*{Hahn:2004p882}.

\newpage

\bigskip

\begin{appendix}

\begin{center}
\begin{LARGE}
\textbf{Appendix}
\end{LARGE}
\end{center}

\section{Notation and Choice of Norms} 
\label{app:notation}

We write $A'$ for the transpose of a matrix or vector $A$.
We use $\mathbbm{1}_n$ for the $n\times n$ identity matrix, 
and $1_n$ for the column vector of length $n$ whose entries are all
unity. 
For
square $n \times n$ matrices $B$, $C$, we use $B>C$ (or $B\geq C$) to indicate
that $B-C$ is positive (semi) definite. 
We write wpa1 for ``with probability approaching one'' and wrt for ``with respect to''. All the limits are taken as
$N,T \to \infty$ jointly.


As in the main text, we usually suppress the dependence on $NT$
 of all the sequences of functions and parameters to lighten the notation, e.g. we write ${\cal L}$ for ${\cal L}_{NT}$ 
 and $\phi$ for $\phi_{NT}$. 
 Let
\begin{align*}
       {\cal S}(\beta,\phi) &= \partial_{\phi}  {\cal L}(\beta,\,\phi),  &
      {\cal H}(\beta,\phi) &=  - \partial_{\phi\phi'}  {\cal L}(\beta,\, \phi),   
\end{align*}
where $\partial_x f$ denotes the partial derivative of $f$ with respect to $x$, and additional subscripts denote higher-order partial derivatives.
We refer to the $\dim \phi$-vector ${\cal S}(\beta,\phi)$
as the incidental parameter score, and to
the $\dim \phi \times \dim \phi$ matrix ${\cal H}(\beta,\phi)$
as the incidental parameter Hessian.
We omit the
arguments of the functions when they are evaluated at the true
parameter values $(\beta^0, \, \phi^0)$, e.g.
${\cal H}={\cal H}(\beta^0, \phi^0)$.
We use a bar to indicate expectations conditional on $\phi$,
e.g. $\partial_{\beta} \overline {\cal L} =\mathbb{E}_\phi[ \partial_{\beta} {\cal L}]$,
and a tilde to denote variables in deviations with respect to expectations, e.g.
$\partial_{\beta} \widetilde{\cal L} = \partial_{\beta} {\cal L} - \partial_{\beta} \overline {\cal L}$.
 
We  use the Euclidian norm $\|.\|$ for vectors of dimension $\dim \beta$, and we use the norm induced by the Euclidian norm for the corresponding matrices and tensors, 
which we also denote by $\|.\|$. For matrices of dimension $\dim \beta \times \dim \beta$ this induced
norm is the spectral norm. The generalization of the spectral norm to higher order tensors
is straightforward, e.g. the induced norm of the $\dim \beta \times \dim \beta \times \dim \beta$ tensor
of third partial derivatives of ${\cal L}(\beta,\phi)$ wrt $\beta$ is given by
\begin{align*}
    \left\|  \partial_{\beta \beta \beta} {\cal L}(\beta,\phi) \right\|
      &= \max_{\left\{ u,v  \in \mathbb{R}^{\dim \beta}, \, \|u\|=1, \, \|v\|=1
       \right\}}
         \left\|   \sum_{k,l=1}^{\dim \beta}   
         u_k \, v_l \,
         \partial_{\beta \beta_k  \beta_l} {\cal L}(\beta,\phi)
         \right\| .
\end{align*}
This choice of norm is immaterial for the asymptotic analysis because $\dim \beta$ is fixed with the sample size.


In contrast, it  is important what norms we choose for vectors of dimension $\dim \phi$, and their corresponding matrices and tensors, because $\dim \phi$ is increasing with the sample size. For vectors of dimension $\dim \phi$,
we use the $\ell_q$-norm
\begin{align*}
    \| \phi \|_q = \left( \sum_{g=1}^{\dim \phi}  | \phi_g |^q \right)^{1/q} ,
\end{align*}
where $2 \leq q \leq \infty$.\footnote{We use the letter $q$ instead of $p$ to avoid confusion with the use of $p$ for probability.}
The particular value  $q=8$ will be chosen later.\footnote{The main reason not to choose $q= \infty$ is the assumption
 $\| \widetilde {\cal H} \|_q =  o_P(1)$ below, which is used to guarantee that 
 $\|  {\cal H}^{-1} \|_q$ is of the same order as  $\|  \overline {\cal H}^{-1} \|_q$.
If we assume  $\|  {\cal H}^{-1} \|_q = {\cal O}_P(1)$ directly instead of $\|  {\overline{\cal H}}^{-1} \|_q = {\cal O}_P(1)$, then we 
can set $q=\infty$.} We use the norms that are induced by the $\ell_q$-norm for the corresponding matrices and tensors, e.g. the induced $q$-norm of the $\dim \phi \times \dim \phi \times \dim \phi$ tensor
of third partial derivatives of ${\cal L}(\beta,\phi)$ wrt $\phi$ is 
\begin{align}
    \left\|  \partial_{\phi \phi \phi} {\cal L}(\beta,\phi) \right\|_q
      &= \max_{\left\{ u,v  \in \mathbb{R}^{\dim \phi}, \, \|u\|_q=1, \, \|v\|_q=1
       \right\}}
         \left\|   \sum_{g,h=1}^{\dim \phi}   
         u_g \, v_h \,
         \partial_{\phi \phi_g  \phi_h} {\cal L}(\beta,\phi)
         \right\|_q .
    \label{DefNorm}     
\end{align}
Note that in general the ordering of the indices of the tensor would matter in the definition of this norm,
with the first index having a special role. However, since partial derivatives like 
$\partial_{\phi_g \phi_h \phi_l} {\cal L}(\beta,\phi)$ are fully symmetric in the indices $g$, $h$, $l$, the ordering is
not important in their case.

For mixed partial derivatives of ${\cal L}(\beta,\phi)$ wrt $\beta$ and $\phi$, we use the norm that is induced
by the Euclidian norm on $\dim \beta$-vectors and the $q$-norm on $\dim \phi$-indices,
e.g.
\begin{align}
    \left\|  \partial_{\beta \beta \phi \phi \phi} {\cal L}(\beta,\phi) \right\|_q
      &= 
       \max_{\left\{ u,v  \in \mathbb{R}^{\dim \beta}, \, \|u\|=1, \, \|v\|=1
       \right\}}
       \max_{\left\{ w,x  \in \mathbb{R}^{\dim \phi}, \, \|w\|_q=1, \, \|x\|_q=1
       \right\}}
    \nonumber \\ & \qquad  \qquad    \qquad 
         \left\|  \sum_{k,l=1}^{\dim \beta}   \sum_{g,h=1}^{\dim \phi}   
         u_k \, v_l \, w_g \, x_h \,
         \partial_{\beta_k \beta_l \phi \phi_g  \phi_h} {\cal L}(\beta,\phi)
         \right\|_q ,
    \label{DefNorm2}     
\end{align}
where we continue to use the notation $\|.\|_q$, even though this is  a mixed norm.

Note that for $ w,x  \in \mathbb{R}^{\dim \phi}$  and $q \geq 2$, 
\begin{align*}
   |w' x| \leq  \| w \|_q \|x\|_{q/(q-1)} \leq  (\dim \phi)^{(q-2)/q} \| w \|_q \| x \|_q.
\end{align*}   
Thus, whenever we bound a scalar product of vectors, matrices and tensors in terms of the above
norms we have to account for this additional factor
$(\dim \phi)^{(q-2)/q}$.
For example, 
\begin{align*}
   & \left| \sum_{k,l=1}^{\dim \beta}   \sum_{f,g,h=1}^{\dim \phi}   
         u_k \, v_l \, w_f \, x_h \, y_f \,
         \partial_{\beta_k \beta_l \phi_f \phi_g  \phi_h} {\cal L}(\beta,\phi)
   \right| 
   \leq      
   (\dim \phi)^{(q-2)/q} 
   \| u \| \,
   \| v \| \,
   \| w \|_q \,
   \| x \|_q \,
   \| y \|_q \,
    \left\|  \partial_{ \beta \beta \phi \phi \phi} {\cal L}(\beta,\phi) \right\|_q .
\end{align*}
For higher-order tensors, we use the notation 
$ \partial_{\phi \phi \phi} {\cal L}(\beta,\phi)$  inside the $q$-norm $\|.\|_q$ defined above, while we rely on
standard index and matrix notation for all other expressions involving those partial derivatives, e.g. 
$\partial_{\phi \phi' \phi_g} {\cal L}(\beta,\phi)$ is a $\dim \phi \times \dim \phi$ matrix for every
$g=1,\ldots, \dim \phi$.
Occasionally, e.g. in Assumption~\ref{ass:A1}$(vi)$ below, we use the 
Euclidian norm for $\dim \phi$-vectors, and the spectral
norm for $\dim \phi \times \dim \phi$-matrices,  denoted by $\|.\|$,
and defined as $\|.\|_q$ with $q=2$.  Moreover, we employ  the
matrix infinity norm $\left\| A \right\|_\infty
  = \max_i \sum_j |A_{ij}|$, and the matrix maximum norm
 $\left\| A \right\|_{\max} = \max_{ij} |A_{ij}|$ to characterize the properties of the inverse of the expected Hessian of the incidental parameters in Section~\ref{sec:InverseH}.

For $r \geq 0$, we define the sets
${\cal B}(r,\beta^0)=
            \left\{ \beta: \|\beta-\beta^0\| \leq r \right\}$,
and
${\cal B}_q(r, \phi^0)=
            \left\{  \phi : 
              \|\phi-\phi^0\|_q  \leq r \right\}$,
which are closed balls of radius $r$ around the true parameter values
$\beta^0$ and $\phi^0$, respectively.

\section{Asymptotic Expansions} 
\label{app:expansion}

In this section, we derive asymptotic expansions for the score of
the profile objective function, ${\cal L}(\beta,\widehat \phi(\beta)),$ and for the fixed effects estimators
of the parameters and APEs, $\widehat \beta$ and $\widehat \delta$. We do not employ the panel
structure of the model, nor the particular form of the objective
function given in Section \ref{sec:app_panel}. Instead, we consider the
estimation of an unspecified model based on a sample of size $NT$
and a generic objective function ${\cal L}(\beta,\phi)$, which
depends on the parameter of interest $\beta$ and the incidental
parameter $\phi$. The estimators $\widehat \phi(\beta)$ and $\widehat \beta$ are
defined in \eqref{DefEstPhi} and \eqref{DefEst}. The proof of all the results in this Section are given in the supplementary material.

We make the following
high-level assumptions. These assumptions  might appear somewhat
abstract, but will be justified by  more primitive conditions in
the context of panel models. 
\begin{assumption}[Regularity conditions for asymptotic expansion of $\widehat \beta$]
  \label{ass:A1}
     Let $q>4$ and $0 \leq \epsilon < 1/8 - 1/(2q)$. 
     Let $r_\beta = r_{\beta,NT} >0$,
     $r_\phi = r_{\phi,NT}>0$,
     with $r_\beta = o\left[ (NT)^{-1/(2q)-\epsilon} \right]$
     and $r_\phi = o\left[ (NT)^{ -\epsilon}  \right]$.
      We assume that
   \begin{itemize}
      \item[(i)] $\frac{\dim \phi} {\sqrt{NT}} \rightarrow a$, $0<a<\infty$.
       \item[(ii)]  $(\beta,\phi) \mapsto {\cal L}(\beta,\, \phi)$ is four times continuously
            differentiable in ${\cal B}(r_\beta, \beta^0) \times {\cal B}_q(r_\phi, \phi^0)$, wpa1.

   \item[(iii)] $\displaystyle \sup_{\beta \in {\cal B}(r_\beta ,\beta^0)}
             \left\| \widehat \phi(\beta) - \phi^0 \right\|_q
             = o_P(r_\phi)$.

   \item[(iv)]  $\overline {\cal H} > 0$, and
     $\left\| \overline {\cal H}^{-1} \right\|_q
               =  {\cal O}_P\left(  1\right)$.

   \item[(v)]  
       For the $q$-norm defined in Appendix \ref{app:notation},
         \begin{align*}
              \| {\cal S} \|_q &= {\cal O}_P \left( (NT)^{-1/4 + 1/(2q)} \right) , 
            &
             \|   \partial_{\beta} {\cal L} \| &= {\cal O}_P(1) ,
            &        
           \| \widetilde {\cal H} \|_q &=  o_P(1) ,
              \\
                \left\| \partial_{\beta \phi'}  {\cal L} \right\|_q &=    {\cal O}_P \left( (NT)^{1/(2q)} \right) ,
            &
               \left\| \partial_{\beta \beta'}  {\cal L} \right\| &= {\cal O}_P(\sqrt{NT}) ,
             &
             \left\| \partial_{\beta \phi \phi}  {\cal L} \right\|_q    &=  {\cal O}_P( (NT)^{\epsilon} )  ,
             \\
                \left\|  \partial_{\phi \phi \phi}  {\cal L} \right\|_q   &=     {\cal O}_P\left(  (NT)^{\epsilon} \right) ,
        \end{align*}
        and
        \begin{align*}
             \sup_{\beta \in {\cal B}(r_\beta, \beta^0)}  \sup_{\phi \in {\cal B}_q(r_\phi, \phi^0)} 
        \left\|  \partial_{\beta \beta \beta}  {\cal L}(\beta,\, \phi) \right\|
                 &= {\cal O}_P\left( \sqrt{NT} \right) ,
                 \\
             \sup_{\beta \in {\cal B}(r_\beta, \beta^0)}  \sup_{\phi \in {\cal B}_q(r_\phi, \phi^0)}  
        \left\|  \partial_{\beta \beta \phi}  {\cal L}(\beta,\, \phi) \right\|_q
                 &=   {\cal O}_P\left(  (NT)^{1/(2q)} \right) ,
                 \\
             \sup_{\beta \in {\cal B}(r_\beta, \beta^0)}  \sup_{\phi \in {\cal B}_q(r_\phi, \phi^0)} 
        \left\|  \partial_{\beta \beta \phi \phi}  {\cal L}(\beta,\, \phi) \right\|_q
                 &=   {\cal O}_P\left(  (NT)^{\epsilon} \right) ,
                 \\
             \sup_{\beta \in {\cal B}(r_\beta, \beta^0)}  \sup_{\phi \in {\cal B}_q(r_\phi, \phi^0)}  
        \left\|  \partial_{\beta \phi \phi \phi}  {\cal L}(\beta,\, \phi) \right\|_q
                 &=  {\cal O}_P\left(  (NT)^{\epsilon} \right) ,
                 \\
            \sup_{\beta \in {\cal B}(r_\beta, \beta^0)}  \sup_{\phi \in {\cal B}_q(r_\phi, \phi^0)}  \left\| \partial_{\phi \phi \phi \phi} {\cal L}(\beta,\phi) \right\|_q    &=    {\cal O}_P\left(  (NT)^{\epsilon} \right) .
         \end{align*}

     \item[(vi)] For the spectral norm $\|.\|=\|.\|_2$,
        \begin{eqnarray*}
               \| \widetilde {\cal H} \| = o_P \left( (NT)^{-1/8} \right) , \ \ 
               \left\| \partial_{\beta \beta'} \widetilde {\cal L}  \right\|  = o_P( \sqrt{NT} ) , \ \ 
                                        \left\| \partial_{\beta \phi \phi}  \widetilde {\cal L} \right\|
                 =   o_P \left( (NT)^{-1/8} \right) ,\\
                               \left\| \partial_{\beta \phi'}  \widetilde {\cal L} \right\|
                 =   {\cal O}_P \left( 1 \right) , \ \ 
                \left\|  \sum_{g,h=1}^{\dim \phi} 
                \partial_{\phi \phi_g \phi_h} \widetilde {\cal L} \,
                [ \overline {\cal H}^{-1}  {\cal S} ]_g
                [ \overline {\cal H}^{-1}  {\cal S} ]_h
                   \right\|   
                     =   o_P \left( (NT)^{-1/4}  \right) .
        \end{eqnarray*}      

   \end{itemize}
\end{assumption}

 Let 
$\partial_{\beta} \mathcal{L}(\beta,\widehat \phi(\beta))$ be the score of the profile objective function.\footnote{Note that  $\frac{d} {d \beta} \mathcal{L}(\beta,\widehat \phi(\beta)) = \partial_{\beta} \mathcal{L}(\beta,\widehat \phi(\beta))$ by the envelope theorem.}
The following theorem is the main result of this appendix.

\begin{theorem}[Asymptotic expansions of $\widehat \phi(\beta)$ and $\partial_{\beta} \mathcal{L}(\beta,\widehat \phi(\beta))$]
   \label{th:ScoreExpansion}
   Let Assumption \ref{ass:A1} hold. Then
   \begin{align*}
      \widehat \phi(\beta) - \phi^0 &=  {\cal H}^{-1} {\cal S}
          +   {\cal H}^{-1}
            [\partial_{\phi \beta'}  {\cal L}] (\beta-\beta^0)
 + \ft 1 2 {\cal H}^{-1} \sum_{g=1}^{\dim \phi}
                  [\partial_{\phi \phi' \phi_g}   {\cal L}  ]   {\cal H}^{-1} {\cal S}
                   [   {\cal H}^{-1} {\cal S} ]_g
                       + R^{\phi}(\beta) ,
   \end{align*}
   and
   \begin{align*}
      \partial_{\beta} {\cal L}(\beta, \widehat \phi(\beta))
          &=   U
            - \overline W \, \sqrt{NT} (\beta-\beta^0)
            + R(\beta) ,
   \end{align*}
    where $U= U^{(0)}
            + U^{(1)}$, and
   \begin{align*}
      \overline W &= - \, \frac 1 {\sqrt{NT}} \,
                \left( \partial_{\beta \beta'} \overline {\cal L}
                     + [\partial_{\beta \phi'} \overline {\cal L}] \; \overline {\cal H}^{-1} \;
             [\partial_{\phi \beta'} \overline {\cal L}]  \right) ,
      \nonumber \\
      U^{(0)} &=   
                      \partial_{\beta} {\cal L}
                   + [\partial_{\beta \phi'} \overline {\cal L} ]\, \overline {\cal H}^{-1} {\cal S} ,
   \nonumber \\
      U^{(1)} &=  
      [\partial_{\beta \phi'} \widetilde{\cal L}] \overline {\cal H}^{-1} {\cal S}
        -  [\partial_{\beta \phi'} \overline {\cal L}] \,
                      \overline {\cal H}^{-1} \, \widetilde {\cal H} \,
                      \overline {\cal H}^{-1} \, {\cal S} 
    +  
             \frac 1 2 \,    \sum_{g=1}^{\dim \phi}
           \left( \partial_{\beta \phi' \phi_g} \overline {\cal L}
               +  [\partial_{\beta \phi'} \overline {\cal L}] \, \overline {\cal H}^{-1}
            [\partial_{\phi \phi' \phi_g} \overline {\cal L}]  \right)
              [\overline {\cal H}^{-1} {\cal S}]_g
              \overline {\cal H}^{-1} {\cal S} .
   \end{align*}
   The remainder terms of the expansions satisfy
   \begin{align*}
      \sup_{\beta \in {\cal B}(r_\beta ,\beta^0)}
        \frac{   (NT)^{1/2-1/(2q)} \, \left\| R^\phi(\beta) \right\|_q}
            {1 + \sqrt{NT} \|\beta-\beta^0\|}
              &= o_P \left( 1 \right) \; ,
    &
      \sup_{\beta \in {\cal B}(r_\beta ,\beta^0)}
        \frac{\| R(\beta) \|} {1 + \sqrt{NT} \|\beta-\beta^0\|} &= o_P(1) \; .
   \end{align*}
\end{theorem}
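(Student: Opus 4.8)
The plan is to establish the two expansions in sequence: first the expansion of the incidental-parameter estimator $\widehat\phi(\beta)$, and then to feed it into a Taylor expansion of the profile score. Throughout I would work on the event, which holds wpa1 by Assumption~\ref{ass:A1}(iii), on which $\widehat\phi(\beta)\in\mathcal{B}_q(r_\phi,\phi^0)$ uniformly over $\beta\in\mathcal{B}(r_\beta,\beta^0)$, so that the fourth-order differentiability of Assumption~\ref{ass:A1}(ii) licenses every Taylor expansion below. A preliminary step controls the random Hessian: writing $\mathcal{H}=\overline{\mathcal{H}}+\widetilde{\mathcal{H}}$ and combining $\|\overline{\mathcal{H}}^{-1}\|_q=\mathcal{O}_P(1)$ from~(iv) with $\|\widetilde{\mathcal{H}}\|_q=o_P(1)$ from~(v), the Neumann series $\mathcal{H}^{-1}=\overline{\mathcal{H}}^{-1}-\overline{\mathcal{H}}^{-1}\widetilde{\mathcal{H}}\,\overline{\mathcal{H}}^{-1}+\mathcal{O}_P(\|\widetilde{\mathcal{H}}\|_q^2)$ converges in $q$-norm, yielding $\|\mathcal{H}^{-1}\|_q=\mathcal{O}_P(1)$ together with the first-order replacement of $\mathcal{H}^{-1}$ by $\overline{\mathcal{H}}^{-1}$ that produces the $\overline{\mathcal{H}}^{-1}$'s appearing in the final formulas. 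Equivalently the whole argument can be routed through the Legendre transform $\mathcal{L}^*(\beta,S)=\max_\phi[\mathcal{L}(\beta,\phi)-\phi'S]$, since $\partial_\beta\mathcal{L}^*(\beta,0)=\partial_\beta\mathcal{L}(\beta,\widehat\phi(\beta))$ and the $S$-derivatives of $\mathcal{L}^*$ at $(\beta^0,\mathcal{S})$ are exactly $\mathcal{H}^{-1}$ and its contractions with $\partial_{\phi\phi\phi}\mathcal{L}$.

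For the first expansion I would Taylor expand the incidental-parameter first-order condition $\mathcal{S}(\beta,\widehat\phi(\beta))=0$ in $\phi$ about $\phi^0$ to third order, obtaining $0=\mathcal{S}+[\partial_{\phi\beta'}\mathcal{L}](\beta-\beta^0)-\mathcal{H}(\widehat\phi-\phi^0)+\tfrac12\sum_{g}[\partial_{\phi\phi'\phi_g}\mathcal{L}](\widehat\phi-\phi^0)[\widehat\phi-\phi^0]_g+(\text{rem})$, where the $\beta$-dependence of the coefficients and the Lagrange remainder are collected into $R^\phi(\beta)$. Solving for $\widehat\phi-\phi^0$, inserting the leading piece $\mathcal{H}^{-1}\mathcal{S}+\mathcal{H}^{-1}[\partial_{\phi\beta'}\mathcal{L}](\beta-\beta^0)$ into the quadratic term (the $(\beta-\beta^0)$ contributions there being of higher order), and multiplying through by $\mathcal{H}^{-1}$ reproduces the stated formula.

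The profile-score expansion then follows by Taylor expanding $\partial_\beta\mathcal{L}(\beta,\widehat\phi(\beta))$ in $\phi$ about $\phi^0$ to second order, expanding $\partial_\beta\mathcal{L}(\beta,\phi^0)$ and $\partial_{\beta\phi'}\mathcal{L}(\beta,\phi^0)$ in $\beta$ about $\beta^0$, substituting the $\widehat\phi$ expansion, and then splitting each $\partial_{\beta\phi'}\mathcal{L}=\partial_{\beta\phi'}\overline{\mathcal{L}}+\partial_{\beta\phi'}\widetilde{\mathcal{L}}$ and each $\mathcal{H}^{-1}=\overline{\mathcal{H}}^{-1}-\overline{\mathcal{H}}^{-1}\widetilde{\mathcal{H}}\,\overline{\mathcal{H}}^{-1}+\dots$. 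Gathering the terms that are $\mathcal{O}_P(1)$ in $\beta-\beta^0$ yields $U=U^{(0)}+U^{(1)}$: the three summands of $U^{(1)}$ arise respectively from the $\widetilde{\mathcal{L}}$ part of the linear term, from the $\widetilde{\mathcal{H}}$-correction to $\mathcal{H}^{-1}$, and from the two cubic contributions (the genuine quadratic term of the $\phi$-Taylor expansion together with the quadratic piece of the $\widehat\phi$ expansion hit by $\partial_{\beta\phi'}\overline{\mathcal{L}}$). Gathering the terms linear in $\beta-\beta^0$ produces $-\overline{W}\sqrt{NT}(\beta-\beta^0)$, with $\overline{W}$ as defined.

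The main obstacle, and where all the exponent bookkeeping resides, is bounding $R^\phi(\beta)$ and $R(\beta)$ uniformly over the shrinking balls. Each cubic or quartic remainder term is a contraction of a high-order derivative tensor against several copies of $\widehat\phi-\phi^0$ or $\mathcal{S}$, and every contraction over a $\dim\phi$-index costs a factor $(\dim\phi)^{(q-2)/q}\asymp(NT)^{(q-2)/(2q)}$ as recorded in Appendix~\ref{app:notation}. The rate $\|\mathcal{S}\|_q=\mathcal{O}_P((NT)^{-1/4+1/(2q)})$, the derivative-norm bounds of order $(NT)^\epsilon$, and the radii $r_\beta=o((NT)^{-1/(2q)-\epsilon})$, $r_\phi=o((NT)^{-\epsilon})$ must combine so that the accumulated powers of $NT$ stay below $(NT)^{1/2-1/(2q)}$ for $R^\phi$ and below the normalizer $1+\sqrt{NT}\|\beta-\beta^0\|$ for $R$; tracking these is precisely what forces $q>4$ and $\epsilon<1/8-1/(2q)$. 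Finally, Assumption~\ref{ass:A1}(vi), which sharpens selected fluctuation terms in the spectral norm, is what allows the borderline second-order stochastic pieces to be pushed into the remainder rather than left in the leading expansion.
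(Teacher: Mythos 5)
Your proposal is correct and arrives at exactly the stated formulas, but its mechanics differ from the paper's own proof. You expand the incidental-parameter first-order condition $\partial_\phi\mathcal{L}(\beta,\widehat\phi(\beta))=0$ directly, solve the resulting implicit equation by back-substitution, and control $\mathcal{H}^{-1}$ by a Neumann series; the paper instead routes everything through the Legendre transform $\mathcal{L}^*(\beta,S)$, using $\widehat\phi(\beta)=-\partial_S\mathcal{L}^*(\beta,0)$ and $\partial_\beta\mathcal{L}(\beta,\widehat\phi(\beta))=\partial_\beta\mathcal{L}^*(\beta,0)$, so that both displays become ordinary Taylor expansions of a single four-times differentiable function around $(\beta^0,\mathcal{S})$ (you mention this as an equivalent route, and the paper's footnote likewise acknowledges yours). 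The trade-off is as follows. The Legendre route reduces all remainder control to Lagrange-remainder bounds on derivatives of $\mathcal{L}^*$, which Lemma~\ref{lemma:Lstar} supplies by expressing and bounding those derivatives in terms of derivatives of $\mathcal{L}$, with well-definedness and smoothness of $\mathcal{L}^*$ coming from the inverse function theorem and the uniform invertibility of $\mathcal{H}(\beta,\phi)$ established in Lemma~\ref{lemma:assA1add}. Your route avoids constructing the dual function, but it requires a step you only gesture at: before ``inserting the leading piece into the quadratic term'' you need a preliminary rate for $\|\widehat\phi(\beta)-\phi^0\|_q$, obtained from the linearized first-order condition together with Assumption~\ref{ass:A1}(iii) and $\|\mathcal{H}^{-1}\|_q=\mathcal{O}_P(1)$, to certify that the substitution error is of remainder order; and it must handle intermediate-point evaluations of the third-derivative tensors, which is precisely where the sup bounds of Assumption~\ref{ass:A1}(v) enter. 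Your exponent accounting (contractions over $\dim\phi$-indices costing $(\dim\phi)^{(q-2)/q}$, and the constraints $q>4$, $\epsilon<1/8-1/(2q)$) matches the paper's, and your placement of Assumption~\ref{ass:A1}(vi) agrees with the paper's remark following the theorem: the expansion of $\widehat\phi(\beta)$ does not use it, and the spectral-norm conditions are needed only after separating expectations from deviations of expectations in the profile-score expansion.
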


\begin{remark} 
    The result for $\widehat \phi(\beta) - \phi^0$  does not rely on  Assumption~\ref{ass:A1}$(vi)$. Without this assumption we can also show that
          \begin{align*}
         \partial_\beta {\cal L}(\beta,\widehat \phi(\beta)) 
        &= \partial_\beta {\cal L} +
           \left[  \partial_{\beta \beta'} {\cal L}
                                          + ( \partial_{\beta \phi'} {\cal L} ) {\cal H}^{-1}
                                            ( \partial_{\phi' \beta} {\cal L} ) \right] (\beta-\beta^0)
              +  (\partial_{\beta \phi'}  {\cal L})  {\cal H}^{-1} {\cal S}
      \nonumber \\ & \quad        
              + \frac 1 2 \sum_g  \left( \partial_{\beta \phi' \phi_g}  {\cal L}
               +  [\partial_{\beta \phi'}  {\cal L}] \,  {\cal H}^{-1}
            [\partial_{\phi \phi' \phi_g}  {\cal L}]  \right)
              [ {\cal H}^{-1} {\cal S}]_g
               {\cal H}^{-1} {\cal S}
              +  R_1 (\beta) ,
      \end{align*}
      with $R_1 (\beta)$ satisfying the same bound as $R (\beta)$.
      Thus, the spectral norm bounds in Assumption~\ref{ass:A1}$(vi)$ for $\dim \phi$-vectors, matrices and tensors
      are only used after separating expectations from deviations of expectations for certain partial derivatives.
      Otherwise, the derivation of the bounds is purely based on the $q$-norm for 
      $\dim \phi$-vectors, matrices and tensors.     
\end{remark}

The proofs are given in Section~\ref{sec:ExpansionProofs} of the supplementary material. 
 Theorem~\ref{th:ScoreExpansion} characterizes  asymptotic expansions
for the incidental parameter estimator 
and the score of the profile objective function in the incidental parameter score ${\cal S}$ up to quadratic order.
The theorem provides bounds on the
the remainder terms $R^{\phi}(\beta)$ and $R(\beta)$,
which make the expansions applicable to estimators of
$\beta$ that take values  within a shrinking $r_{\beta}$-neighborhood of $\beta^0$ wpa1.
Given such an $r_{\beta}$-consistent estimator $\widehat \beta$ that solves the first order condition $\partial_{\beta} {\cal L}(\beta, \widehat \phi(\beta)) = 0$, we can use the expansion
of the profile objective score to obtain an asymptotic expansion for $\widehat \beta$. This gives rise
to the following corollary of Theorem~\ref{th:ScoreExpansion} . Let $\overline W_{\infty} := \lim_{N,T \to \infty} \overline W$.
\begin{corollary}[Asymptotic expansion of $\widehat \beta$]
   \label{cor:LimitBeta}
   Let Assumption \ref{ass:A1} be satisfied. In addition, let $U = {\cal O}_P(1)$, let
    $\overline W_{\infty}$ exist with $\overline W_{\infty} > 0$,
   and let
   $\| \widehat \beta - \beta^0\| = o_P(r_\beta)$.
   Then
   $$\sqrt{NT} (\widehat \beta - \beta^0) = \overline W_{\infty}^{-1} U+ o_P(1).$$
\end{corollary}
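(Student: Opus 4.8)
The plan is to evaluate the profile-score expansion of Theorem~\ref{th:ScoreExpansion} at the estimator $\widehat \beta$ itself and then solve for $\sqrt{NT}(\widehat \beta - \beta^0)$. First I would record that $\widehat \beta$ satisfies the first order condition $\partial_{\beta} {\cal L}(\widehat \beta, \widehat \phi(\widehat \beta)) = 0$, and that the hypothesis $\|\widehat \beta - \beta^0\| = o_P(r_\beta)$ places $\widehat \beta \in {\cal B}(r_\beta, \beta^0)$ wpa1, so the expansion is valid at $\beta = \widehat \beta$. Substituting and using the first order condition gives
$$0 = U - \overline W \, \sqrt{NT}(\widehat \beta - \beta^0) + R(\widehat \beta),$$
that is, $\overline W \, \sqrt{NT}(\widehat \beta - \beta^0) = U + R(\widehat \beta)$.

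The main obstacle is that the remainder bound in Theorem~\ref{th:ScoreExpansion} controls $\|R(\widehat \beta)\|$ only in terms of $1 + \sqrt{NT}\|\widehat \beta - \beta^0\|$, which is exactly the quantity I am trying to bound; discarding the remainder therefore requires a short bootstrapping argument that upgrades the raw expansion to $\sqrt{NT}$-consistency. Writing $x := \sqrt{NT}(\widehat \beta - \beta^0)$ and $\epsilon_{NT} := \sup_{\beta \in {\cal B}(r_\beta, \beta^0)} \|R(\beta)\|/(1 + \sqrt{NT}\|\beta - \beta^0\|)$, the theorem gives $\epsilon_{NT} = o_P(1)$ and $\|R(\widehat \beta)\| \leq \epsilon_{NT}(1 + \|x\|)$. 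Since $\overline W \to \overline W_{\infty} > 0$, the matrix $\overline W$ is invertible with $\|\overline W^{-1}\|$ bounded by a constant $C$ wpa1, so inversion yields $\|x\| \leq C(\|U\| + \epsilon_{NT}(1 + \|x\|))$. On the event $\{C\epsilon_{NT} \leq 1/2\}$, whose probability tends to one, this rearranges to $\|x\| \leq 2C(\|U\| + \epsilon_{NT})$; together with $U = {\cal O}_P(1)$ and $\epsilon_{NT} = o_P(1)$ this gives $\|x\| = {\cal O}_P(1)$, the $\sqrt{NT}$-consistency of $\widehat \beta$.

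Having established $\|x\| = {\cal O}_P(1)$, I would feed it back into the remainder bound to get $\|R(\widehat \beta)\| \leq \epsilon_{NT}(1 + \|x\|) = o_P(1) \cdot {\cal O}_P(1) = o_P(1)$, so the defining identity reads $\overline W x = U + o_P(1)$. Finally, using continuity of matrix inversion at the invertible limit $\overline W_{\infty}$ (so $\overline W^{-1} \to \overline W_{\infty}^{-1}$ is bounded) together with $U = {\cal O}_P(1)$, I would conclude
$$x = \overline W^{-1} U + \overline W^{-1} o_P(1) = \overline W_{\infty}^{-1} U + o_P(1),$$
which is the claimed expansion $\sqrt{NT}(\widehat \beta - \beta^0) = \overline W_{\infty}^{-1} U + o_P(1)$. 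Beyond bookkeeping, the only substantive steps are the one-line fixed-point inequality that produces $\sqrt{NT}$-consistency and the Slutsky-type replacement of $\overline W$ by its limit $\overline W_{\infty}$.
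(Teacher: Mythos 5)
Your proposal is correct and follows essentially the same route as the paper's own proof: both substitute the first order condition $\partial_{\beta} {\cal L}(\widehat \beta, \widehat \phi(\widehat \beta)) = 0$ into the expansion of Theorem~\ref{th:ScoreExpansion}, use the uniform remainder bound together with $U = {\cal O}_P(1)$ and invertibility of $\overline W$ to bootstrap $\sqrt{NT}$-consistency, and then pass from $\overline W^{-1}$ to $\overline W_{\infty}^{-1}$. The only difference is presentational: you spell out the fixed-point inequality that the paper compresses into the single sentence ``we conclude that $\sqrt{NT}(\widehat \beta - \beta^0) = {\cal O}_P(1)$ because $U = {\cal O}_P(1)$.''
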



The following theorem states that for strictly concave objective functions no separate consistency proof
is required for $\widehat \phi(\beta)$ and for $\widehat \beta$.
\begin{theorem}[Consistency under Concavity]
     \label{th:consistency}
     Let Assumption~\ref{ass:A1}$(i)$, $(ii)$, $(iv)$, $(v)$ and $(vi)$ hold, and let $(\beta,\phi) \mapsto {\cal L}(\beta,\phi)$ be strictly
     concave over $(\beta,\phi) \in \mathbb{R}^{\dim \beta + \dim \phi}$, wpa1.
     Assume furthermore that
      $(NT)^{-1/4+1/(2q)} = o_P(r_\phi)$
     and $(NT)^{1/(2q)} r_\beta = o_P(r_\phi)$.     
     Then,
     $$\displaystyle \sup_{\beta \in {\cal B}(r_\beta ,\beta^0)}
             \left\| \widehat \phi(\beta) - \phi^0 \right\|_q
             = o_P(r_\phi),$$
             i.e. Assumption~\ref{ass:A1}$(iii)$ is satisfied. 
     If, in addition, 
    $\overline W_{\infty}$ exists with $\overline W_{\infty} > 0$,
    then  $\| \widehat \beta - \beta^0\| = {\cal O}_P\left(  (NT)^{-1/4}  \right)$.
\end{theorem}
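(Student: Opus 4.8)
The plan is to use strict concavity to turn a global consistency question into a local one, establishing the two conclusions in turn: first that $\widehat\phi(\beta)$ is uniformly close to $\phi^0$ over $\beta\in{\cal B}(r_\beta,\beta^0)$ (which is exactly Assumption~\ref{ass:A1}$(iii)$), and then, feeding this in, that $\widehat\beta$ attains the crude rate $(NT)^{-1/4}$. The recurring principle is that every bound on $\dim\phi$-dimensional objects must be taken in the induced $q$-norm of Assumption~\ref{ass:A1}: the incidental-parameter score ${\cal S}$ has Euclidean norm of order one, yet $\|{\cal S}\|_q={\cal O}_P((NT)^{-1/4+1/(2q)})$ is small, and only the $q$-norm sees this smallness.

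For the first conclusion I would assemble two local facts. The score at the truth is uniformly small: expanding ${\cal S}(\beta,\phi^0)={\cal S}+[\partial_{\phi\beta'}{\cal L}(\bar\beta,\phi^0)](\beta-\beta^0)$ and bounding the mixed derivative by the sup-bounds in Assumption~\ref{ass:A1}$(v)$, the hypotheses $(NT)^{-1/4+1/(2q)}=o_P(r_\phi)$ and $(NT)^{1/(2q)}r_\beta=o_P(r_\phi)$ give $\sup_\beta\|{\cal S}(\beta,\phi^0)\|_q=o_P(r_\phi)$. The Hessian is uniformly invertible on the ball: from $\|\overline{\cal H}^{-1}\|_q={\cal O}_P(1)$, $\|\widetilde{\cal H}\|_q=o_P(1)$ and the third-order sup-bounds on $\partial_{\phi\phi\phi}{\cal L}$ and $\partial_{\beta\phi\phi}{\cal L}$ (which, since $r_\phi=o((NT)^{-\epsilon})$ and $r_\beta$ is small, control the variation of ${\cal H}$ across the ball), a Neumann-series perturbation gives $\|{\cal H}(\beta,\phi)^{-1}\|_q={\cal O}_P(1)$ uniformly for $(\beta,\phi)$ in the two balls, while the same perturbation read in the spectral norm via Assumption~\ref{ass:A1}$(vi)$ bounds the smallest eigenvalue away from zero. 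Granting for the moment that the maximizer lies in the ball, the mean-value form of the stationarity condition ${\cal S}(\beta,\widehat\phi(\beta))=0$ reads $\widehat\phi(\beta)-\phi^0=\bar{\cal H}^{-1}{\cal S}(\beta,\phi^0)$ with $\bar{\cal H}$ the segment-averaged Hessian, and submultiplicativity of the induced norm, $\|\bar{\cal H}^{-1}{\cal S}(\beta,\phi^0)\|_q\le\|\bar{\cal H}^{-1}\|_q\,\|{\cal S}(\beta,\phi^0)\|_q$, yields exactly the target $o_P(r_\phi)$.

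The main obstacle is the proviso in the previous sentence: the bound $\|\bar{\cal H}^{-1}\|_q={\cal O}_P(1)$ is only available while the segment from $\phi^0$ to the maximizer stays in ${\cal B}_q(r_\phi,\phi^0)$, but confining the maximizer there is what must be shown. I expect this circularity, rather than the two local estimates, to be the crux, and strict concavity to be the tool that breaks it. A naive ``the objective decreases on the sphere'' estimate does not work here, because on $\|\phi-\phi^0\|_q=r_\phi$ the first-order term $|{\cal S}(\beta,\phi^0)'(\phi-\phi^0)|$ is governed by the dual norm $\|{\cal S}(\beta,\phi^0)\|_{q/(q-1)}$, which carries a factor of order $(\dim\phi)^{(q-2)/q}$ large enough to overwhelm the quadratic curvature term. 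Instead I would argue through the nested constrained maximizers $\widehat\phi_\rho(\beta)=\argmax_{\|\phi-\phi^0\|_q\le\rho}{\cal L}(\beta,\phi)$, which are continuous and monotone in $\rho$; strict concavity makes the unrestricted maximizer unique, so once $\widehat\phi_\rho$ is interior it coincides with $\widehat\phi(\beta)$ for all larger $\rho$, and the interior bound above then forces its deviation to be $o_P(r_\phi)$, strictly inside radius $r_\phi$. The delicate point is to rule out the maximizer clinging to the boundary for every $\rho\le r_\phi$; this is where the global concavity over all of $\mathbbm{R}^{\dim\beta+\dim\phi}$ is indispensable, and where I expect the real work of the proof to lie. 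The outcome is $\sup_\beta\|\widehat\phi(\beta)-\phi^0\|_q=o_P(r_\phi)$, i.e.\ Assumption~\ref{ass:A1}$(iii)$.

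With Assumption~\ref{ass:A1}$(iii)$ now in force, Theorem~\ref{th:ScoreExpansion} applies and furnishes $\partial_\beta{\cal L}(\beta,\widehat\phi(\beta))=U-\overline W\sqrt{NT}(\beta-\beta^0)+R(\beta)$ on ${\cal B}(r_\beta,\beta^0)$, with $U={\cal O}_P(1)$ (which follows from the orders in Assumption~\ref{ass:A1}) and the stated remainder bound. Because $\dim\beta$ is fixed, a sphere argument now carries no dimensional penalty. The profiled objective $\beta\mapsto{\cal L}(\beta,\widehat\phi(\beta))$ inherits concavity from the joint concavity of ${\cal L}$; integrating the score expansion along rays from $\beta^0$, on the sphere $\|\beta-\beta^0\|=(NT)^{-1/4}$ the quadratic term is of order $\sqrt{NT}\,(NT)^{-1/2}=\Theta(1)$ and dominates both the linear term $\|U\|\,(NT)^{-1/4}=o_P(1)$ and the integrated remainder $o_P(1)$, so the profiled objective falls strictly below its value at $\beta^0$ on that sphere wpa1. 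Concavity then places the maximizer inside, giving $\|\widehat\beta-\beta^0\|={\cal O}_P((NT)^{-1/4})$. This crude rate is exactly what is needed to verify the hypothesis $\|\widehat\beta-\beta^0\|=o_P(r_\beta)$ of Corollary~\ref{cor:LimitBeta}, which subsequently upgrades it to the $\sqrt{NT}$-rate.
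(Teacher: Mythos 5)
Your plan correctly assembles the two local estimates (the score at the truth is $o_P(r_\phi)$ in the $q$-norm; $\|{\cal H}^{-1}(\beta,\phi)\|_q={\cal O}_P(1)$ uniformly on the two balls), and your observation that the naive ``objective falls on the sphere'' comparison is destroyed by the dual-norm factor $(\dim\phi)^{(q-2)/q}$ is exactly right. But the proposal stops at the crux: you never rule out the constrained maximizer sitting on the boundary, and your nested-maximizer device $\widehat\phi_\rho(\beta)$ cannot do so, because any first-order argument for a boundary maximizer of ${\cal L}$ runs into the same obstruction you identified. Concretely, the variational inequality ${\cal S}(\beta,\widehat\phi_\rho)'(\phi^0-\widehat\phi_\rho)\le 0$ combined with concavity along the segment only yields $\lambda_{\min}({\cal H})\,\|\widehat\phi_\rho-\phi^0\|_2^2\le \|{\cal S}(\beta,\phi^0)\|_2\,\|\widehat\phi_\rho-\phi^0\|_2$, and since $\|{\cal S}(\beta,\phi^0)\|_2={\cal O}_P(1)$ (it is only the $q$-norm of the score that is small), this gives $\|\widehat\phi_\rho-\phi^0\|_q\le\|\widehat\phi_\rho-\phi^0\|_2={\cal O}_P(1)$, useless at scale $r_\phi=o((NT)^{-\epsilon})$; running the same argument in the $q$/dual-$q$ pairing reintroduces the $(\dim\phi)^{1-2/q}$ factor and diverges. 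The paper's proof avoids the circularity with a different device, which is the missing idea: it constrains not the maximizer of ${\cal L}$ but the \emph{minimizer of the score norm}, $\widehat\phi^*(\beta)=\argmin_{\|\phi-\phi^0\|_q\le\eta}\|{\cal S}(\beta,\phi)\|_q$ with $\eta=o_P(r_\phi)$ dominating both rate hypotheses. Feasibility of $\phi^0$ gives \emph{by construction} $\|{\cal S}(\beta,\widehat\phi^*(\beta))\|_q\le\|{\cal S}(\beta,\phi^0)\|_q={\cal O}_P\bigl[(NT)^{-1/4+1/(2q)}+(NT)^{1/(2q)}r_\beta\bigr]$, and then the inverse map $\Phi(\beta,\cdot)$ from the Legendre apparatus (Lemma~\ref{lemma:Lstar}), whose derivative is $-{\cal H}^{-1}$ and hence $q$-norm bounded uniformly on the ball (Lemma~\ref{lemma:assA1add}), converts score proximity into parameter proximity: $\|\widehat\phi^*(\beta)-\phi^0\|_q=\|\Phi(\beta,{\cal S}(\beta,\widehat\phi^*(\beta)))-\Phi(\beta,{\cal S}(\beta,\phi^0))\|_q=o_P(\eta)$ uniformly in $\beta$. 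So $\widehat\phi^*(\beta)$ is strictly interior, whence ${\cal S}(\beta,\widehat\phi^*(\beta))=0$, and strict concavity identifies this stationary point with the unique global maximizer $\widehat\phi(\beta)$. The boundary problem never arises because the candidate's score is controlled before, not after, one knows where the maximizer lies.

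There is a second, smaller gap in your treatment of $\widehat\beta$: you assert $U={\cal O}_P(1)$ ``follows from the orders in Assumption~\ref{ass:A1}.'' It does not. From Assumption~\ref{ass:A1} alone the paper obtains only $U={\cal O}_P\bigl((NT)^{1/4}\bigr)$; the quadratic-in-${\cal S}$ terms of $U^{(1)}$, such as $\sum_g\partial_{\beta\phi'\phi_g}\overline{\cal L}\,[\overline{\cal H}^{-1}{\cal S}]_g\,\overline{\cal H}^{-1}{\cal S}$, carry the factor $(\dim\phi)^{(q-2)/q}$ under the generic $q$-norm bounds and are not $o_P(1)$. Indeed $U={\cal O}_P(1)$ is a \emph{separate hypothesis} of Corollary~\ref{cor:LimitBeta}, verified only in the panel application through martingale-difference CLTs (proof of Theorem~\ref{th:BothEffects}). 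With only $U={\cal O}_P\bigl((NT)^{1/4}\bigr)$, on your fixed sphere $\|\beta-\beta^0\|=(NT)^{-1/4}$ the linear term $\|U\|(NT)^{-1/4}$ is ${\cal O}_P(1)$, the same order as the quadratic term, and the domination argument collapses. The paper repairs precisely this by making the comparison radius adaptive, $\eta=2(NT)^{-1/2}\,\overline W^{-1}|U|$: at $\beta^0\pm\eta$ (restricted to the line through $\beta^0$ in the direction of $\widehat\beta-\beta^0$ when $\dim\beta>1$) the profile score equals $U\mp\overline W\sqrt{NT}\eta+o_P(\sqrt{NT}\eta)$, whose signs bracket zero; since joint strict concavity makes the profile score strictly decreasing along the line, the root $\widehat\beta$ must lie inside, giving $\|\widehat\beta-\beta^0\|\le\eta={\cal O}_P\bigl((NT)^{-1/4}\bigr)$ regardless of the magnitude of $U$. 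Your remaining steps (profile concavity inherited from joint concavity, handling of the remainder $R(\beta)$) are sound.
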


In the application of Theorem~\ref{th:ScoreExpansion}  to panel models, we focus on estimators with strictly concave objective functions. By Theorem  \ref{th:consistency},
we only need to check  Assumption~\ref{ass:A1}$(i)$, $(ii)$, $(iv)$, $(v)$ and $(vi)$,
as well as $U={\cal O}_P(1)$ and $\overline W_{\infty} > 0$, when we apply 
Corollary~\ref{cor:LimitBeta} to derive the limiting distribution of $\widehat \beta$.
We give the proofs of Corollary~\ref{cor:LimitBeta} and Theorem~\ref{th:consistency}
 in Section~\ref{sec:ExpansionProofs}.

\subsubsection*{Expansion for Average Effects}

We invoke the following
high-level assumption, which is verified  under more primitive
conditions for panel data models  in the next section.

\begin{assumption}[Regularity conditions for asymptotic expansion of $\widehat \delta$]
  \label{ass:A2}
     Let $q$, $\epsilon$, $r_\beta$ and $r_\phi$
     be defined as in Assumption~\ref{ass:A1}.
      We assume that
   \begin{itemize}
       \item[(i)]  $(\beta,\phi) \mapsto \Delta(\beta,\, \phi)$ is three times continuously
            differentiable in ${\cal B}(r_\beta, \beta^0) \times {\cal B}_q(r_\phi, \phi^0)$, wpa1.

   \item[(ii)]  $\left\| \partial_{\beta}  \Delta \right\| = {\cal O}_P(1),$
              $\left\| \partial_{\phi}  \Delta \right\|_q =  
                  {\cal O}_P \left( (NT)^{1/(2q)-1/2} \right),$ $\left\| \partial_{\phi \phi}  \Delta \right\|_q    =  {\cal O}_P( (NT)^{\epsilon-1/2} ),$
        and
        \begin{align*}
             \sup_{\beta \in {\cal B}(r_\beta, \beta^0)}  \sup_{\phi \in {\cal B}_q(r_\phi, \phi^0)} 
        \left\|  \partial_{\beta \beta}  \Delta(\beta,\, \phi) \right\|
                 &= {\cal O}_P\left( 1 \right) ,
                 \\
             \sup_{\beta \in {\cal B}(r_\beta, \beta^0)}  \sup_{\phi \in {\cal B}_q(r_\phi, \phi^0)}  
        \left\|  \partial_{\beta \phi'}  \Delta(\beta,\, \phi) \right\|_q
                 &=   {\cal O}_P\left(  (NT)^{1/(2q)-1/2} \right) ,
                 \\
             \sup_{\beta \in {\cal B}(r_\beta, \beta^0)}  \sup_{\phi \in {\cal B}_q(r_\phi, \phi^0)}  
        \left\|  \partial_{\phi \phi \phi}  \Delta(\beta,\, \phi) \right\|_q
                 &=  {\cal O}_P\left(  (NT)^{\epsilon-1/2} \right) .
         \end{align*}

     \item[(iii)] $\left\| \partial_{\beta} \widetilde \Delta  \right\|  = o_P( 1 ),$
             $\left\| \partial_{\phi}  \widetilde \Delta \right\|
                 =   {\cal O}_P \left( (NT)^{-1/2} \right),$ and $\left\| \partial_{\phi \phi}  \widetilde \Delta \right\|
                 =   o_P \left( (NT)^{-5/8} \right).$

   \end{itemize}
\end{assumption}

The following result gives the asymptotic expansion for the estimator,
$\widehat \delta = \Delta(\beta,\widehat \phi(\beta)),$
 wrt $\delta =  \Delta(\beta^0,\phi^0)$.

\begin{theorem}[Asymptotic expansion of $\hat \delta$]
   \label{th:DeltaExpansion}
   Let Assumptions \ref{ass:A1} and \ref{ass:A2} hold
   and let
   $\| \widehat \beta - \beta^0 \| = {\cal O}_P\left( (NT)^{-1/2} \right)
   = o_P\left( r_\beta \right)$.
    Then
   \begin{align*}
    \widehat \delta - \delta 
   &=  \left[ \partial_{\beta'}  \overline {\Delta}
      +
      (\partial_{\phi'} \overline \Delta)  
        \overline {\cal H}^{-1} (\partial_{\phi \beta'} \overline {\cal L})
    \right] (\widehat \beta - \beta^0) 
     + U^{(0)}_\Delta
     + U^{(1)}_\Delta  + o_P\left( 1/ \sqrt{NT} \right),
   \end{align*}
   where
   \begin{align*}   
       U^{(0)}_\Delta &=  (\partial_{\phi'} \overline \Delta)
           \overline {\cal H}^{-1}  {\cal S} ,
     \\ 
       U^{(1)}_\Delta
        &=  (\partial_{\phi'} \widetilde \Delta) \overline {\cal H}^{-1} {\cal S}
           - (\partial_{\phi'} \overline \Delta)
            \overline {\cal H}^{-1}  \widetilde {\cal H}
              \overline {\cal H}^{-1}  {\cal S}  
    \nonumber \\ & \quad              
             + \ft 1 2 \,  {\cal S}'  \overline {\cal H}^{-1}
         \left[\partial_{\phi \phi'} \overline \Delta + 
         \sum_{g=1}^{\dim \phi}
                  \left[\partial_{\phi \phi' \phi_g} \overline {\cal L} \right]
                  \left[ \overline {\cal H}^{-1} 
                     (\partial_{\phi} \overline \Delta) \right]_g \right]
          \overline {\cal H}^{-1} {\cal S}.
\end{align*}

\end{theorem}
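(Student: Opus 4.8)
The plan is to expand $\widehat\delta=\Delta(\widehat\beta,\widehat\phi(\widehat\beta))$ around $(\beta^0,\phi^0)$ and then to feed in the expansion of $\widehat\phi(\widehat\beta)-\phi^0$ furnished by Theorem~\ref{th:ScoreExpansion}, tracking orders in the $q$-norm throughout. First I would Taylor expand to first order in $\beta$ and second order in $\phi$ (writing $\widehat\phi$ for $\widehat\phi(\widehat\beta)$),
$$\widehat\delta-\delta = (\partial_{\beta'}\Delta)(\widehat\beta-\beta^0) + (\partial_{\phi'}\Delta)(\widehat\phi-\phi^0) + \tfrac12(\widehat\phi-\phi^0)'(\partial_{\phi\phi'}\Delta)(\widehat\phi-\phi^0) + \text{(remainder)},$$
and argue that everything discarded is $o_P((NT)^{-1/2})$. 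The pure $\beta$-quadratic term is $O_P((NT)^{-1})$ since $\|\widehat\beta-\beta^0\|=O_P((NT)^{-1/2})$ and $\|\partial_{\beta\beta}\Delta\|=O_P(1)$; the $\beta$–$\phi$ cross term and the cubic $\phi$ term are negligible once one accounts for the combinatorial factor $(\dim\phi)^{(q-2)/q}$ produced by contracting the $\phi$-tensors of Assumption~\ref{ass:A2}(ii) against $\widehat\phi-\phi^0$, whose $q$-norm is $O_P((NT)^{-1/4+1/(2q)})$ by Assumption~\ref{ass:A1}(iii) and Theorem~\ref{th:ScoreExpansion}. This reduces matters to the three displayed terms.

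Second, I would substitute
$$\widehat\phi(\widehat\beta)-\phi^0 = {\cal H}^{-1}{\cal S} + {\cal H}^{-1}[\partial_{\phi\beta'}{\cal L}](\widehat\beta-\beta^0) + \tfrac12{\cal H}^{-1}\sum_g[\partial_{\phi\phi'\phi_g}{\cal L}]{\cal H}^{-1}{\cal S}[{\cal H}^{-1}{\cal S}]_g + R^\phi(\widehat\beta),$$
where $\|\widehat\beta-\beta^0\|=o_P(r_\beta)$ makes the remainder bound of Theorem~\ref{th:ScoreExpansion} applicable. The key algebra is to write ${\cal H}^{-1}=\overline{\cal H}^{-1}-\overline{\cal H}^{-1}\widetilde{\cal H}\,\overline{\cal H}^{-1}+\ldots$ and to split every derivative of $\Delta$ and of ${\cal L}$ into its conditional-expectation part and its deviation part. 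In $(\partial_{\phi'}\Delta)(\widehat\phi-\phi^0)$, the $\beta$-linear piece delivers, at leading order, the coefficient $(\partial_{\phi'}\overline\Delta)\overline{\cal H}^{-1}(\partial_{\phi\beta'}\overline{\cal L})$ multiplying $(\widehat\beta-\beta^0)$; the linear piece $(\partial_{\phi'}\Delta){\cal H}^{-1}{\cal S}$ produces $U^{(0)}_\Delta$ together with the first two terms of $U^{(1)}_\Delta$, namely $(\partial_{\phi'}\widetilde\Delta)\overline{\cal H}^{-1}{\cal S}$ and $-(\partial_{\phi'}\overline\Delta)\overline{\cal H}^{-1}\widetilde{\cal H}\,\overline{\cal H}^{-1}{\cal S}$; and the cubic piece produces the $\sum_g$ summand of $U^{(1)}_\Delta$. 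The direct term $(\partial_{\beta'}\Delta)(\widehat\beta-\beta^0)$ contributes $(\partial_{\beta'}\overline\Delta)(\widehat\beta-\beta^0)$, the deviation correction being $o_P((NT)^{-1/2})$ by $\|\partial_\beta\widetilde\Delta\|=o_P(1)$. Throughout, Assumption~\ref{ass:A2}(iii) is what ensures that only the listed tilde/deviation combinations in $U^{(0)}_\Delta$ and $U^{(1)}_\Delta$ survive.

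Third, for the quadratic term $\tfrac12(\widehat\phi-\phi^0)'(\partial_{\phi\phi'}\Delta)(\widehat\phi-\phi^0)$ I would replace $\widehat\phi-\phi^0$ by its leading part $\overline{\cal H}^{-1}{\cal S}$ (the replacement error carrying an extra $o_P(1)$ factor in the relevant norm) and $\partial_{\phi\phi'}\Delta$ by $\partial_{\phi\phi'}\overline\Delta$; the deviation contribution is $o_P((NT)^{-1/2})$ because $\|\partial_{\phi\phi}\widetilde\Delta\|=o_P((NT)^{-5/8})$ from Assumption~\ref{ass:A2}(iii) while $\|\overline{\cal H}^{-1}{\cal S}\|=O_P(1)$ in the spectral norm (using $\|{\cal S}\|_2=O_P(1)$ and $\|\overline{\cal H}^{-1}\|_2=O_P(1)$). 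This yields the $\tfrac12{\cal S}'\overline{\cal H}^{-1}(\partial_{\phi\phi'}\overline\Delta)\overline{\cal H}^{-1}{\cal S}$ summand of $U^{(1)}_\Delta$. Finally I would recast the cubic-$\phi$ contribution of Step two into the stated symmetric form $\tfrac12{\cal S}'\overline{\cal H}^{-1}\sum_g[\partial_{\phi\phi'\phi_g}\overline{\cal L}][\overline{\cal H}^{-1}(\partial_\phi\overline\Delta)]_g\overline{\cal H}^{-1}{\cal S}$, invoking the full symmetry of the tensor $\partial_{\phi_f\phi_g\phi_h}\overline{\cal L}$ in its three indices to interchange the contraction with $\overline{\cal H}^{-1}(\partial_\phi\overline\Delta)$ and that with $\overline{\cal H}^{-1}{\cal S}$.

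I expect the main obstacle to be the uniform order bookkeeping in the $q$-norm: one must repeatedly bound scalar products of high-dimensional tensors against $\overline{\cal H}^{-1}{\cal S}$ and $\widehat\phi-\phi^0$, absorbing the factor $(\dim\phi)^{(q-2)/q}$ carried by the mixed norms (growing like $(NT)^{1/2-1/q}$ since $\dim\phi=O(\sqrt{NT})$), and confirm that for the admissible range of $q$ and $\epsilon$ every discarded term is strictly $o_P((NT)^{-1/2})$. The separation of expectation from deviation parts, licensed by the spectral-norm bounds in Assumptions~\ref{ass:A1}(vi) and~\ref{ass:A2}(iii), is precisely what sharpens these bounds, and getting those borderline orders exactly right is the delicate part.
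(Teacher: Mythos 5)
Your proposal is correct and follows essentially the same route as the paper: the same Taylor expansion of $\Delta$ (first order in $\beta$, second order in $\phi$) with the remainder bounded via the $q$-norm bookkeeping and $\|\widehat \phi - \phi^0\|_q = {\cal O}_P\left((NT)^{-1/4+1/(2q)}\right)$, followed by substitution of the $\widehat\phi(\widehat\beta)$ expansion from Theorem~\ref{th:ScoreExpansion} and the difference-of-squares treatment of the quadratic term. The only cosmetic difference is that where you carry out the final expectation/deviation separation (writing ${\cal H}^{-1}=\overline{\cal H}^{-1}-\overline{\cal H}^{-1}\widetilde{\cal H}\,\overline{\cal H}^{-1}+\ldots$ and invoking tensor symmetry) explicitly, the paper dispatches it by observing that the resulting terms are exactly those of Theorem~\ref{th:ScoreExpansion} with $\Delta(\beta,\phi)$ in the role of $\frac{1}{\sqrt{NT}}\partial_{\beta_k}{\cal L}(\beta,\phi)$, since Assumption~\ref{ass:A2} mirrors the corresponding subset of Assumption~\ref{ass:A1}.
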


\begin{remark}
The expansion of the profile score 
$\partial_{\beta_k} {\cal L}(\beta,\widehat \phi(\beta))$ in
Theorem~\ref{th:ScoreExpansion}
is a special case of the expansion in
Theorem~\ref{th:DeltaExpansion}, for
$\Delta(\beta,\phi) = \frac 1 {\sqrt{NT}} \partial_{\beta_k} {\cal L}(\beta,\phi)$.
Assumptions~\ref{ass:A2} also exactly match with the corresponding
subset of Assumption~\ref{ass:A1}.
\end{remark}

\section{Proofs of Section \ref{sec:app_panel}}
\label{app:ProofMain}

\subsection{Application of General Expansion to Panel Estimators}

We now apply the
 general expansion of appendix~\ref{app:expansion} to the panel fixed effects estimators considered in the main text.
For the objective function specified in \eqref{population_prob}
and \eqref{eq: index_model}, the
incidental parameter score evaluated at the true parameter value is
\begin{align*}
   {\cal S} &=   \left( \begin{array}{c}
                  \left[ \frac 1 {\sqrt{NT}} \sum_{t=1}^T \, \partial_{\pi} \ell_{it} \right]_{i=1,\ldots,N} \\
                  \left[ \frac 1 {\sqrt{NT}} \sum_{i=1}^N \, \partial_{\pi} \ell_{it} \right]_{t=1,\ldots,T}
                     \end{array}
                     \right) .
\end{align*}
The penalty term in the objective function does not
contribute to ${\cal S}$, because at the true parameter value 
$v'\phi^0 = 0$. The corresponding expected incidental parameter Hessian $\overline {\cal H}$ 
is given in \eqref{ExpectedHessian-MAIN}. Section~\ref{sec:InverseH} discusses
 the structure of $\overline {\cal H}$ and $\overline {\cal H}^{-1}$ in more detail.
Define 
\begin{align}
   \Lambda_{it} &:= - \frac 1 {\sqrt{NT}} \sum_{j=1}^N \sum_{\tau=1}^T \left(
   \overline {\cal H}^{-1}_{(\alpha\alpha)ij}
    + \overline  {\cal H}^{-1}_{(\gamma\alpha)tj}
     + \overline  {\cal H}^{-1}_{(\alpha\gamma)i\tau}
         + \overline  {\cal H}^{-1}_{(\gamma\gamma)t\tau} \right)  \partial_{\pi} \ell_{j\tau},
    \label{DefProLa}     
\end{align}
and the operator $D_{\beta} \Delta_{it} :=  \partial_{\beta} \Delta_{it} -  \partial_{\pi} \Delta_{it} \Xi_{it}$, which are similar to $\Xi_{it}$ and  $D_{\beta} \ell_{it}$ in equation \eqref{DefProXi}.

 The following theorem shows that Assumption~\ref{ass:PanelA1} and  Assumption~\ref{ass:PanelA2} for the panel model are sufficient for Assumption~\ref{ass:A1} and Assumption~\ref{ass:A2}  for the general expansion, and particularizes the terms of the expansion to the panel estimators. The proof  is given in the supplementary material.


\begin{theorem}
    \label{th:connection}
    Consider an estimator with objective function given by \eqref{population_prob}
    and \eqref{eq: index_model}.
     Let Assumption~\ref{ass:PanelA1} be satisfied
     and suppose that the limit $\overline W_\infty$ defined in
     Theorem~\ref{th:BothEffects} exists and is positive definite.
     Let $q=8$, $\epsilon=1/(16+2 \nu)$, 
    $r_{\beta,NT} = \log(NT) (NT)^{-1/8}$
    and
     $r_{\phi,NT} = (NT)^{-1/16}$. Then,
     \begin{itemize}
          \item[(i)] Assumption~\ref{ass:A1}
   holds  and 
  $\| \widehat \beta - \beta^0\| = {\cal O}_P( (NT)^{-1/4} )$.
  
         \item[(ii)] The approximate Hessian and the terms of the score defined in Theorem~\ref{th:ScoreExpansion}
         can be written as 
          \begin{align*}
  \overline W  &=  -  \frac 1 {NT}  \sum_{i=1}^N
     \sum_{t=1}^T  \mathbb{E}_\phi \left(
            \partial_{\beta \beta'} \ell_{it}
              -  \partial_{\pi^2} \ell_{it} \Xi_{it} \Xi'_{it} \right) ,
  \nonumber \\
   U^{(0)} &= \frac 1 {\sqrt{NT}}
        \sum_{i=1}^N \sum_{t=1}^T
           D_{\beta} \ell_{it}  ,
  \nonumber \\
   U^{(1)} &=  \frac 1 {\sqrt{NT}}
     \sum_{i=1}^N \sum_{t=1}^T \,\left\{ - \Lambda_{it}  \,
          \left[  D_{\beta \pi} \ell_{it}   - \mathbb{E}_\phi( D_{\beta \pi} \ell_{it}  ) \right] +
          \frac 1 2       \Lambda_{it}^2 \, \mathbb{E}_\phi(
           D_{\beta \pi^2} \ell_{it}) \right\}.
\end{align*}

     \item[(iii)]  In addition, let Assumption~\ref{ass:PanelA2} hold.
   Then, Assumption~\ref{ass:A2}
   is satisfied for the partial effects defined in \eqref{eq: meffs}.
   By  Theorem~\ref{th:DeltaExpansion},
   $$\sqrt{NT}
    \left( \widehat \delta - \delta \right) =  V^{(0)}_\Delta +  V^{(1)}_\Delta
    + o_P(1),$$
    where
    \begin{align*}
        V^{(0)}_\Delta &=
       \left[ \frac 1 {NT}
        \sum_{i,t}
         \mathbb{E}_\phi( D_{\beta} {\Delta_{it}} )  \right]'
         \overline W_\infty^{-1} U^{(0)}
           - \frac 1 {\sqrt{NT}} \sum_{i,t}
          \mathbb{E}_\phi( \Psi_{it} )
          \partial_{\pi} \ell_{it} ,
        \\
        V^{(1)}_\Delta &= 
          \left[ \frac 1 {NT}
        \sum_{i,t}
         \mathbb{E}_\phi( D_{\beta} {\Delta_{it}} )  \right]'
         \overline W_\infty^{-1} U^{(1)}
       + \frac 1 {\sqrt{NT}}\sum_{i,t}  \Lambda_{it} 
      \left[    \mathbb{E}_\phi( \Psi_{it} ) \partial_{\pi^2} \ell_{it} 
          -    \Psi_{it}  \mathbb{E}_\phi( \partial_{\pi^2}  \ell_{it} )
         \right]
   \nonumber \\     & \qquad       
    + 
     \frac 1 {2 \, \sqrt{NT}} \sum_{i,t}    \Lambda_{it}^2 \left[
            \mathbb{E}_\phi(  \partial_{\pi^2} \Delta_{it} )
    -  
            \mathbb{E}_\phi( \partial_{\pi^3} \ell_{it} )
          \mathbb{E}_\phi( \Psi_{it} ) 
        \right].
    \end{align*}    
    
\end{itemize}
\end{theorem}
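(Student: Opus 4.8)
The plan is to show that the primitive panel conditions of Assumptions~\ref{ass:PanelA1} and~\ref{ass:PanelA2} imply the abstract high-level conditions of Assumptions~\ref{ass:A1} and~\ref{ass:A2} for the stated choices $q=8$, $\epsilon=1/(16+2\nu)$, $r_{\beta,NT}=\log(NT)(NT)^{-1/8}$ and $r_{\phi,NT}=(NT)^{-1/16}$ (which are chosen precisely so that the constraints $q>4$, $0\le\epsilon<1/8-1/(2q)$, $r_\beta=o((NT)^{-1/(2q)-\epsilon})$ and $r_\phi=o((NT)^{-\epsilon})$ hold), and then to specialize the generic expansion terms of Theorems~\ref{th:ScoreExpansion} and~\ref{th:DeltaExpansion} to the additive index structure. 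For part~(i), I would verify the six conditions of Assumption~\ref{ass:A1} in turn. Condition~(i) is immediate since $\dim\phi=N+T$ and $N/T\to\kappa^2$ give $(N+T)/\sqrt{NT}\to\kappa+\kappa^{-1}\in(0,\infty)$; condition~(ii) is the fourfold differentiability in Assumption~\ref{ass:PanelA1}(iv); and condition~(iii), the uniform $q$-norm consistency of $\widehat\phi(\beta)$ over the shrinking $\beta$-ball, is supplied by Theorem~\ref{th:consistency} once the remaining bounds are in hand, invoking the global concavity of Assumption~\ref{ass:PanelA1}(v). The same theorem then delivers $\|\widehat\beta-\beta^0\|={\cal O}_P((NT)^{-1/4})$.

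The analytic core of part~(i) is conditions~(iv), (v) and~(vi). For~(iv) I would establish invertibility of $\overline{\cal H}$ and the bound $\|\overline{\cal H}^{-1}\|_q={\cal O}_P(1)$ via Lemma~\ref{lemma:HessianAdditive}, exploiting the additive separability of Assumption~\ref{ass:PanelA1}(iii): although the unpenalized expected Hessian $\overline{\cal H}^*$ is singular (its kernel is spanned by $v_{NT}$) and has a dense off-diagonal $\alpha\gamma$ block with entries of order $(NT)^{-1/2}$, the penalty restores invertibility and $\overline{\cal H}^{-1}$ inherits the ``strong diagonal of order one, off-diagonal of order $(NT)^{-1/2}$'' structure. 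For~(v) and~(vi) I would compute the stochastic orders of $\cal S$ and of the deviation Hessian and higher-order derivative tensors. The martingale-difference property of $\partial_\pi\ell_{it}$ implied by Assumption~\ref{ass:PanelA1}(iii), together with the $\alpha$-mixing and uniform moment conditions of Assumption~\ref{ass:PanelA1}(ii) and~(iv), yields $\E[(\sum_t\partial_\pi\ell_{it})^2]={\cal O}(T)$, and the $q$-th moment computation over the $N+T$ coordinates then gives $\|{\cal S}\|_q={\cal O}_P((NT)^{1/(2q)-1/4})$; the remaining derivative bounds follow from the envelope $M(Z_{it})$ and laws of large numbers and Rosenthal-type moment inequalities for mixing arrays.

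For part~(ii) the work is algebraic rather than probabilistic. I would insert the single-index derivative identities into the generic formulas for $\overline W$, $U^{(0)}$ and $U^{(1)}$ from Theorem~\ref{th:ScoreExpansion}, and then use the block structure of $\overline{\cal H}^{-1}$ from Lemma~\ref{lemma:HessianAdditive} to recognize that contracting $\overline{\cal H}^{-1}$ against $\E(\partial_{\beta\pi}\ell_{j\tau})$ and against $\partial_\pi\ell_{j\tau}$ reproduces exactly the projection residuals encoded by $\Xi_{it}$ in \eqref{DefProXi} and $\Lambda_{it}$ in \eqref{DefProLa}. The Bartlett identities implied by Assumption~\ref{ass:PanelA1}(iii) then collapse several cross terms, leaving the displayed forms of $\overline W$, $U^{(0)}=(NT)^{-1/2}\sum_{i,t}D_\beta\ell_{it}$ and $U^{(1)}$. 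Part~(iii) proceeds in parallel: I would first check that Assumption~\ref{ass:PanelA2}(i)--(iv) imply Assumption~\ref{ass:A2}(i)--(iii) for $\Delta_{it}(\beta,\pi)$ by the same moment and smoothness arguments, then apply Theorem~\ref{th:DeltaExpansion} and specialize $U^{(0)}_\Delta$ and $U^{(1)}_\Delta$ using $\Psi_{it}$ in the same role that $\Xi_{it}$ plays for the score.

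The main obstacle is the control of the incidental-parameter Hessian and its fluctuations: proving Lemma~\ref{lemma:HessianAdditive}, namely that the near-singular, large, non-sparse matrix $\overline{\cal H}$ has a well-behaved inverse in the $q$-norm, while simultaneously establishing $\|\widetilde{\cal H}\|_q=o_P(1)$ and the sharper spectral-norm bound $\|\widetilde{\cal H}\|=o_P((NT)^{-1/8})$ required by Assumption~\ref{ass:A1}(vi). The difficulty is that the off-diagonal $\alpha\gamma$ entries are individually negligible but collectively of order one, so neither the diagonal approximation valid in the individual-effects-only case (where $\overline{\cal H}$ is diagonal) nor a crude entrywise bound suffices; one must combine the explicit inverse structure with operator-norm estimates for the dense random $\alpha\gamma$ block.
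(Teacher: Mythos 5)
Your overall architecture for part (i) is the paper's own: the arithmetic check on $(q,\epsilon,r_\beta,r_\phi)$, Lemma~\ref{lemma:HessianAdditive} for $\|\overline{\cal H}^{-1}\|_q={\cal O}_P(1)$, the mixing/moment/martingale-difference arguments for Assumption~\ref{ass:A1}$(v)$--$(vi)$ (which the paper packages as Lemma~\ref{lemma:BasicRegularityPanel}, verified via Lemmas~\ref{lemma:Cov_mixing_bound}, \ref{lemma:mixing_inequality} and \ref{th:SpectralNorm}), and then Theorem~\ref{th:consistency}, using strict concavity, to obtain Assumption~\ref{ass:A1}$(iii)$ and the rate $\|\widehat\beta-\beta^0\|={\cal O}_P((NT)^{-1/4})$.

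There are, however, two concrete gaps in your treatment of parts (ii)--(iii). First, the displayed formulas in part (ii) are \emph{exact} identities, and they cannot be derived from Lemma~\ref{lemma:HessianAdditive}, which only provides the max-norm approximation $\overline{\cal H}^{-1}=\diag\left(\overline{\cal H}^*_{(\alpha\alpha)},\overline{\cal H}^*_{(\gamma\gamma)}\right)^{-1}+{\cal O}_P((NT)^{-1/2})$. Substituting this approximation into bilinear forms such as $[\partial_{\beta\phi'}\overline{\cal L}]\,\overline{\cal H}^{-1}\,[\partial_{\phi\beta'}\overline{\cal L}]$ leaves an error spread over ${\cal O}((N+T)^2)$ entries, each of order $(NT)^{-1/2}$, which is exactly the ``individually negligible, collectively order one'' problem you flag yourself; it cannot be dismissed without additional probabilistic work, and in any case would yield the identities only up to error terms. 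The paper's device is different and purely algebraic: Lemma~\ref{RewriteSumIT} shows that, for vectors of the special panel form (stacked row and column sums of $N\times T$ matrices), applying $\overline{\cal H}^{-1}$ solves exactly the normal equations of the weighted least-squares projection \eqref{DefP}, so that ${\cal A}'\overline{\cal H}^{-1}{\cal B}$ and ${\cal A}'\overline{\cal H}^{-1}{\cal C}\,\overline{\cal H}^{-1}{\cal B}$ collapse exactly into weighted sums of products of the projections $(\widetilde{\mathbbm{P}}A)_{it}$, $(\widetilde{\mathbbm{P}}B)_{it}$; repeated use of this identity (including once with a sparse matrix to obtain $[\partial_{\beta\phi'}\overline{\cal L}]\,\overline{\cal H}^{-1}\,\partial_\phi\partial_{\pi^2}\ell_{it}=-\Xi_{it}\,\partial_{\pi^3}\ell_{it}$) is what produces $\overline W$, $U^{(0)}$ and $U^{(1)}$ in the stated form. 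The Bartlett identities you invoke play no role here --- they are used in the proof of Theorem~\ref{th:BothEffects}, not in part (ii). Second, in part (iii) you apply Theorem~\ref{th:DeltaExpansion} directly, but that theorem requires $\|\widehat\beta-\beta^0\|={\cal O}_P((NT)^{-1/2})$, whereas part (i) only gives ${\cal O}_P((NT)^{-1/4})$; the missing step is to note that $U={\cal O}_P(1)$ (established in the proof of Theorem~\ref{th:BothEffects}), so that Corollary~\ref{cor:LimitBeta} upgrades the rate to $\sqrt{NT}(\widehat\beta-\beta^0)=\overline W_\infty^{-1}U+o_P(1)={\cal O}_P(1)$ before the expansion for $\widehat\delta$ can be invoked.
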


\subsection{Proofs of Theorems~\ref{th:BothEffects} and \ref{th:DeltaLimit}}

\begin{proof}[\bf Proof of Theorem~\ref{th:BothEffects}]
   \# First, we want to show that $U^{(0)}  \to_d {\cal N}( 0 ,\; \overline W_{\infty})$.
   In our likelihood setting,
     $\mathbb{E}_\phi \partial_{\beta} {\cal L}  = 0$,  $\mathbb{E}_\phi {\cal S} = 0$,
     and,
     by the Bartlett identities,
     $\mathbb{E}_\phi( \partial_{\beta} {\cal L}  \partial_{\beta'} {\cal L} )
      =  - \frac 1 {\sqrt{NT}} \partial_{\beta \beta'} \overline {\cal L}$,
    $\mathbb{E}_\phi( \partial_{\beta} {\cal L} {\cal S}' ) =  -
    \frac 1 {\sqrt{NT}} \partial_{\beta \phi'} \overline {\cal L}$
     and $\mathbb{E}_\phi( {\cal S} {\cal S}' ) 
        = \frac 1 {\sqrt{NT}} \left( \overline {\cal H} - \frac b {\sqrt{NT}} v v' \right)$.
     Furthermore, ${\cal S}' v=0$ and   
      $ \partial_{\beta \phi'} \overline {\cal L} v =0$.
     Then, by  definition of
      $\overline W = - \, \frac 1 {\sqrt{NT}} \,
                \left( \partial_{\beta \beta'} \overline {\cal L}
                     + [\partial_{\beta \phi'} \overline {\cal L}] \; \overline {\cal H}^{-1} \;
             [\partial_{\phi \beta'} \overline {\cal L}]  \right)$
     and        
     $U^{(0)} =    \partial_{\beta} {\cal L}
                   + [\partial_{\beta \phi'} \overline {\cal L} ]\, \overline {\cal H}^{-1} {\cal S},$
    \begin{align*}
       \mathbb{E}_\phi\left( U^{(0)} \right) &= 0 , &
       {\rm Var} \left(  U^{(0)} \right) &=  \overline W,
   \end{align*}     
   which implies that $\lim_{N,T\rightarrow \infty} {\rm Var} \left(  U^{(0)} \right)  =
   \lim_{N,T\rightarrow \infty} \overline W  =  \overline W_{\infty}$. Moreover, 
    part $(ii)$
   of Theorem~\ref{th:connection} yields
   \begin{align*} 
       U^{(0)} &= \frac 1 {\sqrt{NT}}
        \sum_{i=1}^N \sum_{t=1}^T
           D_{\beta} \ell_{it},
   \end{align*}       
   where 
   $D_{\beta} \ell_{it} =  \partial_{\beta} \ell_{it} -  \partial_{\pi} \ell_{it} \Xi_{it}$
   is a martingale difference sequence for each $i$ and independent across $i$, conditional on $\phi$.
   Thus, by Lemma~\ref{lemma:martingale_diff_CLT} 
   and the Cramer-Wold device we conclude that 
   \begin{align*}
         U^{(0)}  \to_d {\cal N}\left[ 0 ,\; \lim_{N,T\rightarrow \infty} {\rm Var} \left(  U^{(0)} \right)
          \right]
           \sim  {\cal N}( 0 ,\; \overline W_{\infty}) .
   \end{align*}
   
   \# Next, we show that $U^{(1)}  \to_P
           \kappa \overline B_{\infty} + \kappa^{-1} \overline D_{\infty}$. Part $(ii)$
   of Theorem~\ref{th:connection} gives $U^{(1)}  = U^{(1a)} + U^{(1b)} $, with
   \begin{align*}
  U^{(1a)}      &=  - \frac 1 {\sqrt{NT}}
     \sum_{i=1}^N \sum_{t=1}^T \,  \Lambda_{it}  \, 
       \left[  D_{\beta \pi} \ell_{it}   - \mathbb{E}_\phi( D_{\beta \pi} \ell_{it}  ) \right] ,
     \nonumber \\  
    U^{(1b)}      & =   \frac 1 {2 \, \sqrt{NT}}
     \sum_{i=1}^N \sum_{t=1}^T       \Lambda_{it}^2 \, \mathbb{E}_\phi(
           D_{\beta \pi^2} \ell_{it}).
\end{align*} 
Plugging-in the definition of $\Lambda_{it}$, we decompose
$U^{(1a)} = U^{(1a,1)} +  U^{(1a,2)} +  U^{(1a,3)} +  U^{(1a,4)}$, where   
\begin{align*}
      U^{(1a,1)} &= \frac 1 {NT}
     \sum_{i,j}  \overline {\cal H}^{-1}_{(\alpha\alpha)ij}  
     \left( \sum_{\tau} \partial_{\pi} \ell_{j\tau} \right)
     \sum_t
       \left[  D_{\beta \pi} \ell_{it}   - \mathbb{E}_\phi( D_{\beta \pi} \ell_{it}  ) \right] ,
      \nonumber \\ 
      U^{(1a,2)} &= \frac 1 {NT}
     \sum_{j,t}  \overline  {\cal H}^{-1}_{(\gamma\alpha)tj} 
     \left( \sum_{\tau} \partial_{\pi} \ell_{j\tau} \right)
     \sum_i
       \left[  D_{\beta \pi} \ell_{it}   - \mathbb{E}_\phi( D_{\beta \pi} \ell_{it}  ) \right] ,
      \nonumber \\ 
      U^{(1a,3)} &= \frac 1 {NT}
     \sum_{i,\tau}  \overline  {\cal H}^{-1}_{(\alpha\gamma)i\tau} 
     \left( \sum_j \partial_{\pi} \ell_{j\tau} \right) 
     \sum_t
       \left[  D_{\beta \pi} \ell_{it}   - \mathbb{E}_\phi( D_{\beta \pi} \ell_{it}  ) \right] ,
      \nonumber \\ 
      U^{(1a,4)} &= \frac 1 {NT}
     \sum_{t,\tau} \overline  {\cal H}^{-1}_{(\gamma\gamma)t\tau} 
      \left( \sum_j \partial_{\pi} \ell_{j\tau} \right) 
      \sum_i \left[  D_{\beta \pi} \ell_{it}   - \mathbb{E}_\phi( D_{\beta \pi} \ell_{it}  ) \right] .
\end{align*}
By  the Cauchy-Schwarz inequality applied to the sum over $t$ in $U^{(1a,2)},$
\begin{align*}
     \left( U^{(1a,2)} \right)^2
     &\leq \frac {1} {(NT)^2} 
  \left[ \sum_{t}  
  \left( \sum_{j,\tau} \overline  {\cal H}^{-1}_{(\gamma\alpha)tj} \partial_{\pi} \ell_{j\tau}
                     \right)^2 \right]
  \left[ \sum_t  \left(   \sum_i
       \left[  D_{\beta \pi} \ell_{it}   - \mathbb{E}_\phi( D_{\beta \pi} \ell_{it}  ) \right]
  \right)^2  \right]         .             
\end{align*}
By  Lemma~\ref{lemma:HessianAdditive}, $ \overline  {\cal H}^{-1}_{(\gamma\alpha)tj} = {\cal O}_P(1/\sqrt{NT})$,
uniformly over $t,j$.
Using that both $\sqrt{NT} \, \overline {\cal H}^{-1}_{(\gamma\alpha)tj} \partial_{\pi} \ell_{j\tau}$
and $D_{\beta \pi} \ell_{it}   - \mathbb{E}_\phi( D_{\beta \pi} \ell_{it}  )$ are mean zero,
independence across $i$ and Lemma~\ref{lemma:mixing_inequality} in the supplementary material
across $t$, we obtain
\begin{align*}
     \mathbb{E}_\phi  \left( 
     \frac 1 {\sqrt{NT}} \sum_{j,\tau}  
     [\sqrt{NT} \, \overline {\cal H}^{-1}_{(\gamma\alpha)tj}] \partial_{\pi} \ell_{j\tau}
                     \right)^2 =  {\cal O}_P(1) ,
    \ \  \mathbb{E}_\phi \left( \frac 1 {\sqrt{N}}  \sum_i
       \left[  D_{\beta \pi} \ell_{it}   - \mathbb{E}_\phi( D_{\beta \pi} \ell_{it}  ) \right]
  \right)^2     =  {\cal O}_P(1) ,           
\end{align*}
 uniformly over $t$. Thus, 
 $ \sum_{t}  
  \left( \sum_{j,\tau} \overline  {\cal H}^{-1}_{(\gamma\alpha)tj} \partial_{\pi} \ell_{j\tau}
                     \right)^2 = {\cal O}_P(T)$
and
$ \sum_t  \left(   \sum_i
       \left[  D_{\beta \pi} \ell_{it}   - \mathbb{E}_\phi( D_{\beta \pi} \ell_{it}  ) \right]
  \right)^2 =      {\cal O}_P(NT)$. We conclude that
\begin{align*}
     \left( U^{(1a,2)} \right)^2
 &= \frac {1} {(NT)^2}  {\cal O}_P(T)
                 {\cal O}_P(NT)
     =  {\cal O}_P(1/N) = o_P(1),
\end{align*}                   
and therefore that $U^{(1a,2)}  = o_P(1)$. Analogously one can show   that
 $U^{(1a,3)}  = o_P(1)$.

By Lemma~\ref{lemma:HessianAdditive}, 
$\overline {\cal H}^{-1}_{(\alpha\alpha)}  = 
- {\rm diag} 
\left[ \left( \frac 1 {\sqrt{NT}} \sum_{t=1}^T \, \mathbb{E}_\phi(  \partial_{\pi^2} \ell_{it} \right)^{-1}
\right]
+ {\cal O}_P(1/\sqrt{NT})$. Analogously to the proof of $U^{(1a,2)}  = o_P(1),$
one can show that the ${\cal O}_P(1/\sqrt{NT})$ part of
$\overline {\cal H}^{-1}_{(\alpha\alpha)}$  has an
asymptotically negligible contribution to $U^{(1a,1)}$.
Thus, 
\begin{align*}
      U^{(1a,1)} &= - \frac 1 {\sqrt{NT}}
     \sum_{i}  
          \underbrace{
     \frac{
      \left( \sum_{\tau} \partial_{\pi} \ell_{i\tau} \right)
     \sum_t
       \left[  D_{\beta \pi} \ell_{it}   - \mathbb{E}_\phi( D_{\beta \pi} \ell_{it}  ) \right] }
       { \sum_t \, \mathbb{E}_\phi(  \partial_{\pi^2} \ell_{it} )}
     }_{=: U^{(1a,1)}_i}  
     +o_P(1)  .
\end{align*}       
Our assumptions guarantee that 
$\mathbb{E}_\phi\left[ \left(U^{(1a,1)}_i \right)^2 \right] = {\cal O}_P(1)$,
uniformly over $i$.
Note that both the denominator and the numerator
of $U^{(1a,1)}_i$ are of order $T$. For the denominator this is obvious because of
the sum over $T$. For the numerator there are two sums over $T$, but both 
$\partial_{\pi} \ell_{i\tau}$ and $D_{\beta \pi} \ell_{it}   - \mathbb{E}_\phi( D_{\beta \pi} \ell_{it}  )$
are mean zero weakly correlated processes, so that their sums are of order
$\sqrt{T}$. By the WLLN over $i$
(remember that we have cross-sectional independence, conditional on $\phi$, and we 
assume finite moments), 
$N^{-1} \sum_i U^{(1a,1)}_i = N^{-1}  \sum_i \mathbb{E}_\phi U^{(1a,1)}_i + o_P(1)$,
and therefore
\begin{align*}
      U^{(1a,1)} &= \underbrace{ - \sqrt{ \frac N T }  \frac 1 N
    \sum_{i=1}^{N}
            \frac{  \sum_{t=1}^T  \sum_{\tau=t}^T
        \mathbb{E}_\phi\left(
                \partial_{\pi} \ell_{it}  D_{\beta \pi} \ell_{i\tau}
                 \right)
                    }
        {  \sum_{t=1}^T \mathbb{E}_\phi\left(  \partial_{\pi^2} \ell_{i t} \right) } 
        }_{ =: \sqrt{ \frac N T }  \overline B^{(1)}}
     +o_P(1)  .
\end{align*}     
Here, we use that 
$\mathbb{E}_\phi\left(
                \partial_{\pi} \ell_{it}  D_{\beta \pi} \ell_{i\tau}
                 \right)=0$ for $t>\tau$.
Analogously,  
\begin{align*}
      U^{(1a,4)} &= - \underbrace{ \sqrt{ \frac T N }  \frac 1 T
     \sum_{t=1}^{T}
            \frac{  \sum_{i=1}^N
        \mathbb{E}_\phi\left(
                \partial_{\pi} \ell_{it} D_{\beta \pi} \ell_{it}
              \right)    }
        {  \sum_{i=1}^N \mathbb{E}_\phi\left(  \partial_{\pi^2} \ell_{i t} \right) }
        }_{=: \sqrt{ \frac T N } \overline D^{(1)}}
     +o_P(1)  .
\end{align*}     
We conclude that
$U^{(1a)} = \kappa \overline B^{(1)} + \kappa^{-1} \overline D^{(1)} + o_P(1)$.

Next, we analyze $U^{(1b)}$.
We decompose $\Lambda_{it}=\Lambda_{it}^{(1)}+\Lambda_{it}^{(2)}
 + \Lambda_{it}^{(3)} + \Lambda_{it}^{(4)}$, where
\begin{align*}
   \Lambda_{it}^{(1)} &= - \frac 1 {\sqrt{NT}} \sum_{j=1}^N 
   \overline {\cal H}^{-1}_{(\alpha\alpha)ij}
\sum_{\tau=1}^T   \partial_{\pi} \ell_{j\tau},
  &    
   \Lambda_{it}^{(2)} &= - \frac 1 {\sqrt{NT}} \sum_{j=1}^N 
  \overline  {\cal H}^{-1}_{(\gamma\alpha)tj}
    \sum_{\tau=1}^T  \partial_{\pi} \ell_{j\tau},
   \nonumber \\      
   \Lambda_{it}^{(3)} &= - \frac 1 {\sqrt{NT}}  \sum_{\tau=1}^T  
       \overline  {\cal H}^{-1}_{(\alpha\gamma)i\tau}
       \sum_{\tau=1}^T    \partial_{\pi} \ell_{j\tau},
  &   
   \Lambda_{it}^{(4)} &= - \frac 1 {\sqrt{NT}} \sum_{\tau=1}^T 
           \overline  {\cal H}^{-1}_{(\gamma\gamma)t\tau} \sum_{\tau=1}^T  \partial_{\pi} \ell_{j\tau}.
\end{align*}
This decomposition of $\Lambda_{it}$ induces the following decomposition
of $U^{(1b)}$
\begin{align*}
    U^{(1b)} &= \sum_{p,q=1}^4 U^{(1b,p,q)} , &
    U^{(1b,p,q)} &= \frac 1 {2 \, \sqrt{NT}}
     \sum_{i=1}^N \sum_{t=1}^T    
        \Lambda_{it}^{(p)} \Lambda_{it}^{(q)}
       \mathbb{E}_\phi(
           D_{\beta \pi^2} \ell_{it}).
\end{align*}
Due to the symmetry $U^{(1b,p,q)} = U^{(1b,q,p)},$ this  decomposition
has  10 distinct terms. Start with $U^{(1b,1,2)}$ noting that 
\begin{align*}
   U^{(1b,1,2)} &= \frac 1 {\sqrt{NT}} 
    \sum_{i=1}^N U^{(1b,1,2)}_i , 
  \nonumber \\  
    U^{(1b,1,2)}_i &= \frac 1 {2T} \sum_{t=1}^T    
      \mathbb{E}_\phi(
           D_{\beta \pi^2} \ell_{it})
           \frac{1} {N^2}
     \sum_{j_1,j_2=1}^N 
        \left[ NT \overline {\cal H}^{-1}_{(\alpha\alpha)ij_1}   
           \overline  {\cal H}^{-1}_{(\gamma\alpha)tj_2} \right]
      \left( \frac 1 {\sqrt{T}} \sum_{\tau=1}^T   \partial_{\pi} \ell_{j_1\tau} \right)     
      \left( \frac 1 {\sqrt{T}} \sum_{\tau=1}^T   \partial_{\pi} \ell_{j_2\tau} \right)     .
\end{align*}   
By $\mathbb{E}_\phi( \partial_{\pi} \ell_{it} ) =0$,
$\mathbb{E}_\phi( \partial_{\pi} \ell_{it}  \partial_{\pi} \ell_{j\tau}) =0$ for $(i,t) \neq (j,\tau)$, 
and the properties of the inverse expected Hessian from Lemma~\ref{lemma:HessianAdditive},
 $\mathbb{E}_\phi\left[ U^{(1b,1,2)}_i  \right]  = {\cal O}_P(1/N)$, uniformly over $i$,
  $\mathbb{E}_\phi\left[ \left(U^{(1b,1,2)}_i \right)^2 \right]  = {\cal O}_P(1)$, 
 uniformly over $i$,
 and $\mathbb{E}_\phi\left[ U^{(1b,1,2)}_i U^{(1b,1,2)}_j \right]  = {\cal O}_P(1/N)$,
 uniformly over $i \neq j$.
This implies that $\mathbb{E}_\phi \, U^{(1b,1,2)} = {\cal O}_P(1/N)$ and
$\mathbb{E}_\phi\left[ \left(U^{(1b,1,2)} - \mathbb{E}_\phi \, U^{(1b,1,2)} \right)^2 \right]  
= {\cal O}_P(1/\sqrt{N})$,
and therefore $U^{(1b,1,2)} = o_P(1)$.
By similar arguments one obtains
$U^{(1b,p,q)} = o_P(1)$ for all combinations of $p,q=1,2,3,4$, except
for $p=q=1$ and $p=q=4$.

For $p=q=1$,
\begin{align*}
   U^{(1b,1,1)} &= \frac 1 {\sqrt{NT}} 
    \sum_{i=1}^N U^{(1b,1,1)}_i , 
  \nonumber \\  
    U^{(1b,1,1)}_i &= \frac 1 {2T} \sum_{t=1}^T    
      \mathbb{E}_\phi(
           D_{\beta \pi^2} \ell_{it})
           \frac{1} {N^2}
     \sum_{j_1,j_2=1}^N 
       \left[ NT   \overline {\cal H}^{-1}_{(\alpha\alpha)ij_1}   
           \overline {\cal H}^{-1}_{(\alpha\alpha)ij_2}   \right]
      \left( \frac 1 {\sqrt{T}} \sum_{\tau=1}^T   \partial_{\pi} \ell_{j_1\tau} \right)     
      \left( \frac 1 {\sqrt{T}} \sum_{\tau=1}^T   \partial_{\pi} \ell_{j_2\tau} \right) .
\end{align*}
Analogous to the result for $U^{(1b,1,2)},$ 
$\mathbb{E}_\phi\left[ \left(U^{(1b,1,1)} - \mathbb{E}_\phi \, U^{(1b,1,1)} \right)^2 \right]  
= {\cal O}_P(1/\sqrt{N})$, and therefore 
$U^{(1b,1,1)} =  \mathbb{E}_\phi  \,   U^{(1b,1,1)} + o(1)$.
Furthermore,
\begin{align*}
     \mathbb{E}_\phi  \,   U^{(1b,1,1)}
       &=   \frac 1 {2 \sqrt{NT}} 
    \sum_{i=1}^N    \frac{ \sum_{t=1}^T    
      \mathbb{E}_\phi( D_{\beta \pi^2} \ell_{it}) 
    \sum_{\tau=1}^T \mathbb{E}_\phi\left[ \left(   \partial_{\pi} \ell_{i\tau} \right)^2  \right] }
     { \left[ \sum_{t=1}^T  \mathbb{E}_\phi\left(  \partial_{\pi^2} \ell_{it}   \right) \right]^2}
       + o(1)
   \nonumber \\  &
   =  \underbrace{ -  \sqrt{ \frac N T }  \frac 1 {2 N} 
    \sum_{i=1}^N    \frac{ \sum_{t=1}^T    
      \mathbb{E}_\phi( D_{\beta \pi^2} \ell_{it})  }
     { \sum_{t=1}^T  \mathbb{E}_\phi\left(  \partial_{\pi^2} \ell_{it}   \right) }
    }_ { =: \sqrt{ \frac N T }  \overline B^{(2)}} 
       + o(1)   .
\end{align*}   
Analogously, 
\begin{align*}
     U^{(1b,4,4)}
     &= \mathbb{E}_\phi  \,   U^{(1b,4,4)} + o_P(1)
       =  \underbrace{ -  \sqrt{ \frac T N }  \frac 1 {2 T} 
    \sum_{t=1}^T    \frac{ \sum_{i=1}^N    
      \mathbb{E}_\phi( D_{\beta \pi^2} \ell_{it})  }
     { \sum_{i=1}^N   \mathbb{E}_\phi\left(  \partial_{\pi^2} \ell_{it}   \right) }
    }_ { =: \sqrt{ \frac T N }  \overline D^{(2)}} 
       + o(1)   .
\end{align*}  
We have thus shown that
$U^{(1b)} = \kappa \overline B^{(2)} + \kappa^{-1} \overline D^{(2)} + o_P(1)$.
Since $\overline B_{\infty} =  \lim_{N,T \rightarrow \infty}[ \overline B^{(1)} + \overline B^{(2)}]$
 and $\overline D_{\infty} =  \lim_{N,T \rightarrow \infty}[ \overline D^{(1)} + \overline D^{(2)}]$             
we thus conclude  $U^{(1)} =
           \kappa \overline B_{\infty} + \kappa^{-1} \overline D_{\infty} + o_P(1)$.
       
  \# We have shown
$U^{(0)}  \to_d {\cal N}( 0 ,\; \overline W_{\infty})$, and
$U^{(1)}  \to_P
           \kappa \overline B_{\infty} + \kappa^{-1} \overline D_{\infty}$. Then, part $(ii)$ of Theorem~\ref{th:connection} yields 
$\sqrt{NT} ( \widehat \beta - \beta^0  )
          \; \to_d \;
       \overline{W}_{\infty}^{-1} {\cal N}( \kappa \overline B_{\infty}
                    + \kappa^{-1}  \overline D_{\infty} ,
           \;\overline W_{\infty})$.
 \end{proof}

\begin{proof}[\bf Proof of Theorem~\ref{th:DeltaLimit}]
 We consider the case of scalar $\Delta_{it}$  to simplify the notation. Decompose
  $$
r_{NT} (\widehat \delta - \delta_{NT}^0 -  \overline{B}_{\infty}^{\delta}/T -   \overline{D}_{\infty}^{\delta}/N) =  r_{NT} (\delta - \delta_{NT}^0) + \frac{r_{NT}}{\sqrt{NT}} \sqrt{NT} (\widehat \delta - \delta  - \overline{B}_{\infty}^{\delta} /T -  \overline{D}_{\infty}^{\delta}/N).
$$

\# Part (1): Limit of $\sqrt{NT} (\widehat \delta - \delta  -  \overline{B}_{\infty}^{\delta}/T -   \overline{D}_{\infty}^{\delta}/N)$. An argument analogous to  to the proof of Theorem~\ref{th:BothEffects} using Theorem~\ref{th:connection}$(iii)$ yields 
$$
\sqrt{NT} (\widehat \delta - \delta) \to_d \mathcal{N}\left(\kappa \overline{B}_{\infty}^{\delta} + \kappa^{-1}  \overline{D}_{\infty}^{\delta},  \overline{V}_{\infty}^{\delta(1)} \right),
$$
where $\overline{V}_{\infty}^{\delta(1)} =  \EE\left\{ (NT)^{-1} \sum_{i,t}\E[ \Gamma_{it}^2] \right\},$ for the expressions of $ \overline{B}_{\infty}^{\delta}$, $ \overline{D}_{\infty}^{\delta}$, and  $\Gamma_{it}$ given in the statement of the theorem. Then, by Mann-Wald theorem
\begin{equation*}
\sqrt{NT} (\widehat \delta - \delta  -  \overline{B}_{\infty}^{\delta}/T - \overline{D}_{\infty}^{\delta}/N) \to_d \mathcal{N}\left(0,  \overline{V}_{\infty}^{\delta(1)} \right). 
\end{equation*}

\# Part (2):  Limit of $r_{NT}(\delta - \delta_{NT}^0)$. Here we show that $r_{NT}(\delta - \delta_{NT}^0) \to_d \mathcal{N}(0,\overline{V}^{\delta(2)}_{\infty})$  
for the convergence rate $r_{NT}$ given in Remark \ref{remark: conv_rate}, and characterize the asymptotic variance $\overline{V}^{\delta(2)}_{\infty}$.
We determine $r_{NT}$ through $\Ep[(\delta - \delta_{NT}^0)^2] = \mathcal{O}(r_{NT}^{-2})$ and $r_{NT}^{-2} = \mathcal{O}(\Ep[(\delta - \delta_{NT}^0)^2])$, where  
\begin{equation}\label{eq: order_ameff}
\Ep[(\delta - \delta_{NT}^0)^2] = \Ep \left[
       \left(\frac {1} {NT} \sum_{i,t}  \widetilde \Delta_{it} \right)^2 \right] =  
       \frac {1} {N^2T^2} \sum_{i,j,t,s}\Ep \left[ \widetilde \Delta_{it} \widetilde \Delta_{js} \right],
\end{equation}       
for $\widetilde \Delta_{it} = \Delta_{it} - \Ep ( \Delta_{it} )$.   Then, we characterize  $\overline{V}^{\delta(2)}_{\infty}$ as   $\overline{V}_{\infty}^{\delta(2)} = \EE\{r_{NT}^2  \Ep[(\delta - \delta_{NT}^0)^2]\},$ because $\Ep[\delta - \delta_{NT}^0] = 0$. The order of $\Ep[(\delta - \delta_{NT}^0)^2]$ is equal to the number of terms of the sums in equation \eqref{eq: order_ameff} that are non zero, which it is determined by the sample properties of $\{(X_{it}, \alpha_i, \gamma_t) : 1 \leq i \leq N, 1 \leq t \leq T) \}$.
%
%
Under Assumption~\ref{ass:PanelA2}$(i)$, if  $\{\alpha_i\}_N$  and  $\{\gamma_t\}_T$ are independent sequences, and  $\alpha_i$ and $\gamma_t$ are independent for all $i,t$, then $\Ep [ \widetilde \Delta_{it} \widetilde \Delta_{js}  ] = \Ep [ \widetilde \Delta_{it} ] \Ep [\widetilde \Delta_{js}  ] = 0$ if $i \neq j$ and $t \neq s,$  so that 
$$
\Ep[(\delta - \delta_{NT}^0)^2] =  \frac {1} {N^2T^2} \left\{ \sum_{i,t,s}\Ep \left[ \widetilde \Delta_{it} \widetilde \Delta_{is}  \right] + \sum_{i,j,t}\Ep \left[ \widetilde \Delta_{it} \widetilde \Delta_{jt}  \right]  - \sum_{i,t} \Ep \left[ \widetilde \Delta_{it}^2  \right] \right\}= \mathcal{O}\left(\frac{N+T-1}{NT} \right),
$$
because $\Ep [ \widetilde \Delta_{it} \widetilde \Delta_{is}  ] \leq \Ep [ \E (\widetilde \Delta_{it}^2)]^{1/2} \Ep[ \E( \widetilde \Delta_{is}^2)  ]^{1/2} < C$ by the Cauchy-Schwarz inequality and Assumption~\ref{ass:PanelA2}$(ii)$. We conclude that $r_{NT} = \sqrt{NT/(N+T-1)}$ and 
$$
\overline{V}^{\delta(2)} =  \EE \left\{\frac {r_{NT}^2} {N^2T^2} \left( \sum_{i,t,s}\Ep \left[ \widetilde \Delta_{it} \widetilde \Delta_{is}  \right] + \sum_{i\neq j,t}\Ep \left[ \widetilde \Delta_{it} \widetilde \Delta_{jt}  \right] \right) \right\}.
$$

Note that  $r_{NT} \to \infty$  and $r_{NT} = \mathcal{O}(\sqrt{NT}).$

\# Part (3): Asymptotic covariance between   $r_{NT}(\delta - \delta_{NT}^0)$ and  $\sqrt{NT} (\widehat \delta - \delta  - T^{-1} \overline{B}_{\infty}^{\delta} - N^{-1}  \overline{D}_{\infty}^{\delta})$. Note that
$$
\Ep\left[(\delta - \delta_{NT}^0) \frac{1}{NT} \sum_{i,t} \Gamma_{it} \right] =  \frac{1}{N^2T^2}  \sum_{i,s>t}\Ep \left[ \widetilde \Delta_{it} \Gamma_{is}  \right] = \mathcal{O}\left(\frac{1}{N} \right)
$$
 since $\Gamma_{it}$ is a martingale difference over $t$ and independent over $i$ conditional on the unobserved effects. Let
 $$
\overline{C}^{\delta(1,2)} = \EE \left\{\frac {1} {NT^2}  \sum_{i,s>t}\Ep \left[ \widetilde \Delta_{it} \Gamma_{is}  \right] \right\}.
$$

\# Part (4): limit of $r_{NT} (\widehat \delta - \delta_{NT}^0 - T^{-1} \overline{B}_{\infty}^{\delta} - N^{-1}  \overline{D}_{\infty}^{\delta})$. The conclusion of the Theorem follows because    $\overline{V}_{\infty}^{\delta} = \overline{V}^{\delta(2)} +  \overline{V}^{\delta(1)} \lim_{N,T \to \infty} (r_{NT}/\sqrt{NT})^2 + 2 \overline{C}^{\delta(1,2)} \lim_{N,T \to \infty} (r_{NT}^2/N).$
\end{proof}

\section{Properties of the Inverse Expected Incidental Parameter Hessian}
\label{sec:InverseH}

The expected incidental parameter Hessian evaluated at the true parameter values is
\begin{align*}
\overline{\cal H} = \E[ - \partial_{\phi \phi'} {\cal L}] =  
\left(\begin{array}{cc}  \overline{\mathcal{H}}_{(\alpha\alpha)}^* & \overline{\mathcal{H}}_{(\alpha\gamma)}^*  \\ {[\overline{\mathcal{H}}_{(\alpha\gamma)}^*]}' & \overline{\mathcal{H}}_{(\gamma\gamma)}^*
\end{array}\right) 
+  \frac{b} {\sqrt{NT}} \, vv' ,
\end{align*}
where $v= v_{NT} = ( 1_N' ,- 1_T')'$,
 $\overline{\mathcal{H}}_{(\alpha\alpha)}^* =  \text{diag}(\frac 1 {\sqrt{NT}}  \sum_{t} \E[- \partial_{\pi^2} \ell_{it}])$,
$\overline{\mathcal{H}}_{(\alpha\gamma)it}^* = \frac 1 {\sqrt{NT}} \E[-\partial_{\pi^2} \ell_{it}]$,
and 
$\overline{\mathcal{H}}_{(\gamma\gamma)} ^*=  \text{diag}(\frac 1 {\sqrt{NT}}  \sum_{i} \E[- \partial_{\pi^2} \ell_{it}])$.

In panel models with only individual effects, it is straightforward to determine the order of magnitude of   $\overline  {\cal H}^{-1}$ in Assumption~\ref{ass:A1}$(iv)$,  because $\overline  {\cal H}$ contains only the diagonal matrix $ \overline {\cal H}_{(\alpha \alpha)}^*$.
In our case,
 $\overline {\cal H}$ is no longer diagonal, but it has a special structure. The diagonal terms
are of order 1, whereas the off-diagonal terms are of order $(NT)^{-1/2}$. Moreover,
$\left\| \overline {\cal H} -   \diag(  \overline {\cal H}_{(\alpha \alpha)}^*, \overline {\cal H}_{(\gamma \gamma)}^*) \right\|_{\max}
        =  \mathcal{O}_P((NT)^{-1/2})$.
These observations, however, are not sufficient to establish the order of $\overline  {\cal H}^{-1}$ because the number of non-zero off-diagonal terms is of much larger order than the number of  diagonal
terms; compare $\mathcal{O}(NT)$ to $\mathcal{O}(N+T)$. 
Note also that the expected
Hessian without penalty term $\overline{\cal H}^*$ has the same structure
as $\overline{\cal H}$ itself, but is not even invertible, i.e. the observation on the relative
size of diagonal vs. off-diagonal terms is certainly not sufficient to make statements about the
structure of $\overline {\cal H}^{-1}$.
The result of the following lemma is therefore not obvious.
It shows
that the diagonal terms of $\overline{{\cal H}}$ also dominate in determining 
the order of $\overline{{\cal H}}^{-1}$.


\begin{lemma}
   \label{lemma:HessianAdditive}
   Under Assumptions \ref{ass:PanelA1},
   \begin{equation*}
   \left\| \overline {\cal H}^{-1} -
   \diag \left(  \overline {\cal H}_{(\alpha \alpha)}^*, \overline {\cal H}_{(\gamma \gamma)}^*
    \right)^{-1}
    \right\|_{\max}
        =  \mathcal{O}_P\left( (NT)^{-1/2} \right)  .
   \end{equation*}
\end{lemma}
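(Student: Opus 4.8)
Let me set up notation that exposes the structure. Write $w_{it} := \E(-\partial_{\pi^2}\ell_{it})$, so that $0<b_{\min}\le w_{it}\le b_{\max}$ a.s.\ by Assumption~\ref{ass:PanelA1}$(v)$. Factoring out the common prefactor, the plan is to write $\overline{\cal H} = (NT)^{-1/2}M$ with $M = M_0 + b\,vv'$, where $v=(1_N',-1_T')'$,
\[
M_0 = \begin{pmatrix} \Sigma_\alpha & \Omega \\ \Omega' & \Sigma_\gamma \end{pmatrix},\qquad \Sigma_\alpha = \diag\Bigl(\textstyle\sum_t w_{it}\Bigr),\quad \Sigma_\gamma = \diag\Bigl(\textstyle\sum_i w_{it}\Bigr),\quad \Omega=(w_{it}).
\]
With $D_0 := \diag(\Sigma_\alpha,\Sigma_\gamma)$ we have $\diag(\overline{\mathcal H}^*_{(\alpha\alpha)},\overline{\mathcal H}^*_{(\gamma\gamma)}) = (NT)^{-1/2}D_0$, so since $\overline{\cal H}^{-1}-D^{-1}=(NT)^{1/2}(M^{-1}-D_0^{-1})$, the claim is equivalent to $\|M^{-1}-D_0^{-1}\|_{\max}=\mathcal{O}_P((NT)^{-1})$. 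The algebraic fact driving everything is that for $\phi=(a',g')'$,
\[
\phi'M_0\,\phi = \sum_{i,t} w_{it}\,(a_i+g_t)^2,
\]
which is nonnegative and vanishes exactly on $\mathrm{span}(v)$.

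First I would establish the spectral bound $\|M^{-1}\|=\mathcal{O}_P((NT)^{-1/2})$. Using $w_{it}\ge b_{\min}$, the display above is $\ge b_{\min}\bigl[\,T\sum_i a_i^2 + 2(\sum_i a_i)(\sum_t g_t) + N\sum_t g_t^2\,\bigr]$. On $v^\perp$ the defining constraint $\sum_i a_i=\sum_t g_t$ makes the cross term equal to $2(\sum_i a_i)^2\ge 0$, so $\phi'M_0\phi \ge b_{\min}\min(N,T)\,\|\phi\|^2$ there; meanwhile $M_0v=0$ and $Mv = b(N+T)v$. Because $M_0$ is symmetric with $M_0v=0$ it leaves $v^\perp$ invariant, so $M=M_0+bvv'$ is block diagonal with respect to $\mathrm{span}(v)\oplus v^\perp$ and $\lambda_{\min}(M)\ge c\min(N,T)\asymp (NT)^{1/2}$ by Assumption~\ref{ass:PanelA1}$(i)$; hence $\|M^{-1}\|=\mathcal{O}_P((NT)^{-1/2})$ (this also gives the invertibility and the $\mathcal{O}_P(1)$ order of $\overline{\cal H}^{-1}$ used elsewhere).

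The final step uses the exact second-order resolvent identity
\[
M^{-1}-D_0^{-1} = -\,D_0^{-1}R\,D_0^{-1} + D_0^{-1}R\,M^{-1}R\,D_0^{-1},\qquad R:=M-D_0=\begin{pmatrix}0&\Omega\\ \Omega'&0\end{pmatrix}+b\,vv',
\]
and bounds each term in $\|\cdot\|_{\max}$. Each diagonal entry of $D_0$ is a sum of $T$ (resp.\ $N$) terms bounded below by $b_{\min}$, so $(D_0^{-1})_{kk}=\mathcal{O}_P((NT)^{-1/2})$ uniformly, while $\max_{k,l}|R_{kl}|=\mathcal{O}_P(1)$; thus the first term is $\mathcal{O}_P((NT)^{-1})$ entrywise. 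For the remainder, each row $R_{k\cdot}$ has at most $N+T$ entries of order one, so $\max_k\|R_{k\cdot}\|_2=\mathcal{O}_P((NT)^{1/4})$, and since $R$ is symmetric,
\[
\max_{k,l}\bigl|(R\,M^{-1}R)_{kl}\bigr|\le \Bigl(\max_k\|R_{k\cdot}\|_2\Bigr)^2\|M^{-1}\| = \mathcal{O}_P\bigl((NT)^{1/2}\bigr)\,\mathcal{O}_P\bigl((NT)^{-1/2}\bigr)=\mathcal{O}_P(1),
\]
so the remainder is also $\mathcal{O}_P((NT)^{-1})$ entrywise. Combining the two bounds yields $\|M^{-1}-D_0^{-1}\|_{\max}=\mathcal{O}_P((NT)^{-1})$, and multiplying by $(NT)^{1/2}$ gives the statement.

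The main obstacle is that $R$ is genuinely \emph{not} small in operator norm: its rows have $\ell_2$-norm of order $(NT)^{1/4}$ and its spectral norm is of order one, reflecting the nontrivial compounding of individual and time effects noted in the text. Consequently a naive Neumann expansion of $M^{-1}$ in powers of $D_0^{-1}R$ need not converge, and the first-order term $-D_0^{-1}RD_0^{-1}$ alone cannot be trusted. The device that rescues the argument is to retain the full inverse $M^{-1}$ inside the quadratic remainder and to tame it through the operator-norm bound $\|M^{-1}\|=\mathcal{O}_P((NT)^{-1/2})$, combined with the crude but uniform row bounds on $R$. Establishing that operator-norm bound -- equivalently, the order-$(NT)^{1/2}$ spectral gap of $M_0$ on $v^\perp$ extracted from the quadratic form $\sum_{i,t}w_{it}(a_i+g_t)^2$, where the potentially destabilizing cross term turns out to be harmless precisely on $v^\perp$ -- is the crux of the proof.
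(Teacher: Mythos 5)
Your proof is correct, and it takes a genuinely different route from the paper's. The paper first reduces to a small penalty constant $b$ (Lemma~\ref{lemma:ChoiceB}, via the Moore--Penrose pseudoinverse), then inverts $\overline{\cal H}$ blockwise with the partitioned-inverse formula, absorbs the rank-one penalty pieces through Woodbury identities, and controls the Schur complement $A = (\overline{\cal H}_{(\alpha\alpha)} - \overline{\cal H}_{(\alpha\gamma)}\overline{\cal H}_{(\gamma\gamma)}^{-1}\overline{\cal H}_{(\gamma\alpha)})^{-1}$ by a Neumann series whose convergence rests on the $\infty$-norm contraction $\|\overline{\cal H}_{(\alpha\alpha)}^{-1}\overline{\cal H}_{(\alpha\gamma)}\|_\infty < 1 - b/b_{\max}$ (Lemma~\ref{lemma:HH}, which is exactly what forces $b$ to be small). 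You instead treat the matrix as a whole: the quadratic-form identity $\phi'M_0\phi=\sum_{i,t}w_{it}(a_i+g_t)^2$, together with $M_0v=0$, gives block diagonality of $M=M_0+bvv'$ with respect to $\mathrm{span}(v)\oplus v^\perp$ and hence the spectral gap $\lambda_{\min}(M)\geq \min(b,b_{\min})\min(N,T)$ for \emph{every} $b>0$; the exact second-order resolvent identity around $D_0$ (which I verified: substituting $M^{-1}=D_0^{-1}-M^{-1}RD_0^{-1}$ into $M^{-1}=D_0^{-1}-D_0^{-1}RM^{-1}$ yields precisely your formula) then converts this spectral bound plus crude entrywise and row-norm bounds on $R$ into the required max-norm rate. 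What your route buys: neither auxiliary lemma is needed, the argument is uniform in $b$, and the spectral bound $\|\overline{\cal H}^{-1}\|={\cal O}_P(1)$ — used elsewhere to verify Assumption~\ref{ass:A1}$(iv)$ — drops out as a byproduct instead of being recovered afterwards from max-norm bounds. What the paper's route buys: explicit blockwise expressions for $\overline{\cal H}^{-1}$ and $\infty$-norm control of individual blocks (e.g. $\|\overline{\cal H}_{(\alpha\alpha)}^{-1}\|_\infty={\cal O}_P(1)$), which are reused in the verification of the $q$-norm conditions. Your closing diagnosis is also accurate: a naive expansion in $D_0^{-1}R$ is not trustworthy because $R$ has spectral norm of order one; the paper evades this by expanding in the composed operator $\overline{\cal H}_{(\alpha\alpha)}^{-1}\overline{\cal H}_{(\alpha\gamma)}\overline{\cal H}_{(\gamma\gamma)}^{-1}\overline{\cal H}_{(\gamma\alpha)}$ (a contraction only after shrinking $b$), whereas you evade it by keeping the exact $M^{-1}$ inside the quadratic remainder.
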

The proof of Lemma~\ref{lemma:HessianAdditive} is provided in the supplementary material.
The lemma result establishes that $\overline {\cal H}^{-1}$ can be uniformly approximated
by a diagonal matrix,
 which is given by
the inverse of the diagonal terms of $\overline {\cal H}$ without the penalty.
The diagonal elements of
$  \diag(  \overline {\cal H}_{(\alpha \alpha)}^*, \overline {\cal H}_{(\gamma \gamma)}^*)^{-1} $
are of order 1, i.e. the order of the difference established
by the lemma is relatively small.

Note that the choice of penalty in the objective function
is important to obtain Lemma~\ref{lemma:HessianAdditive}.
Different penalties, corresponding to other
normalizations (e.g. a penalty proportional to $\alpha_1^2$, corresponding
to the normalization $\alpha_1=0$), would fail to deliver
Lemma~\ref{lemma:HessianAdditive}. However, these alternative choices do not affect
the estimators  $\widehat \beta$ and $\widehat \delta$, i.e. which normalization
is used to compute $\widehat \beta$ and $\widehat \delta$ in practice is irrelevant 
(up to numerical precision errors).

\end{appendix}


\clearpage


\setcounter{section}{0}
\setcounter{footnote}{0}
\setcounter{page}{1}

\renewcommand{\thesection}{S.\arabic{section}}
\renewcommand{\theequation}{S.\arabic{equation}}
\renewcommand{\thetheorem}{S.\arabic{theorem}}
\renewcommand{\thetable}{S.\arabic{table}}

\begin{center}
{\bf \Large 
Supplement to `Individual and Time Effects
       in  Nonlinear Panel Models with Large $N$, $T$'} 
\end{center}

\bigskip
\bigskip
\bigskip


%

\vspace{-1cm}
\abstract{\noindent
This supplemental material contains five appendices. Appendix \ref{sec:MCpoisson} presents the results of an empirical application and a Monte Carlo simulation calibrated to the application.  Following Aghion \textit{et al.} \cite*{AghionBloomBlundellGriffithHowitt2005},  we use a  panel of U.K. industries to estimate  Poisson models with industry and time effects for the relationship between innovation and competition. Appendix~\ref{app:ProofsEst} gives the proofs of Theorems \ref{th:bc} and \ref{th:bc_ameff}. 
  Appendices~\ref{sec:ExpansionProofs}, \ref{sec:s3}, and  \ref{sec:s4} contain the proofs of Appendices~\ref{app:expansion},  \ref{app:ProofMain}, and \ref{sec:InverseH}, respectively. Appendix \ref{sec: s5} collects some useful intermediate results that are used in the proofs of the main results.}

\bigskip

\section{Relationship between Innovation and Competition}
\label{sec:MCpoisson}

\subsection{Empirical Example}\label{sec: empirics}

To illustrate the bias corrections with real data, we revisit the empirical application of Aghion, Bloom, Blundell, Griffith and Howitt \cite*{AghionBloomBlundellGriffithHowitt2005} (ABBGH) that estimated a count data model to analyze the relationship between innovation and competition. They used an unbalanced panel of seventeen U.K. industries followed over the 22 years between 1973 and 1994.\footnote{We assume that the observations are missing at random conditional on the explanatory variables and unobserved effects and apply the corrections without change since the level of attrition is low in this application.} The dependent variable, $Y_{it},$ is innovation as measured by a citation-weighted number of patents, and the explanatory variable of interest, $Z_{it},$ is competition as measured by one minus the Lerner index in the industry-year.


Following ABBGH we consider a quadratic static Poisson model with industry and year effects where
$$
Y_{it} \mid Z_i^T, \alpha_i,\gamma_t \sim \mathcal{P}(\exp[\beta_1 Z_{it} + \beta_2 Z_{it}^2 + \alpha_i + \gamma_t]),
$$
for $(i = 1,...,17; t = 1973, ..., 1994),$ and extend the analysis to a dynamic Poisson model with industry and year effects where
$$
Y_{it} \mid Y_i^{t-1}, Z_i^t, \alpha_i, \gamma^t \sim \mathcal{P}(\exp[\beta_{Y} \log(1 + Y_{i,t-1}) + \beta_1 Z_{it} + \beta_2 Z_{it}^2 + \alpha_i + \gamma_t]),
$$
for  $(i = 1,...,17; t = 1974, ..., 1994).$ In the dynamic model we use the year 1973 as the initial condition for $Y_{it}$.

\begin{table}[ht]
\begin{center} 
\scalebox{.9}{\includegraphics{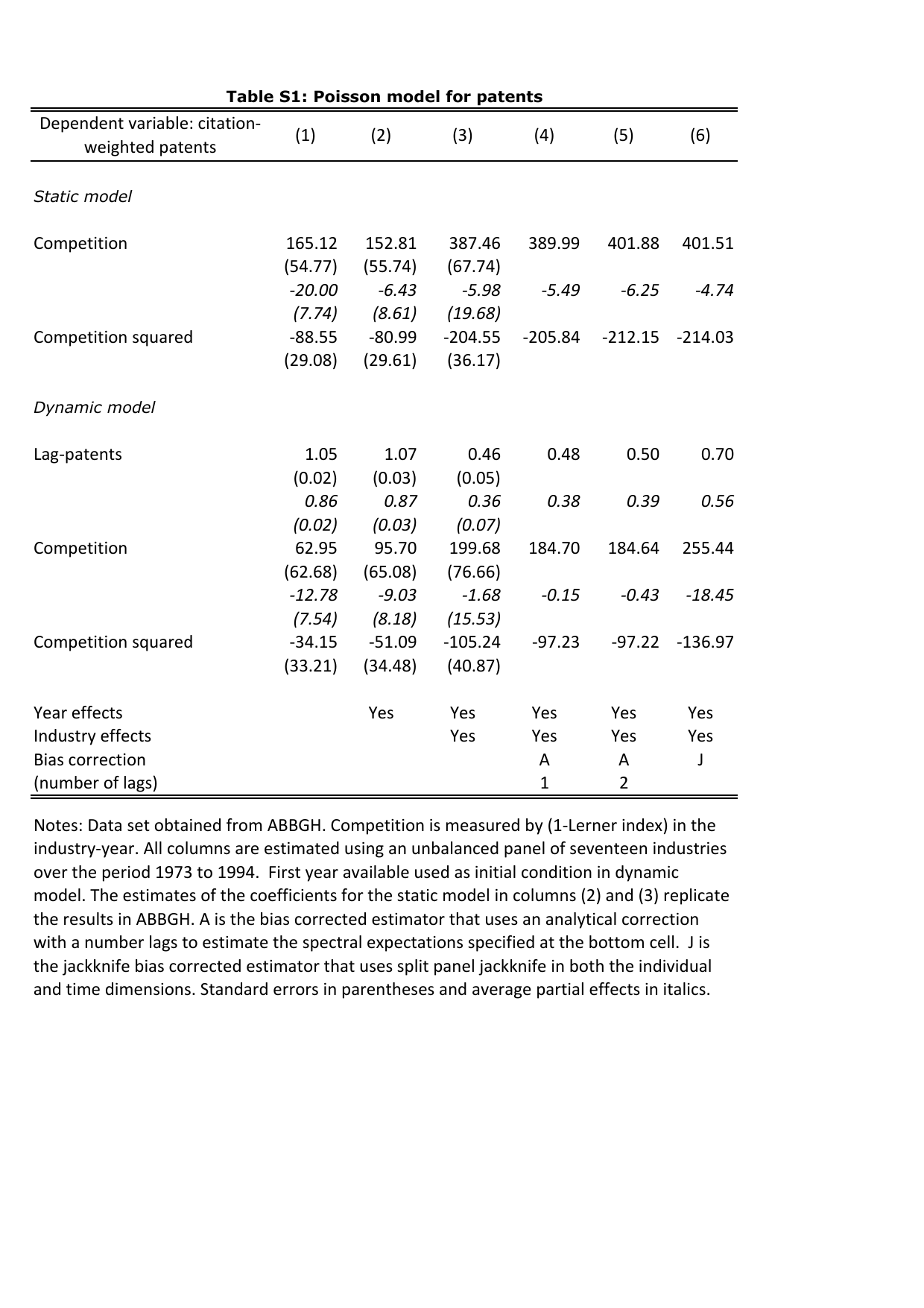}}
\end{center}
\end{table}

Table S1 reports the results of the analysis. Columns (2) and (3) for the static model replicate the empirical results of Table I in ABBGH  (p. 708), adding estimates of the APEs. Columns (4) and (5) report estimates of the analytical corrections that do not assume that competition is strictly exogenous with $L=1$ and $L=2$, and  column (6) reports estimates of the jackknife bias corrections  described in  equation (\ref{eq: jackknife2}) of the paper. Note that we do not need to report separate standard errors for the corrected estimators, because the standard errors of the uncorrected estimators are consistent for the corrected estimators under the asymptotic approximation that we consider.\footnote{In numerical examples, we find very little gains in terms of the ratio SE/SD and coverage probabilities when we reestimate the standard errors using bias corrected estimates.} Overall, the corrected estimates, while numerically different from the uncorrected estimates in column (3), agree with the inverted-U pattern in the relationship between innovation and competition found by ABBGH. The close similarity between the uncorrected and bias corrected estimates gives some evidence in favor of the strict exogeneity of competition with respect to the innovation process. 

\setcounter{table}{1}
\begin{table}[htp]
\begin{center}\caption{\label{table:test}  Homogeneity test for the jackknife}
\begin{tabular}{lccc} \hline\hline 
&  & Cross section  &  Time series   \\\hline 
  Static Model  & &  10.49 & 13.37  \\
  &  &  (0.01) & (0.00)   \\
  Dynamic Model  &  & 1.87 & 12.41 \\
 &  & (0.60) & (0.01)   \\\hline\hline
   \multicolumn{4}{l}{ \begin{footnotesize} Notes:  Wald test for equality of common parameters across sub panels.  \end{footnotesize}}\\
       \multicolumn{4}{l}{ \begin{footnotesize} P-values in parentheses \end{footnotesize}}
  \end{tabular}
\end{center}
\end{table}

The results for the dynamic model show substantial positive state dependence in the innovation process that is not explained by industry heterogeneity. 
Uncorrected fixed effects underestimates the coefficient and APE of lag patents relative to the bias corrections, specially relative to the jackknife. The pattern of the differences between the estimates is consistent with the biases that we find in the numerical example in Table S4. Accounting for state dependence does not change the inverted-U pattern, but flattens the relationship between innovation and competition.

Table \ref{table:test} implements Chow-type homogeneity tests for the validity of the jackknife corrections. These tests compare the uncorrected fixed effects estimators of the common parameters  within the elements of the cross section and time series partitions of the panel. Under time homogeneity, the probability limit of these estimators is the same, so that a standard Wald test can be applied based on the difference of the estimators in the sub panels within the partition. For the static model, the test is rejected at the 1\% level in both the cross section and time series partitions. Since the cross sectional partition is arbitrary, these rejection might be a signal of model misspecification. For the dynamic model, the test is rejected at the 1\% level in the time series partition, but it cannot be rejected at conventional levels in the cross section partition. The rejection of the time homogeneity might explain the difference between the jackknife and analytical corrections in the dynamic model.

\subsection{Calibrated Monte Carlo Simulations}
We conduct a simulation that mimics the empirical example. The designs correspond to static and dynamic Poisson models with additive individual and time effects. We calibrate all the parameters and exogenous variables using the dataset from ABBGH.

\subsubsection{Static Poisson model}

The data generating process is
\begin{equation*}
Y_{it} \mid Z_i^T,\alpha,\gamma  \sim \mathcal{P}( \exp[Z_{it} \beta_1 + Z_{it}^2 \beta_2 + \alpha _{i} +
\gamma_{t}]),
\ \ (i=1,...,N; \
t=1,...,T),
\end{equation*}%
where $\mathcal{P}$ denotes the Poisson distribution. The variable $Z_{it}$ is fixed to the values of
the competition variable in the dataset and all the parameters are set to the fixed effect estimates of the model.  We generate
unbalanced panel data sets with $T=22$ years and three different numbers
of industries $N$: 17,  34, and 51. In the second (third) case, we double (triple) the cross-sectional size by merging  two (three) independent realizations of the panel.

Table S3 reports the simulation results for the coefficients $\beta_1$ and $\beta_2$, and the APE of $Z_{it}$. We compute the  APE using the expression (\ref{example: poisson: meff}) with $H(Z_{it}) = Z_{it}^2$.
Throughout the table, MLE corresponds to the pooled Poisson maximum likelihood estimator (without individual and time effects),  MLE-TE corresponds to the Poisson estimator with only time effects, MLE-FETE corresponds to the Poisson maximum
likelihood estimator with individual and  time fixed effects, Analytical (L=l) is the bias corrected estimator that uses the analytical correction with $L=l$,
and Jackknife is the bias corrected estimator  that
uses SPJ in both the individual and time
dimensions. The analytical corrections are different from the uncorrected estimator because they do not use that the regressor $Z_{it}$ is strictly exogenous. The cross-sectional division in the jackknife follows the
order of the observations. The choice of these estimators is motivated by the empirical analysis of ABBGH.  All the results in the table are
reported in percentage of the true parameter value.

The results of the table agree with the no asymptotic  bias result for the Poisson model with exogenous regressors. Thus, the bias of MLE-FETE for the coefficients and APE is negligible relative to the standard deviation and the coverage probabilities get close to the nominal level as $N$ grows. The analytical corrections preserve the performance of the estimators and have very little sensitivity to the trimming parameter. The jackknife  correction increases dispersion and rmse, specially for the small cross-sectional size of the application. The estimators that do not control for individual effects are clearly biased.

\subsubsection{Dynamic Poisson model}

The data generating process is
\begin{equation*}
Y_{it} \mid Y_i^{t-1}, Z_i^t,\alpha, \gamma \sim \mathcal{P}(  \exp[\beta_{Y}  \log(1 + Y_{i,t-1}) +  Z_{it} \beta_1 + Z_{it}^2 \beta_2 + \alpha _{i} +
\gamma_{t}]), \ \ (i=1,...,N; t=1,...,T).
\end{equation*}%
The competition variable $Z_{it}$ and the initial condition for the number of patents $Y_{i0}$ are fixed to the values
 in the dataset and all the parameters are set to the fixed effect estimates of the model.  To generate panels, we first impute values
 to the missing observations of $Z_{it}$ using forward and backward predictions from a panel AR(1) linear  model with individual and time effects. We then draw
panel data sets with $T=21$ years and three different numbers
of industries $N$: 17, 34, and 51. As in the static model, we double (triple) the cross-sectional size by merging  two (three) independent realizations of the panel. We make the generated panels unbalanced by dropping the values corresponding to the missing observations in the original dataset.

Table S4 reports the simulation results for the coefficient $\beta^0_Y$ and the APE of $Y_{i,t-1}$. The estimators considered are the same as for the static Poisson model above. We compute the partial effect  of $Y_{i,t-1}$ using  (\ref{example: poisson: meff}) with $Z_{it} = Y_{i,t-1}$, $H(Z_{it}) = \log (1+Z_{it}),$ and dropping the linear term.
Table S5 reports the simulation results for the coefficients $\beta_1^0$ and $\beta_2^0$, and the APE  of $Z_{it}$. We compute the partial effect using  (\ref{example: poisson: meff}) with $H(Z_{it}) = Z_{it}^2$.
Again, all the results in the tables are
reported in percentage of the true parameter value.

The results in table S4 show biases of the same order of magnitude as the standard deviation for the fixed effects estimators of the coefficient and  APE of $Y_{i,t-1}$, which cause severe undercoverage of confidence intervals.  Note that in this case the rate of convergence for the estimator of the APE is $r_{NT} = \sqrt{NT}$, because the individual and time effects are hold fixed across the simulations. The analytical corrections reduce bias by more than half without increasing dispersion, substantially reducing rmse and bringing coverage probabilities closer to their nominal levels. The jackknife corrections reduce bias and increase dispersion leading to lower improvements in  rmse and  coverage probability  than the analytical corrections. The results for the coefficient of $Z_{it}$ in table 8 are similar to the static model.   The results for the APE of $Z_{it}$ are imprecise, because the true value of the effect is close to zero.

\section{Proofs of Theorems \ref{th:bc} and \ref{th:bc_ameff}} 
\label{app:ProofsEst}

We start with a lemma that shows the consistency of the fixed effects estimators of averages of the data and parameters. We will use this result to show the validity of the analytical bias corrections and the consistency of the variance estimators.
 
 \begin{lemma}\label{lemma:abc} Let $G(\beta,\phi) := [N(T-j)]^{-1} \sum_{i,t \geq j+1} g(X_{it}, X_{i,t-j}, \beta, \alpha_{i} + \gamma_t, \alpha_{i} + \gamma_{t-j})$ for $0 \leq j < T,$ and $\mathcal{B}_{\varepsilon}^0$ be a subset of $\mathbb{R}^{\dim \beta + 2}$ that contains an $\varepsilon$-neighborhood of $(\beta, \pi_{it}^0, \pi_{i,t-j}^0)$ for all $i,t,j,N, T$, and for some $\varepsilon > 0$. Assume that $(\beta, \pi_1, \pi_2) \mapsto g_{itj}(\beta, \pi_1, \pi_2) := g(X_{it}, X_{i,t-j}, \beta, \pi_1, \pi_2)$ is  Lipschitz continuous over $\mathcal{B}_{\varepsilon}^0$ a.s, i.e. $|g_{itj}(\beta_1, \pi_{11}, \pi_{21}) - g_{itj}(\beta_0, \pi_{10}, \pi_{20})| \leq M_{itj} \|(\beta_1, \pi_{11}, \pi_{21}) - (\beta, \pi_{10}, \pi_{20}) \|$ for all $(\beta_0, \pi_{10}, \pi_{20}) \in \mathcal{B}_{\varepsilon}^0$, $(\beta_1, \pi_{11}, \pi_{21}) \in \mathcal{B}_{\varepsilon}^0$, and some $M_{itj} = \mathcal{O}_P(1)$ for all $i,t,j,N, T$. Let $(\widehat \beta, \widehat \phi)$ be an estimator of $(\beta, \phi)$ such that $\|\widehat \beta - \beta^0\| \to_P 0$ and $\|\widehat \phi - \phi^0 \|_{\infty} \to_P 0.$ Then,
 $$
 G(\widehat \beta, \widehat \phi) \to_P \EE[G(\beta^0, \phi^0)],
 $$ 
 provided that the limit exists. 
 \end{lemma}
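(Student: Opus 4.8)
The plan is to reduce the claim to two pieces via the decomposition
\begin{equation*}
G(\widehat \beta, \widehat \phi) - \EE[G(\beta^0,\phi^0)]
= \left[ G(\widehat \beta,\widehat \phi) - G(\beta^0,\phi^0) \right]
+ \left[ G(\beta^0,\phi^0) - \EE[G(\beta^0,\phi^0)] \right].
\end{equation*}
The second bracket is a centering term that vanishes in probability by the very definition $\EE = \plim$ together with the hypothesis that the limit exists. Hence all of the work is to show that the plug-in error in the first bracket is $o_P(1)$; once this is established, $G(\widehat \beta,\widehat \phi) = G(\beta^0,\phi^0) + o_P(1)$ and the conclusion follows.

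First I would control the perturbation of the arguments of $g$. Writing $\widehat \pi_{it} = \widehat \alpha_i + \widehat \gamma_t$ and $\pi^0_{it} = \alpha^0_i + \gamma^0_t$, the triangle inequality gives $|\widehat \pi_{it} - \pi^0_{it}| \leq |\widehat \alpha_i - \alpha^0_i| + |\widehat \gamma_t - \gamma^0_t| \leq 2\|\widehat \phi - \phi^0\|_\infty$ uniformly over $i,t$, and likewise for the $(t-j)$ index. Setting $\eta := \|\widehat \beta - \beta^0\| + 4\|\widehat \phi - \phi^0\|_\infty$, the Euclidean distance between the evaluated point $(\widehat \beta, \widehat \pi_{it}, \widehat \pi_{i,t-j})$ and the true point $(\beta^0, \pi^0_{it}, \pi^0_{i,t-j})$ is then at most $\eta$ for every $(i,t)$, and the consistency hypotheses yield $\eta = o_P(1)$.

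Next I would invoke the domain restriction and the Lipschitz property. Since $\mathcal{B}^0_\varepsilon$ contains an $\varepsilon$-neighbourhood of each true point and $\eta = o_P(1)$, the event $\{\eta < \varepsilon\}$ has probability approaching one, and on that event every evaluated point lies in $\mathcal{B}^0_\varepsilon$, so the Lipschitz bound applies termwise to give
\begin{equation*}
\left| G(\widehat \beta,\widehat \phi) - G(\beta^0,\phi^0) \right|
\leq \eta \cdot [N(T-j)]^{-1} \sum_{i,\, t\geq j+1} M_{itj}.
\end{equation*}
It then remains to show that the average $[N(T-j)]^{-1}\sum_{i,t} M_{itj}$ is $\mathcal{O}_P(1)$, which I would obtain from Markov's inequality and a uniform moment bound on $M_{itj}$; in the applications of the lemma $M_{itj}$ is dominated by the envelope $M(Z_{it})$ of Assumption \ref{ass:PanelA1}$(iv)$, whose $(8+\nu)$-th moment is uniformly bounded, so the average is stochastically bounded. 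Multiplying the $\mathcal{O}_P(1)$ average by $\eta = o_P(1)$ shows the plug-in error is $o_P(1)$.

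The main obstacle I anticipate is bookkeeping rather than conceptual. One must carefully ensure that the Lipschitz inequality is legitimately applicable \emph{uniformly} over the $\mathcal{O}(NT)$ summands — this is exactly why control of $\widehat \phi$ is imposed in the sup-norm $\|\cdot\|_\infty$ rather than an averaged norm — and one must pin down the sense in which $[N(T-j)]^{-1}\sum M_{itj} = \mathcal{O}_P(1)$ follows from the stated hypothesis, since mere pointwise stochastic boundedness of each $M_{itj}$ does not by itself bound the growing average; a uniform-in-$(i,t,j,N,T)$ moment condition is what makes the step go through.
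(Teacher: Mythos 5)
Your proof follows essentially the same route as the paper's: the same two-term decomposition into a plug-in error plus a centering term, the same sup-norm control of the plugged-in indices yielding the factor $\|\widehat \beta - \beta^0\| + 4\|\widehat \phi - \phi^0\|_{\infty}$, and the same termwise Lipschitz bound applied on an event of probability approaching one. Your final step is in fact slightly more careful than the paper's, which simply asserts $[N(T-j)]^{-1}\sum_{i,t \geq j+1} M_{itj} = \mathcal{O}_P(1)$; your point that this requires interpreting the hypothesis as a uniform (moment) bound rather than mere pointwise stochastic boundedness of each $M_{itj}$ is a correct and worthwhile clarification.
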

 
 \begin{proof}[\bf Proof of Lemma~\ref{lemma:abc}] By the triangle inequality
 $$
 |G(\widehat \beta, \widehat \phi) - \EE[G(\beta^0, \phi^0)]| \leq |G(\widehat \beta, \widehat \phi) - G(\beta^0, \phi^0)| + o_P(1),
 $$ 
because $|G(\beta^0, \phi^0) - \EE[G(\beta^0, \phi^0)]| = o_P(1)$. By the local Lipschitz continuity of $g_{itj}$ and the consistency of $(\widehat \beta, \widehat \phi)$, 
 \begin{multline*}
 |G(\widehat \beta, \widehat \phi) - G(\beta^0, \phi^0)| \leq \frac{1}{N(T-j)} \sum_{i,t \geq j+1} M_{itj} \|(\widehat \beta, \widehat \alpha_i + \widehat \gamma_t , \widehat \alpha_i + \widehat \gamma_{t-j}) - ( \beta^0,  \alpha_i^0 + \gamma_t^0 , \alpha_i^0 + \gamma_{t-j}^0) \| \\ \leq \frac{1}{N(T-j)} \sum_{i,t \geq j+1} M_{itj} (\|\widehat \beta - \beta^0 \| + 4 \| \widehat \phi - \phi^0 \|_{\infty} )
\end{multline*}
wpa1. The result then follows because $[N(T-j)]^{-1} \sum_{i,\tau \geq t} M_{it\tau} = \mathcal{O}_P(1)$  and $(\|\widehat \beta - \beta^0 \| + 4 \| \widehat \phi - \phi^0 \|_{\infty} ) = o_P(1)$ by assumption.
\end{proof}
 
\begin{proof}[\bf Proof of Theorem~\ref{th:bc}] We separate the proof in three parts corresponding to the three statements of the theorem.

Part I: Proof of $\widehat W \to_P \overline{W}_{\infty}$. The asymptotic variance and its fixed effects estimators can be expressed as $\overline{W}_{\infty} = \EE[W(\beta^0, \phi^0)]$ and $\widehat W = W(\widehat \beta, \widehat \phi),$ where  $W(\beta,\phi)$  has a first order representation as a continuously differentiable transformation of terms that have the form of $G(\beta,\phi)$ in Lemma \ref{lemma:abc}. The result then follows by the continuous mapping theorem noting that $\|\widehat \beta - \beta^0\| \to_P 0$ and $\|\widehat \phi - \phi^0 \|_{\infty} \leq \|\widehat \phi - \phi^0 \|_{q} \to_P 0$ by Theorem~\ref{th:connection}.

%

Part II: Proof of $\sqrt{NT}(\widetilde \beta^A - \beta^0)  \to_d \mathcal{N}(0,
\overline W_{\infty}^{-1}).$ By the argument given after equation \eqref{eq:MainResult} in the text,  we only need to show that $\widehat B \to_P \overline{B}_{\infty}$ and $\widehat D \to_P \overline{D}_{\infty}$. These asymptotic biases and their fixed effects estimators are either time-series averages of fractions of cross-sectional averages,
or vice versa. The nesting of the averages makes the analysis a bit more cumbersome than the analysis of  $\widehat W$, but the result follows by similar standard arguments, also using that
$L \to \infty$ and $L/T \to 0$ guarantee that the trimmed estimator in $\widehat B$ is
also consistent for the spectral expectations; see Lemma 6 in Hahn and Kuersteiner~\cite*{HahnKuersteiner2011}.

Part III: Proof of $\sqrt{NT}(\widetilde \beta^J - \beta^0)  \to_d \mathcal{N}(0,
\overline W_{\infty}^{-1}).$ For $\T_1 = \{1, \ldots, \lfloor (T+1)/2 \rfloor \}$, $\T_2 = \{\lfloor T/2 \rfloor + 1, \ldots, T \}$, $\T_0 = \T_1 \cup \T_2$,  $\N_1 = \{1, \ldots, \lfloor (N+1)/2 \rfloor \}$, $\N_2 = \{\lfloor N/2 \rfloor + 1, \ldots, N \}$, and $\N_0 = \N_1 \cup \N_2$, let $\widehat \beta^{(jk)}$ be the fixed effect estimator of $\beta$ in the subpanel defined by  $i \in \N_j$ and $t \in \T_k$.\footnote{Note that this definition of the subpanels covers all the cases regardless of whether $N$ and $T$ are even or odd.}  In this notation, 
$$
\widetilde \beta^J = 3 \widehat \beta^{(00)} - \widehat \beta^{(10)} / 2 - \widehat \beta^{(20)} /2 - \widehat \beta^{(01)} /2 - \widehat \beta^{(02)}/2.
$$

We derive the asymptotic distribution of $\sqrt{NT}(\widetilde \beta^J -\beta^0)$ from the joint asymptotic distribution of the vector $\widehat{ \mathbb{B}} = \sqrt{NT}(\widehat \beta^{(00)} -\beta^0,\widehat \beta^{(10)} -\beta^0,\widehat \beta^{(20)} -\beta^0,\widehat \beta^{(01)}  -\beta^0,\widehat \beta^{(02)} -\beta^0)$ with dimension $5 \times \dim \beta$. By Theorem~\ref{th:connection},
 $$
 \sqrt{NT}(\widehat \beta^{(jk)} -\beta^0) = \frac{2^{1(j>0)}2^{1(k>0)}}{\sqrt{NT}} \sum_{i \in N_j, t \in T_k} \left[\psi_{it} + b_{it} + d_{it} \right] + o_P(1), 
 $$
 for $\psi_{it} = \overline{W}_{\infty}^{-1} D_{\beta} \ell_{it}$,  $b_{it} = \overline{W}_{\infty}^{-1} [U_{it}^{(1a,1)} + U_{it}^{(1b,1,1)}]$, and  $d_{it} =\overline{W}_{\infty}^{-1} [U_{it}^{(1a,4)} + U_{it}^{(1b,4,4)}]$, where the $U_{it}^{(\cdot)}$ is implicitly defined by $U^{(\cdot)} = (NT)^{-1/2} \sum_{i,t} U_{it}^{(\cdot)}$.
 Here,  none of the terms carries a superscript $(jk)$ by  Assumption~\ref{assumption: stationarity}.  The influence function $\psi_{it}$ has zero mean and determines the asymptotic variance $\overline{W}_{\infty}^{-1}$, whereas $b_{it} $ and $d_{it}$ determine the asymptotic biases $\overline{B}_{\infty}$ and $\overline{D}_{\infty},$ but do not affect the asymptotic variance. By this representation,
$$
\widehat{ \mathbb{B}} \to_d \mathcal{N} 
\left(\kappa  \left[
\begin{array}{c}
  1 \\
  1  \\
  1  \\
  2  \\
  2 
\end{array}
\right] \otimes \overline{B}_{\infty} + \kappa^{-1}  \left[
\begin{array}{c}
  1 \\
  2  \\
  2  \\
  1 \\
  1 
\end{array}
\right] \otimes  \overline{D}_{\infty},   \left[
\begin{array}{ccccc}
1 & 1 & 1 & 1 & 1  \\
1 & 2  & 0 & 1 & 1 \\
 1 & 0 & 2 & 1 & 1  \\
1 & 1 & 1 &  2 & 0 \\
1 & 1 &1 & 0 &  2
\end{array}
\right] \otimes \overline{W}_{\infty}^{-1} \right),
$$ 
where we use that $\{\psi_{it}: 1 \leq i \leq N, 1 \leq t \leq T\}$ is independent across $i$ and martingale difference across $t$ and Assumption~\ref{assumption: stationarity}. 

The result follows by writing $\sqrt{NT}(\widetilde \beta^J - \beta^0) = (3,-1/2,-1/2,-1/2,-1/2) \widehat{ \mathbb{B}}$ and using the properties of the multivariate normal distribution.
\end{proof} 

\begin{proof}[\bf Proof of Theorem~\ref{th:bc_ameff}] We separate the proof in three parts corresponding to the three statements of the theorem.

Part I: $\widehat V^{\delta} \to_P \overline{V}^{\delta}_{\infty}$.  $ \overline{V}^{\delta}_{\infty}$ and  $\widehat V^{\delta}$ have a similar structure  to $\overline{W}_{\infty}$ and $\widehat W$ in part I of the proof of Theorem~\ref{th:bc}, so that the consistency follows by an analogous argument.

Part II: $
\sqrt{NT} (\widetilde \delta^A - \delta_{NT}^0)   \to_d \mathcal{N}(0, \overline{V}_{\infty}^{\delta})
$. As in the proof of Theorem~\ref{th:DeltaLimit},
 we  decompose
  $$
r_{NT} (\widetilde \delta^A - \delta_{NT}^0) =  r_{NT} (\delta - \delta_{NT}^0) + \frac{r_{NT}}{\sqrt{NT}} \sqrt{NT} (\widetilde \delta^A - \delta).
$$
Then,  by Mann-Wald theorem,
$$
\sqrt{NT} (\widetilde \delta^A - \delta) =  \sqrt{NT} (\widehat \delta - \widehat B^{\delta}/T - \widehat D^{\delta}/N  - \delta)  \to_d \mathcal{N}(0, \overline{V}_{\infty}^{\delta(1)}),
$$ 
 provided that $ \widehat B^{\delta} \to_P \overline{B}_{\infty}^{\delta}$ and  $ \widehat D^{\delta} \to_P \overline{D}_{\infty}^{\delta}$, and $ r_{NT} (\delta - \delta_{NT}^0) \to_d \mathcal{N}(0, \overline{V}_{\infty}^{\delta(2)})$, where $\overline{V}_{\infty}^{\delta(1)}$ and $\overline{V}_{\infty}^{\delta(2)}$ are defined as in the proof of Theorem~\ref{th:DeltaLimit}.  The statement thus follows by using a similar argument to part II of the proof of Theorem~\ref{th:bc} to show the consistency of $ \widehat B^{\delta}$ and $ \widehat D^{\delta}$, and  because  $ (\delta - \delta_{NT}^0)$ and  $(\widetilde \delta^A - \delta)$
are asymptotically independent, and  $\overline{V}_{\infty}^{\delta} = \overline{V}^{\delta(2)} +  \overline{V}^{\delta(1)} \lim_{N,T \to \infty} (r_{NT}/\sqrt{NT})^2.$

Part III: $
\sqrt{NT} (\widetilde \delta^J - \delta_{NT}^0)   \to_d \mathcal{N}(0, \overline{V}_{\infty}^{\delta})
$. As in part II,
 we  decompose
  $$
r_{NT} (\widetilde \delta^J - \delta_{NT}^0) =  r_{NT} (\delta - \delta_{NT}^0) + \frac{r_{NT}}{\sqrt{NT}} \sqrt{NT} (\widetilde \delta^J - \delta).
$$
Then, by an argument similar to part III of the proof of Theorem~\ref{th:bc}, 
$$
\sqrt{NT} (\widetilde \delta^J - \delta)  \to_d \mathcal{N}(0, \overline{V}_{\infty}^{\delta(1)}),
$$ 
and  $ r_{NT} (\delta - \delta_{NT}^0) \to_d \mathcal{N}(0, \overline{V}_{\infty}^{\delta(2)})$, where $\overline{V}_{\infty}^{\delta(1)}$ and $\overline{V}_{\infty}^{\delta(2)}$ are defined as in the proof of Theorem~\ref{th:DeltaLimit}.  The statement  follows because  $ (\delta - \delta_{NT}^0)$ and  $(\widetilde \delta^J - \delta)$
are asymptotically independent, and  $\overline{V}_{\infty}^{\delta} = \overline{V}^{\delta(2)} +  \overline{V}^{\delta(1)} \lim_{N,T \to \infty} (r_{NT}/\sqrt{NT})^2.$
\end{proof}

\section{Proofs of Appendix~\ref{app:expansion} (Asymptotic Expansions)}
\label{sec:ExpansionProofs}

The following Lemma contains some statements that are not explicitly assumed in
 Assumptions~\ref{ass:A1}, but that are implied by it.
\begin{lemma}
     \label{lemma:assA1add}
     Let Assumptions~\ref{ass:A1} be satisfied. Then
     \begin{itemize}
     \item[(i)] $ {\cal H}(\beta,\phi) >0$ 
        for all $\beta \in {\cal B}(r_\beta, \beta^0)$ and  $\phi \in {\cal B}_q(r_\phi, \phi^0)$ wpa1,
     \begin{align*}
             \sup_{\beta \in {\cal B}(r_\beta, \beta^0)}  \sup_{\phi \in {\cal B}_q(r_\phi, \phi^0)} 
        \left\|  \partial_{\beta \beta'}  {\cal L}(\beta,\, \phi) \right\|
                 &= {\cal O}_P\left( \sqrt{NT} \right) ,
                 \\
            \sup_{\beta \in {\cal B}(r_\beta, \beta^0)}  \sup_{\phi \in {\cal B}_q(r_\phi, \phi^0)} 
        \left\|  \partial_{\beta \phi'}  {\cal L}(\beta,\, \phi) \right\|_q
                 &= {\cal O}_P\left( (NT)^{1/(2q)} \right) ,
                 \\
              \sup_{\beta \in {\cal B}(r_\beta, \beta^0)}  \sup_{\phi \in {\cal B}_q(r_\phi, \phi^0)} 
        \left\|  \partial_{\phi \phi \phi}  {\cal L}(\beta,\, \phi) \right\|_q
                 &= {\cal O}_P\left( (NT)^{\epsilon} \right) ,
                 \\
            \sup_{\beta \in {\cal B}(r_\beta, \beta^0)}  \sup_{\phi \in {\cal B}_q(r_\phi, \phi^0)}  \left\| \partial_{\beta \phi \phi} {\cal L}(\beta,\phi) \right\|_q    &=   {\cal O}_P( (NT)^{\epsilon} ) ,
           \\
             \sup_{\beta \in {\cal B}(r_\beta, \beta^0)}  \sup_{\phi \in {\cal B}_q(r_\phi, \phi^0)}  
           \left\| {\cal H}^{-1}(\beta,\phi)  \right\|_q   
             &=  {\cal O}_P(1 ).
         \end{align*}
         
       \item[(ii)] Moreover, $\left\| {\cal S} \right\| = {\cal O}_P\left( 1 \right),$
$ \left\| {\cal H}^{-1}   \right\| =  {\cal O}_P\left( 1 \right) ,$ $
        \left\|  \overline {\cal H}^{-1}   \right\| =   {\cal O}_P\left( 1 \right) ,$ $
        \left\| {\cal H}^{-1}  - \overline {\cal H}^{-1}  \right\| =  o_{P}\left( (NT)^{-1/8} \right) ,$ $
        \left\| {\cal H}^{-1}  - \left( \overline {\cal H}^{-1} - \overline {\cal H}^{-1} \widetilde {\cal H} \overline {\cal H}^{-1}
            \right) \right\| = o_{P}\left( (NT)^{-1/4} \right) ,$ $
         \left\|  \partial_{\beta \phi'}  {\cal L} \right\| =   {\cal O}_P\left( (NT)^{1/4} \right) ,$ $
         \left\|  \partial_{\beta \phi \phi}  {\cal L} \right\| =   {\cal O}_P\left( (NT)^{\epsilon} \right) ,$ $
         \left\|  \sum_g  \partial_{\phi \phi' \phi_g}  {\cal L} \, [{\cal H}^{-1}  {\cal S}]_g \right\| =  
          {\cal O}_P\left( (NT)^{-1/4+1/(2q)+\epsilon} \right) ,$ and $
         \left\|  \sum_g  \partial_{\phi \phi' \phi_g}  {\cal L} \, [\overline {\cal H}^{-1}  {\cal S}]_g \right\| = 
               {\cal O}_P\left( (NT)^{-1/4+1/(2q)+\epsilon} \right).$
    \end{itemize}     
\end{lemma}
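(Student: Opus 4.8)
The plan is to derive every bound in the lemma from the pointwise and uniform estimates already granted by Assumption~\ref{ass:A1}, reusing three tools: (a) Taylor/mean-value expansions in $(\beta,\phi)$ to upgrade the at-truth bounds of Assumption~\ref{ass:A1}$(v)$ to bounds holding uniformly over the shrinking balls $\mathcal{B}(r_\beta,\beta^0)\times\mathcal{B}_q(r_\phi,\phi^0)$; (b) submultiplicativity of the induced operator norms together with Neumann-series expansions of the inverse Hessian; and (c) the elementary comparisons $\|x\|_2\le(\dim\phi)^{1/2-1/q}\|x\|_q$ for vectors, $\|A\|_2\le\|A\|_q$ for symmetric matrices (the spectral radius is a lower bound for any induced norm, and for rectangular $M$ the same follows from the inclusion of the $\ell_2$-ball in the $\ell_q$-ball when $q\ge2$).

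For part $(i)$ I would first write, for any second- or third-order derivative $\partial_\bullet\mathcal{L}(\beta,\phi)$, the fundamental-theorem-of-calculus identity expressing it as its value at $(\beta^0,\phi^0)$ plus an integral of the next-order derivative contracted against $(\beta-\beta^0,\phi-\phi^0)$. Bounding the increment by $\sup\|\partial_{\beta\bullet}\mathcal{L}\|\,r_\beta+\sup\|\partial_{\phi\bullet}\mathcal{L}\|_q\,r_\phi$ and inserting the uniform higher-order bounds of Assumption~\ref{ass:A1}$(v)$ together with $r_\beta=o((NT)^{-1/(2q)-\epsilon})$ and $r_\phi=o((NT)^{-\epsilon})$, each increment is $o_P$ of the leading at-truth order, giving the stated uniform orders for $\partial_{\beta\beta'}\mathcal{L}$, $\partial_{\beta\phi'}\mathcal{L}$, $\partial_{\phi\phi\phi}\mathcal{L}$ and $\partial_{\beta\phi\phi}\mathcal{L}$. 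The same increment argument for $\mathcal{H}(\beta,\phi)=-\partial_{\phi\phi'}\mathcal{L}(\beta,\phi)$, combined with $\|\widetilde{\mathcal{H}}\|_q=o_P(1)$ from Assumption~\ref{ass:A1}$(v)$, yields $\sup\|\mathcal{H}(\beta,\phi)-\overline{\mathcal{H}}\|_q=o_P(1)$. Writing $\mathcal{H}(\beta,\phi)=\overline{\mathcal{H}}(I+\overline{\mathcal{H}}^{-1}[\mathcal{H}(\beta,\phi)-\overline{\mathcal{H}}])$ with $\|\overline{\mathcal{H}}^{-1}[\mathcal{H}(\beta,\phi)-\overline{\mathcal{H}}]\|_q\le\|\overline{\mathcal{H}}^{-1}\|_q\sup\|\mathcal{H}(\beta,\phi)-\overline{\mathcal{H}}\|_q=\mathcal{O}_P(1)o_P(1)=o_P(1)$, a Neumann series gives uniform invertibility wpa1 and $\sup\|\mathcal{H}^{-1}(\beta,\phi)\|_q=\mathcal{O}_P(1)$. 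Positive definiteness follows from Weyl's inequality: since $\mathcal{H}(\beta,\phi)-\overline{\mathcal{H}}$ is symmetric, $\|\mathcal{H}(\beta,\phi)-\overline{\mathcal{H}}\|_2\le\|\mathcal{H}(\beta,\phi)-\overline{\mathcal{H}}\|_q=o_P(1)$, which is eventually below $\lambda_{\min}(\overline{\mathcal{H}})\ge\|\overline{\mathcal{H}}^{-1}\|_q^{-1}$, bounded away from zero wpa1.

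For part $(ii)$ the spectral bounds come from converting the $\ell_q$ bounds of part $(i)$ and Assumption~\ref{ass:A1}$(v)$ through the comparisons in (c), supplemented by the genuinely spectral deviation bounds of Assumption~\ref{ass:A1}$(vi)$. Concretely, $\|\mathcal{S}\|=\mathcal{O}_P(1)$ follows from $\|\mathcal{S}\|\le(\dim\phi)^{1/2-1/q}\|\mathcal{S}\|_q$ with $\dim\phi=\mathcal{O}(\sqrt{NT})$ from Assumption~\ref{ass:A1}$(i)$, the exponents $(1/4-1/(2q))$ and $(-1/4+1/(2q))$ cancelling exactly; $\|\overline{\mathcal{H}}^{-1}\|$ and $\|\mathcal{H}^{-1}\|$ are $\mathcal{O}_P(1)$ because both matrices are symmetric, so their spectral norms are dominated by their $q$-norms. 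The identities $\mathcal{H}^{-1}-\overline{\mathcal{H}}^{-1}=-\mathcal{H}^{-1}\widetilde{\mathcal{H}}\,\overline{\mathcal{H}}^{-1}$ and $\mathcal{H}^{-1}-\overline{\mathcal{H}}^{-1}+\overline{\mathcal{H}}^{-1}\widetilde{\mathcal{H}}\,\overline{\mathcal{H}}^{-1}=\mathcal{H}^{-1}\widetilde{\mathcal{H}}\,\overline{\mathcal{H}}^{-1}\widetilde{\mathcal{H}}\,\overline{\mathcal{H}}^{-1}$, bounded in spectral norm with $\|\widetilde{\mathcal{H}}\|=o_P((NT)^{-1/8})$ from Assumption~\ref{ass:A1}$(vi)$, give the two approximation statements of orders $(NT)^{-1/8}$ and $(NT)^{-1/4}$. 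The bounds on $\partial_{\beta\phi'}\mathcal{L}$ and $\partial_{\beta\phi\phi}\mathcal{L}$ follow from $\|\cdot\|_2\le\|\cdot\|_q$, while the two tensor-contraction bounds use $\|\sum_g\partial_{\phi\phi'\phi_g}\mathcal{L}\,v_g\|_q\le\|\partial_{\phi\phi\phi}\mathcal{L}\|_q\|v\|_q$ for $v\in\{\mathcal{H}^{-1}\mathcal{S},\overline{\mathcal{H}}^{-1}\mathcal{S}\}$ of $q$-norm $\mathcal{O}_P((NT)^{-1/4+1/(2q)})$, followed again by $\|\cdot\|_2\le\|\cdot\|_q$ since the contracted object is symmetric in its two free indices.

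The main obstacle is the bookkeeping in two places: verifying in part $(i)$ that every Taylor increment is genuinely negligible, which is exactly where the admissible range $0\le\epsilon<1/8-1/(2q)$ and the prescribed radii $r_\beta,r_\phi$ enter and cannot be relaxed; and, throughout part $(ii)$, tracking which norm ($\ell_q$, spectral, or mixed) controls each object and when a scalar-product contraction forces the dimension factor $(\dim\phi)^{(q-2)/q}$ to appear, so that the final exponents land on the stated values rather than losing powers of $NT$. The matrix-inverse and tensor-contraction manipulations themselves are routine once these norm inequalities are in hand.
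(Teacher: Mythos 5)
Your proposal is correct and follows essentially the same route as the paper's proof: part $(i)$ by Taylor/mean-value expansion around $(\beta^0,\phi^0)$, with the uniform fourth-order bounds of Assumption~\ref{ass:A1}$(v)$ and the radii $r_\beta,r_\phi$ absorbing the increments, followed by a Neumann-type perturbation argument for $\sup\|{\cal H}^{-1}(\beta,\phi)\|_q$; part $(ii)$ by the dimension-factor bound for $\|{\cal S}\|$, spectral-versus-$q$-norm comparisons, and matrix-inverse perturbation bounds. Your two local variations are fine and, if anything, cleaner: the exact resolvent identities ${\cal H}^{-1}-\overline{\cal H}^{-1}=-{\cal H}^{-1}\widetilde{\cal H}\,\overline{\cal H}^{-1}$ and ${\cal H}^{-1}-\bigl(\overline{\cal H}^{-1}-\overline{\cal H}^{-1}\widetilde{\cal H}\,\overline{\cal H}^{-1}\bigr)={\cal H}^{-1}\widetilde{\cal H}\,\overline{\cal H}^{-1}\widetilde{\cal H}\,\overline{\cal H}^{-1}$ replace the paper's geometric-series tail bound, and the spectral-radius-below-any-induced-norm fact replaces the paper's Riesz--Thorin interpolation $\|A\|_2\le\sqrt{\|A\|_q\|A\|_{q/(q-1)}}$ for symmetric $A$. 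One caution: for $\partial_{\beta\phi\phi}{\cal L}$ the ball-inclusion argument you cite for rectangular maps does not apply as stated, because the paper's mixed $q$-norm measures the remaining free $\phi$-index in $\ell_q$, i.e.\ pairs it with an $\ell_{q/(q-1)}$-unit vector, and the Euclidean unit ball is not contained in the $\ell_{q/(q-1)}$ unit ball; you must instead fix the $\beta$-contraction and apply your spectral-radius argument to the resulting symmetric $\dim\phi\times\dim\phi$ slice, exactly as you already do for the contractions $\sum_g\partial_{\phi\phi'\phi_g}{\cal L}\,[{\cal H}^{-1}{\cal S}]_g$; this one-line fix uses a tool you already deploy, so it is a misattribution of justification rather than a gap.
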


\begin{proof}[\bf Proof of Lemma~\ref{lemma:assA1add}]
   \# Part $(i)$:
    Let $v \in \mathbbm{R}^{\dim \beta}$
    and $w,u \in \mathbbm{R}^{\dim \phi}$.
    By a Taylor expansion of $ \partial_{\beta \phi' \phi_g}  {\cal L}(\beta,\, \phi) $ around $(\beta^0,\phi^0)$ 
     \begin{align*}
          & \sum_{g} u_g  v'  \left[ \partial_{\beta \phi' \phi_g}  {\cal L}(\beta,\, \phi) \right] w 
          \nonumber \\  
            &=  \sum_{g} u_g   v'  \left[ \partial_{\beta \phi' \phi_g} {\cal L}     
            +  \sum_k (\beta_k - \beta^0_k) \partial_{\beta_k \beta \phi' \phi_g}  {\cal L}(\tilde \beta,\tilde \phi)
     - \sum_{h} (\phi_{h} - \phi^0_{h}) \partial_{\beta \phi' \phi_g \phi_{h} }  {\cal L}(\tilde \beta,\tilde \phi)
      \right] w,
      \end{align*}   
     with $(\tilde \beta,\tilde \phi)$ between $(\beta^0,\phi^0)$ and $(\beta,\phi)$. 
     Thus
     \begin{align*}
           \left\| \partial_{\beta \phi \phi}  {\cal L}(\beta,\, \phi) \right\|_q
           &= \sup_{\|v\|=1} \; \; \sup_{\|u\|_q=1} \; \; \sup_{\|w\|_{q/(q-1)}=1} \; \;
               \sum_{g} u_g  v'  \left[ \partial_{\beta \phi' \phi_g}  {\cal L}(\beta,\, \phi) \right] w 
        \nonumber \\   
           &\leq   \left\|  \partial_{\beta \phi \phi}   {\cal L} \right\|_q
                     +  \| \beta - \beta^0 \| 
                     \sup_{(\tilde \beta,\tilde \phi)}   \left\| \partial_{\beta \beta \phi \phi}  {\cal L}(\tilde \beta,\tilde \phi)  \right\|_q
                      +  \| \phi - \phi^0 \|_q  
                       \sup_{(\tilde \beta,\tilde \phi)} \left\| \partial_{\beta \phi \phi \phi}  {\cal L}(\tilde \beta,\tilde \phi)  \right\|_q ,
     \end{align*}
     where the supremum over  
     $(\tilde \beta,\tilde \phi)$ is necessary, because those parameters depend
     on $v$, $w$, $u$. By Assumption~\ref{ass:A1}, for large enough $N$ and $T,$
     \begin{align*}
          \sup_{\beta \in {\cal B}(r_\beta, \beta^0)}  \sup_{\phi \in {\cal B}_q(r_\phi, \phi^0)} 
           \left\| \partial_{\beta \phi \phi}  {\cal L}(\beta,\, \phi) \right\|_q
         & \leq   \left\|  \partial_{\beta \phi \phi} {\cal L} \right\|
                     +    r_\beta     \sup_{\beta \in {\cal B}(r_\beta, \beta^0)}  \sup_{\phi \in {\cal B}_q(r_\phi, \phi^0)}  
                \left\| \partial_{\beta \beta \phi \phi}  {\cal L}(\beta, \phi)  \right\|_q
         \nonumber \\ 
           &
                      +  r_\phi 
                       \sup_{\beta \in {\cal B}(r_\beta, \beta^0)}  \sup_{\phi \in {\cal B}_q(r_\phi, \phi^0)} 
              \left\| \partial_{\beta \phi  \phi \phi}  {\cal L}( \beta, \phi)  \right\|_q 
         \nonumber \\ 
           &  =   {\cal O}_P\left[ (NT)^\epsilon + r_\beta (NT)^\epsilon +   r_\phi (NT)^\epsilon \right] 
             =  {\cal O}_P\left( (NT)^\epsilon \right).
     \end{align*}
     The proofs for the bounds on
     $ \left\|  \partial_{\beta \beta'}  {\cal L}(\beta,\, \phi) \right\|$,
     $   \left\|  \partial_{\beta \phi'}  {\cal L}(\beta,\, \phi) \right\|_q$
     and
      $  \left\|  \partial_{\phi \phi \phi}  {\cal L}(\beta,\, \phi) \right\|_q$
      are analogous. 
      
      Next, we show that ${\cal H}(\beta,\phi)$ is non-singular 
        for all $\beta \in {\cal B}(r_\beta, \beta^0)$ and  $\phi \in {\cal B}_q(r_\phi, \phi^0)$ wpa1. By a  Taylor expansion      and Assumption~\ref{ass:A1}, for large enough $N$ and $T,$
      \begin{align}
       \sup_{\beta \in {\cal B}(r_\beta, \beta^0)}  \sup_{\phi \in {\cal B}_q(r_\phi, \phi^0)} 
            \left\| {\cal H}(\beta,\phi) - {\cal H} \right\|_q   
            &\leq   r_\beta     \sup_{\beta \in {\cal B}(r_\beta, \beta^0)}  \sup_{\phi \in {\cal B}_q(r_\phi, \phi^0)}  
                \left\| \partial_{\beta \phi \phi}  {\cal L}(\beta, \phi)  \right\|_q
          \nonumber \\ 
           &     
                    +  r_\phi 
                       \sup_{\beta \in {\cal B}(r_\beta, \beta^0)}  \sup_{\phi \in {\cal B}_q(r_\phi, \phi^0)} 
              \left\| \partial_{\phi  \phi \phi}  {\cal L}( \beta, \phi)  \right\|_q 
          = o_P(1).             
             \label{eq:lb51}
      \end{align}
       Define $\Delta {\cal H}(\beta,\phi) =  \overline {\cal H} - {\cal H}(\beta,\phi)$.
      Then $\left\| \Delta {\cal H}(\beta,\phi) \right\|_q \leq \left\| {\cal H}(\beta,\phi) - {\cal H} \right\|_q
      +  \left\| \widetilde {\cal H} \right\|_q  $, and therefore
      \begin{align*}
            \sup_{\beta \in {\cal B}(r_\beta, \beta^0)}  \sup_{\phi \in {\cal B}_q(r_\phi, \phi^0)} 
            \left\| \Delta {\cal H}(\beta,\phi) \right\|_q = o_P(1) ,
      \end{align*} 
     by Assumption~\ref{ass:A1} and equation (\ref{eq:lb51}).
     
     For any square matrix with $\|A\|_q < 1$,
      $\left\| (\mathbbm{1} - A)^{-1} \right\|_q \leq \left( 1 - \|A\|_q \right)^{-1}$,
      see e.g. p.301 in Horn and Johnson~\cite*{HornJohnson1985}.
      Then
      \begin{align*}
             \sup_{\beta \in {\cal B}(r_\beta, \beta^0)}  \sup_{\phi \in {\cal B}_q(r_\phi, \phi^0)} \left\| {\cal H}^{-1}(\beta,\phi)  \right\|_q 
          \nonumber \ \     
            &=   \sup_{\beta \in {\cal B}(r_\beta, \beta^0)}  \sup_{\phi \in {\cal B}_q(r_\phi, \phi^0)} \left\| \left( \overline {\cal H} - \Delta {\cal H}(\beta,\phi) \right)^{-1}  \right\|_q  
          \nonumber \\
             &=    \sup_{\beta \in {\cal B}(r_\beta, \beta^0)}  \sup_{\phi \in {\cal B}_q(r_\phi, \phi^0)} 
              \left\| \overline {\cal H}^{-1} \left( \mathbbm{1} - \Delta {\cal H}(\beta,\phi) \overline {\cal H}^{-1} \right)^{-1}  \right\|_q  
          \nonumber \\
             &\leq   
               \left\|  \overline {\cal H}^{-1} \right\|_q \; \;
              \sup_{\beta \in {\cal B}(r_\beta, \beta^0)}  \sup_{\phi \in {\cal B}_q(r_\phi, \phi^0)} 
              \left\| \left( \mathbbm{1} - \Delta {\cal H}(\beta,\phi) \overline {\cal H}^{-1} \right)^{-1}  \right\|_q  
          \nonumber \\
             &\leq   
               \left\|  \overline {\cal H}^{-1} \right\|_q \; \;
              \sup_{\beta \in {\cal B}(r_\beta, \beta^0)}  \sup_{\phi \in {\cal B}_q(r_\phi, \phi^0)} 
              \left( 1 -  \left\| \Delta {\cal H}(\beta,\phi) \overline {\cal H}^{-1} \right\|_q   \right)^{-1}  
          \nonumber \\
             &\leq   
               \left\|  \overline {\cal H}^{-1} \right\|_q  
              \left( 1 -   o_P(1)  \right)^{-1}   = {\cal O}_P(1).
      \end{align*}

         \bigskip
   
   \noindent
   \#Part $(ii)$: 
   By the properties of the $\ell_q$-norm and Assumption~\ref{ass:A1}$(v)$,
   \begin{align*}
       \| {\cal S} \| =  \| {\cal S} \|_2 \leq  (\dim \phi)^{1/2-1/q}  \| {\cal S} \|_q = {\cal O}_p(1).
   \end{align*}
   Analogously,
   \begin{align*}
         \left\|  \partial_{\beta \phi'}  {\cal L} \right\|
          \leq  (\dim \phi)^{1/2-1/q}      \left\|  \partial_{\beta \phi'}  {\cal L} \right\|_q
          &=   {\cal O}_P\left( (NT)^{1/4} \right) .
   \end{align*}
 By Lemma~\ref{lemma:matrix-q-norm},  $\| \overline {\cal H}^{-1} \|_{q/(q-1)} = \| \overline {\cal H}^{-1} \|_q$ because $\overline {\cal H}^{-1}$ is symmetric, and
   \begin{align}
       \left\| \overline {\cal H}^{-1} \right\| = \left\|  \overline {\cal H}^{-1} \right\|_2
        \leq \sqrt{  \| \overline {\cal H}^{-1} \|_{q/(q-1)} \| \overline {\cal H}^{-1} \|_q }
        = \| \overline {\cal H}^{-1} \|_q = {\cal O}_P(1). \label{eq:lb52}
   \end{align}  
   Analogously,
   \begin{align*}
           \left\|  \partial_{\beta \phi \phi}  {\cal L} \right\| 
           & \leq \left\|  \partial_{\beta \phi \phi}  {\cal L} \right\|_q 
           =   {\cal O}_P\left( (NT)^{\epsilon} \right) ,
        \nonumber \\   
          \left\|  \sum_g  \partial_{\phi \phi' \phi_g}  {\cal L} \, [{\cal H}^{-1}  {\cal S}]_g \right\| 
          &\leq   \left\|  \sum_g  \partial_{\phi \phi' \phi_g}  {\cal L} \, [{\cal H}^{-1}  {\cal S}]_g \right\|_q
       \nonumber \\& 
       \leq
          \left\|   \partial_{\phi \phi \phi}  {\cal L} \right\|_q  \left\| {\cal H}^{-1} \right\|_q \left\| {\cal S} \right\|_q   
          =  
          {\cal O}_P\left( (NT)^{-1/4+1/(2q)+\epsilon} \right) ,
        \nonumber \\
         \left\|  \sum_g  \partial_{\phi \phi' \phi_g}  {\cal L} \, [\overline {\cal H}^{-1}  {\cal S}]_g \right\| 
        &\leq  \left\|  \sum_g  \partial_{\phi \phi' \phi_g}  {\cal L} \, [\overline {\cal H}^{-1}  {\cal S}]_g \right\|_q 
               \nonumber \\
          &\leq
          \left\|   \partial_{\phi \phi \phi}  {\cal L} \right\|_q  \left\| \overline {\cal H}^{-1} \right\|_q \left\| {\cal S} \right\|_q   
        = 
               {\cal O}_P\left( (NT)^{-1/4+1/(2q)+\epsilon} \right)   .
   \end{align*}
    Assumption~\ref{ass:A1} guarantees that $\left\| \overline {\cal H}^{-1} \right \|       \left\|  \widetilde {\cal H} \right \|<1$ wpa1. 
   Therefore, 
   \begin{align*}
    {\cal H}^{-1} =
    \overline {\cal H}^{-1} \left(  \mathbbm{1} +  \widetilde {\cal H} \overline {\cal H}^{-1} \right)^{-1}
   = 
    \overline {\cal H}^{-1}  \sum_{s=0}^\infty   (- \widetilde {\cal H} \overline {\cal H}^{-1})^{s} 
     =   \overline {\cal H}^{-1} - \overline {\cal H}^{-1} \widetilde {\cal H} \overline {\cal H}^{-1}
         +   \overline {\cal H}^{-1}  \sum_{s=2}^\infty   (- \widetilde {\cal H} \overline {\cal H}^{-1})^{s} .
\end{align*}
Note that $ \left\| \overline {\cal H}^{-1}  \sum_{s=2}^\infty   (- \widetilde {\cal H} \overline {\cal H}^{-1})^{s}  \right\|
  \leq \left\| \overline {\cal H}^{-1} \right\|  \sum_{s=2}^\infty  
  \left( \left\| \overline {\cal H}^{-1} \right \|      \left\|  \widetilde {\cal H} \right \| \right)^s$,
and therefore
\begin{align*}
   \left\| {\cal H}^{-1} - \left( \overline {\cal H}^{-1} - \overline {\cal H}^{-1} \widetilde {\cal H} \overline {\cal H}^{-1} \right)
   \right\|
   \leq  \frac{ \left\| \overline {\cal H}^{-1} \right \|^3      \left\|  \widetilde {\cal H} \right \|^2 }
         { 1  -  \left\| \overline {\cal H}^{-1} \right \|       \left\|  \widetilde {\cal H} \right \| } 
         =  o_P\left(  (NT)^{-1/4} \right),
\end{align*}
by Assumption~\ref{ass:A1}$(vi)$ and equation (\ref{eq:lb52}).

The results for 
$\left\| {\cal H}^{-1}    \right\| $ and
$\left\| {\cal H}^{-1}  - \overline {\cal H}^{-1}  \right\| $ follow immediately.
 \end{proof}

\subsection{Legendre Transformed Objective Function}

 We consider the shrinking neighborhood
${\cal B}(r_\beta, \beta^0) \times {\cal B}_q(r_\phi, \phi^0)$ of the true parameters $(\beta^0,\phi^0)$.
Statement~$(i)$ of Lemma~\ref{lemma:assA1add} implies
that the objective function ${\cal L}(\beta,\,\phi)$
is strictly concave in $\phi$ in this shrinking neighborhood wpa1.
We define
\begin{align}
   {\cal L}^*(\beta,\,S) \, &= \,
      \max_{\phi \in {\cal B}_q(r_\phi, \phi^0)} \left[ {\cal L}(\beta,\,\phi)
       -  \phi' S \right]   ,
  &
  \quad \Phi(\beta,\, S) \, &= \,
      \argmax_{\phi \in {\cal B}_q(r_\phi, \phi^0)} \left[ {\cal L}(\beta,\,\phi)
       -  \phi' S \right]  ,
    \label{DefLS}
\end{align}
where $\beta \in {\cal B}(r_\beta, \beta^0)$ and $S \in \mathbbm{R}^{\dim \phi}$.
The function ${\cal L}^*(\beta,\,S)$ is the Legendre
transformation of the objective function ${\cal
L}(\beta,\,\phi)$ in the incidental parameter $\phi$. We
denote the parameter $S$ as the dual parameter to $\phi$, and
${\cal L}^*(\beta,\,S)$ as the dual function to ${\cal
L}(\beta,\,\phi)$. We only consider ${\cal L}^*(\beta,\,S)$
and $\Phi(\beta,\, S)$ for parameters
$\beta \in {\cal B}(r_\beta, \beta^0) $ and $S \in {\cal S}(\beta,{\cal B}_q(r_\phi, \phi^0))$,
where the optimal $\phi$ is defined by the first order conditions, i.e. is not a boundary solution.
We define the corresponding set of pairs $(\beta,{\cal S})$ that is dual
to  ${\cal B}(r_\beta, \beta^0) \times {\cal B}_q(r_\phi, \phi^0)$ by
\begin{align*}
     {\cal SB}_r(\beta^0,\phi^0)
     &= \left\{ (\beta,{\cal S})\in \mathbbm{R}^{\dim \beta + \dim \phi} \; : \;
              (\beta,  \Phi(\beta,\, S)) \in {\cal B}(r_\beta, \beta^0) \times {\cal B}_q(r_\phi, \phi^0)
           \right\} .   
\end{align*}
Assumption~\ref{ass:A1} guarantees that  for $\beta \in {\cal B}(r_\beta, \beta^0)$ the domain
${\cal S}(\beta,{\cal B}_q(r_\phi, \phi^0))$ includes $S=0$, the origin of $\mathbbm{R}^{\dim \phi}$,
as an interior point, wpa1, and that ${\cal L}^*(\beta,\,S)$ is four times differentiable in a neighborhood
of $S=0$ (see Lemma \ref{lemma:Lstar} below). 
The optimal $\phi=\Phi(\beta,S)$ in equation \eqref{DefLS}
satisfies the first order condition $S = {\cal S}(\beta,\phi)$. Thus,
for given $\beta$, the functions $\Phi(\beta,S)$ and
${\cal S}(\beta,\phi)$ are inverse to each other, and the relationship
between  $\phi$ and its dual $S$ is one-to-one. This is a consequence of 
strict concavity of ${\cal L}(\beta,\,\phi)$ in the neighborhood of the true parameter value that we consider here.\footnote{Another consequence of strict concavity of ${\cal L}(\beta,\,\phi)$ is that
the dual function ${\cal L}^*(\beta,\,S)$ is strictly convex in $S$. The original ${\cal L}(\beta,\,\phi)$
can be recovered from ${\cal L}^*(\beta,\,S)$ by again performing
a Legendre transformation, namely
\begin{align*}
   {\cal L}(\beta,\,\phi) \, &= \,
      \min_{S \in \mathbbm{R}^{\dim \phi}}
      \left[ {\cal L}^*(\beta,\,S)
       +  \phi' S \right] \; .
\end{align*}
}
One can show that
\begin{align*}
   \Phi(\beta,S) = \, - \,
     \frac{\partial {\cal L}^*(\beta,\,S)}  {\partial S} \; ,
\end{align*}
which shows the dual nature of the functions ${\cal L}(\beta,\,\phi)$ and ${\cal L}^*(\beta,\,S)$.
For $S = 0$ the optimization in \eqref{DefLS} is just over
the objective function ${\cal L}(\beta,\phi)$, so that $\Phi(\beta,0) = \widehat \phi(\beta)$
and ${\cal L}^*(\beta,0)={\cal L}(\beta,\widehat \phi(\beta))$,
the profile objective function. We already introduced
${\cal S} = {\cal S}(\beta^0,\phi^0)$, i.e. at $\beta=\beta^0$ the dual of $\phi^0$
is ${\cal S}$, and vica versa.
We can write the profile objective function
${\cal L}(\beta,\widehat \phi(\beta)) = {\cal L}^*(\beta,0)$
as a Taylor series expansion of ${\cal L}^*(\beta,\, S)$
around $(\beta,S)=(\beta^0,{\cal S})$, namely
\begin{align*}
   {\cal L}(\beta,\widehat \phi(\beta))  
    &=  {\cal L}^*(\beta^0, {\cal S})
      + (\partial_{\beta'} {\cal L}^*) \Delta \beta
      - \Delta \beta'  (\partial_{\beta S'} {\cal L}^*)  {\cal S}
      + \frac 1 2 \Delta \beta'  (\partial_{\beta \beta'} {\cal L}^*)  \Delta \beta
        + \ldots \; ,
\end{align*}
where $\Delta \beta = \beta -\beta^0$,
and here and in the following we omit the arguments of  ${\cal L}^*(\beta,S)$ and of its partial derivatives when
they are evaluated at $(\beta^0,{\cal S})$.
Analogously, we can obtain Taylor expansions for the profile score
$\partial_{\beta} {\cal L}(\beta,\widehat \phi(\beta))=\partial_{ \beta} {\cal L}^*(\beta,0)$
and the estimated nuisance parameter $\widehat \phi(\beta) = - \partial_S {\cal L}^*(\beta,0)$
in $\Delta \beta$ and ${\cal S}$, see the proof of Theorem~\ref{th:ScoreExpansion} below.
Apart from combinatorial factors
those expansions feature the same coefficients
as the expansion of ${\cal L}(\beta,\widehat \phi(\beta))$ itself.
They are standard Taylor expansions that can be truncated at a certain order, and the remainder term
can be bounded by applying the mean value theorem.

The functions ${\cal L}(\beta,\, \phi)$ and its dual
${\cal L}^*(\beta,\, S)$ are closely related. In particular,
 for given $\beta$
their first derivatives with respect to the second argument
${\cal S}(\beta,\phi)$ and $\Phi(\beta,S)$ are
inverse functions of each other.
We can therefore express partial derivatives of ${\cal L}^*(\beta,\, S)$
in terms of partial derivatives of ${\cal L}(\beta,\, \phi)$.
This is done in  Lemma~\ref{lemma:Lstar}.
The norms
$\left\| \partial_{\beta S S S}  {\cal L}^*(\beta,S) \right\|_q$,
$\left\| \partial_{SSSS}  {\cal L}^*(\beta,S) \right\|_q$, etc.,
are defined as in equation \eqref{DefNorm} and \eqref{DefNorm2}.

\begin{lemma}
     \label{lemma:Lstar}
     Let assumption \ref{ass:A1} be satisfied.

     \begin{itemize}
     \item[(i)]  The function ${\cal L}^*(\beta,S)$ is well-defined and is four times continuously differentiable  
         in ${\cal SB}_r(\beta^0,\phi^0)$, wpa1.

     \item[(ii)] For  $ {\cal L}^* =  {\cal L}^*(\beta^0,{\cal S}),$
    \begin{align*}
         \partial_S {\cal L}^* &= - \phi^0 , \ \
         \partial_\beta {\cal L}^* = \partial_\beta  {\cal L} , \ \
         \partial_{SS'} {\cal L}^* =  - (\partial_{\phi \phi'} {\cal L} )^{-1} = {\cal H}^{-1} , \ \ 
         \partial_{\beta S'}  {\cal L}^* = - ( \partial_{\beta \phi'} {\cal L} ) {\cal H}^{-1} , \\
         \partial_{\beta \beta'} {\cal L}^* &=   \partial_{\beta \beta'} {\cal L}
                                          + ( \partial_{\beta \phi'} {\cal L} ) {\cal H}^{-1}
                                            ( \partial_{\phi' \beta} {\cal L} ) , 
                        \ \
         \partial_{SS' S_g} {\cal L}^*   =
             - \sum_h   {\cal H}^{-1}    (\partial_{\phi \phi' \phi_h} {\cal L})   {\cal H}^{-1}   (  {\cal H}^{-1} )_{gh}   , 
            \\
         \partial_{\beta_k S S'} {\cal L}^* &=       
             {\cal H}^{-1} (\partial_{\beta_k \phi' \phi} {\cal L})           {\cal H}^{-1}
             +  \sum_{g}     {\cal H}^{-1}  (\partial_{\phi_g \phi' \phi} {\cal L})           {\cal H}^{-1}    
                [        {\cal H}^{-1}      \partial_{\beta_k \phi} {\cal L} ]_g            ,
            \\
         \partial_{\beta_k \beta_l S'} {\cal L}^* &=       
           - ( \partial_{\beta_k \beta_l \phi'} {\cal L} ) {\cal H}^{-1}   
             -   ( \partial_{\beta_l \phi'} {\cal L} ) {\cal H}^{-1}  ( \partial_{\beta_k \phi \phi'} {\cal L} ) {\cal H}^{-1}  
           - ( \partial_{\beta_k \phi'} {\cal L} )
           {\cal H}^{-1} (\partial_{\beta_l \phi' \phi} {\cal L})           {\cal H}^{-1}
        \nonumber \\ & \qquad   
             -  \sum_{g}    ( \partial_{\beta_k \phi'} {\cal L} )   {\cal H}^{-1}  (\partial_{\phi_g \phi' \phi} {\cal L})           {\cal H}^{-1}    
                [        {\cal H}^{-1}      \partial_{\beta_l \phi} {\cal L} ]_g    ,
                        \\
         \partial_{\beta_k \beta_l \beta_m} {\cal L}^* &=   \partial_{\beta_k \beta_l \beta_m} {\cal L}
                            + \sum_{g}    ( \partial_{\beta_k \phi'} {\cal L} )    {\cal H}^{-1}  (\partial_{\phi_g \phi' \phi} {\cal L})  {\cal H}^{-1}    
                              (\partial_{\beta_l \phi} {\cal L})
                [        {\cal H}^{-1}      \partial_{\phi \beta_m} {\cal L} ]_g
                        \nonumber \\ & \quad          
                                             + ( \partial_{\beta_k \phi'} {\cal L} ) {\cal H}^{-1}
                                            ( \partial_{\beta_l \phi' \phi } {\cal L} )   {\cal H}^{-1}      \partial_{\phi \beta_m} {\cal L} 
                                             + ( \partial_{\beta_m \phi'} {\cal L} ) {\cal H}^{-1}
                                            ( \partial_{\beta_k \phi' \phi } {\cal L} )   {\cal H}^{-1}      \partial_{\phi \beta_l} {\cal L} 
                        \nonumber \\ & \quad          
                                             + ( \partial_{\beta_l \phi'} {\cal L} ) {\cal H}^{-1}
                                            ( \partial_{\beta_m \phi' \phi } {\cal L} )   {\cal H}^{-1}      \partial_{\phi \beta_k} {\cal L} 
                        \nonumber \\ & \quad          
                                  +  (\partial_{\beta_k \beta_l \phi'} {\cal L})  {\cal H}^{-1}   ( \partial_{\phi' \beta_m} {\cal L} )
                                  +  (\partial_{\beta_k \beta_m \phi'} {\cal L})  {\cal H}^{-1}   ( \partial_{\phi' \beta_l} {\cal L} )
                                  +  (\partial_{\beta_l \beta_m \phi'} {\cal L})  {\cal H}^{-1}   ( \partial_{\phi' \beta_k} {\cal L} ) ,
         \end{align*}
and
       \begin{align*}
         \partial_{SS' S_g S_h} {\cal L}^*   &=
             \sum_{f,e}   {\cal H}^{-1}    (\partial_{\phi \phi' \phi_f \phi_e} {\cal L})   {\cal H}^{-1}   
                 (  {\cal H}^{-1} )_{gf}      (  {\cal H}^{-1} )_{he}   
         \nonumber \\    & \quad
            +  3 \sum_{f,e}   {\cal H}^{-1}    (\partial_{\phi \phi' \phi_e} {\cal L})   {\cal H}^{-1}     (\partial_{\phi \phi' \phi_f} {\cal L})   {\cal H}^{-1}  
             (  {\cal H}^{-1} )_{gf}    (  {\cal H}^{-1} )_{he}       ,   
           \\
         \partial_{\beta_k SS' S_g} {\cal L}^*   &=
             - \sum_h    {\cal H}^{-1} (\partial_{\beta_k \phi' \phi} {\cal L})           {\cal H}^{-1}    (\partial_{\phi \phi' \phi_h} {\cal L})   {\cal H}^{-1}   [  {\cal H}^{-1} ]_{gh}   
         \nonumber \\    & \quad
             - \sum_h   {\cal H}^{-1}    (\partial_{\phi \phi' \phi_h} {\cal L})    {\cal H}^{-1} (\partial_{\beta_k \phi' \phi} {\cal L})           {\cal H}^{-1}   [  {\cal H}^{-1} ]_{gh}   
         \nonumber \\    & \quad
             - \sum_h   {\cal H}^{-1}    (\partial_{\phi \phi' \phi_h} {\cal L})   {\cal H}^{-1}   [   {\cal H}^{-1} (\partial_{\beta_k \phi' \phi} {\cal L})           {\cal H}^{-1} ]_{gh}   
         \nonumber \\    & \quad
             - \sum_{h,f}     {\cal H}^{-1}  (\partial_{\phi_f \phi' \phi} {\cal L})           {\cal H}^{-1}    
                  (\partial_{\phi \phi' \phi_h} {\cal L})   {\cal H}^{-1}   [  {\cal H}^{-1} ]_{gh}  
                    [        {\cal H}^{-1}      \partial_{\beta_k \phi} {\cal L} ]_f 
         \nonumber \\    & \quad
             - \sum_{h,f}   {\cal H}^{-1}    (\partial_{\phi \phi' \phi_h} {\cal L})   {\cal H}^{-1}  (\partial_{\phi_f \phi' \phi} {\cal L})           {\cal H}^{-1}    
                 [  {\cal H}^{-1} ]_{gh}   
                  [        {\cal H}^{-1}      \partial_{\beta_k \phi} {\cal L} ]_f 
         \nonumber \\    & \quad
             - \sum_{h,f}   {\cal H}^{-1}    (\partial_{\phi \phi' \phi_h} {\cal L})   {\cal H}^{-1}  
              [    {\cal H}^{-1}  (\partial_{\phi_f \phi' \phi} {\cal L})           {\cal H}^{-1}    ]_{gh}
               [        {\cal H}^{-1}      \partial_{\beta_k \phi} {\cal L} ]_f   
         \nonumber \\    & \quad
             - \sum_h   {\cal H}^{-1}    (\partial_{\beta_k \phi \phi' \phi_h} {\cal L})   {\cal H}^{-1}   [  {\cal H}^{-1} ]_{gh}   
         \nonumber \\    & \quad
             - \sum_{h,f}   {\cal H}^{-1}    (\partial_{\phi \phi' \phi_h \phi_f} {\cal L})   {\cal H}^{-1}   [  {\cal H}^{-1} ]_{gh} 
               [  {\cal H}^{-1} (\partial_{\beta_k \phi} {\cal L})  ]_{f}  .
    \end{align*}

    
    \item[(iii)]
    Moreover,
    \begin{align*}
        \sup_{(\beta,S) \in {\cal SB}_r(\beta^0,\phi^0)}
             \left\|  \partial_{\beta \beta  \beta}  {\cal L}^*(\beta,S) \right\| &=
              {\cal O}_P\left( (NT)^{1/2+1/(2q)+\epsilon}   \right), \\
       \sup_{(\beta,S) \in {\cal SB}_r(\beta^0,\phi^0)}
             \left\| \partial_{\beta \beta  S}  {\cal L}^*(\beta,S) \right\|_q &= {\cal O}_P\left( (NT)^{ 1/q+\epsilon}   \right), \\
        \sup_{(\beta,S) \in {\cal SB}_r(\beta^0,\phi^0)}
                \left\| \partial_{\beta S S}  {\cal L}^*(\beta,S) \right\|_q &={\cal O}_P\left( (NT)^{ 1/(2q)+\epsilon}   \right), \\         
       \sup_{(\beta,S) \in {\cal SB}_r(\beta^0,\phi^0)}
            \left\| \partial_{\beta S S S}  {\cal L}^*(\beta,S) \right\|_q &= {\cal O}_P\left( (NT)^{ 1/(2q)+2 \epsilon}   \right), \\         
        \sup_{(\beta,S) \in {\cal SB}_r(\beta^0,\phi^0)}   \left\| \partial_{SSSS}  {\cal L}^*(\beta,S) \right\|_q 
             &=  {\cal O}_P\left( (NT)^{  2 \epsilon}   \right) . 
    \end{align*}    
   
    \end{itemize}
\end{lemma}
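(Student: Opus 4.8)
The plan is to exploit strict concavity of ${\cal L}(\beta,\cdot)$ on the shrinking ball ${\cal B}_q(r_\phi,\phi^0)$, which Lemma~\ref{lemma:assA1add}$(i)$ guarantees wpa1 (it shows ${\cal H}(\beta,\phi)=-\partial_{\phi\phi'}{\cal L}(\beta,\phi)>0$ there), and to treat $\Phi(\beta,\cdot)$ and ${\cal S}(\beta,\cdot)$ as mutually inverse maps so that every derivative of ${\cal L}^*$ collapses to derivatives of ${\cal L}$ together with powers of ${\cal H}^{-1}$. For part $(i)$, strict concavity yields a unique maximizer $\Phi(\beta,S)$ for each $(\beta,S)\in{\cal SB}_r(\beta^0,\phi^0)$, and by construction of that set the maximizer is interior, hence characterized by the first order condition ${\cal S}(\beta,\Phi(\beta,S))=S$. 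Since ${\cal L}$ is four times continuously differentiable (Assumption~\ref{ass:A1}$(ii)$) and the Jacobian $\partial_{\phi}{\cal S}(\beta,\phi)=-{\cal H}(\beta,\phi)$ is nonsingular on the neighborhood, the implicit function theorem makes $\Phi$ three times continuously differentiable, so ${\cal L}^*(\beta,S)={\cal L}(\beta,\Phi(\beta,S))-\Phi(\beta,S)'S$ is four times continuously differentiable.

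For part $(ii)$ I would start from the two envelope identities $\partial_S{\cal L}^*=-\Phi(\beta,S)$ and $\partial_\beta{\cal L}^*=\partial_\beta{\cal L}(\beta,\Phi)$, together with the inversion identities obtained by differentiating ${\cal S}(\beta,\Phi(\beta,S))=S$, which give $\partial_S\Phi=-{\cal H}^{-1}$ and $\partial_\beta\Phi={\cal H}^{-1}(\partial_{\phi\beta'}{\cal L})$ evaluated along $\Phi$. All the higher formulas then follow by repeatedly differentiating these relations and the first order condition, substituting $\partial_S\Phi$ and $\partial_\beta\Phi$ and collecting terms at each step; the apparent complexity of the third- and fourth-order expressions is purely combinatorial. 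I would carry this out order by order at a general $(\beta,S)$ and only then specialize to $(\beta^0,{\cal S})$, where $\Phi(\beta^0,{\cal S})=\phi^0$ so that every derivative of ${\cal L}$ is evaluated at the true parameters.

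For part $(iii)$ I would insert the formulas from part $(ii)$, now read off at a general $(\beta,S)\in{\cal SB}_r(\beta^0,\phi^0)$ (equivalently at $\phi=\Phi(\beta,S)\in{\cal B}_q(r_\phi,\phi^0)$), into the $q$-norm and count orders using the uniform bounds of Lemma~\ref{lemma:assA1add}$(i)$: $\|{\cal H}^{-1}(\beta,\phi)\|_q={\cal O}_P(1)$, $\|\partial_{\phi\phi\phi}{\cal L}\|_q,\|\partial_{\beta\phi\phi}{\cal L}\|_q={\cal O}_P((NT)^{\epsilon})$, $\|\partial_{\beta\phi'}{\cal L}\|_q={\cal O}_P((NT)^{1/(2q)})$, and the fourth-derivative bounds supplied by Assumption~\ref{ass:A1}$(v)$. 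Induced-norm submultiplicativity handles the matrix and tensor products with no loss, whereas each genuine scalar product of two $\dim\phi$-indexed objects costs a factor $(\dim\phi)^{(q-2)/q}={\cal O}((NT)^{1/2-1/q})$, as recorded in Appendix~\ref{app:notation} together with $\dim\phi={\cal O}(\sqrt{NT})$. Tracking how many third-derivative tensors and how many such contractions each term carries then reproduces the stated exponents; for instance the two factors of $\partial_{\phi\phi\phi}{\cal L}$ in the second term of $\partial_{SS'S_gS_h}{\cal L}^*$ give the $(NT)^{2\epsilon}$ rate. Uniformity over the shrinking neighborhood is obtained exactly as in Lemma~\ref{lemma:assA1add}$(i)$, by Taylor-expanding each derivative of ${\cal L}^*$ about $(\beta^0,{\cal S})$ and absorbing the remainders into the $r_\beta,r_\phi$ rates.

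\textbf{The main obstacle} is the bookkeeping in part $(iii)$: correctly identifying, in each of the many tensor-product terms, the number of $\ell_q$-index contractions (hence the exact power of $(\dim\phi)^{(q-2)/q}$) and verifying that the dominant term carries precisely the claimed exponent, while simultaneously uniformizing over the shrinking set ${\cal SB}_r(\beta^0,\phi^0)$. The derivation of the explicit formulas in part $(ii)$ is lengthy but mechanical, so the genuinely delicate work is the order accounting rather than any conceptual step.
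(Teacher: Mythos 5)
Your proposal is correct and follows essentially the same route as the paper's proof: part (i) via the first-order condition ${\cal S}(\beta,\Phi(\beta,S))=S$, invertibility of ${\cal H}(\beta,\phi)$ on the shrinking neighborhood (Lemma~\ref{lemma:assA1add}), and the inverse/implicit function theorem; part (ii) via the envelope identities and repeated differentiation of $\partial_S\Phi(\beta,S)'=-{\cal H}^{-1}(\beta,\Phi(\beta,S))$ and $\partial_\beta\Phi(\beta,S)'=[\partial_{\beta\phi'}{\cal L}]{\cal H}^{-1}$, specializing to $(\beta^0,{\cal S})$ at the end; and part (iii) by evaluating the part (ii) formulas at general $(\beta,S)$ with $\phi=\Phi(\beta,S)\in{\cal B}_q(r_\phi,\phi^0)$ and applying submultiplicativity together with the uniform bounds of Lemma~\ref{lemma:assA1add}$(i)$, with the $(\dim\phi)^{(q-2)/q}$ factor entering only for full scalar contractions (exactly what produces the $(NT)^{1/2+1/(2q)+\epsilon}$ rate for $\partial_{\beta\beta\beta}{\cal L}^*$). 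The paper works out only the $\partial_{\beta SS}{\cal L}^*$ bound explicitly and declares the rest analogous, so your order accounting is the same argument carried further; your closing appeal to Taylor expansion for uniformity is superfluous, since the sup over ${\cal SB}_r(\beta^0,\phi^0)$ is already controlled by the sups over ${\cal B}(r_\beta,\beta^0)\times{\cal B}_q(r_\phi,\phi^0)$ appearing in Lemma~\ref{lemma:assA1add}.
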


\begin{proof}[\bf Proof of Lemma~\ref{lemma:Lstar}]
     \#Part $(i)$:   
     According to the definition  \eqref{DefLS},
     ${\cal L}^*(\beta,S) = {\cal L}(\beta,\Phi(\beta,S)) - \Phi(\beta,S)' S$,
     where $\Phi(\beta,S)$ solves the FOC, 
     ${\cal S}(\beta,\Phi(\beta,S)) = S$, i.e. ${\cal S}(\beta,.)$
     and $\Phi(\beta,.)$ are inverse functions for every $\beta$. 
      Taking the derivative of      
      ${\cal S}(\beta,\Phi(\beta,S)) = S$ wrt to both  $S$ and $\beta$ yields
      \begin{align}
         [\partial_S \Phi(\beta,S)'] [\partial_\phi {\cal S}(\beta,\Phi(\beta,S))']   &= \mathbbm{1} ,
        \nonumber    \\
      [\partial_\beta {\cal S}(\beta,\Phi(\beta,S))'] +
           [\partial_\beta \Phi(\beta,S)'] [\partial_\phi {\cal S}(\beta,\Phi(\beta,S))']   &= 0.
        \label{LegendrePDE}
     \end{align}       
     By definition, ${\cal S}={\cal S}(\beta^0,\phi^0)$.   
     Therefore, $\Phi(\beta,S)$ is the unique function that satisfies
     the boundary condition $\Phi(\beta^0,{\cal S})=\phi^0$
     and the system of partial differential equations (PDE) in \eqref{LegendrePDE}. Those PDE's can equivalently be written as
     \begin{align}
          \partial_S \Phi(\beta,S)' &= - [{\cal H}(\beta, \Phi(\beta,S))]^{-1} ,
        \nonumber \\
           \partial_\beta \Phi(\beta,S)' &=  [\partial_{\beta \phi'} {\cal L}(\beta,\Phi(\beta,S))]   [{\cal H}(\beta, \Phi(\beta,S))]^{-1} .
        \label{LegendrePDE2}
     \end{align}
     This shows that $\Phi(\beta,S)$ (and thus ${\cal L}^*(\beta,S)$) are well-defined in any neighborhood
     of $(\beta,S)=(\beta^0,{\cal S})$ in which ${\cal H}(\beta, \Phi(\beta,S))$ is invertible
     (inverse function theorem).
      Lemma~\ref{lemma:assA1add} shows that  ${\cal H}(\beta, \phi)$
      is invertible in ${\cal B}(r_\beta, \beta^0) \times {\cal B}_q(r_\phi, \phi^0)$, wpa1.
     The inverse function theorem thus guarantee that
     $\Phi(\beta,S)$ and ${\cal L}^*(\beta,S)$ are well-defined in 
     $ {\cal SB}_r(\beta^0,\phi^0)$.
      The partial derivatives of ${\cal L}^*(\beta,S)$ of up to fourth order can be expressed  as continuous transformations of the partial
     derivatives of ${\cal L}(\beta,\phi)$ up to fourth order (see e.g. proof of part $(ii)$ of the lemma). Hence, 
     ${\cal L}^*(\beta,S)$ is four times continuously differentiable because ${\cal L}(\beta,\phi)$ is  four times continuously differentiable.

     \bigskip

     \noindent
   \#Part $(ii)$:  
   Differentiating ${\cal L}^*(\beta,S) = {\cal L}(\beta,\Phi(\beta,S)) - \Phi(\beta,S)' S$ wrt $\beta$ and $S$
   and using the FOC of the maximization over $\phi$ in the definition of  ${\cal L}^*(\beta,S)$ gives
   $\partial_\beta {\cal L}^*(\beta,S) = \partial_\beta {\cal L}(\beta,\Phi(\beta,S))$ and
   $\partial_S {\cal L}^*(\beta,S) = - \Phi(\beta,S)$, respectively. Evaluating this expression
   at $(\beta,S)=(\beta^0,{\cal S})$ gives the first two statements of part $(ii)$.
   
   Using $\partial_S {\cal L}^*(\beta,S) = - \Phi(\beta,S)$, the PDE  \eqref{LegendrePDE2} can be written as
     \begin{align*}
          \partial_{SS'} {\cal L}^*(\beta,S) &=  {\cal H}^{-1}(\beta, \Phi(\beta,S)) ,
        \nonumber \\
           \partial_{\beta S'} {\cal L}^*(\beta,S) &= -  [\partial_{\beta \phi'} {\cal L}(\beta,\Phi(\beta,S))]   {\cal H}^{-1}(\beta, \Phi(\beta,S)) .
     \end{align*}
      Evaluating this expression
   at $(\beta,S)=(\beta^0,{\cal S})$ gives the next two statements of part $(ii)$.
   
   Taking the derivative  of $\partial_\beta {\cal L}^*(\beta,S) = \partial_\beta {\cal L}(\beta,\Phi(\beta,S))$ wrt to $\beta$
   and using the second equation of \eqref{LegendrePDE2} gives the next statement 
   when evaluated at $(\beta,S)=(\beta^0,{\cal S})$.
   
   Taking the derivative  of
    $\partial_{SS'} {\cal L}^*(\beta,S) =  - [\partial_{\phi \phi'} {\cal L}(\beta, \Phi(\beta,S))]^{-1}$ wrt to $S_g$
    and using the first equation of \eqref{LegendrePDE2} gives the next statement 
   when evaluated at $(\beta,S)=(\beta^0,{\cal S})$.
   
    Taking the derivative of
    $\partial_{SS'} {\cal L}^*(\beta,S) =  - [\partial_{\phi \phi'} {\cal L}(\beta, \Phi(\beta,S))]^{-1}$  wrt to $\beta_k$ 
    and using the second equation of \eqref{LegendrePDE2} gives 
    \begin{align}
         \partial_{\beta_k S S'} {\cal L}^*(\beta,S) &=       
             {\cal H}^{-1}(\beta,\phi) [\partial_{\beta_k \phi' \phi} {\cal L}(\beta,\phi)]           {\cal H}^{-1}(\beta,\phi)
        \nonumber \\ & \quad
             +  \sum_{g}     {\cal H}^{-1}(\beta,\phi)  [\partial_{\phi_g \phi' \phi} {\cal L}(\beta,\phi)] 
                       {\cal H}^{-1}(\beta,\phi)    
                \{        {\cal H}^{-1}(\beta,\phi)     [ \partial_{\beta_k \phi} {\cal L}(\beta,\phi)] \}_g ,
        \label{PartialBSS}
    \end{align}    
    where $\phi = \Phi(\beta,S)$.
     This becomes the next statement 
   when evaluated at $(\beta,S)=(\beta^0,{\cal S})$.
   
   We omit the proofs for 
   $ \partial_{\beta_k \beta_l S'} {\cal L}^*$,
            $\partial_{\beta_k \beta_l S} {\cal L}^*$,
     $\partial_{SS' S_g S_h} {\cal L}^*$ and
     $\partial_{\beta_k SS' S_g} {\cal L}^*$ because they are analogous.

  \bigskip
   
      \noindent
   \#Part $(iii)$: We only show the result 
   for $  \left\| \partial_{\beta S S}  {\cal L}^*(\beta,S) \right\|_q$, the proof of the other statements is analogous.
   By equation \eqref{PartialBSS} 
   \begin{align*}
         \left\| \partial_{\beta S S}  {\cal L}^*(\beta,S) \right\|_q
         &\leq 
          \left\|   {\cal H}^{-1}(\beta,\phi) \right\|_q^2
          \left\| \partial_{\beta \phi \phi} {\cal L}(\beta,\phi) \right\|_q
           +   \left\|   {\cal H}^{-1}(\beta,\phi) \right\|_q^3  
              \left\| \partial_{\phi \phi \phi} {\cal L}(\beta,\phi) \right\|_q
              \left\| \partial_{\beta \phi'} {\cal L}(\beta,\phi) \right\|_q ,
   \end{align*}
   where $\phi = \Phi(\beta,S)$. 
   Then, by Lemma~\ref{lemma:assA1add} 
    \begin{align*}
          \sup_{(\beta,S) \in {\cal SB}_r(\beta^0,\phi^0)}
             \left\| \partial_{\beta S S}  {\cal L}^*(\beta,S) \right\|_q
     & \leq 
      \sup_{\beta \in {\cal B}(r_\beta, \beta^0)}  \sup_{\phi \in {\cal B}_q(r_\phi, \phi^0)}         
       \bigg[  
          \left\|   {\cal H}^{-1}(\beta,\phi) \right\|_q^2
          \left\| \partial_{\beta \phi \phi} {\cal L}(\beta,\phi) \right\|_q
        \nonumber \\ &          +   \left\|   {\cal H}^{-1}(\beta,\phi) \right\|_q^3  
              \left\| \partial_{\phi \phi \phi} {\cal L}(\beta,\phi) \right\|_q
                \left\| \partial_{\beta \phi'} {\cal L}(\beta,\phi) \right\|_q
           \bigg]
       = {\cal O}\left( (NT)^{1/(2q) + \epsilon} \right).    
   \end{align*}
   To derive the rest of the bounds 
   we can use that the expressions from
    part $(ii)$  hold not only for $(\beta^0, {\cal S})$,
    but also for other values $(\beta,S)$,
    provided that $(\beta,\Phi(\beta,S)$ is used as the argument on the rhs expressions.
\end{proof}

\subsection{Proofs of Theorem \ref{th:ScoreExpansion}, Corollary~\ref{cor:LimitBeta}, and Theorem~\ref{th:consistency}}

\begin{proof}[\bf Proof of Theorem \ref{th:ScoreExpansion},
    Part 1: Expansion of $\widehat \phi(\beta)$.]
     Let $\beta = \beta_{NT}  \in {\cal B}(\beta^0,r_\beta)$. 
     A Taylor expansion of $\partial_S {\cal L}^*(\beta,0)$ around $(\beta^0,{\cal S})$ gives
     \begin{align*}
          \widehat \phi(\beta) =   -  \partial_S {\cal L}^*(\beta,0)  
                           =  - \partial_S {\cal L}^*
                              - (\partial_{S \beta'} {\cal L}^*) \Delta \beta 
                              + (\partial_{S S'} {\cal L}^*) {\cal S}
                              - \frac 1 2 \sum_g (\partial_{S S' S_g} {\cal L}^*)  {\cal S}  {\cal S}_g
                              + R^\phi(\beta) ,
     \end{align*}
     where we first expand in $\beta$ holding $S={\cal S}$ fixed, and then expand in $S$. 
      For any $v \in \mathbbm{R}^{\dim \phi}$
     the remainder term satisfies
     \begin{align*}
         v' R^\phi(\beta) &= v' \bigg\{ - \frac 1 2 \sum_k [\partial_{S \beta' \beta_k} {\cal L}^*(\tilde \beta,{\cal S})] (\Delta \beta) (\Delta \beta_k) 
       + \sum_k [\partial_{S S' \beta_k} {\cal L}^*(\beta^0,\tilde S)] {\cal S} (\Delta \beta_k)
       \nonumber \\ & \qquad  \quad
               + \frac 1 6 \sum_{g,h} [\partial_{S S' S_g S_h} {\cal L}^*(\beta^0,\bar S)]  {\cal S}  {\cal S}_g {\cal S}_h \bigg\} ,
     \end{align*}
     where $\tilde \beta$ is between $\beta^0$ and $\beta$, 
     and $\tilde S$ and $\bar S$ are between $0$ and ${\cal S}$.
     By part $(ii)$ of Lemma~\ref{lemma:Lstar},
    \begin{align*}
        \widehat \phi(\beta) - \phi^0 =   {\cal H}^{-1}  ( \partial_{\phi \beta'}  {\cal L} )   \Delta \beta
        +    {\cal H}^{-1}    {\cal S}
        + \ft 1 2  {\cal H}^{-1} \sum_g
                  (\partial_{\phi \phi' \phi_g}  {\cal L}  )  {\cal H}^{-1} {\cal S}
                   (   {\cal H}^{-1} {\cal S} )_g +  R^\phi(\beta).
    \end{align*} 
    Using that the vector norm $\|.\|_{q/(q-1)}$ is the dual to the vector norm $\|.\|_q$,
     Assumption~\ref{ass:A1}, and
      Lemmas~\ref{lemma:assA1add} and \ref{lemma:Lstar} yields
     \begin{align*}
          \left\| R^\phi(\beta) \right\|_q &=
           \sup_{\|v\|_{q/(q-1)}=1}  v' R^\phi(\beta)
       \nonumber \\
          &\leq  
            \frac 1 2 \left\| \partial_{S \beta \beta} {\cal L}^*(\tilde \beta,{\cal S}) \right\|_q  
                    \| \Delta \beta \|^2 
       + \left\| \partial_{S S \beta} {\cal L}^*(\beta^0,\tilde S) \right\|_q \| {\cal S} \|_q \|\Delta \beta \| 
       + \frac 1 6  \left\| \partial_{S S S S} {\cal L}^*(\beta^0,\bar S) \right\|_q
        \| {\cal S} \|_q^3
   \nonumber \\    
           &= {\cal O}_P\left[  (NT)^{1/q+\epsilon} r_\beta  \| \Delta \beta \| + (NT)^{-1/4+1/q+\epsilon}    \| \Delta \beta \|
              + (NT)^{-3/4+3/(2q)+2\epsilon}  \right]
   \nonumber \\    
           &=
           o_P\left( (NT)^{-1/2 + 1/(2q)} \right) + o_P\left( (NT)^{1/(2q)}  \| \beta - \beta^0\| \right) ,
     \end{align*}
     uniformly over $\beta \in {\cal B}(\beta^0,r_\beta)$ by Lemma~\ref{lemma:Lstar}.
\end{proof}

\begin{proof}[\bf Proof of Theorem \ref{th:ScoreExpansion},
    Part 2: Expansion of profile score.]
    Let $\beta = \beta_{NT} \in {\cal B}(\beta^0,r_\beta)$. 
      A Taylor expansion of $\partial_\beta {\cal L}^*(\beta,0) $ around $(\beta^0,{\cal S})$ gives
     \begin{align*}
         \partial_\beta {\cal L}(\beta,\widehat \phi(\beta)) = \partial_\beta {\cal L}^*(\beta,0) 
           = \partial_\beta {\cal L}^* +  (\partial_{\beta \beta'} {\cal L}^*) \Delta \beta
              -  (\partial_{\beta S'} {\cal L}^*) {\cal S}
              + \frac 1 2 \sum_g  (\partial_{\beta S' S_g} {\cal L}^*) {\cal S} {\cal S}_g
              + R_1(\beta) ,
     \end{align*}
     where we first expand in $\beta$ for fixed $S={\cal S}$, and then expand in $S$. 
     For any $v \in \mathbbm{R}^{\dim \beta}$
     the remainder term satisfies
     \begin{align*}
         v' R_1(\beta) &= v' \bigg\{ \frac 1 2 \sum_k [\partial_{\beta \beta' \beta_k} {\cal L}^*(\tilde \beta,{\cal S})] (\Delta \beta) (\Delta \beta_k) 
       - \sum_k [\partial_{\beta  \beta_k S'} {\cal L}^*(\beta^0,\tilde S)] {\cal S} (\Delta \beta_k)
       \nonumber \\ & \quad  \qquad 
               - \frac 1 6 \sum_{g,h} [\partial_{\beta S' S_g S_h} {\cal L}^*(\beta^0,\bar S)]  {\cal S}  {\cal S}_g {\cal S}_h  \bigg\},
     \end{align*}
     where $\tilde \beta$ is between $\beta^0$ and $\beta$,
     and $\tilde S$ and $\bar S$ are between $0$ and ${\cal S}$.
     By Lemma~\ref{lemma:Lstar},
     \begin{align*}
         \partial_\beta {\cal L}(\beta,\widehat \phi(\beta)) 
        &= \partial_\beta {\cal L} +
           \left[  \partial_{\beta \beta'} {\cal L}
                                          + ( \partial_{\beta \phi'} {\cal L} ) {\cal H}^{-1}
                                            ( \partial_{\phi' \beta} {\cal L} ) \right] (\beta-\beta^0)
              +  (\partial_{\beta \phi'}  {\cal L})  {\cal H}^{-1} {\cal S}
      \nonumber \\ & \quad        
              + \frac 1 2 \sum_g  \left( \partial_{\beta \phi' \phi_g}  {\cal L}
               +  [\partial_{\beta \phi'}  {\cal L}] \,  {\cal H}^{-1}
            [\partial_{\phi \phi' \phi_g}  {\cal L}]  \right)
              [ {\cal H}^{-1} {\cal S}]_g
               {\cal H}^{-1} {\cal S}
              +  R_1 (\beta) ,
     \end{align*}     
     where for any $v \in \mathbbm{R}^{\dim \beta}$,
     \begin{align*}
          \left\|  R_1(\beta) \right\| &=
           \sup_{\|v\|=1}  v' R_1(\beta)
          \nonumber \\
          & \leq \frac 1 2 \left\| \partial_{\beta \beta \beta} {\cal L}^*(\tilde \beta,{\cal S}) \right\|
                \| \Delta \beta \|^2 
       + (NT)^{1/2-1/q} \left\| \partial_{\beta  \beta S} {\cal L}^*(\beta^0,\tilde S) \right\|_q
       \| {\cal S}  \|_q \| \Delta \beta \|
       \nonumber \\ & \quad   
               + \frac 1 6 (NT)^{1/2-1/q} \left\| \partial_{\beta S S S} {\cal L}^*(\beta^0,\bar S) \right\|_q
                 \| {\cal S} \|_q^3
          \nonumber \\
           &= {\cal O}_P\left[ (NT)^{1/2+1/(2q)+\epsilon} r_\beta \| \Delta \beta \|
                 + (NT)^{1/4+1/(2q)+\epsilon}  \| \Delta \beta \| 
                 + (NT)^{-1/4+1/q+2\epsilon}
              \right]
          \nonumber \\
          &= o_P(1) + o_P(\sqrt{NT} \| \beta - \beta^0\|) ,
     \end{align*}
   uniformly over  $\beta \in {\cal B}(\beta^0,r_\beta)$ by Lemma~\ref{lemma:Lstar}. 
   We can also write
     \begin{align*}
         d_\beta {\cal L}(\beta,\widehat \phi(\beta)) 
         &= \partial_\beta {\cal L} -  \sqrt{NT} \, \overline W (\Delta \beta)
              +  (\partial_{\beta \phi'} \overline {\cal L}) \overline {\cal H}^{-1} {\cal S}
              +  (\partial_{\beta \phi'} \widetilde {\cal L}) \overline {\cal H}^{-1} {\cal S}
              -  (\partial_{\beta \phi'} \overline {\cal L}) \overline {\cal H}^{-1}   \widetilde {\cal H}    \overline {\cal H}^{-1} {\cal S}
      \nonumber \\ & \quad        
              + \frac 1 2 \sum_g  \left( \partial_{\beta \phi' \phi_g} \overline {\cal L}
               +  [\partial_{\beta \phi'} \overline {\cal L}] \, \overline {\cal H}^{-1}
            [\partial_{\phi \phi' \phi_g} \overline {\cal L}]  \right)
              [\overline {\cal H}^{-1} {\cal S}]_g
              \overline {\cal H}^{-1} {\cal S}
              +  R (\beta) ,
      \nonumber \\        
        &= U  -  \sqrt{NT} \, \overline W (\Delta \beta) +  R (\beta) ,
     \end{align*}   
     where we decompose the term linear in ${\cal S}$ into multiple terms by using that
     $$ -  (\partial_{\beta S'} {\cal L}^*) = (\partial_{\beta \phi'} {\cal L}) {\cal H}^{-1}
      = \left[(\partial_{\beta \phi'} \overline {\cal L}) + (\partial_{\beta \phi'} \widetilde {\cal L}) \right] 
      \left[ \overline {\cal H}^{-1} -  \overline {\cal H}^{-1}   \widetilde {\cal H}    \overline {\cal H}^{-1} + \ldots \right].$$
    The new remainder term is
     \begin{align*}
        R(\beta) &=   R_1(\beta) + 
         ( \partial_{\beta \beta'} \widetilde {\cal L}  ) \Delta \beta
              +    \left[ ( \partial_{\beta \phi'} {\cal L} ) {\cal H}^{-1}
                                            ( \partial_{\phi' \beta} {\cal L} )
                       - ( \partial_{\beta \phi'} \overline {\cal L} ) \overline {\cal H}^{-1}
                                            ( \partial_{\phi' \beta} \overline {\cal L} )    \right]       \Delta \beta   
       \nonumber \\ & \quad  \qquad                                   
          +  (\partial_{\beta \phi'} {\cal L}) \left[ {\cal H}^{-1} 
          - \left(  \overline {\cal H}^{-1} -  \overline {\cal H}^{-1}   \widetilde {\cal H}    \overline {\cal H}^{-1}  \right)  \right] {\cal S}  
         -   (\partial_{\beta \phi'} \widetilde {\cal L})        \overline {\cal H}^{-1}   \widetilde {\cal H}    \overline {\cal H}^{-1}  {\cal S} 
       \nonumber \\ & \quad \qquad
          +  \frac 1 2
          \left[   \sum_g  \partial_{\beta \phi' \phi_g}  {\cal L}
              [ {\cal H}^{-1} {\cal S}]_g
               {\cal H}^{-1} {\cal S}     
              -    \sum_g \partial_{\beta \phi' \phi_g} \overline {\cal L}
              [\overline {\cal H}^{-1} {\cal S}]_g
              \overline {\cal H}^{-1} {\cal S}                        
         \right]        
       \nonumber \\ & \quad \qquad
          +  \frac 1 2  
         \bigg[    \sum_g    [\partial_{\beta \phi'}  {\cal L}] \,  {\cal H}^{-1}
            [\partial_{\phi \phi' \phi_g}  {\cal L}]  
              [ {\cal H}^{-1} {\cal S}]_g
               {\cal H}^{-1} {\cal S}
               -  \sum_g    [\partial_{\beta \phi'} \overline {\cal L}] \, \overline {\cal H}^{-1}
            [\partial_{\phi \phi' \phi_g} \overline {\cal L}]  
              [\overline {\cal H}^{-1} {\cal S}]_g
              \overline {\cal H}^{-1} {\cal S}
         \bigg]  .
     \end{align*} 
     By Assumption~\ref{ass:A1} and
      Lemma~\ref{lemma:assA1add},
     \begin{align*}
          \left\|  R(\beta) \right\|  
           &\leq \left\|  R_1(\beta) \right\|
      +  \left\| \partial_{\beta \beta'} \widetilde {\cal L}  \right\|  \|\Delta \beta \|
           +    \left\|  \partial_{\beta \phi'} {\cal L}  \right\| 
                \left\|  {\cal H}^{-1} - \overline {\cal H}^{-1} \right\|
                  \left\| \partial_{\phi' \beta} {\cal L}  \right\| \|  \Delta \beta \|   
       \nonumber \\ & \quad   
              +    \left\|  \partial_{\beta \phi'} \widetilde {\cal L}  \right\| 
                \left\|   \overline {\cal H}^{-1} \right\|
                \left(  \left\| \partial_{\phi' \beta} {\cal L}  \right\| +   \left\| \partial_{\phi' \beta} \overline {\cal L}  \right\|  \right) \|  \Delta \beta \|      
       \nonumber \\ & \quad                                     
          +  \left\| \partial_{\beta \phi'} {\cal L} \right\| 
          \left\| {\cal H}^{-1} 
          - \left(  \overline {\cal H}^{-1} -  \overline {\cal H}^{-1}   \widetilde {\cal H}    \overline {\cal H}^{-1}  \right)  \right\| \| {\cal S} \|  
         +   \left\|       \overline {\cal H}^{-1} \right\|^2 
          \left\| \partial_{\beta \phi'} \widetilde {\cal L} \right\| 
           \left\|   \widetilde {\cal H}  \right\| \left\| {\cal S} \right\| 
       \nonumber \\ & \quad 
          +  \frac 1 2
           \left\| \partial_{\beta \phi \phi }  {\cal L} \right\|
            \left( \left\| {\cal H}^{-1}  \right\|  +  \left\| \overline {\cal H}^{-1}  \right\| \right)
            \left\|    {\cal H}^{-1} - \overline {\cal H}^{-1} \right\|  \| {\cal S} \|^2     
       \nonumber \\ & \quad 
          + \frac 1 2 
           \left\| \overline {\cal H}^{-1}  \right\|^2
           \left\| \partial_{\beta \phi \phi} \widetilde {\cal L} \right\|
              \| {\cal S} \|^2                              
       \nonumber \\ & \quad 
          +   \frac 1 2  
         \bigg\|    \sum_g    [\partial_{\beta \phi'}  {\cal L}] \,  {\cal H}^{-1}
            [\partial_{\phi \phi' \phi_g}  {\cal L}]  
              [ {\cal H}^{-1} {\cal S}]_g
               {\cal H}^{-1} {\cal S}
               -  \sum_g    [\partial_{\beta \phi'} \overline {\cal L}] \, \overline {\cal H}^{-1}
            [\partial_{\phi \phi' \phi_g} \overline {\cal L}]  
              [\overline {\cal H}^{-1} {\cal S}]_g
              \overline {\cal H}^{-1} {\cal S}
             \bigg\|
          \nonumber \\
           &=   \left\|  R_1(\beta) \right\|
              + o_P(1)
              + o_P(\sqrt{NT} \| \beta - \beta^0\|)
     + {\cal O}_P\left[  
                (NT)^{-1/8 + \epsilon + 1/(2q)} \right]
          \nonumber \\
          &= o_P(1) + o_P(\sqrt{NT} \| \beta - \beta^0\|) ,
     \end{align*}
  uniformly over $\beta \in {\cal B}(\beta^0,r_\beta)$. 
   Here we use that
   \begin{align*}
         & \left\|    \sum_g    [\partial_{\beta \phi'}  {\cal L}] \,  {\cal H}^{-1}
            [\partial_{\phi \phi' \phi_g}  {\cal L}]  
              [ {\cal H}^{-1} {\cal S}]_g
               {\cal H}^{-1} {\cal S}
               -  \sum_g    [\partial_{\beta \phi'} \overline {\cal L}] \, \overline {\cal H}^{-1}
            [\partial_{\phi \phi' \phi_g} \overline {\cal L}]  
              [\overline {\cal H}^{-1} {\cal S}]_g
              \overline {\cal H}^{-1} {\cal S}
             \right\|
         \nonumber \\
            &\leq
                \left\| \partial_{\beta \phi'}    {\cal L} \right\|
                 \left\| {\cal H}^{-1} - \overline {\cal H}^{-1} \right\|
               \left( \left\| {\cal H}^{-1} \right\|  + \left\| \overline {\cal H}^{-1} \right\|  \right)  \|{\cal S}\|
                 \left\|  \sum_g  \partial_{\phi \phi' \phi_g}  {\cal L} \, [{\cal H}^{-1}  {\cal S}]_g \right\|   
         \nonumber \\  & \quad +    
                \left\| \partial_{\beta \phi'}    {\cal L} \right\|
                 \left\| {\cal H}^{-1} - \overline {\cal H}^{-1} \right\|
               \left\| \overline {\cal H}^{-1} \right\|   \|{\cal S}\|
                 \left\|  \sum_g  \partial_{\phi \phi' \phi_g}  {\cal L} \, [\overline {\cal H}^{-1}  {\cal S}]_g \right\|   
         \nonumber \\  & \quad +           
                \left\| \partial_{\beta \phi'}  \widetilde {\cal L} \right\|
                \left\| \overline {\cal H}^{-1} \right\|^2    \|{\cal S}\|
                 \left\|  \sum_g  \partial_{\phi \phi' \phi_g}  {\cal L} \, [\overline {\cal H}^{-1}  {\cal S}]_g \right\|   
         \nonumber \\  & \quad +           
                \left\| \partial_{\beta \phi'}  \overline {\cal L} \right\|
                \left\| \overline {\cal H}^{-1} \right\|    
                 \left\|  \sum_{g,h}  \partial_{\phi \phi_g \phi_h}  \widetilde {\cal L} 
                 \, [\overline {\cal H}^{-1}  {\cal S}]_g [\overline {\cal H}^{-1}  {\cal S}]_h \right\|   .
  \end{align*}   
\end{proof}   

\begin{proof}[\bf Proof of Corollary~\ref{cor:LimitBeta}]
     $\widehat \beta$ solves the FOC
     \begin{align*}
           \partial_{\beta} {\cal L}(\widehat \beta, \widehat \phi(\widehat \beta)) &= 0.
     \end{align*}
     By  $\left\| \widehat \beta - \beta^0 \right\| = o_P(r_\beta)$ 
     and Theorem~\ref{th:ScoreExpansion},
     \begin{align*}
          0 = \partial_{\beta} {\cal L}(\widehat \beta, \widehat \phi(\widehat \beta))  =  U
            - \overline W \, \sqrt{NT} (\widehat \beta-\beta^0)
            +o_P(1) + o_P(\sqrt{NT} \| \widehat \beta-\beta^0 \|).
   \end{align*}
   Thus, 
    $\sqrt{NT} (\widehat \beta-\beta^0) =  \overline W^{-1} U+ o_P(1) + o_P(\sqrt{NT} \| \widehat \beta-\beta^0 \|)  
    =  \overline W_{\infty}^{-1} U+ o_P(1) + o_P(\sqrt{NT} \| \widehat \beta-\beta^0 \|) $,
    where we use that $\overline W = \overline W_{\infty} + o_P(1) $ is invertible wpa1
    and that $\overline W^{-1} = \overline W_{\infty}^{-1} + o_P(1)$.
    We conclude that $\sqrt{NT} (\widehat \beta-\beta^0) = {\cal O}_P(1)$ because $U = {\cal O}_P(1)$,
    and therefore 
      $\sqrt{NT} (\widehat \beta-\beta^0) =   \overline W_{\infty}^{-1} U+ o_P(1)$.
\end{proof}

\begin{proof}[\bf Proof of Theorem~\ref{th:consistency}]
      \# Consistency of $\widehat \phi(\beta)$:  
     Let $\eta = \eta_{NT} >0$ be such that  $\eta=o_P(r_\phi)$, 
     $(NT)^{-1/4+1/(2q)} = o_P(\eta)$,
     and $(NT)^{1/(2q)} r_\beta = o_P(\eta)$.
      For $\beta \in {\cal B}(r_\beta, \beta^0)$, define         
     \begin{align}
          \widehat \phi^*(\beta) &:= \argmin_{\{ \phi: \; \|\phi - \phi^0\|_q \leq   \eta \} }
          \| {\cal S}(\beta,\phi) \|_q .
          \label{OptOverS}
     \end{align}
     Then, $ \| {\cal S}(\beta, \widehat \phi^*(\beta)) \|_q \leq  \| {\cal S}(\beta,\phi^0) \|_q$, and therefore by a Taylor expansion of ${\cal S}(\beta,\phi^0)$ around $\beta=\beta^0$,
     \begin{align*}
         \| {\cal S}(\beta, \widehat \phi^*(\beta)) - {\cal S}(\beta,\phi^0) \|_q
         &\leq  \| {\cal S}(\beta, \widehat \phi^*(\beta)) \|_q + \| {\cal S}(\beta,\phi^0) \|_q
         \leq 2 \| {\cal S}(\beta,\phi^0) \|_q
       \nonumber \\  
                &\leq 2 \| {\cal S} \|_q + 2   \left\| \partial_{\phi \beta'} {\cal L}(\tilde \beta,\phi^0) \right\|_q \|\beta - \beta^0\|
       \nonumber \\  
           &=   {\cal O}_P\left[ (NT)^{-1/4+1/(2q)} + (NT)^{1/(2q)} \|\beta - \beta^0\| \right] ,
     \end{align*}
     uniformly over $\beta \in {\cal B}(r_\beta, \beta^0)$, where  $\tilde \beta$ is between $\beta^0$ and $\beta$, 
     and we use Assumption~\ref{ass:A1}$(v)$ and
      Lemma~\ref{lemma:assA1add}.
     Thus,
     \begin{align*}
        \sup_{\beta \in {\cal B}(r_\beta, \beta^0)}  \| {\cal S}(\beta, \widehat \phi^*(\beta)) - {\cal S}(\beta,\phi^0) \|_{q}
        =  {\cal O}_P\left[ (NT)^{-1/4+1/(2q)} + (NT)^{1/(2q)} r_\beta \right].
     \end{align*}     
    By  a Taylor expansion of $\Phi(\beta,S)$  around $S={\cal S}(\beta,\phi^0)$,
     \begin{align*}
             \left\| \widehat \phi^*(\beta) - \phi^0 \right\|_q
           &=  \left\|  \Phi(\beta, {\cal S}(\beta, \widehat \phi^*(\beta))) - \Phi(\beta,{\cal S}(\beta,\phi^0)) \right\|_q
             \leq 
                \left\| \partial_S \Phi(\beta,\tilde S)' \right\|_q
             \left\| {\cal S}(\beta, \widehat \phi^*(\beta)) - {\cal S}(\beta,\phi^0) \right\|_q
         \nonumber \\
            &=     \left\| {\cal H}^{-1} (\beta, \Phi(\beta,\tilde S)) \right\|_q
                  \left\| {\cal S}(\beta, \widehat \phi^*(\beta)) - {\cal S}(\beta,\phi^0) \right\|_q
             = {\cal O}_P(1) 
               \left\| {\cal S}(\beta, \widehat \phi^*(\beta)) - {\cal S}(\beta,\phi^0) \right\|_q ,
     \end{align*}
     where $\tilde S$ is between ${\cal S}(\beta, \widehat \phi^*(\beta))$ and ${\cal S}(\beta,\phi^0)$ and we use Lemma~\ref{lemma:assA1add}$(i)$.
     Thus,
     \begin{align*}
       \sup_{\beta \in {\cal B}(r_\beta, \beta^0)}    \left\| \widehat \phi^*(\beta) - \phi^0 \right\|_q 
      =  {\cal O}_P\left[ (NT)^{-1/4+1/(2q)} + (NT)^{1/(2q)} r_\beta \right] = o_P(\eta).
     \end{align*}
     This shows that $\widehat \phi^*(\beta)$ is an interior solution of the minimization problem
     \eqref{OptOverS}, wpa1.
     Thus,
     ${\cal S}(\beta,\widehat \phi^*(\beta))=0$, because  the objective function ${\cal L}(\beta,\phi)$ is strictly concave and differentiable, and therefore
     $\widehat \phi^*(\beta) = \widehat \phi(\beta)$. We conclude that 
       $\displaystyle \sup_{\beta \in {\cal B}(r_\beta, \beta^0)}
             \left\| \widehat \phi(\beta) - \phi^0 \right\|_q
          = {\cal O}_P(\eta) 
             = o_P(r_\phi)$.
       
       \bigskip
          \noindent
   \# Consistency of $\widehat \beta$:   
    We have already shown that Assumption~\ref{ass:A1}$(ii)$ is satisfied, in addition
   to the remaining parts of Assumption~\ref{ass:A1}, which we assume. 
   The bounds on the spectral norm in 
   Assumption~\ref{ass:A1}$(vi)$
   and in part $(ii)$ of
   Lemma~\ref{lemma:assA1add}
   can be used to show that $U = {\cal O}_P(  (NT)^{1/4} )$.  

   First, we consider the case ${\rm dim}(\beta) = 1$ first. The extension to ${\rm dim}(\beta) > 1$
   is discussed below.  Let $\eta = 2 (NT)^{-1/2}   \overline W^{-1}  | U |$.
   Our goal is to show that $\widehat \beta \in [ \beta^0 - \eta , \beta^0 + \eta]$.   
   By Theorem~\ref{th:ScoreExpansion},
    \begin{align*}
        \partial_{\beta} {\cal L}(\beta^0 + \eta, \widehat \phi(\beta^0 + \eta))
          &=   U
            - \overline W \, \sqrt{NT} \eta
            + o_P(1) + o_P( \sqrt{NT}  \eta ) 
           = o_P( \sqrt{NT}  \eta)
            - \overline W \, \sqrt{NT} \eta ,
       \nonumber \\
        \partial_{\beta} {\cal L}(\beta^0 - \eta, \widehat \phi(\beta^0 - \eta))
          &=   U
            + \overline W \, \sqrt{NT}  \eta
            + o_P(1) + o_P( \sqrt{NT}  \eta ) 
           = o_P( \sqrt{NT}  \eta)
            + \overline W \, \sqrt{NT}   \eta ,
   \end{align*}
   and therefore for sufficiently large $N,T$
   \begin{align*}
          \partial_{\beta} {\cal L}(\beta^0 + \eta, \widehat \phi(\beta^0 + \eta))
          &\leq 
         0
         \leq
         \partial_{\beta} {\cal L}(\beta^0 - \eta, \widehat \phi(\beta^0 - \eta)) .
   \end{align*}
   Thus, since $\partial_{\beta} {\cal L}(\widehat \beta, \widehat \phi(\widehat \beta)) = 0,$
   for sufficiently large $N,T$,
   \begin{align*}
          \partial_{\beta} {\cal L}(\beta^0 + \eta, \widehat \phi(\beta^0 + \eta))
          &\leq 
         \partial_{\beta} {\cal L}(\widehat \beta, \widehat \phi(\widehat \beta)) 
         \leq
         \partial_{\beta} {\cal L}(\beta^0 - \eta, \widehat \phi(\beta^0 - \eta)) .
   \end{align*}
   The profile objective ${\cal L}(\beta, \widehat \phi(\beta))$ is strictly concave in $\beta$ because  ${\cal L}(\beta,\phi)$ is strictly concave in $(\beta,\phi)$. Thus,
    $\partial_{\beta} {\cal L}(\beta, \widehat \phi(\beta))$ is strictly decreasing.
    The previous set of inequalities implies that  for sufficiently large $N,T$
   \begin{align*}
        \beta^0 + \eta &\geq \widehat \beta  \geq \beta^0 - \eta .
   \end{align*}
   We conclude that $\| \widehat \beta - \beta^0 \| \leq \eta =  {\cal O}_P(  (NT)^{-1/4} )$.
   This concludes the proof for ${\rm dim}(\beta) = 1$. 
   
   To generalize the proof to ${\rm dim}(\beta) > 1$ 
   we define $\beta_{\pm} = \beta^0 \pm \eta \, \frac{\widehat \beta- \beta^0 }{\|\widehat \beta - \beta^0\|} $.
   Let $\langle \beta_- , \beta_+ \rangle = \{ r \beta_- + (1-r) \beta_+ \; |\;  r \in [0,1] \}$ be the line segment
   between $\beta_-$ and $\beta_+$. By restricting attention to values $\beta \in \langle \beta_- , \beta_+ \rangle$
   we can repeat the above argument for the case ${\rm dim}(\beta) = 1$
   and thus show that $\widehat \beta \in \langle \beta_- , \beta_+ \rangle$, which implies
   $\| \widehat \beta - \beta^0 \| \leq \eta =  {\cal O}_P(  (NT)^{-1/4} )$.
\end{proof}

\subsection{Proof of Theorem \ref{th:DeltaExpansion}}

\begin{proof}[\bf Proof of Theorem \ref{th:DeltaExpansion}]
   A Taylor expansion of 
   $\Delta(\beta,\phi)$ around $(\beta^0, \phi^0)$ yields
  \begin{align*}
   \Delta(\beta,\phi)
    = \Delta
    +[\partial_{\beta'} \Delta] (\beta - \beta^0)
    +[ \partial_{\phi'} \Delta] (\phi - \phi^0)
     + \ft 1 2 (\phi - \phi^0)' [\partial_{\phi \phi'} \Delta]
           (\phi - \phi^0)
         + R_1^{\Delta}(\beta,\phi),
   \end{align*} 
with remainder term
\begin{align*}
    R_1^{\Delta}(\beta,\phi)
     &= \ft 1 2 (\beta - \beta^0)' 
       [\partial_{\beta \beta'} \Delta(\bar \beta,\phi)] (\beta - \beta^0) 
      +   (\beta - \beta^0)' 
       [\partial_{\beta \phi'} \Delta(\beta^0,\tilde \phi)] (\phi - \phi^0) 
  \nonumber \\ & \quad     
      + \ft 1 6
      \sum_g
       (\phi - \phi^0)' 
       [\partial_{\phi \phi' \phi_g} \Delta(\beta^0,\bar \phi)]
           (\phi - \phi^0) 
        [\phi - \phi^0]_g
       ,
\end{align*}
where
$\bar \beta$  is between $\beta$ and $\beta^0$, and 
$\tilde \phi$ and $\bar \phi$ are between $\phi$ and $\phi^0$. 

By assumption, $\| \widehat \beta - \beta^0 \| = o_P( (NT)^{-1/4} )$, and by the expansion of $\widehat \phi = \widehat \phi(\widehat \beta)$
in Theorem~\ref{th:ScoreExpansion},
\begin{align*}
   \| \widehat \phi - \phi^0 \|_q
   &\leq
 \left\| {\cal H}^{-1} \right\|_q  \left\| {\cal S} \right\|_q
          +  \left\| {\cal H}^{-1} \right\|_q
            \left\| \partial_{\phi \beta'}  {\cal L} \right\|_q
             \left\| \widehat \beta-\beta^0 \right\|_q
  + \ft 1 2 \left\| {\cal H}^{-1} \right\|_q^3 
      \left\| \partial_{\phi \phi \phi}   {\cal L}  \right\|_q
      \left\| {\cal S} \right\|_q^2
                       + \left\| R^{\phi}(\widehat \beta) \right\|_q
  \nonumber \\
        &= {\cal O}_P( (NT)^{-1/4 + 1/(2q)} ).          
\end{align*}
 Thus, for 
$\widehat R_1^{\Delta} := R_1^{\Delta}(\widehat \beta,\widehat \phi)$,
\begin{align*}
  \left| \widehat R_1^{\Delta} \right|
   &\leq 
   \ft 1 2 \| \widehat \beta - \beta^0 \|^2
   \sup_{\beta \in {\cal B}(r_\beta, \beta^0)}  \sup_{\phi \in {\cal B}_q(r_\phi, \phi^0)}
   \left\| \partial_{\beta \beta'} \Delta(\beta,\phi) \right\|
   \nonumber \\ & \quad     
     +  
       (NT)^{1/2-1/q}
       \| \widehat \beta - \beta^0 \|
      \| \widehat \phi - \phi^0 \|_q
      \sup_{\beta \in {\cal B}(r_\beta, \beta^0)}  \sup_{\phi \in {\cal B}_q(r_\phi, \phi^0)}
   \left\| \partial_{\beta \phi'} \Delta(\beta,\phi) \right\|_q
   \nonumber \\ & \quad     
     + \ft 1 6 
       (NT)^{1/2-1/q}
      \| \widehat \phi - \phi^0 \|_q^3 
      \sup_{\beta \in {\cal B}(r_\beta, \beta^0)}  \sup_{\phi \in {\cal B}_q(r_\phi, \phi^0)}
   \left\| \partial_{\phi \phi \phi} \Delta(\beta,\phi) \right\|_q
   \nonumber \\  
      &= o_P(1/\sqrt{NT}).
\end{align*}

Again by the expansion of 
$\widehat \phi = \widehat \phi(\widehat \beta)$
from Theorem~\ref{th:ScoreExpansion},
\begin{align}
  \widehat \delta - \delta & =  \Delta(\widehat \beta,\widehat \phi) - \Delta  
   =
    \left( \partial_{\beta'}  {\Delta}
      +
      [\partial_{\phi} \Delta]'    {\cal H}^{-1} [\partial_{\phi \beta'}  {\cal L}]
    \right) (\widehat \beta - \beta^0)  
  \nonumber \\ & \quad      
     + [\partial_{\phi} \Delta]'    {\cal H}^{-1}
      \left(
      {\cal S} + \ft 1 2
      \sum_{g=1}^{\dim \phi}
                  [\partial_{\phi \phi' \phi_g}  {\cal L}]    {\cal H}^{-1} {\cal S}
                   [   {\cal H}^{-1} {\cal S} ]_g             \right) 
             + \ft 1 2 \, {\cal S}'  {\cal H}^{-1}
         [\partial_{\phi \phi'}  \Delta]
          {\cal H}^{-1} {\cal S} + R_2^{\Delta},
   \label{ResDeltaSoFar}       
\end{align}
where
\begin{align*}
  \left| R_2^{\Delta} \right| &= 
  \left| R_1^{\Delta} + 
[\partial_{\phi} \Delta]' R^\phi(\widehat \beta)
   + \ft 1 2 (\widehat \phi - \phi^0 +  {\cal H}^{-1} {\cal S} )'
       [\partial_{\phi \phi'}  \Delta]
           (\widehat \phi - \phi^0 -  {\cal H}^{-1} {\cal S}) \right|
  \nonumber \\
    &\leq  
  \left| R_1^{\Delta} \right| + 
  (NT)^{1/2-1/q}
  \left\| \partial_{\phi} \Delta \right\|_q
  \left\| R^\phi(\widehat \beta) \right\|_q
\nonumber \\  & \quad 
   + \ft 1 2   (NT)^{1/2-1/q}
   \left\| \widehat \phi - \phi^0 +  {\cal H}^{-1} {\cal S} \right\|_q
   \left\| \partial_{\phi \phi'}  \Delta \right\|_q
    \left\| \widehat \phi - \phi^0 -  {\cal H}^{-1} {\cal S} \right\|_q  
 \nonumber \\ & 
    = o_P(1/\sqrt{NT}),          
\end{align*}
that uses  
$ \left\| \widehat \phi - \phi^0 -  {\cal H}^{-1} {\cal S} \right\|_q  
= {\cal O}_P\left( (NT)^{-1/2 + 1/q + \epsilon} \right)$.
From equation \eqref{ResDeltaSoFar},
the terms of the expansion for  $\widehat \delta - \delta$ are analogous to the terms of the 
expansion for the score in Theorem~\ref{th:ScoreExpansion}, with
$\Delta(\beta,\phi)$ taking the role of
$\frac 1 {\sqrt{NT}} \partial_{\beta_k} {\cal L}(\beta,\phi)$. 
%
\end{proof}

\section{Proofs of Appendix~\ref{app:ProofMain} (Theorem~\ref{th:connection})}\label{sec:s3}

\begin{proof}[\bf Proof of Theorem~\ref{th:connection}, Part $(i)$]
     Assumption~\ref{ass:A1}$(i)$  is satisfied because  
     $\lim_{N,T \rightarrow \infty} \frac{\dim \phi} {\sqrt{NT}} =  \lim_{N,T \rightarrow \infty} \frac{N+T} {\sqrt{NT}} = \kappa + \kappa^{-1}$. 
     
     Assumption~\ref{ass:A1}$(ii)$  is satisfied because $ \ell_{it}(\beta,\pi)$ and $(v'\phi)^2$ are four times continuously differentiable and the same is
     true for ${\cal L}(\beta,\phi)$.
     
     Let $\overline{\mathcal{D}} 
     =\diag \left(  \overline {\cal H}_{(\alpha \alpha)}^*, \overline {\cal H}_{(\gamma \gamma)}^*
    \right)$. Then, 
      $\left\| \overline{\mathcal{D}}^{-1} \right\|_{\infty} = {\cal O}_P(1)$ by Assumption~\ref{ass:PanelA1}$(v)$.
     By the properties of the matrix norms and    Lemma~\ref{lemma:HessianAdditive},
     $ \left\| \overline {\cal H}^{-1} - \overline{\mathcal{D}}^{-1} \right\|_{\infty}
      \leq (N+T)  \left\| \overline {\cal H}^{-1} - \overline{\mathcal{D}}^{-1} \right\|_{\max} =  {\cal O}_P(1)$.
      Thus,     $\left\| \overline {\cal H}^{-1} \right\|_q \leq \left\| \overline {\cal H}^{-1} \right\|_{\infty}
       \leq \left\| \overline{\mathcal{D}}^{-1} \right\|_{\infty}
          + \left\| \overline {\cal H}^{-1} - \overline{\mathcal{D}}^{-1} \right\|_{\infty}
          = {\cal O}_P(1)$ by  Lemma~\ref{lemma:matrix-q-norm}  and the triangle inequality. We conclude that  Assumption~\ref{ass:A1}$(iv)$ holds.    

    We now show that the assumptions of  Lemma~\ref{lemma:BasicRegularityPanel}  
    are satisfied: 
    \begin{itemize}
    \item[(i)]
    By Lemma~\ref{lemma:mixing_inequality}, $\chi_i = \frac 1 {\sqrt{T}} \sum_{t} \partial_{\beta_k} \ell_{it}$
    satisfies $\mathbb{E}_\phi( \chi_i^2 ) \leq B$. Thus, by independence across $i$ 
    \begin{align*}
           \mathbb{E}_\phi\left[ \left( \frac 1 {\sqrt{NT}} \sum_{i,t} \partial_{\beta_k} \ell_{it}
           \right)^2 \right] 
           &= \mathbb{E}_\phi\left[ \left( \frac 1 {\sqrt{N}} \sum_{i}  \chi_i
           \right)^2 \right] 
            = \frac 1 N \sum_i   \mathbb{E}_\phi  \chi_i^2 \leq B,
    \end{align*}
    and therefore
    $ \frac 1 {\sqrt{NT}} \sum_{i,t} \partial_{\beta_k} \ell_{it} = {\cal O}_P(1)$.
    Analogously,
     $ \frac 1 {NT} \sum_{i,t} \left\{ \partial_{\beta_k \beta_l} \ell_{it} 
                    - \mathbb{E}_\phi\left[ \partial_{\beta_k \beta_l} \ell_{it}  \right] \right\} 
                    = {\cal O}_P(1/\sqrt{NT}) = o_P(1)$.
   Next,                 
   \begin{align*}
        & \mathbb{E}_\phi \left( \sup_{\beta \in {\cal B}(r_\beta, \beta^0)}  \sup_{\phi \in {\cal B}_q(r_\phi, \phi^0)}
                  \frac 1 {NT} \sum_{i,t} \partial_{\beta_k \beta_l \beta_m} \ell_{it}( \beta, \pi_{it}) \right)^2   
     \nonumber \\   
        & \leq \mathbb{E}_\phi \left( \sup_{\beta \in {\cal B}(r_\beta, \beta^0)}  \sup_{\phi \in {\cal B}_q(r_\phi, \phi^0)}
                  \frac 1 {NT} \sum_{i,t} \left| \partial_{\beta_k \beta_l \beta_m} \ell_{it}( \beta, \pi_{it}) \right| \right)^2
         \leq      \mathbb{E}_\phi \left(  
                  \frac 1 {NT} \sum_{i,t} M(Z_{it}) \right)^2     
     \nonumber \\   
        & \leq \mathbb{E}_\phi  
                  \frac 1 {NT} \sum_{i,t} M(Z_{it})^2
                   =  \frac 1 {NT} \sum_{i,t} \mathbb{E}_\phi   M(Z_{it})^2 = {\cal O}_P(1),
   \end{align*} 
   and therefore 
   $ \sup_{\beta \in {\cal B}(r_\beta, \beta^0)}  \sup_{\phi \in {\cal B}_q(r_\phi, \phi^0)}
                  \frac 1 {NT} \sum_{i,t} \partial_{\beta_k \beta_l \beta_m} \ell_{it}( \beta, \pi_{it})  = {\cal O}_P(1)$.
    A similar argument gives          
    $\frac 1 {NT} \sum_{i,t} \partial_{\beta_k \beta_l} \ell_{it} = {\cal O}_P(1)$.
    
    \item[(ii)]
    For $\xi_{it}(\beta,\phi) = \partial_{\beta_k \pi} \ell_{it}( \beta, \pi_{it})$
      or $\xi_{it}(\beta,\phi) = \partial_{\beta_k \beta_l \pi} \ell_{it}( \beta, \pi_{it})$,
    \begin{align*}
        &  \mathbb{E}_\phi
         \left[ \sup_{\beta \in {\cal B}(r_\beta, \beta^0)}  \sup_{\phi \in {\cal B}_q(r_\phi, \phi^0)}
           \frac 1 T \sum_t \left| \frac 1 N \sum_i  \xi_{it}(\beta,\phi) \right|^q \right]
        \nonumber \\   &
        \leq \mathbb{E}_\phi
         \left[ \sup_{\beta \in {\cal B}(r_\beta, \beta^0)}  \sup_{\phi \in {\cal B}_q(r_\phi, \phi^0)}
           \frac 1 T \sum_t \left( \frac 1 N \sum_i  \left| \xi_{it}(\beta,\phi) \right| \right)^q \right]
        \nonumber \\   
        &\leq \mathbb{E}_\phi
         \left[  
           \frac 1 T \sum_t \left( \frac 1 N \sum_i  M(Z_{it}) \right)^q \right]
        \leq \mathbb{E}_\phi
         \left[  
           \frac 1 T \sum_t  \frac 1 N \sum_i  M(Z_{it})^q \right]   
        \nonumber \\   
          &=  \frac 1 T \sum_t  \frac 1 N \sum_i  \mathbb{E}_\phi M(Z_{it})^q = {\cal O}_P(1),
    \end{align*}
    i.e.  $\sup_{\beta \in {\cal B}(r_\beta, \beta^0)}  \sup_{\phi \in {\cal B}_q(r_\phi, \phi^0)}
           \frac 1 T \sum_t \left| \frac 1 N \sum_i  \xi_{it}(\beta,\phi) \right|^q
            = {\cal O}_P\left( 1 \right)$.
    Analogously, it follows that
    $\sup_{\beta \in {\cal B}(r_\beta, \beta^0)}  \sup_{\phi \in {\cal B}_q(r_\phi, \phi^0)}
           \frac 1 N \sum_i \left| \frac 1 T \sum_t  \xi_{it}(\beta,\phi) \right|^q
             = {\cal O}_P\left( 1 \right)$.

      \item[(iii)]
 For $\xi_{it}(\beta,\phi) = \partial_{\pi^r} \ell_{it}( \beta, \pi_{it})$, 
    with $r \in \{3,4\}$,
  or $\xi_{it}(\beta,\phi) = \partial_{\beta_k \pi^r} \ell_{it}( \beta, \pi_{it})$,  
   with $r \in \{2,3\}$,
   or $\xi_{it}(\beta,\phi) = \partial_{\beta_k \beta_l \pi^2} \ell_{it}( \beta, \pi_{it})$,
   \begin{align*}
       & \mathbb{E}_\phi \left[ 
       \left( \sup_{\beta \in {\cal B}(r_\beta, \beta^0)}  \sup_{\phi \in {\cal B}_q(r_\phi, \phi^0)}
      \max_i  \frac 1 T \sum_t | \xi_{it}(\beta,\phi) |  \right)^{(8+\nu)} \right]
     \nonumber  \\
       &= \mathbb{E}_\phi \left[
       \max_i  \left( \sup_{\beta \in {\cal B}(r_\beta, \beta^0)}  \sup_{\phi \in {\cal B}_q(r_\phi, \phi^0)}
       \frac 1 T \sum_t | \xi_{it}(\beta,\phi) |  \right)^{(8+\nu)} \right]
     \nonumber  \\
       &\leq \mathbb{E}_\phi \left[
       \sum_i  \left( \sup_{\beta \in {\cal B}(r_\beta, \beta^0)}  \sup_{\phi \in {\cal B}_q(r_\phi, \phi^0)}
       \frac 1 T \sum_t | \xi_{it}(\beta,\phi) |  \right)^{(8+\nu)} \right]
       \leq   \mathbb{E}_\phi \left[
       \sum_i  \left( 
       \frac 1 T \sum_t  M(Z_{it}) \right)^{(8+\nu)} \right]
     \nonumber  \\
       &\leq   \mathbb{E}_\phi \left[
       \sum_i  
       \frac 1 T \sum_t  M(Z_{it})^{(8+\nu)} \right]
       = \sum_i  
       \frac 1 T \sum_t  \mathbb{E}_\phi M(Z_{it})^{(8+\nu)} = {\cal O}_P(N).
   \end{align*}
   Thus,
   $\sup_{\beta \in {\cal B}(r_\beta, \beta^0)}  \sup_{\phi \in {\cal B}_q(r_\phi, \phi^0)}
      \max_i  \frac 1 T \sum_t | \xi_{it}(\beta,\phi) | 
        = {\cal O}_P\left( N^{1/(8+\nu)} \right)  = {\cal O}_P\left( N^{2 \epsilon} \right)$.
     Analogously, it follows that
     $\sup_{\beta \in {\cal B}(r_\beta, \beta^0)}  \sup_{\phi \in {\cal B}_q(r_\phi, \phi^0)}
      \max_t \frac 1 N \sum_i | \xi_{it}(\beta,\phi) | 
        = {\cal O}_P\left( N^{2 \epsilon} \right)$.
  
     \item[(iv)]  Let $\chi_t =  \frac 1 {\sqrt{N}} \sum_i \partial_{\pi} \ell_{it}.$ By cross-sectional independence and 
     $\mathbb{E}_\phi (\partial_{\pi} \ell_{it})^8 \leq \mathbb{E}_\phi M(Z_{it})^8= {\cal O}_P(1), $ $\mathbb{E}_\phi \chi_t^8  = {\cal O}_P(1)$ uniformly
     over $t$. Thus,
     $\mathbb{E}_\phi \frac 1 T \sum_t \chi_t^8
      = {\cal O}_P(1)$
      and therefore 
      $ \frac 1 T \sum_t \left| \frac 1 {\sqrt{N}} \sum_i \partial_{\pi} \ell_{it}  \right|^q
             = {\cal O}_P\left( 1 \right)$, with $q=8$.
     
     Let $\chi_i = \frac 1 {\sqrt{T}} \sum_t \partial_{\pi} \ell_{it}( \beta^0, \pi^0_{it})$. By  Lemma~\ref{lemma:mixing_inequality} 
     and  $\mathbb{E}_\phi (\partial_{\pi} \ell_{it})^{8+\nu} \leq \mathbb{E}_\phi M(Z_{it})^{8+\nu}= {\cal O}_P(1)$,  $\mathbb{E}_\phi \chi_i^8  = {\cal O}_P(1)$ uniformly over $i$.
     Here we use $\mu> 4/[1-8/(8+\nu)] = 4 (8+\nu)/\nu$ that is imposed in Assumption~\ref{ass:A1}.
    Thus, $\mathbb{E}_\phi\frac 1 N \sum_i   \chi_i^8  = {\cal O}_P(1)$
     and therefore 
     $ \frac 1 N \sum_i \left| \frac 1 {\sqrt{T}} \sum_t \partial_{\pi} \ell_{it}  \right|^q
             = {\cal O}_P\left( 1 \right)$, with $q=8$.
      
     The proofs for       
     $ \frac 1 T \sum_t \left| \frac 1 {\sqrt{N}} \sum_i  \partial_{\beta_k \pi} \ell_{it}  - 
               \mathbb{E}_\phi\left[ \partial_{\beta_k \pi} \ell_{it}  \right] \right|^2
             = {\cal O}_P\left( 1 \right) $  
     and
     $ \frac 1 N \sum_i \Big| \frac 1 {\sqrt{T}} \sum_t  \partial_{\beta_k \pi} \ell_{it} - 
               \mathbb{E}_\phi\left[ \partial_{\beta_k \pi} \ell_{it} \right] \Big|^2
             = {\cal O}_P\left( 1 \right)$   
     are analogous.           
     
    \item[(v)]  It follows by  the independence of  $\{ (\ell_{i1}, \ldots, \ell_{iT}) : 1 \leq i \leq N \}$ across $i$, conditional on $\phi$, in Assumption~\ref{ass:A1}$(ii)$.

    \item[(vi)] Let $\xi_{it} = \partial_{\pi^r} \ell_{it}( \beta^0, \pi^0_{it})
                           -  \mathbb{E}_\phi\left[ \partial_{\pi^r} \ell_{it} \right]$, with $r \in \{2,3\}$,
                     or $\xi_{it} = \partial_{\beta_k \pi^2} \ell_{it}( \beta^0, \pi^0_{it})
                           -  \mathbb{E}_\phi\left[ \partial_{\beta_k \pi^2} \ell_{it} \right]$.
For $\tilde \nu = \nu$,
$\max_i  \, \mathbb{E}_\phi\left[ \xi_{it}^{8+\tilde \nu} \right] = {\cal O}_P(1)$
by assumption. By Lemma~\ref{lemma:Cov_mixing_bound},
\begin{align*}
    \left|   \sum_s
   \mathbb{E}_\phi\left[ \xi_{it} \xi_{is}  \right] \right|
   &=   \sum_s \left| {\rm Cov}_\phi\left( \xi_{it}, \, \xi_{is} \right) \right|
   \nonumber \\ &
   \leq \, \sum_s \, [8 \; a(|t-s|)]^{1-2/(8+\nu)} \,
           \left[ \mathbb{E}_\phi |\xi_t|^{8+\nu} \right]^{1/(8+\nu)}
           \left[ \mathbb{E}_\phi |\xi_{s}|^{8+\nu} \right]^{1/(8+\nu)}
   \nonumber \\
   &= \tilde C \sum_{m=1}^\infty 
          m^{-\mu [1-2/(8+\nu)]}
     \leq \tilde C \sum_{m=1}^\infty 
          m^{-4} = \tilde C \pi^4/90,
\end{align*}
where $\tilde C$ is a constant.
Here we use that $\mu > 4 (8+\nu)/\nu$ implies
$\mu [1-2/(8+\nu) > 4$.
We thus have shown
$\max_{i} \,  \max_t \sum_s \mathbb{E}_\phi\left[ \xi_{it} \xi_{js}  \right]
  \leq \tilde C \pi^4/90 =: C$. 
  
  Analogous to the proof of part (iv), we can use Lemma~\ref{lemma:mixing_inequality} 
  to obtain 
  $\max_i \, \mathbb{E}_\phi    \left\{ \left[ \frac 1 {\sqrt{T}} 
                            \sum_t      \xi_{it}  
                         \right]^8    \right\}
                           \leq C$,
   and independence across $i$
   to obtain  
   $\max_t \, \mathbb{E}_\phi    \left\{ \left[ \frac 1 {\sqrt{N}} 
                            \sum_i      \xi_{it}  
                         \right]^8    \right\}
                           \leq C$.                 
   Similarly, by Lemma~\ref{lemma:mixing_inequality} 
                      \begin{align*}
                        \max_{i,j} \, \mathbb{E}_\phi    \left\{ \left[ \frac 1 {\sqrt{T}} 
                            \sum_t   \left[  \xi_{it} \xi_{jt} - \mathbb{E}_\phi\left(   \xi_{it} \xi_{jt}  \right) \right]
                         \right]^4    \right\}
                           &\leq C ,
                     \end{align*}
   which requires $\mu> 2/[1-4/(4+\nu/2)]$, which is 
   implied by the assumption that $\mu > 4 (8+\nu)/\nu$.

       \item[(vii)] We have already shown that $\left\| \overline {\cal H}^{-1} \right\|_q
               = {\cal O}_P\left(  1\right)$.                  
  \end{itemize}     
  Therefore, we can apply  Lemma~\ref{lemma:BasicRegularityPanel},
  which shows that Assumption~\ref{ass:A1}$(v)$ and $(vi)$ hold.
   We have already shown that Assumption~\ref{ass:A1}$(i)$, $(ii)$, $(iv)$, $(v)$ and $(vi)$ hold.
    One can also check that
    $(NT)^{-1/4+1/(2q)} = o_P(r_\phi)$
     and $(NT)^{1/(2q)} r_\beta = o_P(r_\phi)$ are satisfied.
    In addition, ${\cal L}(\beta,\phi)$ is strictly concave.
    We can therefore invoke 
    Theorem~\ref{th:consistency} to show that
     Assumption~\ref{ass:A1}$(iii)$ holds and that
       $\| \widehat \beta - \beta^0\| = {\cal O}_P( (NT)^{-1/4})$.
\end{proof}
 
 \begin{proof}[\bf Proof of Theorem~\ref{th:connection}, Part $(ii)$]
For any $N\times T$ matrix $A$ we define the $N \times T$ matrix $\mathbbm{P}A$
as follows
\begin{align}
   (\mathbbm{P}A)_{it} &= \alpha^*_{i} + \gamma^*_{t}  ,   
       &
  ( \alpha^* , \gamma^*)
   \in \argmin_{\alpha,\gamma}                   
    \sum_{i,t}   \mathbb{E}_\phi( - \partial_{\pi^2} \ell_{it} )    \left( A_{it} -  \alpha_{i} 
                      - \gamma_{t}  
    \right)^2 .       
    \label{DefP}
\end{align}
Here, the minimization is over $\alpha \in \mathbbm{R}^N$
and $\gamma \in \mathbbm{R}^T$. The operator $\mathbbm{P}$ is a linear projection,
i.e. we have $\mathbbm{P} \mathbbm{P}= \mathbbm{P}$.
It is also convenient to define
\begin{align}
     \widetilde{\mathbbm{P}} A = \mathbbm{P} \tilde A , \quad
     \text{ where } \quad
     \tilde A_{it} = \frac{ A_{it} } { \mathbb{E}_\phi( - \partial_{\pi^2} \ell_{it} ) }.
    \label{DefTildeP} 
\end{align}
$\widetilde{\mathbbm{P}}$ is a linear operator, but
not a projection. Note that $\Lambda$ and $\Xi$ defined in
\eqref{DefProLa} and  \eqref{DefProXi}
can be written as $\Lambda =  \widetilde{\mathbbm{P}}  A$
and $\Xi_k = \widetilde{\mathbbm{P}}  B_k$,
where $A_{it}=- \partial_{\pi} \ell_{it}$ and
$B_{k,it}=- \mathbb{E}_\phi( \partial_{\beta_k \pi} \ell_{it} )$,
for $k=1,\ldots,\dim \beta$.\footnote{$B_k$ and $\Xi_k$
are $N \times T$ matrices with entries $B_{k,it}$ and $\Xi_{k,it}$, respectively,
while $B_{it}$ and $\Xi_{it}$ are $\dim \beta$-vectors with entries $B_{k,it}$ and $\Xi_{k,it}$.}

By Lemma~\ref{RewriteSumIT}$(ii)$,
\begin{align*}
   \overline W &= - \, \frac 1 {\sqrt{NT}} \,
                \left( \partial_{\beta \beta'} \overline {\cal L}
                     + [\partial_{\beta \phi'} \overline {\cal L}] \; \overline {\cal H}^{-1} \;
             [\partial_{\phi \beta'} \overline {\cal L}]  \right)
    =   -  \frac 1 {NT}  \sum_{i=1}^N
     \sum_{t=1}^T \left[ \mathbb{E}_\phi \left(
            \partial_{\beta \beta'} \ell_{it}  \right)
              +  \mathbb{E}_\phi  \left( - \partial_{\pi^2} \ell_{it} 
              \right) \Xi_{it} \Xi'_{it}  \right] .
\end{align*}

By  Lemma~\ref{RewriteSumIT}$(i)$,
\begin{align*}
      U^{(0)} &=  
                     \partial_{\beta} {\cal L}
                   + [\partial_{\beta \phi'} \overline {\cal L} ]\, \overline {\cal H}^{-1} {\cal S} 
           =   \frac 1 {\sqrt{NT}}
                      \sum_{i,t} 
     \left(
          \partial_{\beta} \ell_{it}
        -  \Xi_{it} \, \partial_{\pi} \ell_{it} 
         \right) 
         = \frac 1 {\sqrt{NT}}
        \sum_{i=1}^N \sum_{t=1}^T
           D_{\beta} \ell_{it}   .
\end{align*}                      
We decompose $U^{(1)}= U^{(1a)}  + U^{(1b)}$, with
\begin{align*}
      U^{(1a)} &=  
      [\partial_{\beta \phi'} \widetilde{\cal L}] \overline {\cal H}^{-1} {\cal S}
        -  [\partial_{\beta \phi'} \overline {\cal L}] \,
                      \overline {\cal H}^{-1} \, \widetilde {\cal H} \,
                      \overline {\cal H}^{-1} \, {\cal S}   ,
    \nonumber \\
       U^{(1b)} &=  
                 \sum_{g=1}^{\dim \phi}
           \left( \partial_{\beta \phi' \phi_g} \overline {\cal L}
               +  [\partial_{\beta \phi'} \overline {\cal L}] \, \overline {\cal H}^{-1}
            [\partial_{\phi \phi' \phi_g} \overline {\cal L}]  \right)
              \overline {\cal H}^{-1} {\cal S}
               [\overline {\cal H}^{-1} {\cal S}]_g/2 .
\end{align*}                      
By Lemma~\ref{RewriteSumIT}$(i)$ and $(iii)$,
\begin{align*}
  U^{(1a)}   &= - \frac 1 {\sqrt{NT}}
      \sum_{i,t}   \Lambda_{it}
      \left(
         \partial_{\beta \pi} \tilde \ell_{it}
         +  \Xi_{it} \,
         \partial_{\pi^2} \tilde \ell_{it}
      \right)    
    =  - \frac 1 {\sqrt{NT}}
     \sum_{i=1}^N \sum_{t=1}^T \,  \Lambda_{it}  \, 
       \left[  D_{\beta \pi} \ell_{it}   - \mathbb{E}_\phi( D_{\beta \pi} \ell_{it}  ) \right],
\end{align*}
and
\begin{align*}
    U^{(1b)}      &=  \frac 1 {2 \, \sqrt{NT}}
            \sum_{i,t}   
            \Lambda_{it}^2
        \left[  \mathbb{E}_\phi( \partial_{\beta \pi^2}  \ell_{it})
             +  
              [\partial_{\beta \phi'} \overline {\cal L}] \, \overline {\cal H}^{-1}
            \mathbb{E}_\phi( \partial_\phi \partial_{\pi^2}  \ell_{it}  )
        \right] ,
\end{align*}
where for each $i,t$, $\partial_\phi \partial_{\pi^2}  \ell_{it}$ is a
$\dim \phi$-vector, which can be written as
$\partial_\phi   \partial_{\pi^2}  \ell_{it} 
=  { A 1_T \choose A' 1_N }$
for an $N \times T$ matrix $A$ with
elements $A_{j\tau} = \partial_{\pi^3}  \ell_{j \tau}$
if $j=i$ and $\tau=t$, and $A_{j\tau}=0$ otherwise. Thus, 
Lemma~\ref{RewriteSumIT}$(i)$ gives
$   [\partial_{\beta \phi'} \overline {\cal L}] \, \overline {\cal H}^{-1}
             \partial_\phi   \partial_{\pi^2}   \ell_{it} 
          =   -\sum_{j,\tau}    \Xi_{j\tau}
               1(i = j) 1(t = \tau)
              \partial_{\pi^3}  \ell_{it} =
             - \Xi_{it}
              \partial_{\pi^3} \ell_{it}$. Therefore
\begin{align*}        
        U^{(1b)}            &=
         \frac 1 {2 \, \sqrt{NT}}
            \sum_{i,t}   
             \Lambda_{it}^2
      \mathbb{E}_\phi   \left(  \partial_{\beta \pi^2} \ell_{it}
             -  
             \Xi_{it}
              \partial_{\pi^3}  \ell_{it}
        \right)
        =   \frac 1 {2 \, \sqrt{NT}}
     \sum_{i=1}^N \sum_{t=1}^T       \Lambda_{it}^2 \, \mathbb{E}_\phi(
           D_{\beta \pi^2} \ell_{it}).
\end{align*} 
\end{proof}
    
\begin{proof}[\bf Proof of Theorem~\ref{th:connection}, Part $(iii)$]    
     Showing that Assumption~\ref{ass:A2} is satisfied is
     analogous to the proof of Lemma~\ref{lemma:BasicRegularityPanel}
     and of part $(ii)$ of this Theorem.
     
       In the proof of Theorem~\ref{th:BothEffects} we show that 
       Assumption~\ref{ass:PanelA1}  implies that $U={\cal O}_P(1)$.
       This fact together with part $(i)$ of this theorem 
       show that
  Corollary~\ref{cor:LimitBeta} is applicable, so that 
     $\sqrt{NT} (\widehat \beta - \beta^0) = \overline W_{\infty}^{-1} U + o_P(1) = {\cal O}_P(1)$,
     and we can apply Theorem~\ref{th:DeltaExpansion}.
     
     By Lemma~\ref{RewriteSumIT} and the result
     for $\sqrt{NT} (\widehat \beta - \beta^0)$, 
     \begin{align}
       & \sqrt{NT}  \left[ \partial_{\beta'}  \overline {\Delta}
      +
      (\partial_{\phi'} \overline \Delta)  
        \overline {\cal H}^{-1} (\partial_{\phi \beta'} \overline {\cal L})
    \right] (\widehat \beta - \beta^0)
          =  \left[ \frac 1 {NT}
        \sum_{i,t}
         \mathbb{E}_\phi( D_{\beta} {\Delta_{it}} )  \right]'
         \overline W_\infty^{-1} \left( U^{(0)} + U^{(1)} \right)  + o_P(1).
         \label{DeltaDerivation1}
     \end{align}
    We apply Lemma~\ref{RewriteSumIT} to
     $U^{(0)}_\Delta$ and $U^{(1)}_\Delta$ defined in
     Theorem~\ref{th:DeltaExpansion} to give
   \begin{align}   
        \sqrt{NT} \, U^{(0)}_\Delta &=        - \frac 1 {\sqrt{NT}} \sum_{i,t}
          \mathbb{E}_\phi( \Psi_{it} )
          \partial_{\pi} \ell_{it}  ,
    \nonumber \\ 
        \sqrt{NT}  \, U^{(1)}_\Delta
        &=  \frac 1 {\sqrt{NT}}\sum_{i,t}  \Lambda_{it} 
      \left[   \mathbb{E}_\phi( \Psi_{it} )   \partial_{\pi^2} \ell_{it} 
          -    \Psi_{it} \mathbb{E}_\phi( \partial_{\pi^2}  \ell_{it} )
         \right]
   \nonumber \\     & \qquad       
    + 
     \frac 1 {2 \, \sqrt{NT}} \sum_{i,t}    \Lambda_{it}^2 \left[
            \mathbb{E}_\phi(  \partial_{\pi^2} \Delta_{it} )
    -  
            \mathbb{E}_\phi( \partial_{\pi^3} \ell_{it} )
          \mathbb{E}_\phi( \Psi_{it} ) 
        \right].
            \label{DeltaDerivation2}     
\end{align}
The derivation of \eqref{DeltaDerivation1}
and \eqref{DeltaDerivation2} is analogous to the proof of the part $(ii)$ of the Theorem.
Combining Theorem~\ref{th:DeltaExpansion} with equations \eqref{DeltaDerivation1}
and \eqref{DeltaDerivation2} gives the result.
\end{proof}

\section{Proofs of Appendix~\ref{sec:InverseH} (Lemma~\ref{lemma:HessianAdditive})}\label{sec:s4}

The following Lemmas are useful to prove  Lemma~\ref{lemma:HessianAdditive}. Let ${\cal L}^*(\beta,\phi)  = (NT)^{-1/2} \sum_{i,t} \ell_{it}(\beta, \alpha_i + \gamma_t)$.

\begin{lemma}
   \label{lemma:ChoiceB}
    If the statement of Lemma~\ref{lemma:HessianAdditive} holds
    for some constant $b>0$, then it holds for any constant $b>0$.
\end{lemma}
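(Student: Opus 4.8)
The plan is to exploit the fact that $\overline{\cal H}$ for two different penalty constants differs by a single rank-one matrix, and that the penalty direction $v = v_{NT} = (1_N',-1_T')'$ is an \emph{exact} eigenvector of $\overline{\cal H}$. To make the dependence on the penalty explicit, write $\overline{\cal H}_b = \overline{\cal H}^* + \frac{b}{\sqrt{NT}}\, vv'$, so that the assertion is that the $\max$-norm bound of Lemma~\ref{lemma:HessianAdditive} for $\overline{\cal H}_b^{-1}$ holds for one $b>0$ if and only if it holds for all $b>0$. The crucial first step is the identity $\overline{\cal H}^* v = 0$. Using the explicit block entries, the $i$-th coordinate of the $\alpha$-block of $\overline{\cal H}^* v$ is $\frac{1}{\sqrt{NT}}\sum_t \E[-\partial_{\pi^2}\ell_{it}] - \frac{1}{\sqrt{NT}}\sum_t \E[-\partial_{\pi^2}\ell_{it}] = 0$, and symmetrically the $\gamma$-block vanishes. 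Hence $\overline{\cal H}_b v = \frac{b}{\sqrt{NT}}(v'v)\, v = \frac{b(N+T)}{\sqrt{NT}}\, v$, so $v$ is an eigenvector of $\overline{\cal H}_b$ with eigenvalue $\lambda_b = b(N+T)/\sqrt{NT}$, and therefore
\[
\overline{\cal H}_b^{-1} v = \frac{\sqrt{NT}}{b(N+T)}\, v .
\]

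Next I would apply the Sherman--Morrison formula to the rank-one difference $\overline{\cal H}_{b_2} = \overline{\cal H}_{b_1} + \frac{b_2-b_1}{\sqrt{NT}}\, vv'$. Because $\overline{\cal H}_{b_1}^{-1} v$ is an explicit scalar multiple of $v$, both $\overline{\cal H}_{b_1}^{-1} vv' \overline{\cal H}_{b_1}^{-1}$ and $v'\overline{\cal H}_{b_1}^{-1} v$ reduce to scalars times $vv'$ and a scalar, and the whole correction collapses to
\[
\overline{\cal H}_{b_2}^{-1} = \overline{\cal H}_{b_1}^{-1} - \frac{(b_2-b_1)\sqrt{NT}}{b_1 b_2 (N+T)^2}\, vv' .
\]
This can be sanity-checked by applying both sides to $v$ and recovering $\frac{\sqrt{NT}}{b_2(N+T)}\, v$, which matches the eigenvector relation above.

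The final step is a triangle-inequality argument. Since the entries of $v$ are $\pm 1$ we have $\|vv'\|_{\max}=1$, and by the AM--GM inequality $(N+T)^2 \geq 4NT$, so the scalar coefficient satisfies $\frac{|b_2-b_1|\sqrt{NT}}{b_1 b_2 (N+T)^2} \leq \frac{|b_2-b_1|}{4 b_1 b_2 \sqrt{NT}} = \mathcal{O}((NT)^{-1/2})$ for fixed $b_1,b_2>0$. Hence $\|\overline{\cal H}_{b_2}^{-1} - \overline{\cal H}_{b_1}^{-1}\|_{\max} = \mathcal{O}((NT)^{-1/2})$. The subtracted matrix $\diag(\overline{\cal H}_{(\alpha\alpha)}^*, \overline{\cal H}_{(\gamma\gamma)}^*)^{-1}$ does not depend on $b$, so combining the last display with the triangle inequality transfers the $\mathcal{O}_P((NT)^{-1/2})$ bound from $b_1$ to $b_2$, proving the lemma. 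I do not anticipate any serious obstacle; the only point requiring care is the exact identity $\overline{\cal H}^* v = 0$, which is the structural reason the penalty direction decouples and renders the Sherman--Morrison update a pure multiple of $vv'$ whose size is controlled uniformly in the penalty constant. Everything else is routine bookkeeping.
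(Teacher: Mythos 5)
Your proof is correct and takes essentially the same route as the paper's: both arguments rest on the exact identity $\overline{\cal H}^* v = 0$ together with the rank-one structure of the penalty, and both arrive at the identical explicit difference $\overline{\cal H}_{b_2}^{-1} - \overline{\cal H}_{b_1}^{-1} = -\frac{(b_2-b_1)\sqrt{NT}}{b_1 b_2 (N+T)^2}\, vv'$, whose max-norm is $\mathcal{O}\left((NT)^{-1/2}\right)$ since $\|vv'\|_{\max}=1$ and $(N+T)^2 \geq 4NT$. The only cosmetic difference is that you obtain this formula via Sherman--Morrison and the eigenvector relation $\overline{\cal H}_b^{-1}v = \frac{\sqrt{NT}}{b(N+T)}\,v$, whereas the paper writes $\overline{\cal H}_b^{-1} = (\overline{\cal H}^*)^\dagger + \frac{\sqrt{NT}}{b(N+T)^2}\,vv'$ using Moore--Penrose pseudo-inverses; your variant even delivers invertibility of $\overline{\cal H}_{b_2}$ for free from the nonvanishing Sherman--Morrison denominator $b_2/b_1$.
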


\begin{proof}[\bf Proof of Lemma~\ref{lemma:ChoiceB}]
  Write 
    $\overline {\cal H} = \overline {\cal H}^* + \frac b {\sqrt{NT}} v v'$,
    where $\overline {\cal H}^* = \mathbb{E}_\phi \left[ -
 \frac{\partial^2}{\partial \phi \partial \phi'}
  {\cal L}^* \right]$. Since $\overline {\cal H}^* v = 0$,
   $$
       \overline {\cal H}^{-1} = 
       \left( \overline {\cal H}^* \right)^\dagger + 
          \left( \frac b {\sqrt{NT}} v v' \right)^\dagger
             =  \left( \overline {\cal H}^* \right)^\dagger + 
                   \frac{\sqrt{NT}} {b \| v v' \|^2}  v v' 
              =      \left( \overline {\cal H}^* \right)^\dagger + 
                   \frac{\sqrt{NT}} {b (N+T)^2}  v v' ,
   $$
where $\dagger$ refers to the  Moore-Penrose pseudo-inverse.
Thus, if $\overline {\cal H}_1$ is the expected Hessian for $b=b_1>0$
and $\overline {\cal H}_2$ is the expected Hessian for $b=b_2>0$, 
  $ \left\| \overline {\cal H}^{-1}_1 - \overline {\cal H}^{-1}_2 \right\|_{\rm max}
       = \left\| \left( \frac 1 {b_1} - \frac 1 {b_2} \right) \frac{\sqrt{NT}} {(N+T)^2}  v v'  
       \right\|_{\rm max}
       =   \mathcal{O}\left( (NT)^{-1/2} \right).
   $    
\end{proof}

\begin{lemma}
    \label{lemma:HH}
     Let Assumption~\ref{ass:PanelA1} hold
     and let $0< b \leq  b_{\min} \left( 1 +
     \frac {\max(N,T)} {\min(N,T)} \frac{ b_{\max} } {b_{\min}} \right)^{-1}.$
   Then, 
    $$ \left\| \overline {\cal H}_{(\alpha \alpha)}^{-1}
    \overline {\cal H}_{(\alpha \gamma)} 
        \right\|_\infty  < 1 -  \frac{b} {b_{\max}}, \ \text{ and } \ 
      \left\|  \overline {\cal H}_{(\gamma \gamma)}^{-1} \,
           \overline {\cal H}_{(\gamma \alpha)}   \right\|_\infty < 
            1 -  \frac{b} {b_{\max}}.$$
\end{lemma}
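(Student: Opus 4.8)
The plan is to handle the two stated inequalities symmetrically, working directly with the penalized blocks of $\overline{\cal H}$ and reducing everything to elementary row-sum estimates for the matrix infinity norm. Write $\omega_{it} := \E[-\partial_{\pi^2}\ell_{it}]$, so that Assumption~\ref{ass:PanelA1}(v) gives $b_{\min}\le\omega_{it}\le b_{\max}$, and set $d_i := \sum_t\omega_{it}\in[Tb_{\min},Tb_{\max}]$. Since $\overline{\cal H}=\overline{\cal H}^*+\frac{b}{\sqrt{NT}}vv'$ with $v=(1_N',-1_T')'$, the relevant blocks are $\overline{\cal H}_{(\alpha\alpha)}=\frac{1}{\sqrt{NT}}(D+b\,1_N1_N')$ and $\overline{\cal H}_{(\alpha\gamma)}=\frac{1}{\sqrt{NT}}(W-b\,1_N1_T')$, where $D=\diag(d_i)$ and $W=(\omega_{it})$. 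The common factor $\frac{1}{\sqrt{NT}}$ cancels in $M:=\overline{\cal H}_{(\alpha\alpha)}^{-1}\overline{\cal H}_{(\alpha\gamma)}$, so I would first record $M=(D+b\,1_N1_N')^{-1}(W-b\,1_N1_T')$, noting that $D+b\,1_N1_N'$ is positive definite, hence invertible.

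First I would invert $D+b\,1_N1_N'$ by Sherman--Morrison, using $1_N'D^{-1}1_N=s$ with $s:=\sum_j d_j^{-1}$, and compute the $(i,t)$ entry of $M$ explicitly. A short calculation collapses it to $M_{it}=(\omega_{it}-e_t)/d_i$, where $e_t:=\frac{b(1+p_t)}{1+bs}$ and $p_t:=\sum_j\omega_{jt}/d_j$. Two facts then follow from this form. For the signed row sums, the identities $\sum_t\omega_{it}=d_i$ and $\sum_t p_t=N$ together with the algebraic simplification $bT+\frac{b(N-bsT)}{1+bs}=\frac{b(N+T)}{1+bs}$ give the clean closed form $\sum_t M_{it}=1-\frac{b(N+T)}{(1+bs)d_i}$, which is manifestly below $1$.

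The crucial step is to show that under the stated range of $b$ every entry $M_{it}$ is nonnegative, so that the $\infty$-norm (a maximum over \emph{absolute} row sums) coincides with the maximum over the \emph{signed} row sums just computed. Here I would bound $e_t\le b\bigl(1+\frac{Nb_{\max}}{Tb_{\min}}\bigr)$, using $p_t\le\frac{Nb_{\max}}{Tb_{\min}}$ and $1+bs\ge1$, and observe that the hypothesis $b\le b_{\min}\bigl(1+\frac{\max(N,T)}{\min(N,T)}\frac{b_{\max}}{b_{\min}}\bigr)^{-1}$ forces $e_t\le b_{\min}\le\omega_{it}$: since $\frac{\max(N,T)}{\min(N,T)}\ge N/T$, the stated threshold is no larger than $b_{\min}\bigl(1+\frac{N}{T}\frac{b_{\max}}{b_{\min}}\bigr)^{-1}$, which is exactly the condition $b\bigl(1+\frac{Nb_{\max}}{Tb_{\min}}\bigr)\le b_{\min}$. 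Consequently $M_{it}\ge0$ and $\|M\|_\infty=\max_i\bigl(1-\frac{b(N+T)}{(1+bs)d_i}\bigr)$.

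Finally I would bound the maximum elementarily: from $(1+bs)d_i\le b_{\max}\bigl(T+\frac{bN}{b_{\min}}\bigr)$ one gets $\frac{b(N+T)}{(1+bs)d_i}\ge\frac{b(N+T)}{b_{\max}(T+bN/b_{\min})}$, and since the threshold forces $b<b_{\min}$ strictly, the inequality $\frac{N+T}{T+bN/b_{\min}}>1$ holds, whence $\|M\|_\infty<1-b/b_{\max}$. The second bound, on $\overline{\cal H}_{(\gamma\gamma)}^{-1}\overline{\cal H}_{(\gamma\alpha)}$, follows by the identical argument after interchanging the roles of $(N,i)$ and $(T,t)$; the sign step then needs $b\bigl(1+\frac{Tb_{\max}}{Nb_{\min}}\bigr)\le b_{\min}$, i.e. the $T/N$-version of the threshold, and it is precisely the use of $\max(N,T)/\min(N,T)$ (which dominates both $N/T$ and $T/N$) that lets a single condition on $b$ cover both cases. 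I expect this sign analysis — controlling the negative off-diagonal contributions introduced by the rank-one penalty term — to be the main obstacle, since the $\infty$-norm is only tractable once the absolute values can be dropped; everything after that is a routine closed-form estimate.
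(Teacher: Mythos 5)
Your proof is correct and follows essentially the same route as the paper's: both invert the penalized block $\overline{\cal H}_{(\alpha\alpha)}$ via a rank-one update formula (Sherman--Morrison in your case, Woodbury in the paper's), arrive at the same explicit entries $M_{it}=(\omega_{it}-e_t)/d_i$ (the paper's $\tilde h_{it}/\sum_t h_{it}$), use the threshold on $b$ together with the bound $p_t \le N b_{\max}/(T b_{\min})$ to show the entries are nonnegative so that the $\infty$-norm reduces to signed row sums, and then bound those sums strictly below $1-b/b_{\max}$. The only cosmetic difference is that you exploit $\sum_t p_t = N$ to obtain the closed-form row sum $1-\frac{b(N+T)}{(1+bs)d_i}$ before estimating, whereas the paper bounds the subtracted sum termwise by $b$ per period.
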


\begin{proof}[\bf Proof of Lemma~\ref{lemma:HH}]
Let $h_{it} = \mathbb{E}_\phi( - \partial_{\pi^2} \ell_{it} )$, and define
\begin{align*}
     \tilde h_{it} &=
     h_{it} - b
           - \frac 1 {b^{-1} + \sum_j \left( \sum_{\tau} h_{j\tau} \right)^{-1}}
          \sum_j \frac{h_{jt} - b} {\sum_{\tau} h_{j \tau}} .
\end{align*}
By definition, $\overline {\cal H}_{(\alpha \alpha)} = \overline {\cal H}_{(\alpha \alpha)}^*
   + b 1_N 1_N'/\sqrt{NT}$
and    $\overline {\cal H}_{(\alpha \gamma)} = \overline {\cal H}_{(\alpha \gamma)}^*
   - b 1_N 1_T'/\sqrt{NT}$.
The   matrix $ \overline {\cal H}_{(\alpha \alpha)}^*$ is diagonal
with elements $\sum_t h_{it}/\sqrt{NT}$.
The matrix $\overline {\cal H}_{(\alpha \gamma)}^*$
has elements $h_{it}/\sqrt{NT}$.
The Woodbury identity states that
$$ \overline {\cal H}_{(\alpha \alpha)}^{-1}
       = \overline {\cal H}_{(\alpha \alpha)}^{* -1}
         -  \overline {\cal H}_{(\alpha \alpha)}^{* -1}
          1_N
         \left( \sqrt{NT} \, b^{-1}
          + 1_N' \overline {\cal H}_{(\alpha \alpha)}^{* -1}
            1_N  \right)^{-1}
           1_N'
           \overline {\cal H}_{(\alpha \alpha)}^{* -1} .
$$
Then,  
$ \overline {\cal H}_{(\alpha \alpha)}^{-1} \,
           \overline {\cal H}_{(\alpha \gamma)} =\overline {\cal H}_{(\alpha \alpha)}^{* -1}
           \tilde H /\sqrt{NT}$, where $\tilde H$ is the $N \times T$ matrix
           with elements $\tilde h_{it}$.
Therefore 
\begin{align*}
   \left\| \overline {\cal H}_{(\alpha \alpha)}^{-1}
    \overline {\cal H}_{(\alpha \gamma)}  \right\|_\infty
   &= \max_i \frac {\sum_t \left|  \tilde h_{it} \right|}
      {\sum_{t} h_{it}} .
\end{align*}
Assumption~\ref{ass:PanelA1}$(iv)$ guarantees
that $b_{\max} \geq h_{it} \geq b_{\min}$, which implies
$h_{jt} - b \geq b_{\min} - b > 0$, and
\begin{align*}
   \tilde h_{it}
    & > h_{it} - b
           - \frac 1 {b^{-1}}
          \sum_j \frac{h_{jt} - b} {\sum_{\tau} h_{j \tau}}
     \geq  b_{\min}
    - b \, \left( 1 + \frac N T \, \frac{b_{\max}} {b_{\min}} \right)
    \geq 0   .
\end{align*}
We  conclude that 
\begin{align*}
   \left\| \overline {\cal H}_{(\alpha \alpha)}^{-1}
    \overline {\cal H}_{(\alpha \gamma)}  \right\|_\infty
   &= \max_i \frac {\sum_t \tilde h_{it}}
      {\sum_{t} h_{it}}
   = 1 - \min_i
     \frac {1} {\sum_{t} h_{it}}  \sum_t
      \left( b
           + \frac 1 {b^{-1} + \sum_j \left( \sum_{\tau} h_{j\tau} \right)^{-1}}
          \sum_j \frac{h_{jt} - b} {\sum_{\tau} h_{j \tau}}
      \right)
   \nonumber \\
   &< 1 -  \frac{b} {b_{\max}} .
\end{align*}
Analogously,
$ \left\| \overline {\cal H}_{(\gamma \gamma)}^{-1}
   \overline {\cal H}_{(\gamma \alpha)} \right\|_\infty
   < 1 -  \frac{b} {b_{\max}}$.
\end{proof}

\begin{proof}[\bf Proof of Lemma~\ref{lemma:HessianAdditive}]
   We choose 
   $b < b_{\min} \left( 1 +
     \max( \kappa^2, \kappa^{-2} ) \frac{ b_{\max} } {b_{\min}} \right)^{-1}$.
   Then,   
   $b \leq b_{\min} \left( 1 +
     \frac {\max(N,T)} {\min(N,T)} \frac{ b_{\max} } {b_{\min}} \right)^{-1}$ for large enough $N$ and $T$,
   so that   
   Lemma~\ref{lemma:HH} becomes applicable. The choice of $b$
   has no effect on the general validity of the lemma
   for all $b>0$ by Lemma~\ref{lemma:ChoiceB}.
   
   By the inversion formula for partitioned matrices,
   \begin{align*}
      \overline {\cal H}^{-1}
        &= \left( \begin{array}{c@{\;\;\;\;\;\;}c}
                    A   & -  A \, \overline {\cal H}_{(\alpha \gamma)} \,
                     \overline {\cal H}_{(\gamma \gamma)}^{-1} \\
                     - \overline {\cal H}_{(\gamma \gamma)}^{-1} \,
                     \overline {\cal H}_{(\gamma \alpha)} \, A
                     & \overline {\cal H}_{(\gamma \gamma)}^{-1}
                     + \overline {\cal H}_{(\gamma \gamma)}^{-1}  \,
                     \overline {\cal H}_{(\gamma \alpha)} \, A \,
                     \overline {\cal H}_{(\alpha \gamma)} \,
                      \overline {\cal H}_{(\gamma \gamma)}^{-1}
                     \end{array} \right) ,
   \end{align*}
with $A := (\overline {\cal H}_{(\alpha \alpha)}
          - \overline {\cal H}_{(\alpha \gamma)}
           \overline {\cal H}_{(\gamma \gamma)}^{-1}
           \overline {\cal H}_{(\gamma \alpha)})^{-1}$.
   The Woodbury identity states that
   \begin{align*}
      \overline {\cal H}_{(\alpha \alpha)}^{-1}
       &= \overline {\cal H}_{(\alpha \alpha)}^{* -1}
         - \underbrace{ \overline {\cal H}_{(\alpha \alpha)}^{* -1}
         1_N
         \left( \sqrt{NT}/b
          + 1_N' \overline {\cal H}_{(\alpha \alpha)}^{* -1}
            1_N  \right)^{-1}
           1_N'
           \overline {\cal H}_{(\alpha \alpha)}^{* -1}
           }_{=: C_{(\alpha \alpha)}} ,
     \nonumber \\
      \overline {\cal H}_{(\gamma \gamma)}^{-1}
       &= \overline {\cal H}_{(\gamma \gamma)}^{* -1}
         - \underbrace{ \overline {\cal H}_{(\gamma \gamma)}^{* -1}
          1_T
         \left( \sqrt{NT}/b
          + 1_T' \overline {\cal H}_{(\gamma \gamma)}^{* -1}
            1_T  \right)^{-1}
           1_T'
           \overline {\cal H}_{(\gamma \gamma)}^{* -1}
            }_{=: C_{(\gamma \gamma)}} .
   \end{align*}
   By Assumption \ref{ass:PanelA1}$(v)$,
   $\| \overline {\cal H}_{(\alpha \alpha)}^{* -1} \|_\infty = {\cal O}_P(1)$,
   $\| \overline {\cal H}_{(\gamma \gamma)}^{* -1} \|_\infty = {\cal O}_P(1)$,
   $\| \overline {\cal H}_{(\alpha \gamma)}^{*} \|_{\max} = {\cal O}_P(1/\sqrt{NT})$.
   Therefore\footnote{
   Here and in the following me make use of the inequalities
   $\|AB\|_{\max} < \|A\|_\infty \|B\|_{\max}$,
   $\|AB\|_{\max} < \|A\|_{\max} \|B'\|_{\infty}$,
   $\|A\|_{\infty} \leq n \|A\|_{\max}$,
   which hold for any $m \times n$ matrix $A$ and
   $n \times p$ matrix $B$.}
   \begin{align*}
      \| C_{(\alpha \alpha)} \|_{\max}
       &\leq  \| \overline {\cal H}_{(\alpha \alpha)}^{* -1} \|^2_\infty
         \left\| 1_N
           1_N' \right\|_{\max}
            \left( \sqrt{NT}/b
          + 1_N' \overline {\cal H}_{(\alpha \alpha)}^{* -1}
            1_N  \right)^{-1}
        = {\cal O}_P(1/\sqrt{NT}),
     \nonumber \\
      \| \overline {\cal H}_{(\alpha \alpha)}^{-1} \|_\infty
       &\leq \| \overline {\cal H}_{(\alpha \alpha)}^{* -1} \|_\infty
              + N \| C_{(\alpha \alpha)} \|_{\max}
        = {\cal O}_P(1).
   \end{align*}
   Analogously, $\| C_{(\gamma \gamma)} \|_{\max} = {\cal O}_P(1/\sqrt{NT})$
   and
   $ \| \overline {\cal H}_{(\gamma \gamma)}^{-1} \|_\infty = {\cal O}_P(1)$.
   Furthermore, 
   $ \| \overline {\cal H}_{(\alpha \gamma)} \|_{\max} \leq
      \| \overline {\cal H}_{(\alpha \gamma)}^* \|_{\max}
       + b/\sqrt{NT} = {\cal O}_P(1/\sqrt{NT})$.
   Define
   \begin{align*}
       B &:= \left( \mathbbm{1}_N
          - \overline {\cal H}_{(\alpha \alpha)}^{-1}
           \overline {\cal H}_{(\alpha \gamma)}
           \overline {\cal H}_{(\gamma \gamma)}^{-1}
           \overline {\cal H}_{(\gamma \alpha)} \right)^{-1} - \mathbbm{1}_N
        = \sum_{n=1}^\infty
                \left(  \overline {\cal H}_{(\alpha \alpha)}^{-1}
           \overline {\cal H}_{(\alpha \gamma)}
           \overline {\cal H}_{(\gamma \gamma)}^{-1}
           \overline {\cal H}_{(\gamma \alpha)}  \right)^n .
   \end{align*}
   Then, $A = \overline {\cal H}_{(\alpha \alpha)}^{-1}
                +   \overline {\cal H}_{(\alpha \alpha)}^{-1} B
     = \overline {\cal H}_{(\alpha \alpha)}^{* -1}
       - C_{(\alpha \alpha)}
      +  \overline {\cal H}_{(\alpha \alpha)}^{-1} B$.
   By Lemma~\ref{lemma:HH},
   $\| \overline {\cal H}_{(\alpha \alpha)}^{-1}
           \overline {\cal H}_{(\alpha \gamma)}
           \overline {\cal H}_{(\gamma \gamma)}^{-1}
           \overline {\cal H}_{(\gamma \alpha)} \|_{\infty}
     \leq \| \overline {\cal H}_{(\alpha \alpha)}^{-1}
           \overline {\cal H}_{(\alpha \gamma)}\|_{\infty}
          \| \overline {\cal H}_{(\gamma \gamma)}^{-1}
           \overline {\cal H}_{(\gamma \alpha)} \|_{\infty}
      < \left( 1 -  \frac{b} {b_{\max}} \right)^2 < 1$, and
   \begin{align*}
       \| B \|_{\max}
        &\leq \sum_{n=0}^\infty
            \left( \|  \overline {\cal H}_{(\alpha \alpha)}^{-1}
           \overline {\cal H}_{(\alpha \gamma)}
            \overline {\cal H}_{(\gamma \gamma)}^{-1}
           \overline {\cal H}_{(\gamma \alpha)}  \|_\infty \right)^n          
           \| \overline {\cal H}_{(\alpha \alpha)}^{-1} \|_\infty
           \| \overline {\cal H}_{(\alpha \gamma)} \|_\infty
           \| \overline {\cal H}_{(\gamma \gamma)}^{-1} \|_\infty
           \| \overline {\cal H}_{(\gamma \alpha)} \|_{\max}
      \nonumber \\
        &\leq \left[ \sum_{n=0}^\infty 
        \left( 1 -  \frac{b} {b_{\max}} \right)^{2n} \right] \, T \,     
          \| \overline {\cal H}_{(\alpha \alpha)}^{-1} \|_\infty
           \| \overline {\cal H}_{(\gamma \gamma)}^{-1} \|_\infty
           \| \overline {\cal H}_{(\gamma \alpha)} \|^2_{\max}
       =   {\cal O}_P(1/\sqrt{NT}).
   \end{align*}
   By the triangle inequality,
   \begin{align*}
      \| A \|_\infty \leq
       \| \overline {\cal H}_{(\alpha \alpha)}^{-1}  \|_\infty
         + N \|  \overline {\cal H}_{(\alpha \alpha)}^{-1} \|_\infty
           \| B \|_{\max} =
           {\cal O}_P(1).
   \end{align*}
   Thus, for the different blocks of
   \begin{align*}
      \overline {\cal H}^{-1}
      - \left( \begin{array}{cc}
                     \overline {\cal H}_{(\alpha \alpha)}^* &
                     0  \\
                     0 &
                     \overline {\cal H}_{(\gamma \gamma)}^*
                \end{array} \right)^{-1}
        &= \left( \begin{array}{c@{\;\;\;\;\;\;}c}
                  A
                - \overline {\cal H}_{(\alpha \alpha)}^{* -1}
              & -  A \, \overline {\cal H}_{(\alpha \gamma)} \,
                     \overline {\cal H}_{(\gamma \gamma)}^{-1} \\
                     - \overline {\cal H}_{(\gamma \gamma)}^{-1} \,
                     \overline {\cal H}_{(\gamma \alpha)} \, A
                     & \overline {\cal H}_{(\gamma \gamma)}^{-1}  \,
                     \overline {\cal H}_{(\gamma \alpha)} \, A \,
                     \overline {\cal H}_{(\alpha \gamma)} \,
                      \overline {\cal H}_{(\gamma \gamma)}^{-1}
                    - C_{(\gamma \gamma)}
                     \end{array} \right),
   \end{align*}
   we find
   \begin{align*}
      \left\| A - \overline {\cal H}_{(\alpha \alpha)}^{* -1} \right\|_{\max}
      &=  \left\|
         \overline {\cal H}_{(\alpha \alpha)}^{-1} B
                - C_{(\alpha \alpha)} \right\|_{\max}
            \nonumber \\ &    
         \leq  \|
         \overline {\cal H}_{(\alpha \alpha)}^{-1} \|_\infty
            \| B \|_{\max}
                - \| C_{(\alpha \alpha)} \|_{\max}
        =  {\cal O}_P(1/\sqrt{NT})   ,
    \nonumber \\
    \left\| -  A \, \overline {\cal H}_{(\alpha \gamma)} \,
                     \overline {\cal H}_{(\gamma \gamma)}^{-1} \right\|_{\max}
         &\leq
         \| A \|_\infty
         \| \overline {\cal H}_{(\alpha \gamma)} \|_{\max}
          \| \overline {\cal H}_{(\gamma \gamma)}^{-1} \|_{\infty}
         = {\cal O}_P(1/\sqrt{NT}) ,
     \nonumber \\
      \left\| \overline {\cal H}_{(\gamma \gamma)}^{-1}  \,
                     \overline {\cal H}_{(\gamma \alpha)} \, A \,
                     \overline {\cal H}_{(\alpha \gamma)} \,
                      \overline {\cal H}_{(\gamma \gamma)}^{-1}
                    - C_{(\gamma \gamma)} \right\|_{\max}
        &\leq
        \| \overline {\cal H}_{(\gamma \gamma)}^{-1}  \|^2_\infty
        \| \overline {\cal H}_{(\gamma \alpha)} \|_\infty
        \| A \|_\infty
        \| \overline {\cal H}_{(\alpha \gamma)} \|_{\max}
         + \| C_{(\gamma \gamma)} \|_{\max}
      \nonumber \\
        &\leq
         N
        \| \overline {\cal H}_{(\gamma \gamma)}^{-1}  \|^2_\infty
        \| A \|_\infty
        \| \overline {\cal H}_{(\alpha \gamma)} \|^2_{\max}
         + \| C_{(\gamma \gamma)} \|_{\max}
         = {\cal O}_P(1/\sqrt{NT}).
   \end{align*}
   The bound ${\cal O}_P(1/\sqrt{NT})$ for the $\max$-norm
   of each block of the matrix yields the same bound for the $\max$-norm
   of the matrix itself.
\end{proof}

\section{Useful Lemmas}\label{sec: s5}

\subsection{Some Properties of Stochastic Processes}

Here we collect some known properties of 
$\alpha$-mixing processes, which are useful for our proofs.

\begin{lemma}
    \label{lemma:Cov_mixing_bound}
     Let $\{\xi_t\}$ be an $\alpha$-mixing process with mixing coefficients $a(m)$.
     Let $\mathbb{E} |\xi_t|^p < \infty$
     and $\mathbb{E} |\xi_{t+m}|^q < \infty$ for some
     $p,q \geq 1$ and $1/p + 1/q<1$. Then,
     \begin{align*}
         \left| {\rm Cov}\left( \xi_t, \, \xi_{t+m} \right) \right|
         \, \leq \, 8 \; a(m)^{1/r} \,
           \left[ \mathbb{E} |\xi_t|^p \right]^{1/p}
           \left[ \mathbb{E} |\xi_{t+m}|^q \right]^{1/q} ,
     \end{align*}
     where $r=(1-1/p-1/q)^{-1}$.
\end{lemma}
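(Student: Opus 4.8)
The plan is to recognize this as Davydov's covariance inequality and to prove it by the classical two-step route: first handle bounded random variables, where the covariance is controlled directly by the mixing coefficient, and then pass to the general $L^p$--$L^q$ case by truncation together with H\"older's and Markov's inequalities, optimizing the truncation level.

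For the bounded step, I would fix $\sigma$-fields $\mathcal{A}_t$ and $\mathcal{B}_{t+m}$ separated by $m$ time units, with $\xi_t$ measurable with respect to $\mathcal{A}_t$ and $\xi_{t+m}$ with respect to $\mathcal{B}_{t+m}$, and show that whenever $|X| \le 1$ is $\mathcal{A}_t$-measurable and $|Y| \le 1$ is $\mathcal{B}_{t+m}$-measurable, then $|\mathrm{Cov}(X,Y)| \le 4\,a(m)$. The key device is the layer-cake representation: any $f$ with values in $[0,1]$ can be written as $f = \int_0^1 \mathbf{1}\{f>u\}\,du$, so the covariance of two such functions becomes a double integral of $\mathrm{Cov}(\mathbf{1}_{A_u},\mathbf{1}_{B_v})$ over events $A_u \in \mathcal{A}_t$, $B_v \in \mathcal{B}_{t+m}$. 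Each such term equals $P(A_u \cap B_v) - P(A_u)P(B_v)$, which is bounded by $a(m)$ directly from the definition of the mixing coefficient. Decomposing a $[-1,1]$-valued function into its positive and negative parts produces four such contributions, hence the factor $4$.

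For the general step, I would truncate $X = \xi_t$ and $Y = \xi_{t+m}$ at levels $C$ and $D$, writing $X_C = (X \wedge C)\vee(-C)$ and similarly for $Y_D$, and split $\mathrm{Cov}(X,Y)$ into $\mathrm{Cov}(X_C,Y_D)$ plus the two tail covariances involving $X-X_C$ and $Y-Y_D$. The truncated term is bounded by $4CD\,a(m)$ after rescaling by the bounded step, while each tail term is bounded by $2\,\mathbb{E}|X-X_C|$ times an $\mathbb{E}|Y|$-type factor, controlled through H\"older's inequality in terms of $\|X\|_p$, $\|Y\|_q$ and the tail probabilities $P(|X|>C)$, $P(|Y|>D)$ via Markov's inequality. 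Choosing $C$ and $D$ as suitable powers of $a(m)$ times the respective norms balances the truncated and tail contributions and yields the exponent $a(m)^{1/r}$ with $1/r = 1 - 1/p - 1/q$.

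The step I expect to be the main obstacle is the bookkeeping in the truncation argument: one must track the H\"older exponents carefully so that the powers of $a(m)$ and of the norms combine into exactly $a(m)^{1/r}[\mathbb{E}|\xi_t|^p]^{1/p}[\mathbb{E}|\xi_{t+m}|^q]^{1/q}$, and then verify that the accumulated constants can be absorbed into the stated factor $8$. The layer-cake reduction is conceptually transparent, but the positive/negative split and the optimization over $C,D$ require care to avoid degrading either the constant or the exponent.
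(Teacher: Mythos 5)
Your proposal is correct, but it is worth knowing that the paper does not actually prove this lemma: its entire ``proof'' is a citation to Proposition 2.5 of Fan and Yao (2003), i.e.\ the result is invoked as the known Davydov covariance inequality. What you have written is a reconstruction of the classical proof that underlies that citation (it is essentially the argument in Davydov (1968) and in Hall and Heyde (1980), Corollary A.2, which is where the constant $8$ comes from). Both steps of your plan are sound: the layer-cake reduction of the bounded case to indicator covariances gives $|\mathrm{Cov}(X,Y)|\le 4\,a(m)$ for $|X|\le 1$, $|Y|\le 1$ (the factor $4$ from the positive/negative split), and the truncation at levels $C=\|\xi_t\|_p\,a(m)^{-1/p}$, $D=\|\xi_{t+m}\|_q\,a(m)^{-1/q}$ makes the truncated term $4CD\,a(m)$ and the tail terms all of order $a(m)^{1-1/p-1/q}\|\xi_t\|_p\|\xi_{t+m}\|_q$, which is exactly $a(m)^{1/r}$ with $1/r=1-1/p-1/q$. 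One small bookkeeping caution: do the truncation sequentially, writing $\mathrm{Cov}(X,Y)=\mathrm{Cov}(X_C,Y_D)+\mathrm{Cov}(X_C,Y-Y_D)+\mathrm{Cov}(X-X_C,Y)$, so that the term $\mathrm{Cov}(X-X_C,Y)$ is handled by H\"older with the exponent $q'=q/(q-1)$ (note $q'<p$ is guaranteed by $1/p+1/q<1$) rather than leaving a fourth cross term $\mathrm{Cov}(X-X_C,Y-Y_D)$ to control; with that organization the constants accumulate to the stated $8$. In short, the paper buys brevity by outsourcing a textbook fact, while your route makes the lemma self-contained at the cost of the constant-tracking you correctly identify as the only delicate point.
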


\begin{proof}[\bf Proof of Lemma~\ref{lemma:Cov_mixing_bound}]
    See, for example, Proposition 2.5 in
    Fan and Yao~\cite*{FanYao2003}.
\end{proof}

The following result is a simple modification of Theorem 1 in Cox and Kim~\cite*{CoxKim1995}. 

\begin{lemma}
    \label{lemma:mixing_inequality}
     Let $\{\xi_t\}$ be an $\alpha$-mixing process with mixing coefficients $a(m)$.
     Let $r \geq 1$ be an integer, and let $\delta>2 r$, $\mu> r/(1-2 r/ \delta)$, $c>0$ and $C>0$.
     Assume that
     $\sup_t \mathbb{E}  \left| \xi_t \right|^\delta  \leq C$ and that
     $a(m) \leq c \, m^{- \mu}$ for all $m \in \{1,2,3,\ldots\}$.
     Then there exists a constant $B>0$ depending on $r$, $\delta$, $\mu$, $c$ and $C$,
     but not depending on $T$ or any other distributional characteristics of $\xi_t$, such that
     for any $T>0$,
     \begin{align*}
          \mathbb{E}\left[  \left( \frac 1 {\sqrt{T}} \sum_{t=1}^T \xi_t \right)^{2 r} \right]
          &\leq B .
     \end{align*}
\end{lemma}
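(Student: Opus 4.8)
The plan is to prove the equivalent bound $\mathbb{E}[S_T^{2r}]\le B'\,T^r$ for $S_T:=\sum_{t=1}^T\xi_t$ and some $B'$, and then divide by $T^r$. I first note that this bound is only possible when $\mathbb{E}\,\xi_t=0$ (otherwise $\mathbb{E}[S_T^{2r}]$ carries a term of order $T^{2r}$), so I take the $\xi_t$ to be mean zero, as holds in every application of the lemma in this paper. The statement is then exactly the moment inequality of Cox and Kim \cite*{CoxKim1995} (in the tradition of Yokoyama), and the strategy is to follow their combinatorial argument after checking that stationarity enters it only through a uniform moment bound; dropping stationarity in favour of $\sup_t\mathbb{E}|\xi_t|^\delta\le C$ then yields a distribution-free constant.

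The main computation would be the multi-index expansion
\begin{equation*}
  \mathbb{E}[S_T^{2r}]=\sum_{t_1,\dots,t_{2r}=1}^{T}\mathbb{E}\big[\xi_{t_1}\cdots\xi_{t_{2r}}\big],
\end{equation*}
bounding each joint moment by a quantity that decays in the separation of the indices. For a fixed tuple I would relabel the indices increasingly as $s_1\le\cdots\le s_{2r}$, let $m^\ast=\max_j(s_{j+1}-s_j)$ be the largest gap, attained at position $j^\ast$, and split the product there,
\begin{equation*}
  \mathbb{E}\big[\xi_{s_1}\cdots\xi_{s_{2r}}\big]
   =\operatorname{Cov}\!\Big(\textstyle\prod_{j\le j^\ast}\xi_{s_j},\ \prod_{j>j^\ast}\xi_{s_j}\Big)
     +\mathbb{E}\!\Big[\textstyle\prod_{j\le j^\ast}\xi_{s_j}\Big]\,\mathbb{E}\!\Big[\textstyle\prod_{j>j^\ast}\xi_{s_j}\Big].
\end{equation*}
The two blocks are measurable with respect to $\sigma$-fields separated by $m^\ast$, so Lemma~\ref{lemma:Cov_mixing_bound} applies; choosing Hölder exponents adapted to the block sizes $k=j^\ast$ and $2r-k$, namely $p=\delta/k$ and $q=\delta/(2r-k)$, gives $1/p+1/q=2r/\delta<1$ (possible precisely because $\delta>2r$), while the generalized Hölder inequality bounds the two block norms by $C^{k/\delta}$ and $C^{(2r-k)/\delta}$ using only $\sup_t\mathbb{E}|\xi_t|^\delta\le C$. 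Hence the covariance is at most a constant depending on $r,\delta,C$ times $a(m^\ast)^{1-2r/\delta}$, and the product term reduces to lower-order sums of the same form, handled by induction on the number of factors.

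Summing over all tuples, the induction, which distributes the powers of $T$ across the two blocks created by the maximal gap, would yield
\begin{equation*}
  \mathbb{E}[S_T^{2r}]\le B'\,T^{r}\sum_{m\ge 1} m^{\,r-1}\,a(m)^{1-2r/\delta},
\end{equation*}
and under $a(m)\le c\,m^{-\mu}$ the series is dominated by $\sum_m m^{\,r-1-\mu(1-2r/\delta)}$, which converges exactly when $\mu(1-2r/\delta)>r$, i.e. under the hypothesis $\mu>r/(1-2r/\delta)$. Since all constants depend only on $r,\delta,\mu,c,C$ and not on $T$ or the law of $\{\xi_t\}$, this delivers the uniform $B$.

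The delicate part is the combinatorial bookkeeping producing simultaneously the order $T^r$ and the weight $m^{\,r-1}$: a single maximal-gap split, counted naïvely, over-weights the separated configurations, giving a factor $m^{2r-2}$ and hence the stronger summability requirement $\mu(1-2r/\delta)>2r-1$, and it is the recursive distribution of the $T$-factors across blocks, which is the substance of the Cox–Kim argument, that sharpens this to $m^{\,r-1}$ and matches the stated mixing rate. The only genuinely new point relative to \cite*{CoxKim1995} is verifying that their every use of stationarity can be replaced by the uniform bound $\sup_t\mathbb{E}|\xi_t|^\delta\le C$, which is immediate from the Hölder estimates above; this is what I expect to require the most care to state cleanly, together with keeping the constant $B$ free of $T$.
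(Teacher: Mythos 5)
Your proposal is correct and takes essentially the same route as the paper: the paper offers no self-contained proof, merely citing Theorem 1 of Cox and Kim (1995) as a ``simple modification,'' and your reconstruction---the multi-index expansion, the maximal-gap splitting bounded via Lemma~\ref{lemma:Cov_mixing_bound} with H\"older exponents $p=\delta/k$, $q=\delta/(2r-k)$, and the check that stationarity enters only through $\sup_t \mathbb{E}|\xi_t|^\delta \leq C$---is exactly that modification, with the summability condition $\mu(1-2r/\delta)>r$ matching the stated hypothesis. Your observation that the result implicitly requires $\mathbb{E}\,\xi_t=0$ is moreover a genuine catch: as literally stated the lemma fails (take $\xi_t\equiv 1$, which is $\alpha$-mixing with $a(m)=0$ yet gives $\mathbb{E}[(T^{-1/2}\sum_t\xi_t)^{2r}]=T^r$), and the mean-zero property indeed holds in every application of the lemma in the paper.
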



The following is a central limit theorem for martingale difference sequences.

\begin{lemma}
    \label{lemma:martingale_diff_CLT}
    Consider the scalar process $\xi_{it} = \xi_{NT,it}$, $i=1,\ldots,N$, $t=1,\ldots,T$.
    Let  $\{(\xi_{i1}, \ldots , \xi_{iT}): 1 \leq i \leq N\}$ be independent across~$i$,
    and be a martingale difference sequence for each~$i$, $N$, $T$.
    Let $\mathbb{E} | \xi_{it} |^{2+\delta}$ 
    be uniformly bounded across $i,t,N,T$ 
    for some $\delta>0$.
    Let $\overline \sigma = \overline \sigma_{NT} > \Delta > 0$
    for all sufficiently large $NT$, and let
    $\frac 1 {NT}  \sum_{i,t} \xi_{it}^2 - \overline \sigma^2 \rightarrow_P 0$ as $NT \rightarrow \infty$.\footnote{%
     Here can allow for an arbitrary sequence of $(N,T)$ with $NT \rightarrow \infty$.    }
    Then,
    \begin{align*}
          \frac 1 {\overline \sigma \, \sqrt{NT}} \sum_{i,t} \xi_{it}
         \to_d {\cal N}(0,1).
    \end{align*}

\end{lemma}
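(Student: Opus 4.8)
The plan is to recognize $\sum_{i,t}\xi_{it}$ as a sum over a martingale difference array and invoke a standard martingale central limit theorem for triangular arrays (as in Hall and Heyde, 1980, whose main theorem admits the version in which conditional variances are replaced by raw squares). First I would linearly order the double index $(i,t)$ lexicographically --- sweeping $t=1,\dots,T$ within $i=1$, then within $i=2$, and so on --- and define the filtration $\mathcal{F}_{i,t}$ generated by $\{\xi_{js}: j<i,\, 1\le s\le T\}\cup\{\xi_{is}: 1\le s\le t\}$. Using independence across $i$ together with the martingale difference property in $t$, conditioning on $\mathcal{F}_{i,t-1}$ collapses to conditioning on $\xi_{i,1},\dots,\xi_{i,t-1}$, so that $\mathbb{E}[\xi_{it}\mid\mathcal{F}_{i,t-1}]=0$ and the reindexed array is a genuine zero-mean, square-integrable martingale difference array (square integrability following from the uniform $(2+\delta)$-moment bound). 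I would then set $X_{it}=\xi_{it}/(\overline\sigma\sqrt{NT})$ as the normalized differences of the row indexed by $n=NT$, which also accommodates the footnote allowing arbitrary sequences with $NT\to\infty$, since the triangular-array CLT only requires the number of terms to diverge.

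Second I would verify the two hypotheses of the martingale CLT. For the sum of squares, the assumption $\tfrac1{NT}\sum_{i,t}\xi_{it}^2-\overline\sigma^2\to_P0$ together with $\overline\sigma\ge\Delta>0$ gives $\sum_{i,t}X_{it}^2=(\overline\sigma^2 NT)^{-1}\sum_{i,t}\xi_{it}^2=1+o_P(1)\to_P1$, so the limiting conditional variance is the deterministic constant $\eta^2=1$. For the conditional Lindeberg condition I would proceed through a Lyapunov bound: since $X_{it}^{2}\,\mathbbm{1}\{|X_{it}|>\varepsilon\}\le\varepsilon^{-\delta}|X_{it}|^{2+\delta}$,
\[
\sum_{i,t}\mathbb{E}\bigl[X_{it}^{2}\,\mathbbm{1}\{|X_{it}|>\varepsilon\}\bigr]
\le \varepsilon^{-\delta}\sum_{i,t}\mathbb{E}|X_{it}|^{2+\delta}
\le \frac{C}{\varepsilon^{\delta}\Delta^{2+\delta}}\,(NT)^{-\delta/2}\to0,
\]
where I used $\sum_{i,t}\mathbb{E}|X_{it}|^{2+\delta}\le(\overline\sigma\sqrt{NT})^{-(2+\delta)}\cdot NT\cdot\sup_{i,t}\mathbb{E}|\xi_{it}|^{2+\delta}$ and $\overline\sigma\ge\Delta$. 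Taking expectations of the nonnegative conditional Lindeberg sum and bounding it by this same Lyapunov quantity shows that the sum converges to zero in $L^1$, hence in probability, which is exactly the conditional Lindeberg condition.

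Finally I would apply the martingale CLT: with the conditional Lindeberg condition in force and $\sum_{i,t}X_{it}^2\to_P1$, the normalized partial sum $\sum_{i,t}X_{it}=(\overline\sigma\sqrt{NT})^{-1}\sum_{i,t}\xi_{it}$ converges in distribution to a variable with characteristic function $\mathbb{E}\bigl[\exp(-\tfrac12 t^2\eta^2)\bigr]=\exp(-\tfrac12 t^2)$; since $\eta^2=1$ is deterministic, the limit is $\mathcal{N}(0,1)$, which is the claim.

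I expect the main obstacle to be the bookkeeping in the first step: confirming that the lexicographic concatenation is a martingale difference array with respect to a filtration that simultaneously respects cross-sectional independence and time-series martingale structure, and that the raw and conditional normalizers $\sum_{i,t}X_{it}^2$ and $\sum_{i,t}\mathbb{E}[X_{it}^2\mid\mathcal{F}_{i,t-1}]$ share the probability limit $1$ (which itself follows from the conditional Lindeberg condition). Once the array is set up correctly, the two CLT conditions reduce to the stated sum-of-squares hypothesis and the uniform $(2+\delta)$-moment bound via the routine estimates above.
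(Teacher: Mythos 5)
Your proposal is correct and follows essentially the same route as the paper: the paper's proof also reindexes the panel lexicographically via $m = T(i-1)+t$, observes that the resulting sequence is a martingale difference array, and then cites Corollary~5.26 of White (2001) (itself based on McLeish, 1974), whose hypotheses -- uniformly bounded $(2+\delta)$ moments, variance bounded away from zero, and the raw sum-of-squares law of large numbers -- are exactly the lemma's assumptions. Your verification of the conditional Lindeberg condition via the Lyapunov bound, and of the filtration structure under cross-sectional independence, amounts to unpacking the proof of that cited corollary rather than taking a different path.
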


\begin{proof}[\bf Proof of Lemma~\ref{lemma:martingale_diff_CLT}]
    Define $\xi_m = \xi_{M,m} = \xi_{NT,it}$, with $M = NT$ and $m = T(i-1)+t \in \{1,\ldots,M\}$.
    Then $\{ \xi_m, \, m=1,\ldots,M \}$ is a martingale difference sequence. With this
    redefinition the statement of the Lemma is
    equal to Corollary~5.26 in White~\cite*{White2001}, which is based on
    Theorem~2.3 in Mcleish~\cite*{Mcleish1974}, and which shows that
    $\frac 1 {\overline \sigma \, \sqrt{M}} \sum_{m=1}^M \xi_{m}
         \to_d {\cal N}(0,1)$.
\end{proof}

\subsection{Some Bounds for the Norms of Matrices and Tensors}

The following lemma provides bounds for the matrix norm $\|.\|_q$ in terms
of the  matrix norms $\|.\|_1$, $\|.\|_2$, $\|.\|_\infty$,
and a bound for $\|.\|_2$ in terms of $\|.\|_q$ and $\|.\|_{q/(q-1)}$.
For sake of clarity we use notation
$\|.\|_2$ for the spectral norm in this lemma, which everywhere else is denoted by $\|.\|$, without any index.
Recall that $\|A\|_\infty = \max_i \sum_j |A_{ij}|$ and $\|A\|_1 = \|A'\|_\infty$.

\begin{lemma}
    \label{lemma:matrix-q-norm}
    For any matrix $A$  we have
    \begin{align*}
         \| A \|_q  &\leq \| A \|_1^{1/q} \|A\|_\infty^{1-1/q} ,  && \text{for $q \geq 1$,}
         \\
         \| A \|_q  &\leq \| A \|_2^{2/q} \|A\|_\infty^{1-2/q} , && \text{for $q \geq 2$,}
         \\
         \| A \|_2 &\leq \sqrt{\| A \|_q \| A \|_{q/(q-1)} } , && \text{for $q \geq 1$.}
    \end{align*}
    Note also that $\| A \|_{q/(q-1)} = \| A' \|_q$ for $q \geq 1$. Thus, for a symmetric matrix $A$,
    we have 
    $\| A \|_2 \leq \| A \|_q  \leq \| A \|_\infty$ for any $q \geq 1$.
\end{lemma}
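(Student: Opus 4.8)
The final statement to prove is Lemma~\ref{lemma:matrix-q-norm}, a collection of four matrix-norm inequalities relating the induced $\ell_q$-operator norms to the $\ell_1$, $\ell_2$ (spectral), and $\ell_\infty$ norms. These are classical interpolation-type bounds, so the plan is to invoke the Riesz--Thorin interpolation theorem (equivalently, the Riesz convexity theorem) for operator norms on $\ell_p$ spaces, together with elementary duality.

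The plan is as follows. First I would recall the statement of the Riesz--Thorin interpolation theorem: if a linear operator $A$ maps $\ell_{p_0} \to \ell_{p_0}$ with norm $M_0$ and $\ell_{p_1} \to \ell_{p_1}$ with norm $M_1$, then for $1/p = (1-\theta)/p_0 + \theta/p_1$ with $\theta \in [0,1]$, it maps $\ell_p \to \ell_p$ with norm at most $M_0^{1-\theta} M_1^{\theta}$. For the first inequality $\|A\|_q \le \|A\|_1^{1/q}\|A\|_\infty^{1-1/q}$, I would apply this with $p_0 = 1$, $p_1 = \infty$, and solve $1/q = (1-\theta)\cdot 1 + \theta\cdot 0 = 1-\theta$, giving $\theta = 1 - 1/q$; interpolation then yields exactly $\|A\|_q \le \|A\|_1^{1/q}\|A\|_\infty^{1-1/q}$. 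For the second inequality $\|A\|_q \le \|A\|_2^{2/q}\|A\|_\infty^{1-2/q}$ (valid for $q \ge 2$), I would interpolate between $p_0 = 2$ and $p_1 = \infty$, solving $1/q = (1-\theta)/2$, so $\theta = 1 - 2/q \ge 0$, which again reproduces the stated exponents.

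For the third inequality $\|A\|_2 \le \sqrt{\|A\|_q\,\|A\|_{q/(q-1)}}$, I would interpolate between the dual exponents $p_0 = q$ and $p_1 = q/(q-1)$: since $1/q + (q-1)/q = 1$, taking $\theta = 1/2$ gives $1/p = \tfrac12(1/q) + \tfrac12((q-1)/q) = 1/2$, i.e. $p = 2$, and interpolation gives $\|A\|_2 \le \|A\|_q^{1/2}\|A\|_{q/(q-1)}^{1/2}$, which is the claim. Finally, the duality identity $\|A\|_{q/(q-1)} = \|A'\|_q$ follows from the standard fact that the operator norm of $A$ on $\ell_p$ equals the operator norm of its transpose (adjoint) $A'$ on the dual space $\ell_{p/(p-1)}$; combined with the third inequality this immediately gives $\|A\|_2 \le \|A\|_q \le \|A\|_\infty$ for symmetric $A$ (using $A = A'$ so $\|A\|_q = \|A\|_{q/(q-1)}$, and the first inequality with the monotone bound $\|A\|_1 = \|A\|_\infty$ for symmetric $A$).

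The only genuine subtlety—and thus the main step to handle with care—is ensuring that the Riesz--Thorin theorem is being applied in the correct normalization for \emph{operator} (induced) norms rather than for norms of the matrix entries, since the paper's $\|\cdot\|_q$ is explicitly the $\ell_q$-induced operator norm (see the definitions around \eqref{DefNorm}). This is standard but worth stating cleanly; I would cite Riesz--Thorin as the black box and verify that the endpoint operator norms $\|A\|_1$, $\|A\|_2$, $\|A\|_\infty$ match the paper's conventions. Everything else reduces to solving the linear relation for the interpolation parameter $\theta$ and substituting, which is routine; no delicate estimates are required beyond the interpolation inequality itself and elementary duality.
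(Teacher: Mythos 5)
Your proposal is correct and follows essentially the same route as the paper: the paper's proof simply notes that $\log \|A\|_q$ is convex in $1/q$ as a consequence of the Riesz--Thorin theorem (citing Higham), and your three explicit interpolation applications (between $\ell_1$ and $\ell_\infty$, between $\ell_2$ and $\ell_\infty$, and between the dual pair $\ell_q$ and $\ell_{q/(q-1)}$) are exactly the instances of that convexity needed for the stated inequalities. Your explicit treatment of the duality identity $\|A\|_{q/(q-1)} = \|A'\|_q$ and of the symmetric case is also the intended (implicit) argument in the paper.
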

\begin{proof}[\bf Proof of Lemma~\ref{lemma:matrix-q-norm}]
    The statements follow from the fact that $\log \|A\|_q$ is a convex function of 
    $1/q$, which is a consequence of the Riesz-Thorin theorem.
    For more details and references see e.g. Higham~\cite*{Higham1992}.
\end{proof}

The following lemma shows that the norm $\| . \|_q$ applied to higher-dimensional tensors with a special structure can be expressed in terms of
matrix norms $\| . \|_q$. In our panel application all higher dimensional tensors have such a special structure, since they are obtained
as partial derivatives wrt to $\alpha$ and $\gamma$ from the likelihood function.

\begin{lemma}
    \label{lemma:SpectralNormTensors}
    Let $a$ be an $N$-vector with entries $a_i$,
    let $b$ be a $T$-vector with entries $b_t$,
    and let $c$ be an $N \times T$ matrix with entries  $c_{it}$.
    Let $A$ be an $\underbrace{ N \times N \times \ldots \times N}_{p \text{ times}}$  tensor
    with entries
    $$A_{i_1 i_2 \ldots i_p}
       = \left\{ \begin{array}{ll}
              a_{i_1} & \text{ if $i_1 = i_2 = \ldots = i_p$,}
            \nonumber \\
               0    & \text{ otherwise.}
          \end{array} \right.
    $$
    Let $B$ be an $\underbrace{ T \times T \times \ldots \times T}_{r \text{ times}}$  tensor
    with entries
    $$B_{t_1 t_2 \ldots t_r}
       = \left\{ \begin{array}{ll}
              b_{t_1} & \text{ if $t_1 = t_2 = \ldots = t_r$,}
            \nonumber \\
               0    & \text{ otherwise.}
          \end{array} \right.
    $$
    Let $C$ be an $\underbrace{ N \times N \times \ldots \times N}_{p \text{ times}}
     \times \underbrace{ T \times T \times \ldots \times T}_{r \text{ times}}$  tensor
    with entries
    $$C_{i_1 i_2 \ldots i_p t_1 t_2 \ldots t_r}
       = \left\{ \begin{array}{ll}
              c_{i_1 t_1} & \text{ if $i_1 = i_2 = \ldots = i_p$ and $t_1 = t_2 = \ldots = t_r$,}
            \nonumber \\
               0    & \text{ otherwise.}
          \end{array} \right.
    $$
    Let $\widetilde C$ be an $\underbrace{ T \times T \times \ldots \times T}_{r \text{ times}}
     \times \underbrace{ N \times N \times \ldots \times N}_{p \text{ times}}$  tensor
    with entries
    $$\widetilde C_{t_1 t_2 \ldots t_r i_1 i_2 \ldots i_p}
       = \left\{ \begin{array}{ll}
              c_{i_1 t_1} & \text{ if $i_1 = i_2 = \ldots = i_p$ and $t_1 = t_2 = \ldots = t_r$,}
            \nonumber \\
               0    & \text{ otherwise.}
          \end{array} \right.
    $$
    Then,
    \begin{align*}
         \| A \|_q &=  \max_i  | a_i | , && \text{for $p \geq 2$,} 
         \\
         \| B \|_q &= \max_t  | b_t | ,  && \text{for $r \geq 2$,} 
         \\
         \| C \|_q &\leq \| c \|_q ,  && \text{for $p \geq 1$, $r \geq 1$,} 
         \\
         \| \widetilde C \|_q &\leq \| c' \|_q ,  && \text{for $p \geq 1$, $r \geq 1$,} 
    \end{align*}
    where $\|.\|_q$ refers to the $q$-norm defined in \eqref{DefNorm}
    with $q \geq 1$.
\end{lemma}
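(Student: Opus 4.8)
The plan is to establish each of the four inequalities in Lemma~\ref{lemma:SpectralNormTensors} by exploiting the fact that the tensors $A$, $B$, $C$, $\widetilde C$ are ``diagonal'' in the repeated index blocks, so that the multilinear maps they define collapse onto a small number of coordinate products. Recall from \eqref{DefNorm} and \eqref{DefNorm2} that the $q$-norm of a $p$-tensor is its operator norm when contracting all but one index against unit $\ell_q$-vectors. The key structural observation is that each of these tensors has a nonzero entry only when the first $p$ indices coincide (and, for $C$ and $\widetilde C$, the last $r$ indices coincide), which reduces every contraction to a sum over a single (or double) index.

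First I would treat $A$. For unit vectors $u^{(2)},\ldots,u^{(p)} \in \mathbbm{R}^N$ with $\|u^{(s)}\|_q = 1$, the contraction $\sum_{i_2,\ldots,i_p} A_{i_1 i_2 \ldots i_p} u^{(2)}_{i_2} \cdots u^{(p)}_{i_p}$ equals $a_{i_1} u^{(2)}_{i_1} \cdots u^{(p)}_{i_1}$ because all indices must agree. The resulting $N$-vector has $\ell_q$-norm $\big(\sum_{i} |a_i|^q |u^{(2)}_{i}|^q \cdots |u^{(p)}_{i}|^q\big)^{1/q}$. Since $\|u^{(s)}\|_q=1$ forces $|u^{(s)}_i| \leq 1$ for each coordinate and $p \geq 2$ gives at least one such factor, this is bounded by $\max_i |a_i|$; equality is attained by concentrating the $u^{(s)}$ on the maximizing index. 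The argument for $B$ is identical with $N$ replaced by $T$ and $p$ by $r$.

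Next I would handle $C$, which is the crux. Contracting $C$ against unit $\ell_q$-vectors $u^{(2)},\ldots,u^{(p)}$ on the $N$-indices and $w^{(1)},\ldots,w^{(r)}$ on the $T$-indices collapses, by the same diagonality, to the $N \times T$ array with $(i,t)$-entry $c_{it}\, u^{(2)}_i \cdots u^{(p)}_i\, w^{(2)}_t \cdots w^{(r)}_t$ (one $w$-factor, say $w^{(1)}$, is the free contraction variable on a $T$-index, matching the matrix-norm definition). Because $|u^{(s)}_i|\le 1$ and $|w^{(s)}_t|\le 1$ coordinatewise, each entry is dominated in absolute value by $|c_{it}|$ times factors bounded by one, so the induced norm of this array is no larger than the matrix norm $\|c\|_q$ in the sense of \eqref{DefNorm}. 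This yields $\|C\|_q \le \|c\|_q$. The bound $\|\widetilde C\|_q \le \|c'\|_q$ follows by the same computation after swapping the roles of the $N$- and $T$-index blocks, which transposes $c$.

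The main obstacle is purely bookkeeping: one must track carefully which single index is held free (the operator-norm ``output'' index) versus which indices are contracted against unit vectors, since the definition \eqref{DefNorm} singles out the first index while the remaining ones are paired with $\ell_q$ and $\ell_{q/(q-1)}$ vectors. For the fully symmetric-in-a-block tensors here this distinction is immaterial by the symmetry noted after \eqref{DefNorm}, but I would state explicitly that the reduction to a matrix (for $C$) respects the partition into one free $T$-index and the remaining contracted indices, so that the right-hand side is genuinely the matrix $q$-norm $\|c\|_q$ rather than some other contraction. Everything else reduces to the elementary coordinatewise inequality $|u_i| \le \|u\|_q$ for unit $\ell_q$-vectors.
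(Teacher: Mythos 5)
Your reduction via the diagonal structure is exactly the right idea, and your handling of $A$ and $B$ is correct: after the collapse the contracted object is a \emph{vector}, and $\ell_q$-norms of vectors are monotone under entrywise domination in absolute value, so the upper bound $\max_i |a_i|$ and the matching lower bound from concentrating all test vectors on the maximizing index both go through. (This part is in fact slightly more direct than the paper, which passes to the dual bilinear formulation and then invokes the interpolation bound of Lemma~\ref{lemma:matrix-q-norm} for the diagonal-matrix case $p=2$.)

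The gap is in the step for $C$ and $\widetilde C$. There the collapsed object is a \emph{matrix}, $\tilde c_{it} = c_{it}\, u^{(2)}_i \cdots u^{(p)}_i\, w^{(2)}_t \cdots w^{(r)}_t$, and what you must bound is its induced $\ell_q \to \ell_q$ norm $\max_{\|w^{(1)}\|_q = 1} \| \tilde c\, w^{(1)} \|_q$. You justify $\|\tilde c\|_q \le \|c\|_q$ by saying that each entry of $\tilde c$ is dominated in absolute value by the corresponding entry of $c$. That inference is false for induced matrix norms: with $q=2$, take $c$ the $2\times 2$ matrix with rows $(1,1)$ and $(1,-1)$ and $\tilde c$ the all-ones $2\times 2$ matrix; then $|\tilde c_{it}| = |c_{it}|$ for all $i,t$, yet $\|\tilde c\|_2 = 2 > \sqrt{2} = \|c\|_2$. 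So the conclusion you need is true, but not for the reason you give. The repair uses the product structure you already wrote down: $\tilde c = D_u\, c\, D_w$, where $D_u = \diag(u^{(2)}_i \cdots u^{(p)}_i)$ and $D_w = \diag(w^{(2)}_t \cdots w^{(r)}_t)$ are diagonal with all entries bounded by one in absolute value, hence with induced $q$-norm at most one; submultiplicativity then gives $\|\tilde c\|_q \le \|D_u\|_q\, \|c\|_q\, \|D_w\|_q \le \|c\|_q$. Alternatively, the paper's route sidesteps matrices entirely: writing the tensor norm as a bilinear form with a dual test vector $u^{(1)}$ satisfying $\|u^{(1)}\|_{q/(q-1)} = 1$, it sets $u_i = u^{(1)}_i u^{(2)}_i \cdots u^{(p)}_i$ and $v_t = v^{(1)}_t \cdots v^{(r)}_t$, notes $\|u\|_{q/(q-1)} \le \|u^{(1)}\|_{q/(q-1)} = 1$ and $\|v\|_q \le \|v^{(1)}\|_q = 1$ (entrywise domination \emph{is} legitimate here, because these are vector norms), and concludes that the collapsed form $|u' c\, v|$ is bounded by $\|c\|_q$. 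Either one-line fix completes your proof; the same applies to $\widetilde C$ with $c$ replaced by $c'$.
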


\begin{proof}[\bf Proof of Lemma~\ref{lemma:SpectralNormTensors}]
     Since the vector norm $\|.\|_{q/(q-1)}$ is dual to the vector norm $\|.\|_q$ we can rewrite the
     definition of the tensor norm $\left\| C \right\|_q$ as follows
     \begin{align*}
        \left\| C \right\|_q
      &= \max_{\|u^{(1)}\|_{q/(q-1)}=1} 
           \max_{\begin{minipage}{2cm}\center \scriptsize  $\|u^{(k)}\|_q=1$ \\ $k=2,\ldots,p$ \end{minipage}} 
            \max_{\begin{minipage}{2cm}\center \scriptsize $\|v^{(l)}\|_q=1$ \\ $l=1,\ldots,r$ \end{minipage}} 
            \nonumber \\ & \qquad \qquad
         \left|   \sum_{i_1 i_2 \ldots i_p=1}^N \sum_{t_1 t_2 \ldots t_r=1}^T  
         u^{(1)}_{i_1} u^{(2)}_{i_2} \cdots u^{(p)}_{i_p}
         v^{(1)}_{i_1} v^{(2)}_{t_2} \cdots v^{(r)}_{t_r}
         C_{i_1 i_2 \ldots i_p t_1 t_2 \ldots t_r} \,
         \right|  .
     \end{align*}
    The specific structure
     of $C$ yields
     \begin{align*}
        \left\| C \right\|_q
              &= \max_{\|u^{(1)}\|_{q/(q-1)}=1} 
           \max_{\begin{minipage}{2cm}\center \scriptsize  $\|u^{(k)}\|_q=1$ \\ $k=2,\ldots,p$ \end{minipage}} 
            \max_{\begin{minipage}{2cm}\center \scriptsize $\|v^{(l)}\|_q=1$ \\ $l=1,\ldots,r$ \end{minipage}} 
         \left|   \sum_{i=1}^N \sum_{t=1}^T  
         u^{(1)}_{i} u^{(2)}_{i} \cdots u^{(p)}_{i}
         v^{(1)}_{t} v^{(2)}_{t} \cdots v^{(r)}_{t}
         c_{it} \,
         \right|  
      \nonumber \\
        &\leq  \; \; \max_{\|u\|_{q/(q-1)} \leq 1}  \; \;
            \max_{\|v\|_q \leq 1} 
         \left|   \sum_{i=1}^N \sum_{t=1}^T  
         u_i  v_i c_{it} \,
         \right|  = \| c \|_q       ,  
    \end{align*}
    where  we define $u \in \mathbbm{R}^N$
    with elements
    $u_i = u^{(1)}_{i} u^{(2)}_{i} \cdots u^{(p)}_{i}$
    and $v \in \mathbbm{R}^T$
    with elements
    $v_t = v^{(1)}_{t} v^{(2)}_{t} \cdots v^{(r)}_{t}$,
    and we use that  $\|u^{(k)}\|_q=1$, for $k=2,\ldots,p$,
    and $\|v^{(l)}\|_q=1$, for $l=2,\ldots,r$, implies
    $|u_i| \leq  |u^{(1)}_{i}|$
    and $|v_t| \leq | v^{(1)}_{t}|$,
    and therefore    
    $ \|u\|_{q/(1-q)} \leq \|u^{(1)} \|_{q/(1-q)} = 1$
    and $\|v\|_q \leq \| v^{(1)} \|_q = 1$. The proof of 
      $\| \widetilde C \|_q \leq \| c' \|_q$ is analogous.
    
    Let $A^{(p)} = A$, as defined above, for a particular value of $p$. 
    For $p=2$, $A^{(2)}$ is a diagonal $N \times N$ matrix
    with diagonal elements $a_i$, so that 
    $ \| A^{(2)} \|_q \leq   \| A^{(2)} \|_1^{1/q} \| A^{(2)} \|_\infty^{1-1/q}  = \max_i  | a_i |$.  For $p>2$,
     \begin{align*}
        \left\| A^{(p)} \right\|_{q}
      &=  \max_{\|u^{(1)}\|_{q/(q-1)}=1}  \max_{\begin{minipage}{2cm}\center \scriptsize  $\|u^{(k)}\|_q=1$ \\ $k=2,\ldots,p$ \end{minipage}} 
         \left|   \sum_{i_1 i_2 \ldots i_p=1}^N   
         u^{(1)}_{i_1} u^{(2)}_{i_2} \cdots u^{(p)}_{i_p}
         A_{i_1 i_2 \ldots i_p} \,
         \right| 
      \nonumber \\
        &=   \max_{\|u^{(1)}\|_{q/(q-1)}=1}  \max_{\begin{minipage}{2cm}\center \scriptsize  $\|u^{(k)}\|_q=1$ \\ $k=2,\ldots,p$ \end{minipage}} 
         \left|   \sum_{i,j=1}^N   
         u^{(1)}_{i} u^{(2)}_{i} \cdots u^{(p-1)}_{i} u^{(p)}_{j}
         A^{(2)}_{ij} \,
         \right|  
      \nonumber \\
        &\leq  \; \;  \max_{\|u\|_{q/(q-1)} \leq 1}  \; \;
            \max_{\|v\|_q = 1} \;
         \left|   \sum_{i=1}^N \sum_{t=1}^T  
         u_i  v_i A^{(2)}_{ij} \,
         \right|  = \| A^{(2)} \|_q \leq      \max_i  | a_i |,     
    \end{align*}   
    where we define $u \in \mathbbm{R}^N$
    with elements
    $u_i = u^{(1)}_{i} u^{(2)}_{i} \cdots u^{(p-1)}_{i}$
    and $v = u^{(p)}$,
    and we use that  $\|u^{(k)}\|_p=1$, for $k=2,\ldots,p-1$,
    implies $|u_i| \leq  |u^{(1)}_{i}|$ and therefore
    $ \|u\|_{q/(q-1)} \leq  \|u^{(1)}\|_{q/(q-1)} = 1$.
    We have thus shown 
    $\left\| A^{(p)} \right\| \leq  \max_i  | a_i |$.
    From the definition of $\left\| A^{(p)} \right\|_{q}$ above,
    we obtain $\left\| A^{(p)} \right\|_{q} \geq  \max_i  | a_i |$
    by choosing all $u^{(k)}$ equal to the standard basis vector,
    whose $i^*$'th component equals one, where $i^* \in  \argmax_i  | a_i |$.
   Thus, $\left\| A^{(p)} \right\|_q = \max_i  | a_i |$ for $p \geq 2$.
    The proof for  $\| B \|_q = \max_t  | b_t |$ is analogous.    
\end{proof}

The following lemma provides an asymptotic bound for the spectral norm of $N \times T$ matrices, 
whose entries are mean zero, and cross-sectionally independent and weakly time-serially dependent conditional on $\phi$.

\begin{lemma}
    \label{th:SpectralNorm}
     Let $e$ be an $N \times T$ matrix with entries $e_{it}$.
      Let
      $\bar \sigma^2_i = \frac 1 T \sum_{t=1}^T \E(e_{it}^2 ) $,     
      let $\Omega$ be the $T \times T$ matrix with entries
     $\Omega_{ts} = \frac 1 N \sum_{i=1}^N\E( e_{it} e_{is})$, 
     and let
     $\eta_{ij} 
        = \frac 1 {\sqrt{T}} \sum_{t=1}^T\left[ e_{it} e_{jt} -\E(e_{it} e_{jt}) \right]$.
     Consider asymptotic sequences where $N,T \rightarrow \infty$
    such that $N/T$ converges to a finite positive constant. Assume that
    \begin{itemize}
        \item[(i)] The distribution of $e_{it}$ is independent across $i$, conditional on $\phi,$ and
     satisfies $\E(e_{it})=0$.
    
        \item[(ii)] 
        $\frac 1 N \sum_{i=1}^N  \left( \bar \sigma^2_i \right)^4={\cal O}_P(1)$, \; \;
        $\frac 1  T {\rm Tr}( \Omega^4 ) = {\cal O}_P(1)$,   \; \;
         $\frac {1} {N} \sum_{i=1}^N\E
          \left( \eta_{ii}^4 \right) = {\cal O}_P(1)$,  \; \;
          $\frac {1} {N^2} \sum_{i,j=1}^N\E
          \left( \eta_{ij}^4 \right) = {\cal O}_P(1)$.
              
    \end{itemize} 
     Then, $\E \|e\|^8 = {\cal O}_P(N^5) $, and therefore
     $\| e \| = {\cal O}_P(N^{5/8})$.
\end{lemma}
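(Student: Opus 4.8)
The goal is to bound $\mathbb{E}\|e\|^8$, where $\|e\|$ denotes the spectral norm (largest singular value) of the $N \times T$ matrix $e$. The plan is to relate the spectral norm to the trace of a power of $ee'$, which converts the operator-norm bound into a question about moments of entrywise sums that can be controlled by the hypotheses in (ii). Specifically, since $\|e\|^2 = \|ee'\|$ is the largest eigenvalue of the positive semidefinite $N \times N$ matrix $ee'$, and all eigenvalues are nonnegative, I would use the elementary bound
\begin{equation*}
\|e\|^8 = \left(\|ee'\|\right)^4 \leq \left[ \operatorname{Tr}\left( (ee')^4 \right) \right],
\end{equation*}
because the largest eigenvalue raised to the fourth power is dominated by the sum of fourth powers of all eigenvalues, which is exactly $\operatorname{Tr}((ee')^4)$. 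Taking expectations, it then suffices to show $\mathbb{E}[\operatorname{Tr}((ee')^4)] = \mathcal{O}_P(N^5)$.

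The main work is to expand $\operatorname{Tr}((ee')^4)$ and organize the resulting sum. First I would write $(ee')_{ij} = \sum_t e_{it} e_{jt}$, and decompose this into its conditional mean and the fluctuation: $(ee')_{ij} = \sqrt{T}\,\eta_{ij} + \sum_t \mathbb{E}(e_{it}e_{jt})$, where the diagonal means involve $\bar\sigma_i^2$ and the full second-moment structure is captured by $\Omega$ (after accounting for cross-sectional independence, the relevant cross terms for $i \neq j$ have mean zero, so $\mathbb{E}(e_{it}e_{jt}) = 0$ for $i\neq j$ by hypothesis (i)). Thus $ee' = D + \sqrt{T}\,H$, where $D = \operatorname{diag}(T\bar\sigma_i^2)$ is diagonal of order $T$ and $H$ is the matrix with entries $\eta_{ij}$. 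I would then expand $\operatorname{Tr}((D+\sqrt{T}H)^4)$ into $2^4$ terms by multilinearity, and bound each using the Cauchy--Schwarz and trace inequalities $\operatorname{Tr}(A_1 A_2 A_3 A_4) \leq \prod_k [\operatorname{Tr}(A_k A_k')]^{1/4}$ (Hölder for the Schatten-type norms), reducing everything to $\operatorname{Tr}(D^4)$, $\operatorname{Tr}(H^4)$-type quantities, and their products.

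The key bookkeeping is to verify that each term is $\mathcal{O}_P(N^5)$ under asymptotics where $N/T \to \text{const}$. For the diagonal piece, $\operatorname{Tr}(D^4) = T^4 \sum_i (\bar\sigma_i^2)^4 = T^4 \cdot N \cdot \mathcal{O}_P(1) = \mathcal{O}_P(N^5)$ by the first assumption in (ii). For the fluctuation piece, I would control $\mathbb{E}[\operatorname{Tr}((\sqrt{T}H)^4)] = T^2\,\mathbb{E}[\operatorname{Tr}(H^4)]$; here $\operatorname{Tr}(H^4) = \sum_{i,j,k,l}\eta_{ij}\eta_{jk}\eta_{kl}\eta_{li}$, and using $\mathbb{E}(\eta_{ij}^4)$ bounds together with Cauchy--Schwarz on the index sums, one finds $\mathbb{E}[\operatorname{Tr}(H^4)] = \mathcal{O}_P(N^3)$ (the $\eta_{ii}$ and $\eta_{ij}$ fourth-moment hypotheses pin down the diagonal and off-diagonal contributions), giving $T^2 \cdot \mathcal{O}_P(N^3) = \mathcal{O}_P(N^5)$. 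The mixed terms, containing both $D$ and $H$ factors, interpolate between these and are no larger. Summing the finitely many terms yields $\mathbb{E}[\operatorname{Tr}((ee')^4)] = \mathcal{O}_P(N^5)$, and hence $\|e\| = \mathcal{O}_P(N^{5/8})$ by Markov's inequality. The main obstacle I anticipate is the careful accounting of the off-diagonal fourth-moment sum $\sum_{i,j,k,l}\mathbb{E}(\eta_{ij}\eta_{jk}\eta_{kl}\eta_{li})$: one must argue that only index configurations compatible with the given fourth-moment bounds survive at the right order, which requires repeated application of Cauchy--Schwarz to split the four-fold product and then matching each factor to either the $\frac1N\sum_i\mathbb{E}(\eta_{ii}^4)$ or the $\frac1{N^2}\sum_{i,j}\mathbb{E}(\eta_{ij}^4)$ hypothesis without overcounting.
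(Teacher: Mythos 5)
Your overall skeleton coincides with the paper's: the bound $\|e\|^8 \le {\rm Tr}\left((ee')^4\right)$, the decomposition $ee' = D + \sqrt{T}\,H$ with $D = {\rm diag}(T\bar\sigma_i^2)$ and $H = (\eta_{ij})$, and the treatment of ${\rm Tr}(D^4)$ and of the mixed terms are all correct and match the paper. The gap is in the one step that carries all the difficulty: your claim that $\E[{\rm Tr}(H^4)] = \mathcal{O}_P(N^3)$ follows from the fourth-moment hypotheses on $\eta$ together with Cauchy--Schwarz. It does not. ${\rm Tr}(H^4) = \sum_{i,j,k,l}\eta_{ij}\eta_{jk}\eta_{kl}\eta_{li}$ has $N^4$ terms, and Cauchy--Schwarz (or AM--GM) applied to each product bounds it by fourth moments of $\eta$, each $\mathcal{O}_P(1)$ on average; this yields only $\E[{\rm Tr}(H^4)] = \mathcal{O}_P(N^4)$, hence $T^2\,\E[{\rm Tr}(H^4)] = \mathcal{O}_P(N^6)$, overshooting the target $N^5$. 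Nor can this be rescued by arguing that few index configurations ``survive'': for mutually distinct $i,j,k,l$ the expectation $\E(\eta_{ij}\eta_{jk}\eta_{kl}\eta_{li})$ is genuinely nonzero (already when each individual's covariance matrix is diagonal), so all of the roughly $N^4$ distinct-index configurations contribute. The saving must come from each such term being of order $1/T$ rather than $\mathcal{O}(1)$, and a Cauchy--Schwarz bound via fourth moments is structurally blind to that refinement.

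What is needed, and what the paper does, is to exploit cross-sectional independence to factorize, for mutually distinct indices,
\begin{equation*}
\E(\eta_{ij}\eta_{jk}\eta_{kl}\eta_{li})
= \frac{1}{T^2}\sum_{t,s,u,v}\E(e_{iv}e_{it})\,\E(e_{jt}e_{js})\,\E(e_{ks}e_{ku})\,\E(e_{lu}e_{lv})
= \frac{1}{T^2}\,{\rm Tr}(\Omega_i\Omega_j\Omega_k\Omega_l),
\end{equation*}
where $\Omega_i$ is the $T\times T$ matrix with entries $\E(e_{it}e_{is})$. Each $\Omega_i$ is positive semidefinite, so each such trace is nonnegative, and the sum over distinct quadruples can therefore be completed to the sum over all quadruples, which equals $\frac{N^4}{T^2}{\rm Tr}(\Omega^4)$; the hypothesis $\frac{1}{T}{\rm Tr}(\Omega^4) = \mathcal{O}_P(1)$ then gives $\mathcal{O}_P(N^4/T) = \mathcal{O}_P(N^3)$. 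The repeated-index tuples (only $\mathcal{O}(N^3)$ of them) are the ones handled by Cauchy--Schwarz and the fourth-moment conditions on $\eta_{ii}$ and $\eta_{ij}$, as you propose. A telling symptom of the gap is that your argument never invokes the assumption $\frac{1}{T}{\rm Tr}(\Omega^4) = \mathcal{O}_P(1)$: that assumption is exactly what rules out strong serial dependence, under which ${\rm Tr}(\Omega^4)$ can be as large as order $T^4$ and the conclusion of the lemma fails, so no correct proof can do without it.
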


\begin{proof}[\bf Proof of Lemma~\ref{th:SpectralNorm}]
    Let $ \|. \|_{\rm F}$ be the Frobenius norm of a matrix, i.e. $\| A \|_F = \sqrt{ {\rm Tr}(A A') }$.
    For $\bar \sigma^4_i  = (\bar \sigma^2_i)^2$, $\bar \sigma^8_i  = (\bar \sigma^2_i)^4$  and $\delta_{jk} = 1(j=k)$,
    \begin{align*}
        \| e \|^8 &=  \| e e' e e' \|^2 \leq  \| e e' e e'  \|_{\rm F}^2 
          =   \sum_{i,j=1}^N \left( \sum_{k=1}^N \sum_{t,\tau=1}^T e_{it} e_{kt} e_{k\tau} e_{j \tau} \right)^2
       \nonumber \\    &
       = T^2  \sum_{i,j=1}^N     
            \left[  \sum_{k=1}^N 
             \left( \eta_{ik} + T^{1/2} \delta_{ik} \bar \sigma^2_i \right) 
              \left( \eta_{jk} + T^{1/2} \delta_{jk} \bar \sigma^2_j  \right)   \right]^2
       \nonumber \\   
      &= T^2  \sum_{i,j=1}^N     
            \left( \sum_{k=1}^N  \eta_{ik} \eta_{jk} 
                  + 2 T^{1/2} \eta_{ij} \bar \sigma^2_i
                  + T \delta_{ij}  \bar \sigma^4_i \right)^2
       \nonumber \\   
      &\leq 3  T^2  \sum_{i,j=1}^N  \left[
            \left(  \sum_{k=1}^N  \eta_{ik} \eta_{jk}  \right)^2
                  + 4 T \eta_{ij}^2 \bar \sigma^4_i
                  + T^2 \delta_{ij}  \bar \sigma^8_i 
             \right]     
       \nonumber \\   
       &=  3  T^2  \sum_{i,j=1}^N  \left(  \sum_{k=1}^N  \eta_{ik} \eta_{jk}  \right)^2
            + 12 T^3   \sum_{i,j=1}^N \bar \sigma^4_i \eta_{ij}^2
            + 3 T^3 \sum_{i=1}^N  \bar \sigma^8_i  ,
     \end{align*}
    where we used that $(a+b+c)^2 \leq 3 (a^2+b^2+c^3)$.  
   By the Cauchy Schwarz inequality,
     \begin{align*}
        \E \|e\|^8 &\leq 
         3  T^2\E \left[ \sum_{i,j=1}^N  \left(  \sum_{k=1}^N  \eta_{ik} \eta_{jk}  \right)^2 \right]
            + 12 T^3  \sqrt{ \left( N \sum_{i=1}^N \bar \sigma^8_i  \right)
           \left( \sum_{i,j=1}^N\E(\eta_{ij}^4) \right) }
            + 3 T^3 \sum_{i=1}^N  \bar \sigma^8_i 
        \nonumber \\
          &=     3  T^2\E \left[ \sum_{i,j=1}^N  \left(  \sum_{k=1}^N  \eta_{ik} \eta_{jk}  \right)^2 \right]
            + {\cal O}_P( T^3 N^2) + {\cal O}_P( T^3 N ). 
     \end{align*}
    Moreover,
     \begin{align*}
        &\E  \left[ \sum_{i,j=1}^N  \left(  \sum_{k=1}^N  \eta_{ik} \eta_{jk}  \right)^2 \right]
        = \sum_{i,j,k,l=1}^N  \E ( \eta_{ik} \eta_{jk}  \eta_{il} \eta_{jl}  )
         = \sum_{i,j,k,l=1}^N  \E (  \eta_{ij} \eta_{jk}  \eta_{kl}     \eta_{li}  )
       \nonumber \\
          &\leq  \Bigg| \sum_{\begin{minipage}{2.5cm}\center \scriptsize $i,j,k,l$ \\ mutually different \end{minipage}}  
              \E\left(   \eta_{ij} \eta_{jk}  \eta_{kl}     \eta_{li} \right)  \Bigg|
               +  4 \left|  \sum_{i,j,k=1}^N  a_{ijk} \E (   \eta_{ii}    \eta_{ij} \eta_{jk}  \eta_{ki}   ) \right|, 
       \nonumber \\
          &\leq \Bigg| \sum_{\begin{minipage}{2.5cm}\center \scriptsize $i,j,k,l$ \\ mutually different \end{minipage}}  
              \E\left(   \eta_{ij} \eta_{jk}  \eta_{kl}     \eta_{li} \right)  \Bigg|
               +  4 \left\{  \left[ \sum_{i,j,k=1}^N\E (   \eta_{ii}^4 ) \right] 
                                \left[ \sum_{i,j,k=1}^N\E (   \eta_{ij}^4 ) \right]^3
                                \right\}^{1/4}
       \nonumber \\
          &= \Bigg| \sum_{\begin{minipage}{2.5cm}\center \scriptsize $i,j,k,l$ \\ mutually different \end{minipage}}  
              \E\left(   \eta_{ij} \eta_{jk}  \eta_{kl}     \eta_{li} \right)  \Bigg|
               +  4 N^3 \left\{  \left[ \frac 1 N \sum_{i=1}^N\E (   \eta_{ii}^4 ) \right] 
                                \left[ \frac 1 {N^2} \sum_{i,j=1}^N\E (   \eta_{ij}^4 ) \right]^3
                                \right\}^{1/4}
       \nonumber \\
          &= \Bigg| \sum_{\begin{minipage}{2.5cm}\center \scriptsize $i,j,k,l$ \\ mutually different \end{minipage}}  
              \E\left(   \eta_{ij} \eta_{jk}  \eta_{kl}     \eta_{li} \right)  \Bigg|
               +   {\cal O}_P(N^3).
    \end{align*}
    where in the second step we just renamed the indices and used that $\eta_{ij}$ is symmetric
    in $i,j$; and $a_{ijk} \in [0,1]$ in the second line is a combinatorial pre-factor;
    and in the third step we applied the Cauchy-Schwarz inequality. 
    
    Let  $\Omega_i$ be the $T \times T$ matrix with entries $\Omega_{i,ts} = \E( e_{it} e_{is} )$ such that $\Omega = \frac 1 N \sum_{i=1}^N \Omega_i$.
    For $i,j,k,l$
    mutually different,
    \begin{align*}
          \E\left(   \eta_{ij} \eta_{jk}  \eta_{kl}     \eta_{li} \right)         
         &= \frac 1{T^2} \sum_{t,s,u,v=1}^T
        \E( e_{it} e_{jt} e_{js} e_{ks} e_{ku} e_{lu} e_{lv} e_{iv} )
     \nonumber \\    
         &= \frac 1{T^2}  \sum_{t,s,u,v=1}^T
             \E( e_{iv} e_{it} )
             \E( e_{jt} e_{js} )
             \E( e_{ks} e_{ku} )
             \E( e_{lu} e_{lv}  ) 
     =  \frac 1 {T^2}  {\rm Tr}( \Omega_i \Omega_j \Omega_k \Omega_l ) \geq 0
    \end{align*} 
    because $\Omega_i \geq 0$ for all $i$.
    Thus,
     \begin{align*}
       \Bigg|  \sum_{\begin{minipage}{2.5cm}\center \scriptsize $i,j,k,l$ \\ mutually different \end{minipage}}  
              \E\left(   \eta_{ij} \eta_{jk}  \eta_{kl}     \eta_{li} \right)   \bigg|
             &=  
        \sum_{\begin{minipage}{2.5cm}\center \scriptsize $i,j,k,l$ \\ mutually different \end{minipage}}  
              \E\left(   \eta_{ij} \eta_{jk}  \eta_{kl}     \eta_{li} \right)   
         =     \frac 1 {T^2} 
            \sum_{\begin{minipage}{2cm}\center \scriptsize $i,j,k,l$ \\ mut. different \end{minipage}}   
              {\rm Tr}( \Omega_i \Omega_j \Omega_k \Omega_l )
         \nonumber \\
            &\leq    \frac 1 {T^2} 
            \sum_{i,j,k,l=1}^N   
              {\rm Tr}( \Omega_i \Omega_j \Omega_k \Omega_l )
         =    \frac {N^4} {T^2} 
              {\rm Tr}( \Omega^4) = {\cal O}_P(N^4/T).
     \end{align*}
     Combining all the above results gives 
     $\E \|e\|^8 = {\cal O}_P(N^5)$, since $N$ and $T$ are assumed to grow at the same rate.
\end{proof}

\subsection{Verifying the Basic Regularity Conditions in Panel Models}

The following Lemma provides sufficient conditions under which the panel fixed effects estimators 
in the main text  satisfy the high-level regularity conditions in Assumptions~\ref{ass:A1}$(v)$ and $(vi)$.

\begin{lemma}
   \label{lemma:BasicRegularityPanel}
   Let
   ${\cal L}(\beta,\phi) 
   = \frac 1 {\sqrt{NT}} \left[ \sum_{i,t} \ell_{it}( \beta, \pi_{it}) - \frac b 2 (v' \phi)^2 \right]$,
   where $\pi_{it} = \alpha_i + \gamma_t$, 
   $\alpha=(\alpha_1,\ldots,\alpha_N)'$,
   $\gamma=(\gamma_1,\ldots,\gamma_T)$,
     $\phi=(\alpha',\gamma')'$,
     and $v=(1_N',1_T')'$.
      Assume that  $\ell_{it}( . , .)$ is four times continuously differentiable
      in an appropriate neighborhood of the true parameter values $(\beta^0,\phi^0)$.
      Consider limits as $N,T \rightarrow \infty$ with $N/T \rightarrow \kappa^2>0$.  
      Let $4<q \leq 8$ and $0 \leq \epsilon < 1/8 - 1/(2q)$. 
     Let $r_\beta = r_{\beta,NT} >0$,
     $r_\phi = r_{\phi,NT}>0$,
     with $r_\beta = o\left[ (NT)^{-1/(2q)-\epsilon} \right]$
     and $r_\phi = o\left[ (NT)^{ -\epsilon}  \right]$.
     Assume that
  \begin{itemize}
      \item[(i)] For $k,l,m \in \{1,2,\ldots,\dim \beta\}$,
        \begin{align*}
                 & \frac 1 {\sqrt{NT}} \sum_{i,t} \partial_{\beta_k} \ell_{it} = {\cal O}_P(1),
             \ \ 
                   \frac 1 {NT} \sum_{i,t} \partial_{\beta_k \beta_l} \ell_{it} = {\cal O}_P(1),
             \ \ 
                   \frac 1 {NT} \sum_{i,t} \left\{ \partial_{\beta_k \beta_l} \ell_{it} 
                    -\E\left[ \partial_{\beta_k \beta_l} \ell_{it} \right] \right\} 
                    = o_P(1),
             \\ 
              & \sup_{\beta \in {\cal B}(r_\beta, \beta^0)}  \sup_{\phi \in {\cal B}_q(r_\phi, \phi^0)}
                  \frac 1 {NT} \sum_{i,t} \partial_{\beta_k \beta_l \beta_m} \ell_{it}( \beta, \pi_{it}) = {\cal O}_P(1).
        \end{align*}

     \item[(ii)] Let  $k,l \in \{1,2,\ldots,\dim \beta\}$.
      For $\xi_{it}(\beta,\phi) = \partial_{\beta_k \pi} \ell_{it}( \beta, \pi_{it})$
      or $\xi_{it}(\beta,\phi) = \partial_{\beta_k \beta_l \pi} \ell_{it}( \beta, \pi_{it})$,
      \begin{align*}
            \sup_{\beta \in {\cal B}(r_\beta, \beta^0)}  \sup_{\phi \in {\cal B}_q(r_\phi, \phi^0)}
           \frac 1 T \sum_t \left| \frac 1 N \sum_i  \xi_{it}(\beta,\phi) \right|^q
            &= {\cal O}_P\left( 1 \right) ,
         \\
            \sup_{\beta \in {\cal B}(r_\beta, \beta^0)}  \sup_{\phi \in {\cal B}_q(r_\phi, \phi^0)}
           \frac 1 N \sum_i \left| \frac 1 T \sum_t  \xi_{it}(\beta,\phi) \right|^q
            &= {\cal O}_P\left( 1 \right) .
      \end{align*}

      \item[(iii)]
      Let  $k,l \in \{1,2,\ldots,\dim \beta\}$.
  For $\xi_{it}(\beta,\phi) = \partial_{\pi^r} \ell_{it}( \beta, \pi_{it})$, 
    with $r \in \{3,4\}$,
  or $\xi_{it}(\beta,\phi) = \partial_{\beta_k \pi^r} \ell_{it}( \beta, \pi_{it})$,  
   with $r \in \{2,3\}$,
   or $\xi_{it}(\beta,\phi) = \partial_{\beta_k \beta_l \pi^2} \ell_{it}( \beta, \pi_{it})$, 
  \begin{align*} 
     \sup_{\beta \in {\cal B}(r_\beta, \beta^0)}  \sup_{\phi \in {\cal B}_q(r_\phi, \phi^0)}
      \max_i  \frac 1 T \sum_t | \xi_{it}(\beta,\phi) | 
        &= {\cal O}_P\left( N^{2 \epsilon} \right) ,
   \nonumber \\
     \sup_{\beta \in {\cal B}(r_\beta, \beta^0)}  \sup_{\phi \in {\cal B}_q(r_\phi, \phi^0)}
      \max_t \frac 1 N \sum_i | \xi_{it}(\beta,\phi) | 
        &= {\cal O}_P\left( N^{2 \epsilon} \right) .
  \end{align*} 
  
     \item[(iv)] Moreover,
     \begin{align*}
           & \frac 1 T \sum_t \left| \frac 1 {\sqrt{N}} \sum_i \partial_{\pi} \ell_{it} \right|^q
            = {\cal O}_P\left( 1 \right) ,
          \ \  
           \frac 1 N \sum_i \left| \frac 1 {\sqrt{T}} \sum_t \partial_{\pi} \ell_{it} \right|^q
            = {\cal O}_P\left( 1 \right) ,
          \\
           & \frac 1 T \sum_t \left| \frac 1 {\sqrt{N}} \sum_i  \partial_{\beta_k \pi} \ell_{it} - 
              \E\left[ \partial_{\beta_k \pi} \ell_{it} \right] \right|^2
            = {\cal O}_P\left( 1 \right) ,
          \\  
           & \frac 1 N \sum_i \left| \frac 1 {\sqrt{T}} \sum_t  \partial_{\beta_k \pi} \ell_{it} - 
              \E \left[ \partial_{\beta_k \pi} \ell_{it} \right] \right|^2
            = {\cal O}_P\left( 1 \right) .
     \end{align*}       
     
    \item[(v)] The sequence $\{(\ell_{i1}, \ldots, \ell_{iT}) : 1 \leq i \leq N \}$
        is independent across $i$ conditional on $\phi$.

    \item[(vi)]  
              Let  $k \in \{1,2,\ldots,\dim \beta\}$.
             For $\xi_{it} = \partial_{\pi^r} \ell_{it}
                           -  \mathbb{E}_\phi\left[ \partial_{\pi^r} \ell_{it} \right]$, with $r \in \{2,3\}$,
                     or $\xi_{it} = \partial_{\beta_k \pi^2} \ell_{it}
                           -  \mathbb{E}_\phi\left[ \partial_{\beta_k \pi^2} \ell_{it} \right]$, and  some $\tilde \nu>0$,
                   \begin{align*}
                         & \max_i  \, \mathbb{E}_\phi\left[ \xi_{it}^{8+\tilde \nu} \right] \leq C ,
                         \ \
                        \max_{i} \,  \max_t \sum_s \mathbb{E}_\phi\left[ \xi_{it} \xi_{is}  \right]  \leq C ,
                         \ \
                          \max_i \, \mathbb{E}_\phi    \left\{ \left[ \frac 1 {\sqrt{T}} 
                            \sum_t      \xi_{it}  
                         \right]^8    \right\}
                           \leq C ,
                         \\
                        & \max_t \, \mathbb{E}_\phi    \left\{ \left[ \frac 1 {\sqrt{N}} 
                            \sum_i      \xi_{it}  
                         \right]^8    \right\}
                           \leq C ,
                         \ \
                        \max_{i,j} \, \mathbb{E}_\phi    \left\{ \left[ \frac 1 {\sqrt{T}} 
                            \sum_t   \left[  \xi_{it} \xi_{jt} - \mathbb{E}_\phi\left(   \xi_{it} \xi_{jt}  \right) \right]
                         \right]^4    \right\}
                           \leq C ,
                     \end{align*}
                      uniformly in  $N,T$, where $C >0 $ is a constant.  
                     
       \item[(vii)] $\left\| \overline {\cal H}^{-1} \right\|_q
               = {\cal O}_P \left(  1\right)$.                  
  \end{itemize}
    Then, Assumptions~\ref{ass:A1}$(v)$ and $(vi)$ are satisfied with the same
    parameters $q$, $\epsilon$, $r_\beta = r_{\beta,NT}$
    and $r_\phi = r_{\phi,NT}$ used here.
\end{lemma}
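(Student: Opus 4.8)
The plan is to exploit the special structure of the panel objective function: because each $\ell_{it}$ depends on $\phi$ only through the scalar index $\pi_{it}=\alpha_i+\gamma_t$, every partial derivative of $\mathcal{L}$ with respect to the incidental parameters reduces to a sum over $(i,t)$ of a $\pi$-derivative of $\ell_{it}$, arranged into precisely the diagonal-in-index tensors treated in Lemma~\ref{lemma:SpectralNormTensors}. Concretely, $\mathcal{S}=\partial_\phi\mathcal{L}$ stacks the $N$-vector $[(NT)^{-1/2}\sum_t\partial_\pi\ell_{it}]_i$ and the $T$-vector $[(NT)^{-1/2}\sum_i\partial_\pi\ell_{it}]_t$; the Hessian blocks are built from $\partial_{\pi^2}\ell_{it}$; the third-order tensor $\partial_{\phi\phi\phi}\mathcal{L}$ from $\partial_{\pi^3}\ell_{it}$; and the mixed derivatives $\partial_{\beta\phi'}\mathcal{L}$, $\partial_{\beta\phi\phi}\mathcal{L}$ from $\partial_{\beta\pi}\ell_{it}$, $\partial_{\beta\pi^2}\ell_{it}$. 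The penalty contributes only the deterministic rank-one term $-(b/\sqrt{NT})vv'$ to the Hessian; it drops out of $\mathcal{S}$ because $v'\phi^0=0$ and out of all third and higher $\phi$-derivatives, so it plays no role in the deviation quantities of Assumption~\ref{ass:A1}$(vi)$ and is carried along trivially elsewhere. Thus the whole proof is a translation of the abstract norm bounds into statements about the matrices and vectors whose entries are these $\pi$-derivatives.

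For the $q$-norm bounds of Assumption~\ref{ass:A1}$(v)$ I would proceed term by term. Lemma~\ref{lemma:SpectralNormTensors} collapses each tensor $q$-norm to the $q$-norm of the underlying $N\times T$ (or $N$- or $T$-) array, and Lemma~\ref{lemma:matrix-q-norm}, via $\|A\|_q\le\|A\|_1^{1/q}\|A\|_\infty^{1-1/q}$, bounds that matrix $q$-norm by the row and column $\ell_1$-sums, which are exactly the quantities controlled by conditions (ii), (iii), (iv). For instance $\|\mathcal{S}\|_q^q$ equals $(NT)^{-q/2}\sum_i|\sum_t\partial_\pi\ell_{it}|^q$ plus the analogous time-block sum; factoring out $T^{q/2}$ and invoking condition (iv) gives $\|\mathcal{S}\|_q=\mathcal{O}_P(N^{1/q-1/2})=\mathcal{O}_P((NT)^{-1/4+1/(2q)})$. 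The bounds on $\partial_{\beta\phi'}\mathcal{L}$, $\partial_{\beta\phi\phi}\mathcal{L}$, $\partial_{\phi\phi\phi}\mathcal{L}$ and their suprema over $\mathcal{B}(r_\beta,\beta^0)\times\mathcal{B}_q(r_\phi,\phi^0)$ follow the same pattern, with the factor $N^{2\epsilon}$ in condition (iii) producing the $(NT)^\epsilon$ orders; the purely $\beta$-derivative bounds come directly from condition (i). The uniform-over-neighborhood versions are obtained by a mean-value expansion in $(\beta,\phi)$ controlled by $r_\beta=o((NT)^{-1/(2q)-\epsilon})$ and $r_\phi=o((NT)^{-\epsilon})$, exactly as in the proof of Lemma~\ref{lemma:assA1add}.

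The substantive part is Assumption~\ref{ass:A1}$(vi)$, which demands \emph{spectral} rather than $q$-norm control of the deviation quantities, so the $\ell_q$-machinery is too crude. The key step is the operator-norm estimate of Lemma~\ref{th:SpectralNorm}: in block form the off-diagonal block of $\widetilde{\mathcal{H}}$ is $-(NT)^{-1/2}e$, where $e$ is the $N\times T$ matrix with entries $\xi_{it}=\partial_{\pi^2}\ell_{it}-\E[\partial_{\pi^2}\ell_{it}]$, and its diagonal blocks are diagonal matrices with entries $(NT)^{-1/2}\sum_t\xi_{it}$ (resp.\ $(NT)^{-1/2}\sum_i\xi_{it}$). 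The moment, covariance, and fourth-moment/trace hypotheses of Lemma~\ref{th:SpectralNorm} are verified from the mixing and uniform-moment bounds of condition (vi) together with Lemmas~\ref{lemma:Cov_mixing_bound} and \ref{lemma:mixing_inequality}, giving $\|e\|=\mathcal{O}_P(N^{5/8})$; hence $\|\widetilde{\mathcal{H}}\|=\mathcal{O}_P((NT)^{-1/2}N^{5/8})=o_P((NT)^{-1/8})$, since $N\asymp T$ forces $N^{-3/8}=o(N^{-1/4})$. The same centered-matrix estimate with $\partial_{\beta\pi^2}\ell_{it}$-entries delivers the bounds on $\partial_{\beta\phi'}\widetilde{\mathcal{L}}$ and $\partial_{\beta\phi\phi}\widetilde{\mathcal{L}}$, while $\partial_{\beta\beta'}\widetilde{\mathcal{L}}$ is an ordinary centered average controlled by condition (i).

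I expect the final bound of (vi), namely $\|\sum_{g,h}\partial_{\phi\phi_g\phi_h}\widetilde{\mathcal{L}}\,[\overline{\mathcal{H}}^{-1}\mathcal{S}]_g[\overline{\mathcal{H}}^{-1}\mathcal{S}]_h\|=o_P((NT)^{-1/4})$, to be the main obstacle. It couples the third-order \emph{deviation} tensor (entries $\partial_{\pi^3}\ell_{it}-\E[\partial_{\pi^3}\ell_{it}]$) with two copies of $\overline{\mathcal{H}}^{-1}\mathcal{S}\approx\Lambda$, so after using the near-diagonal structure of $\overline{\mathcal{H}}^{-1}$ from Lemma~\ref{lemma:HessianAdditive} it reduces to controlling the Euclidean norm of a $\dim\phi$-vector whose components are weighted sums such as $(NT)^{-1/2}\sum_t(\partial_{\pi^3}\ell_{it}-\E[\partial_{\pi^3}\ell_{it}])\Lambda_{it}^2$. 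The difficulty is that $\Lambda_{it}$ is itself a random object of order one built from $\mathcal{S}$, so the summands are neither independent across the contraction indices nor mean zero after conditioning; the plan is to bound the norm by its square, expand, and use the last covariance hypothesis of condition (vi) together with cross-sectional independence to show the cross terms are of smaller order, mirroring the $U^{(1b)}$ analysis in the proof of Theorem~\ref{th:BothEffects}. Assembling all the pieces then yields Assumptions~\ref{ass:A1}$(v)$ and $(vi)$ with the stated parameters.
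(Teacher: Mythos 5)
Most of your proposal tracks the paper's own proof closely: the reduction of all $\phi$-derivatives to $\pi$-derivative arrays, the handling of the penalty term, the use of Lemma~\ref{lemma:SpectralNormTensors} and Lemma~\ref{lemma:matrix-q-norm} with conditions (i)--(iv) to get the $q$-norm bounds in Assumption~\ref{ass:A1}$(v)$, and the application of Lemma~\ref{th:SpectralNorm} to the centered matrix $\xi$ to obtain $\|\widetilde{\cal H}\| = {\cal O}_P(N^{-3/8}) = o_P((NT)^{-1/8})$ (and likewise for $\partial_{\beta\phi\phi}\widetilde{\cal L}$) are exactly what the paper does. (One small omission on this part: Assumption~\ref{ass:A1}$(v)$ also requires $\|\widetilde{\cal H}\|_q = o_P(1)$, which the paper gets by interpolation, $\|\widetilde{\cal H}\|_q \le \|\widetilde{\cal H}\|_2^{2/q}\|\widetilde{\cal H}\|_\infty^{1-2/q}$, combined with an $\infty$-norm moment bound; your $q$-norm discussion only covers the uncentered objects.)

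The genuine gap is in the last bound, $\bigl\|\sum_{g,h}\partial_{\phi\phi_g\phi_h}\widetilde{\cal L}\,[\overline{\cal H}^{-1}{\cal S}]_g[\overline{\cal H}^{-1}{\cal S}]_h\bigr\| = o_P((NT)^{-1/4})$, which you correctly identify as the main obstacle but then propose to attack by squaring the norm, expanding, and ``mirroring the $U^{(1b)}$ analysis'' of Theorem~\ref{th:BothEffects}. That analogy does not go through: in $U^{(1b)}$ the third-derivative weights are conditional expectations $\mathbb{E}_\phi(D_{\beta\pi^2}\ell_{it})$, deterministic given $\phi$, whereas here the weights are the \emph{centered} deviations $\partial_{\pi^3}\ell_{it}-\mathbb{E}_\phi[\partial_{\pi^3}\ell_{it}]$, which are correlated with $\Lambda_{it}$ because both are built from the same data. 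A second-moment expansion then requires joint moments of products like $\tilde\xi^{(3)}_{it}\tilde\xi^{(3)}_{is}\Lambda_{it}^2\Lambda_{is}^2$ that the hypotheses in condition (vi) do not directly deliver, and you give no argument for how to control them. The paper's resolution is entirely different and purely deterministic at this stage: write $(v,w)' = \overline{\cal H}^{-1}{\cal S}$, use additive separability so that the $\alpha_i$-component of the contraction equals $(\partial_{\pi^2\alpha_i}\widetilde{\cal L})v_i^2 + 2\sum_t(\partial_{\pi\alpha_i\gamma_t}\widetilde{\cal L})v_iw_t + \sum_t(\partial_{\pi\alpha_i\gamma_t}\widetilde{\cal L})w_t^2$, and bound the resulting vector norms by products such as $\|\partial_{\pi\alpha\gamma'}\widetilde{\cal L}\|\,\|w\|\,\|v\|_\infty$. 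The spectral norms of these centered matrices are ${\cal O}_P(N^{-3/8})$ by the same Lemma~\ref{th:SpectralNorm} argument you already ran, while assumption (vii) gives the mixed vector-norm bounds $\|v\| \le \|\overline{\cal H}^{-1}\|\,\|{\cal S}\| = {\cal O}_P(1)$ and $\|v\|_\infty \le \|\overline{\cal H}^{-1}\|_q\|{\cal S}\|_q = {\cal O}_P\bigl((NT)^{-1/4+1/(2q)}\bigr)$; multiplying, the term is ${\cal O}_P\bigl((NT)^{-1/4-3/16+1/(2q)}\bigr) = o_P\bigl((NT)^{-1/4}\bigr)$ precisely because $q>4$. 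This mixed-norm decomposition, not a probabilistic expansion, is the missing idea; note also that Lemma~\ref{lemma:HessianAdditive}, which you invoke for this step, is not needed here at all.
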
   

\begin{proof}[\bf Proof of Lemma~\ref{lemma:BasicRegularityPanel}]
     The penalty term $(v'\phi)^2$  is quadratic in $\phi$
     and does not depend on $\beta$. This term thus only enters  $\partial_\phi {\cal L}(\beta,\phi) $
     and  $\partial_{\phi \phi'} {\cal L}(\beta,\phi) $, but it does not effect any other partial derivative
     of ${\cal L}(\beta,\phi)$. Furthermore, the contribution of the penalty drops out
     of ${\cal S} = \partial_\phi {\cal L}(\beta^0,\phi^0)$, because
     we impose the normalization $v' \phi^0 =0$. It also drops out of
     $\tilde {\cal H}$, because it contributes the same to ${\cal H}$ and $\overline {\cal H}$.
     We can therefore  ignore the penalty term for the purpose of proving the lemma 
     (but it is necessary to satisfy the assumption
     $\left\| \overline {\cal H}^{-1} \right\|_q
               = {\cal O}_P \left(  1\right)$).

   \# Assumption $(i)$ implies that
    $\|   \partial_{\beta} {\cal L} \| = {\cal O}_P(1)$, 
    $\left\| \partial_{\beta \beta'}  {\cal L} \right\| = {\cal O}_P(\sqrt{NT})$,
    $\left\| \partial_{\beta \beta'} \widetilde {\cal L}  \right\| = o_P( \sqrt{NT} )$,
    and
    $\displaystyle \sup_{\beta \in {\cal B}(r_\beta, \beta^0)}  \sup_{\phi \in {\cal B}_q(r_\phi, \phi^0)} 
        \left\|  \partial_{\beta \beta \beta}  {\cal L}(\beta,\, \phi) \right\|
                 = {\cal O}_P\left( \sqrt{NT} \right)$.
    Note that it does not matter which norms we use here because $\dim \beta$ is fixed.             
         
    \# By Assumption $(ii)$,  
    $\left\| \partial_{\beta \phi'}  {\cal L} \right\|_q =    {\cal O}_P \left( (NT)^{1/(2q)} \right)$
    and
    $\displaystyle \sup_{\beta \in {\cal B}(r_\beta, \beta^0)}  \sup_{\phi \in {\cal B}_q(r_\phi, \phi^0)}  
        \left\|  \partial_{\beta \beta \phi}  {\cal L}(\beta,\, \phi) \right\|_q
                 =   {\cal O}_P\left(  (NT)^{1/(2q)} \right)$.
    For example, 
    $\partial_{\beta_k \alpha_i}  {\cal L}
     = \frac 1 {\sqrt{NT}} \sum_t \partial_{\beta_k \pi} \ell_{it}$
     and therefore
   $$\left\| \partial_{\beta_k \alpha}  {\cal L} \right\|_q
   = 
   \left( \sum_i \left| \frac 1 {\sqrt{NT}} \sum_t \partial_{\beta_k \pi} \ell_{it} \right|^q \right)^{1/q}
   = {\cal O}_P \left( N^{1/q} \right) = {\cal O}_P \left( (NT)^{1/(2q)} \right).$$  
  Analogously,       
  $\left\| \partial_{\beta_k \gamma}  {\cal L} \right\|_q  
    =  {\cal O}_P \left( (NT)^{1/(2q)} \right)$,
  and therefore
  $\left\| \partial_{\beta_k \phi}  {\cal L} \right\|_q
  \leq 
   \left\| \partial_{\beta_k \alpha}  {\cal L} \right\|_q
   + \left\| \partial_{\beta_k \gamma}  {\cal L} \right\|_q 
   =  {\cal O}_P \left( (NT)^{1/(2q)} \right)$. 
  This also implies that
   $\left\| \partial_{\beta \phi'}  {\cal L} \right\|_q =    {\cal O}_P \left( (NT)^{1/(2q)} \right)$ because $\dim \beta$ is fixed.
   
   \# By Assumption $(iii)$, 
   $\left\|  \partial_{\phi \phi \phi}  {\cal L} \right\|_q   =     {\cal O}_P\left(  (NT)^{\epsilon} \right)$,
     $\left\| \partial_{\beta \phi \phi}  {\cal L} \right\|_q   =  {\cal O}_P( (NT)^{\epsilon} )$,
     \\
      $\displaystyle \sup_{\beta \in {\cal B}(r_\beta, \beta^0)}  \sup_{\phi \in {\cal B}_q(r_\phi, \phi^0)} 
        \left\|  \partial_{\beta \beta \phi \phi}  {\cal L}(\beta,\, \phi) \right\|_q
                 =   {\cal O}_P\left(  (NT)^{\epsilon} \right)$,
     $\displaystyle \sup_{\beta \in {\cal B}(r_\beta, \beta^0)}  \sup_{\phi \in {\cal B}_q(r_\phi, \phi^0)}  
        \left\|  \partial_{\beta \phi \phi \phi}  {\cal L}(\beta,\, \phi) \right\|_q
                 =  {\cal O}_P\left(  (NT)^{\epsilon} \right)$,
      and           
     $\displaystyle  \sup_{\beta \in {\cal B}(r_\beta, \beta^0)}  \sup_{\phi \in {\cal B}_q(r_\phi, \phi^0)}  \left\| \partial_{\phi \phi \phi \phi} {\cal L}(\beta,\phi) \right\|_q    =    {\cal O}_P\left(  (NT)^{\epsilon} \right)$.    
     For example, 
     \begin{align*}
         \left\|  \partial_{\phi \phi \phi}  {\cal L} \right\|_q
         &\leq  \left\|  \partial_{\alpha \alpha \alpha}  {\cal L} \right\|_q
                +   \left\|  \partial_{\alpha \alpha \gamma}  {\cal L} \right\|_q
                +   \left\|  \partial_{\alpha \gamma \alpha}  {\cal L} \right\|_q
                +   \left\|  \partial_{\alpha \gamma \gamma}  {\cal L} \right\|_q 
          \nonumber \\ & \qquad \qquad   
                +    \left\|  \partial_{\gamma \alpha \alpha}  {\cal L} \right\|_q
                +   \left\|  \partial_{\gamma \alpha \gamma}  {\cal L} \right\|_q
                +   \left\|  \partial_{\gamma \gamma \alpha}  {\cal L} \right\|_q
                +   \left\|  \partial_{\gamma \gamma \gamma}  {\cal L} \right\|_q 
          \nonumber \\ &
          \leq     \left\|  \partial_{\pi \alpha \alpha}  {\cal L} \right\|_q
                +   \left\|  \partial_{\pi \gamma \gamma}  {\cal L} \right\|_q    
                +  3  \left\|  \partial_{\pi \alpha \gamma}  {\cal L} \right\|_q
                +  3  \left\|  \partial_{\pi \gamma \alpha}  {\cal L} \right\|_q
       \nonumber \\         
         &\leq       \left\|  \partial_{\pi \alpha \alpha}  {\cal L} \right\|_\infty
                +   \left\|  \partial_{\pi \gamma \gamma}  {\cal L} \right\|_\infty 
                + 3    \left\|  \partial_{\pi \alpha \gamma}  {\cal L} \right\|_\infty^{1-1/q} 
                                  \left\|  \partial_{\pi  \gamma \alpha}  {\cal L} \right\|_\infty^{1/q} 
                + 3    \left\|  \partial_{\pi \alpha \gamma}  {\cal L} \right\|_\infty^{1/q} 
                                  \left\|  \partial_{\pi  \gamma \alpha}  {\cal L} \right\|_\infty^{1-1/q} 
       \nonumber \\         
          &=   \frac 1 {\sqrt{NT}} \Bigg[    \max_i  \left|  \sum_t   \partial_{\pi^3} \ell_{it}  \right|  
                 + \max_t \left|  \sum_i   \partial_{\pi^3} \ell_{it}  \right| 
                  + 3  \left( \max_i  \sum_t |  \partial_{\pi^3} \ell_{it}  |   \right)^{1-1/q}
                                 \left(  \max_t \sum_t |  \partial_{\pi^3} \ell_{it}  |  \right)^{1/q}   
              \nonumber \\ & \qquad \qquad     \qquad    
                  + 3  \left( \max_i  \sum_t |  \partial_{\pi^3} \ell_{it}  |   \right)^{1/q}
                                 \left(  \max_t \sum_t |  \partial_{\pi^3} \ell_{it}  |  \right)^{1-1/q}                  
                                  \Bigg]
       \nonumber \\         
          &\leq  \frac 1 {\sqrt{NT}} \Bigg[    \max_i  \sum_t   | \partial_{\pi^3} \ell_{it} | 
                 + \max_t  \sum_i |  \partial_{\pi^3} \ell_{it}  |
                  + 3  \left( \max_i  \sum_t |  \partial_{\pi^3} \ell_{it}  |   \right)^{1-1/q}
                                 \left(  \max_t \sum_t |  \partial_{\pi^3} \ell_{it}  |  \right)^{1/q}   
              \nonumber \\ & \qquad \qquad     \qquad    
                  + 3  \left( \max_i  \sum_t |  \partial_{\pi^3} \ell_{it}  |   \right)^{1/q}
                                 \left(  \max_t \sum_t |  \partial_{\pi^3} \ell_{it}  |  \right)^{1-1/q}                  
                                  \Bigg]
       = {\cal O}_P(N^{2 \epsilon}) = {\cal O}_P( (NT)^{\epsilon} ).
     \end{align*}
     Here, we use Lemma~\ref{lemma:SpectralNormTensors} to bound
     the norms of the 3-tensors in terms of the norms of matrices, e.g.
     $ \left\|  \partial_{\alpha \alpha \gamma}  {\cal L} \right\|_q \leq  \left\|  \partial_{\pi \alpha \gamma}  {\cal L} \right\|_q$, because $\partial_{\alpha_i \alpha_j \gamma_t}  {\cal L} = 0$ if $i \neq j$
     and $\partial_{\alpha_i \alpha_i \gamma_t}  {\cal L} 
      = (NT)^{-1/2} \partial_{\pi \alpha_i \gamma_t}$.\footnote{%
      With a slight abuse of notation
      we write $\partial_{\pi \alpha \gamma}  {\cal L} $ for the $N\times T$
      matrix with entries $(NT)^{-1/2} \partial_{\pi^3}  \ell_{it}    = (NT)^{-1/2} \partial_{\pi^3}  \ell_{it} $,
      and analogously for $\partial_{\pi \alpha \alpha}  {\cal L}$,
      $ \partial_{\pi \gamma \gamma}  {\cal L} $,
      and $ \partial_{\pi \gamma \alpha}  {\cal L}$.}
     Then, we use Lemma~\ref{lemma:matrix-q-norm} to bound $q$-norms in terms
     of $\infty$-norms, and then explicitly expressed those $\infty$-norm in terms of the elements
     of the matrices. Finally, we use that
     $\left|  \sum_i   \partial_{\pi^3} \ell_{it}  \right|  \leq  \sum_i   | \partial_{\pi^3} \ell_{it} | $
     and $\left|  \sum_t   \partial_{\pi^3} \ell_{it}  \right|  \leq  \sum_t   | \partial_{\pi^3} \ell_{it} | $,
     and apply Assumption $(iii)$.
     
     \# By Assumption $(iv)$,
     $\| {\cal S} \|_q = {\cal O}_P \left( (NT)^{-1/4 + 1/(2q)} \right)$
     and  $\left\| \partial_{\beta \phi'}  \widetilde {\cal L} \right\|
                 =   {\cal O}_P \left( 1 \right)$.
       For example,          
       \begin{align*}
             \| {\cal S} \|_q
            &=   \frac{1}{\sqrt{NT}} 
             \left( \sum_i \left|   \sum_t \partial_{\pi} \ell_{it}  \right|^q
            +  \sum_t \left|   \sum_i \partial_{\pi} \ell_{it}  \right|^q
             \right)^{1/q}
         = {\cal O}_P\left( N^{-1/2+1/q} \right) = {\cal O}_P \left( (NT)^{-1/4 + 1/(2q)} \right).
       \end{align*}    
       
   \# By Assumption $(v)$ and $(vi)$, $\| \widetilde {\cal H} \| =  {\cal O}_P\left( (NT)^{-3/16} \right) =  o_P \left( (NT)^{-1/8} \right)$
   and $ \left\| \partial_{\beta \phi \phi}  \widetilde {\cal L} \right\|  =  {\cal O}_P\left( (NT)^{-3/16} \right)
                 =   o_P \left( (NT)^{-1/8} \right)$. 
      We now show it $\| \widetilde {\cal H} \| $. The proof for   
      $ \left\| \partial_{\beta \phi \phi}  \widetilde {\cal L} \right\|$ is analogous.         
 
     By the triangle inequality,           
    \begin{align*}
        \| \widetilde {\cal H} \|
        &= 
      \left\|  \partial_{\phi \phi'} {\cal L} - \mathbb{E}_\phi\left[ \partial_{\phi \phi'} {\cal L} \right] \right\|      
      \leq   \left\|  \partial_{\alpha \alpha'} {\cal L} - \mathbb{E}_\phi\left[ \partial_{\alpha \alpha'} {\cal L} \right] \right\|
          +  \left\|  \partial_{\gamma \gamma'} {\cal L} - \mathbb{E}_\phi\left[ \partial_{\gamma \gamma'} {\cal L} \right] \right\|
               + 2  \left\|  \partial_{\alpha \gamma'} {\cal L} - \mathbb{E}_\phi\left[ \partial_{\alpha \gamma'} {\cal L} \right] \right\|.
    \end{align*}             
    Let $\xi_{it} = \partial_{\pi^2} \ell_{it}
                           -  \mathbb{E}_\phi\left[ \partial_{\pi^2} \ell_{it} \right]$.
    Since $\partial_{\alpha \alpha'} {\cal L}$ is a diagonal matrix with diagonal entries
    $\frac 1 {\sqrt{NT}} \sum_t \xi_{it}$,
    $\left\|  \partial_{\alpha \alpha'} {\cal L} - \mathbb{E}_\phi\left[ \partial_{\alpha \alpha'} {\cal L} \right] \right\|
         =  \max_i  \frac 1 {\sqrt{NT}}
            \sum_t \xi_{it}$, and therefore
    \begin{align*}
         \mathbb{E}_\phi
         \left\|  \partial_{\alpha \alpha'} {\cal L} - \mathbb{E}_\phi\left[ \partial_{\alpha \alpha'} {\cal L} \right] \right\|^8
         &=    \mathbb{E}_\phi\left[    \max_i  \left( \frac 1 {\sqrt{NT}}
            \sum_t \xi_{it} \right)^8 \right]
        \nonumber \\    &
        \leq   \mathbb{E}_\phi\left[    \sum_i  \left( \frac 1 {\sqrt{NT}}
            \sum_t \xi_{it} \right)^8 \right]
         \leq   C N   \left( \frac {1} {\sqrt{N}} \right)^8  = {\cal O}_P(N^{-3}).
    \end{align*}
    Thus, 
     $\left\|  \partial_{\alpha \alpha'} {\cal L} - \mathbb{E}_\phi\left[ \partial_{\alpha \alpha'} {\cal L} \right] \right\|
      = {\cal O}_P(N^{-3/8})$.
     Analogously, 
     $ \left\|  \partial_{\gamma \gamma'} {\cal L} - \mathbb{E}_\phi\left[ \partial_{\gamma \gamma'} {\cal L} \right] \right\|
       =   {\cal O}_P(N^{-3/8})$.
       
    Let $\xi$ be the $N \times T$ matrix with entries  $\xi_{it}$. We now show that $\xi$
    satisfies all the regularity condition of Lemma~\ref{th:SpectralNorm} with $e_{it} = \xi_{it}$.
    Independence across $i$ is assumed. Furthermore, 
      $\bar \sigma^2_i = \frac 1 T \sum_{t=1}^T \mathbb{E}_\phi(\xi_{it}^2 ) \leq C^{1/4}$
      so that $\frac 1 N \sum_{i=1}^N  \left( \bar \sigma^2_i \right)^4={\cal O}_P(1)$.
      For $\Omega_{ts} = \frac 1 N \sum_{i=1}^N \mathbb{E}_\phi( \xi_{it} \xi_{is})$,
    \begin{align*}
         \frac 1  T {\rm Tr}( \Omega^4 ) 
         \leq \| \Omega \|^4 \leq   \| \Omega \|_\infty^4
         = \left( \max_t \sum_s \mathbb{E}_\phi\left[ \xi_{it} \xi_{is}  \right] \right)^4  &\leq C
         = {\cal O}_P(1).
    \end{align*}  
     For
     $\eta_{ij} 
        = \frac 1 {\sqrt{T}} \sum_{t=1}^T\left[ \xi_{it} \xi_{jt} - \mathbb{E}_\phi(\xi_{it} \xi_{jt}) \right]$
     we assume $\mathbb{E}_\phi \eta_{ij}^4 \leq C$, which implies   
     $\frac {1} {N} \sum_{i=1}^N \mathbb{E}_\phi
          \left( \eta_{ii}^4 \right) = {\cal O}_P(1)$ and
          $\frac {1} {N^2} \sum_{i,j=1}^N \mathbb{E}_\phi
          \left( \eta_{ij}^4 \right) = {\cal O}_P(1)$. Then,  Lemma~\ref{th:SpectralNorm} gives
      $\| \xi \| = {\cal O}_P(N^{5/8})$.
        Note that  $\xi = \frac 1 {\sqrt{NT}} \partial_{\alpha \gamma'} {\cal L} - \mathbb{E}_\phi\left[ \partial_{\alpha \gamma'} {\cal L} \right]$ and therefore
   $\left\|  \partial_{\alpha \gamma'} {\cal L} - \mathbb{E}_\phi\left[ \partial_{\alpha \gamma'} {\cal L} \right] \right\|
    =     {\cal O}_P(N^{-3/8})$.
   We conclude that  $ \| \widetilde {\cal H} \| =     {\cal O}_P(N^{-3/8}) = {\cal O}_P\left( (NT)^{-3/16} \right)$.
              
  \# Moreover, for  $\xi_{it} = \partial_{\pi^2} \ell_{it}
                           -  \mathbb{E}_\phi\left[ \partial_{\pi^2} \ell_{it} \right]$
     \begin{align*}
         \mathbb{E}_\phi \|\widetilde {\cal H}\|^{8+\tilde \nu}_\infty &=\mathbb{E}_\phi  \left( \frac 1 {\sqrt{NT}} \max_i \sum_t | \xi_{it} | \right)^{8+\tilde \nu}
         = \mathbb{E}_\phi \max_i \left( \frac 1 {\sqrt{NT}}  \sum_t | \xi_{it} | \right)^{8+\tilde \nu}
        \nonumber \\  &
        \leq \mathbb{E}_\phi \sum_i \left( \frac 1 {\sqrt{NT}}  \sum_t | \xi_{it} | \right)^{8+\tilde \nu}  \leq \mathbb{E}_\phi \sum_i
         \left( \frac {T} {\sqrt{NT}} \right)^{8+\tilde \nu}
         \left( \frac 1 {T}  \sum_t | \xi_{it} |^{8+\tilde \nu} \right) = {\cal O}_P(N) ,
     \end{align*}
     and therefore $ \|\widetilde {\cal H}\|_\infty =  o_P(N^{1/8})$.
  Thus, by Lemma~\ref{lemma:matrix-q-norm} 
     \begin{align*}
          \| \widetilde {\cal H} \|_q  &\leq \| \widetilde {\cal H} \|_2^{ 2/q} \|\widetilde {\cal H}\|_\infty^{1-2/q}    
          = o_P\left(   N^{1/8 [-6/q + (1 - 2/q)] } \right)
           = o_P\left(   N^{-1/q + 1/8} \right) = o_P(1),
     \end{align*}
     where we use that $q \leq 8$.
  
  \# Finally we show that
  $ \left\|  \sum_{g,h=1}^{\dim \phi} 
                \partial_{\phi \phi_g \phi_h} \widetilde {\cal L} \,
                [ \overline {\cal H}^{-1}  {\cal S} ]_g
                [ \overline {\cal H}^{-1}  {\cal S} ]_h
                   \right\|   
                      =   o_P \left( (NT)^{-1/4}  \right) $. First,
    \begin{align*}
        &\left\|  \sum_{g,h=1}^{\dim \phi} 
                \partial_{\phi \phi_g \phi_h} \widetilde {\cal L} \,
                [ \overline {\cal H}^{-1}  {\cal S} ]_g
                [ \overline {\cal H}^{-1}  {\cal S} ]_h
                   \right\|   
       \nonumber \\             & \qquad 
       \leq          \left\|  \sum_{g,h=1}^{\dim \phi} 
                \partial_{\alpha \phi_g \phi_h} \widetilde {\cal L} \,
                [ \overline {\cal H}^{-1}  {\cal S} ]_g
                [ \overline {\cal H}^{-1}  {\cal S} ]_h
                   \right\|   
               +  \left\|  \sum_{g,h=1}^{\dim \phi} 
                \partial_{\gamma \phi_g \phi_h} \widetilde {\cal L} \,
                [ \overline {\cal H}^{-1}  {\cal S} ]_g
                [ \overline {\cal H}^{-1}  {\cal S} ]_h
                   \right\|    .
    \end{align*}   
    Let  $(v, w)' := \overline {\cal H}^{-1}  {\cal S}$, where
    $v$ is a $N$-vector and $w$ is a $T$-vector. 
    We assume $\left\| \overline {\cal H}^{-1} \right\|_q
               = {\cal O}_P \left(  1\right)$.
     By Lemma~\ref{lemma:assA1add} this also implies
       $\left\| \overline {\cal H}^{-1} \right\|
               =   {\cal O}_P \left(  1\right)$  
    and           $\| {\cal S} \| = {\cal O}_P\left(  1\right)$.
    Thus, 
    $\|v\| \leq \left\| \overline {\cal H}^{-1} \right\| \| {\cal S} \| = {\cal O}_P\left(  1\right)$,
    $\|w\| \leq \left\| \overline {\cal H}^{-1} \right\| \| {\cal S} \| = {\cal O}_P\left(  1\right)$,
    $\|v\|_\infty \leq \|v\|_q \leq \left\| \overline {\cal H}^{-1} \right\|_q \| {\cal S} \|_q =  {\cal O}_P \left( (NT)^{-1/4 + 1/(2q)} \right)$,
    $\|w\|_\infty \leq \|w\|_q \leq \left\| \overline {\cal H}^{-1} \right\|_q \| {\cal S} \|_q =  {\cal O}_P \left( (NT)^{-1/4 + 1/(2q)} \right)$. Furthermore, by an analogous argument to the above proof for $\| \widetilde {\cal H} \|,$ Assumption $(v)$ and $(vi)$  imply that
    $ \left\|  \partial_{\pi \alpha \alpha'} \widetilde {\cal L} \right\| =   {\cal O}_P(N^{-3/8})$,
    $  \left\|  \partial_{\pi \alpha \gamma'} \widetilde {\cal L} \right\| =   {\cal O}_P(N^{-3/8})$,
    $ \left\|  \partial_{\pi \gamma \gamma'} \widetilde {\cal L} \right\| =   {\cal O}_P(N^{-3/8})$.
      Then, 
    \begin{align*}
         \sum_{g,h=1}^{\dim \phi} 
                \partial_{\alpha_i \phi_g \phi_h} \widetilde {\cal L} \,
                [ \overline {\cal H}^{-1}  {\cal S} ]_g
                [ \overline {\cal H}^{-1}  {\cal S} ]_h
         & =  \sum_{j,k=1}^N (\partial_{\alpha_i \alpha_j \alpha_k} \widetilde {\cal L}) v_j v_k        
              + 2 \sum_{j=1}^N \sum_{t=1}^T (\partial_{\alpha_i \alpha_j \gamma_t} \widetilde {\cal L}) v_j w_t   
              + \sum_{t,s=1}^T (\partial_{\alpha_i \gamma_t \gamma_s} \widetilde {\cal L}) w_t w_s   
         \nonumber \\
         &=  \sum_{j=1}^N (\partial_{\pi^2 \alpha_i} \widetilde {\cal L}) v_i^2        
              + 2   \sum_{t=1}^T (\partial_{\pi \alpha_i  \gamma_t} \widetilde {\cal L}) v_i w_t   
              + \sum_{t=1}^T (\partial_{\pi \alpha_i \gamma_t} \widetilde {\cal L}) w_t^2  ,
    \end{align*}               
    and therefore
    \begin{align*}
         \left\|  \sum_{g,h=1}^{\dim \phi} 
                \partial_{\alpha \phi_g \phi_h} \widetilde {\cal L} \,
                [ \overline {\cal H}^{-1}  {\cal S} ]_g
                [ \overline {\cal H}^{-1}  {\cal S} ]_h
                   \right\|  
  \leq
            \left\|  \partial_{\pi \alpha \alpha'} \widetilde {\cal L} \right\|
               \| v \| \|v\|_\infty
        + 2  \left\|  \partial_{\pi \alpha \gamma'} \widetilde {\cal L} \right\|
                 \| w \|        \|v\|_\infty
            +   \left\|  \partial_{\pi \alpha \gamma'} \widetilde {\cal L} \right\|
               \| w \| \|w\|_\infty     &
        \nonumber \\           
             \quad \quad \quad \quad = {\cal O}_P( N^{-3/8}  )  {\cal O}_P \left( (NT)^{-1/4 + 1/(2q)} \right)
              =  {\cal O}_P \left( (NT)^{-1/4 - 3/16 + 1/(2q)} \right) = o_P\left( (NT)^{-1/4}  \right) ,&
    \end{align*}
    where we use that $q>4$.
    Analogously,
    $ \left\|  \sum_{g,h=1}^{\dim \phi} 
                \partial_{\gamma \phi_g \phi_h} \widetilde {\cal L} \,
                [ \overline {\cal H}^{-1}  {\cal S} ]_g
                [ \overline {\cal H}^{-1}  {\cal S} ]_h
                   \right\| = o_P\left( (NT)^{-1/4}  \right)$
     and thus also 
             $ \left\|  \sum_{g,h=1}^{\dim \phi} 
                \partial_{\phi \phi_g \phi_h} \widetilde {\cal L} \,
                [ \overline {\cal H}^{-1}  {\cal S} ]_g
                [ \overline {\cal H}^{-1}  {\cal S} ]_h
                   \right\|   
                      =   o_P \left( (NT)^{-1/4}  \right) $.\footnote{%
    Given the structure of this last part of the proof of Lemma~\ref{lemma:BasicRegularityPanel} 
    one might wonder why, instead of
    $ \left\|  \sum_{g,h=1}^{\dim \phi} 
                \partial_{\phi \phi_g \phi_h} \widetilde {\cal L} \,
                [ \overline {\cal H}^{-1}  {\cal S} ]_g
                [ \overline {\cal H}^{-1}  {\cal S} ]_h
                   \right\|   
                      =   o_P \left( (NT)^{-1/4}  \right),$
    we did not directly impose
    $\sum_g \left\|  \partial_{\phi_g \phi \phi'} \widetilde {\cal L}  \right\|=   o_P\left( (NT)^{-1/(2q)} \right)$
    as a  high-level condition in Assumption~\ref{ass:A1}$(vi)$. 
    While this alternative high-level assumption would indeed be more elegant and sufficient
    to derive our results, it would not be satisfied for panel models,
    because it involves bounding 
   $ \sum_i \left\|  \partial_{\alpha_i \gamma \gamma'} \widetilde {\cal L}  \right\|$
 and   $ \sum_t \left\|  \partial_{\gamma_t \alpha \alpha'} \widetilde {\cal L}  \right\|$,
   which was avoided in the proof of Lemma~\ref{lemma:BasicRegularityPanel}.
    }
\end{proof}

\subsection{A Useful Algebraic Result}

Let $\widetilde{\mathbbm{P}}$ be the linear operator
defined in equation \eqref{DefTildeP}, and and let $\mathbbm{P}$ be the related  projection
operator  defined in \eqref{DefP}. 
Lemma \ref{RewriteSumIT} shows how
in the context of panel data models some expressions that 
appear in the general expansion
of Appendix~\ref{app:expansion} can be conveniently  expressed using 
the operator $\widetilde{\mathbbm{P}}$. This lemma is used extensively in the proof
of part $(ii)$ of Theorem~\ref{th:connection}.

\begin{lemma}
   \label{RewriteSumIT}
 Let $A$, $B$ and $C$ be $N \times T$ matrices,
 and let the expected incidental parameter
 Hessian $\overline {\cal H}$ be invertible. Define
 the $N+T$ vectors ${\cal A}$ and ${\cal B}$
 and the $(N+T) \times (N+T)$ matrix ${\cal C}$
 as follows\footnote{Note that
 $A 1_T$ is simply the $N$-vectors with entries $\sum_t A_{it}$
 and $A' 1_N$ is simply the $T$-vector with entries $\sum_i A_{it}$,
 and analogously for $B$ and $C$.}   
 \begin{align*}
     {\cal A} &=  \frac 1 {NT}  { A 1_T
                                \choose A' 1_N } ,
      &
     {\cal B} &=   \frac 1 {NT}  { B 1_T
                                \choose B' 1_N } ,
      &    
     {\cal C} = \frac1 {NT} \left( \begin{array}{@{}c@{}c@{}}
        \diag \left(  C 1_T \right)
       &  C
      \\  C'
       &   \diag \left( C' 1_N \right)
      \end{array} 
      \right)  .
 \end{align*}
Then,
\begin{itemize}
    \item[(i)] \qquad  $\displaystyle {\cal A}' \;
       \overline {\cal H}^{-1} \;
      {\cal B}    
       =      \frac 1 {(NT)^{3/2}} \sum_{i,t}  
       (  \widetilde{\mathbbm{P}}  A )_{it} B_{it} =   \frac 1 {(NT)^{3/2}} \sum_{i,t}  
       (  \widetilde{\mathbbm{P}}  B )_{it} A_{it}$,
       
    \item[(ii)] \qquad  $\displaystyle {\cal A}' \;
       \overline {\cal H}^{-1} \;
      {\cal B}    
       =      \frac 1 {(NT)^{3/2}} \sum_{i,t} 
     \mathbb{E}_\phi( - \partial_{\pi^2} \ell_{it} )  (  \widetilde{\mathbbm{P}}  A )_{it}
       (  \widetilde{\mathbbm{P}}  B )_{it}$,
       
    \item[(iii)]   \qquad
       $\displaystyle {\cal A}' \;
       \overline {\cal H}^{-1} \;
       {\cal C} \;
        \overline {\cal H}^{-1} \;
      {\cal B}    
     =  \frac 1 {(NT)^2} \sum_{i,t}    (  \widetilde{\mathbbm{P}}  A )_{it}
       C_{it}
        (  \widetilde{\mathbbm{P}}  B  )_{it}$.  
\end{itemize}

\end{lemma}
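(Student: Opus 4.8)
The plan is to reduce all three identities to a single algebraic fact: that $\overline{\cal H}^{-1}$ acts on the aggregated vectors ${\cal A}$, ${\cal B}$ by returning exactly the fixed-effect coefficients that define the weighted projections $\widetilde{\mathbbm P} A$, $\widetilde{\mathbbm P} B$. Write $h_{it} = \mathbb{E}_\phi(-\partial_{\pi^2}\ell_{it})$, let $(\alpha^*_A,\gamma^*_A)$ be the minimizers in \eqref{DefP}--\eqref{DefTildeP} defining $\widetilde{\mathbbm P}A = \mathbbm P(A/h)$, so that $(\widetilde{\mathbbm P}A)_{it} = \alpha^*_{A,i}+\gamma^*_{A,t}$, and collect them in the $(N+T)$-vector $\varphi_A = (\alpha^{*\prime}_A,\gamma^{*\prime}_A)'$. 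The central step is to show
\begin{equation*}
\overline{\cal H}^{-1}\,{\cal A} = \tfrac{1}{\sqrt{NT}}\,\varphi_A,
\end{equation*}
and likewise for $B$.

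To establish this I would write out the normal equations of the weighted least-squares problem defining $\widetilde{\mathbbm P}A$. Differentiating $\sum_{i,t} h_{it}(A_{it}/h_{it}-\alpha_i-\gamma_t)^2$ gives $\sum_t A_{it} = \alpha^*_{A,i}\sum_t h_{it} + \sum_t h_{it}\gamma^*_{A,t}$ together with the relation transposed in $t$. In matrix form these read precisely $\sqrt{NT}\,\overline{\cal H}^*\,\varphi_A = (A 1_T,\, A'1_N)' = NT\,{\cal A}$, where $\overline{\cal H}^*$ is the unpenalized expected Hessian. Since $\overline{\cal H}^* v = 0$ and the projection $\widetilde{\mathbbm P}A$ is invariant under adding multiples of $v=(1_N',-1_T')'$ to $\varphi_A$, I may fix the representative with $v'\varphi_A=0$; then the penalty term vanishes when $\overline{\cal H}$ acts on $\varphi_A$, so $\overline{\cal H}\,\varphi_A = \overline{\cal H}^*\varphi_A = \sqrt{NT}\,{\cal A}$, which inverts to the displayed identity. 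This normalization step---ensuring the penalty contributes nothing when $\overline{\cal H}$ hits $\varphi_A$, and checking that the final sums depend only on $\widetilde{\mathbbm P}A$ and are therefore insensitive to the choice of representative---is the only genuine subtlety and is where I expect to spend the most care.

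Given the key identity, part $(i)$ is immediate: $\mathcal{A}'\overline{\cal H}^{-1}{\cal B} = \frac{1}{\sqrt{NT}}{\cal A}'\varphi_B$, and expanding ${\cal A}'\varphi_B = \frac{1}{NT}\sum_{i,t}A_{it}(\alpha^*_{B,i}+\gamma^*_{B,t})$ yields $\frac{1}{(NT)^{3/2}}\sum_{i,t}A_{it}(\widetilde{\mathbbm P}B)_{it}$; the symmetry of $\overline{\cal H}^{-1}$ then gives the companion expression with $A$ and $B$ interchanged. Part $(ii)$ follows from $(i)$ together with the fact that $\mathbbm P$ is the self-adjoint idempotent projection for the inner product $\langle X,Y\rangle_h = \sum_{i,t}h_{it}X_{it}Y_{it}$: writing $\sum_{i,t}h_{it}(\widetilde{\mathbbm P}A)_{it}(\widetilde{\mathbbm P}B)_{it} = \langle \mathbbm P(A/h),\mathbbm P(B/h)\rangle_h = \langle A/h,\mathbbm P(B/h)\rangle_h = \sum_{i,t}A_{it}(\widetilde{\mathbbm P}B)_{it}$ reproduces the right-hand side of $(i)$. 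For part $(iii)$ I would apply the key identity twice to obtain $\mathcal{A}'\overline{\cal H}^{-1}{\cal C}\,\overline{\cal H}^{-1}{\cal B} = \frac{1}{NT}\varphi_A'{\cal C}\,\varphi_B$, then expand the quadratic form block by block; the four terms combine into $\frac{1}{(NT)^2}\sum_{i,t}C_{it}(\alpha^*_{A,i}+\gamma^*_{A,t})(\alpha^*_{B,i}+\gamma^*_{B,t})$, which is exactly $\frac{1}{(NT)^2}\sum_{i,t}(\widetilde{\mathbbm P}A)_{it}C_{it}(\widetilde{\mathbbm P}B)_{it}$, completing the proof.
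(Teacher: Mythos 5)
Your proof is correct and takes essentially the same route as the paper: both arguments rest on identifying $\sqrt{NT}\,\overline{\cal H}^{-1}{\cal A}$ with the $v$-normalized solution of the weighted least-squares normal equations $\overline{\cal H}^{*}\varphi_A=\sqrt{NT}\,{\cal A}$ (the paper asserts this identification directly, while you derive it by noting that $v'\varphi_A=0$ makes the penalty term drop out), after which (i) and (iii) follow by expanding the bilinear forms exactly as you do. The only cosmetic difference is part (ii): you deduce it from (i) via self-adjointness and idempotency of $\mathbbm{P}$ in the $h$-weighted inner product, whereas the paper obtains it as the special case of (iii) with $C_{it}=\mathbb{E}_\phi(-\partial_{\pi^2}\ell_{it})$, i.e.\ ${\cal C}=\overline{\cal H}^{*}/\sqrt{NT}$; both derivations are equally valid.
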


\begin{proof}[\bf{Proof}] 
    Let  $\tilde \alpha^*_{i} +
                       \tilde \gamma^*_{t}  =  (\mathbbm{P} \tilde A)_{it}
                        =  ( \widetilde{\mathbbm{P}}  A  )_{it}$,
    with $\tilde A$ as defined in equation \eqref{DefTildeP}. The first order condition of the
    minimization problem in the definition of $(\mathbbm{P} \tilde A)_{it}$
    can be written as      
    $\frac 1 {\sqrt{NT}} \overline {\cal H}^{*} { \tilde \alpha^* \choose
                      \tilde  \gamma^* }
      =  {\cal A}$.
     One solution to this equation is  
     $ { \tilde \alpha^* \choose
                      \tilde  \gamma^* } = \sqrt{NT} \, \overline {\cal H}^{-1}  {\cal A}$
     (this is the solution that imposes the normalization
     $\sum_i \tilde \alpha^*_{i}  = \sum_t  \tilde  \gamma^*$,
     but this is of no importance in the following).  
Thus,
\begin{align*}
   &  \sqrt{NT} \; {\cal A}' \;
       \overline {\cal H}^{-1} \;
      {\cal B} 
   =     { \tilde \alpha^* \choose
                      \tilde  \gamma^* }'
         {\cal B}   
   =    \frac 1 {NT} \left[ \sum_{i,t}      \tilde \alpha^{*}_i    B_{it}
                   +   \sum_{i,t}      \tilde \gamma^{*}_t      B_{it} \right]
        = \frac 1 {NT} \sum_{i,t}  
      (  \widetilde{\mathbbm{P}}  A )_{it} B_{it}     .      
\end{align*}   
This gives the first equality of Statement $(i)$. 
The second equality of Statement $(i)$ follows by symmetry.  Statement $(ii)$
is a special case of of Statement $(iii)$ with ${\cal C}= \frac 1 {\sqrt{NT}} \overline {\cal H}^*$, so we only need to
prove Statement $(iii)$.

Let  $\alpha^*_{i} + \gamma^*_{t}  =  (\mathbbm{P} \tilde B)_{it}
                        =  ( \widetilde{\mathbbm{P}}  B  )_{it}$,
  where $    \tilde B_{it} =  \frac{ B_{it} } { \mathbb{E}_\phi( - \partial_{\pi^2} \ell_{it} ) }$.       
By an argument analogous to the one given above, we can choose    
${ \alpha^* \choose
                       \gamma^* } = \sqrt{NT}    \,  
    \overline {\cal H}^{-1} {\cal B}$
as one solution to the minimization problem.    
Then,
\begin{align*}
    NT \;   {\cal A}' \;
       \overline {\cal H}^{-1} \;
       {\cal C} \;
        \overline {\cal H}^{-1} \;
      {\cal B}            
      &=  \frac 1 {NT}  \sum_{i,t}  
       \left[ \tilde \alpha_i^{*}
       C_{it} \alpha_i^*
          + \tilde \alpha_i^{*}
       C_{it} \gamma_t^*
         +  \tilde \gamma_t^{*}
       C_{it} \alpha_i^*
          + \tilde \gamma_t^{*}
       C_{it} \gamma_t^*
      \right] 
 \\ & 
    =  \frac 1 {NT}   \sum_{i,t}    (  \widetilde{\mathbbm{P}}  A )_{it}
       C_{it}
        (  \widetilde{\mathbbm{P}}  B  )_{it}.
\end{align*}
\end{proof}

\begin{table}[ht]
\begin{center}
\scalebox{0.8}{\includegraphics[angle=-90]{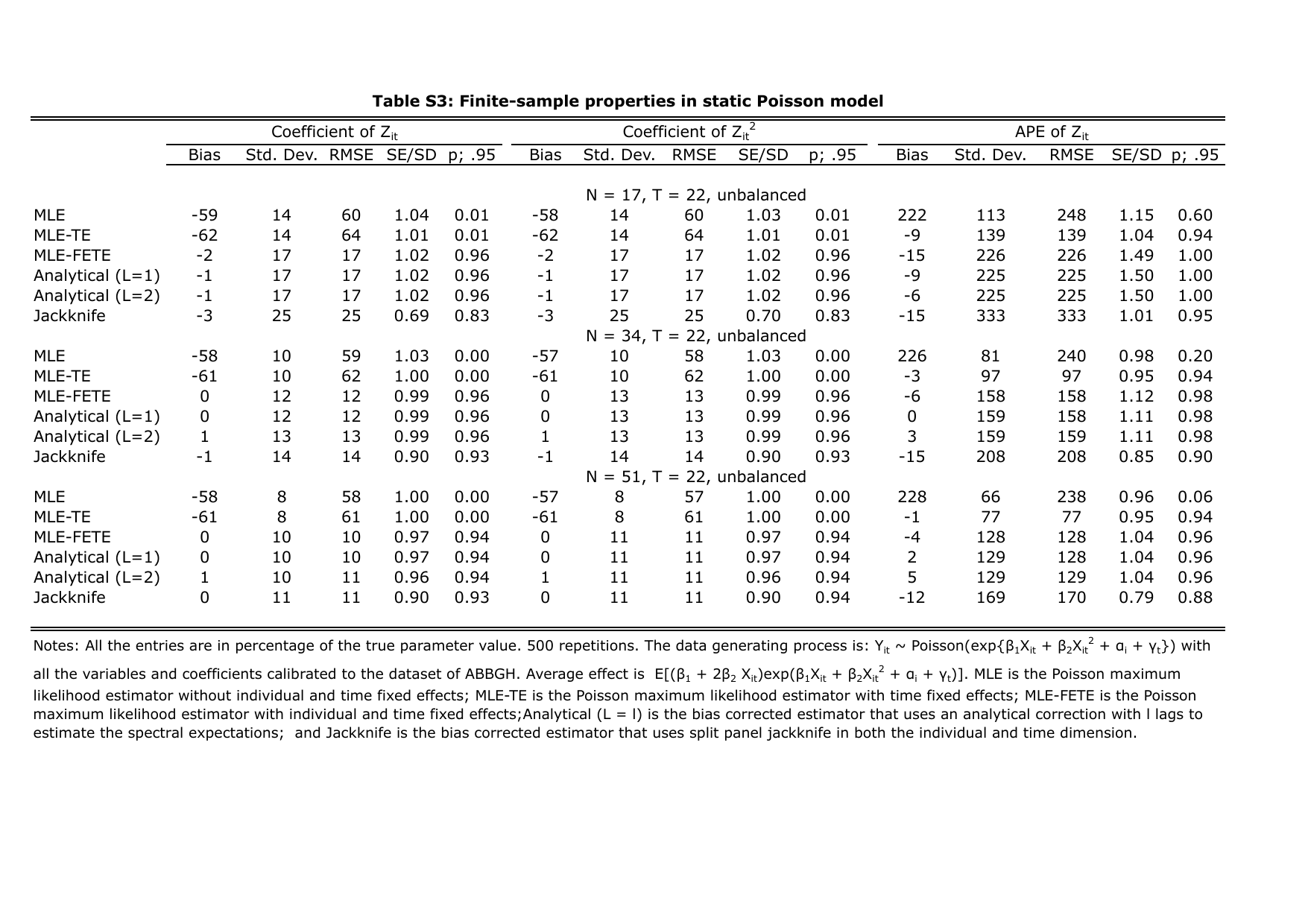}}
\end{center}
\end{table}

\begin{table}[ht]
\begin{center}
\scalebox{0.9}{\includegraphics{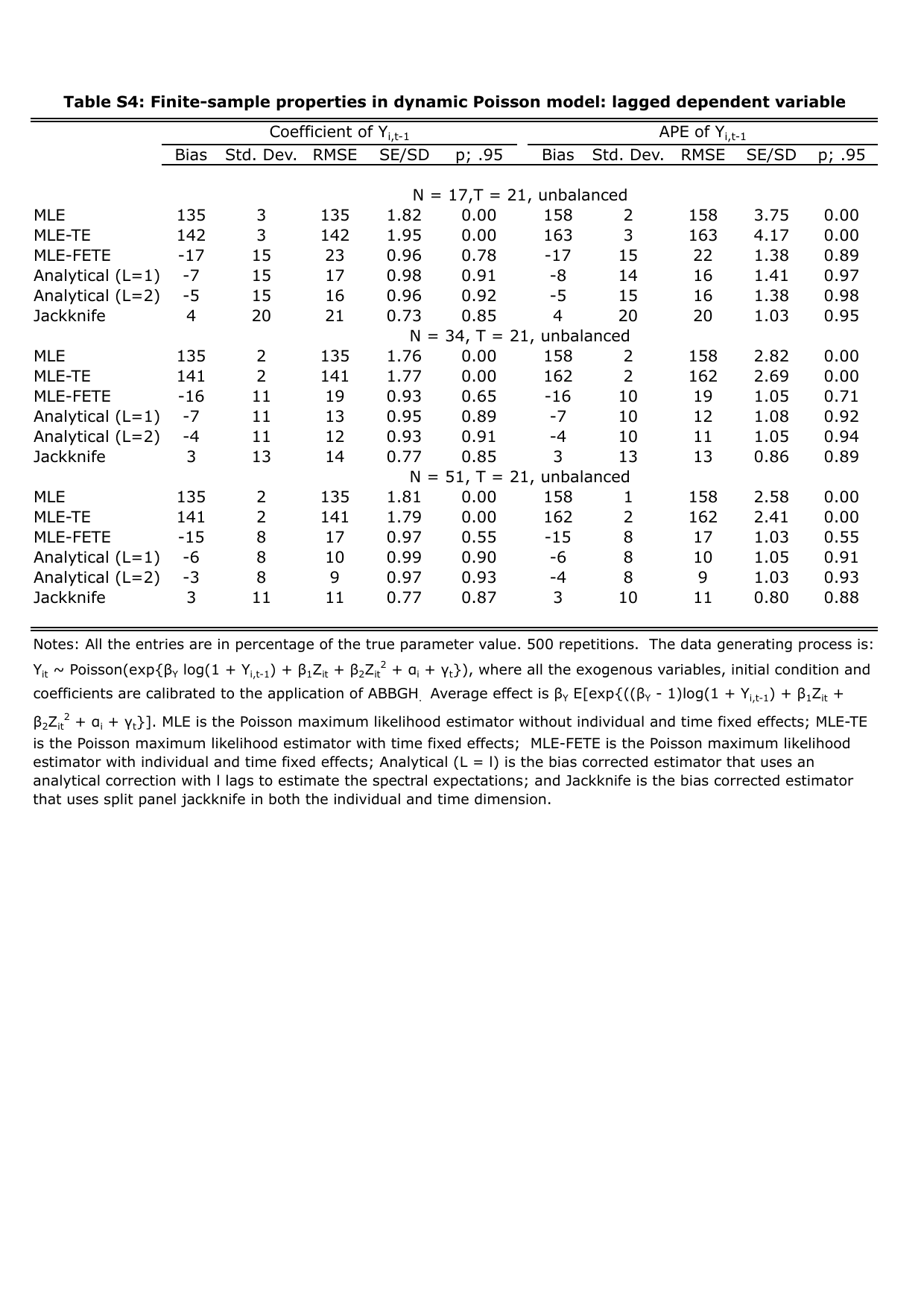}}
\end{center}
\end{table}

\begin{table}[ht]
\begin{center}
\scalebox{0.8}{\includegraphics[angle=-90]{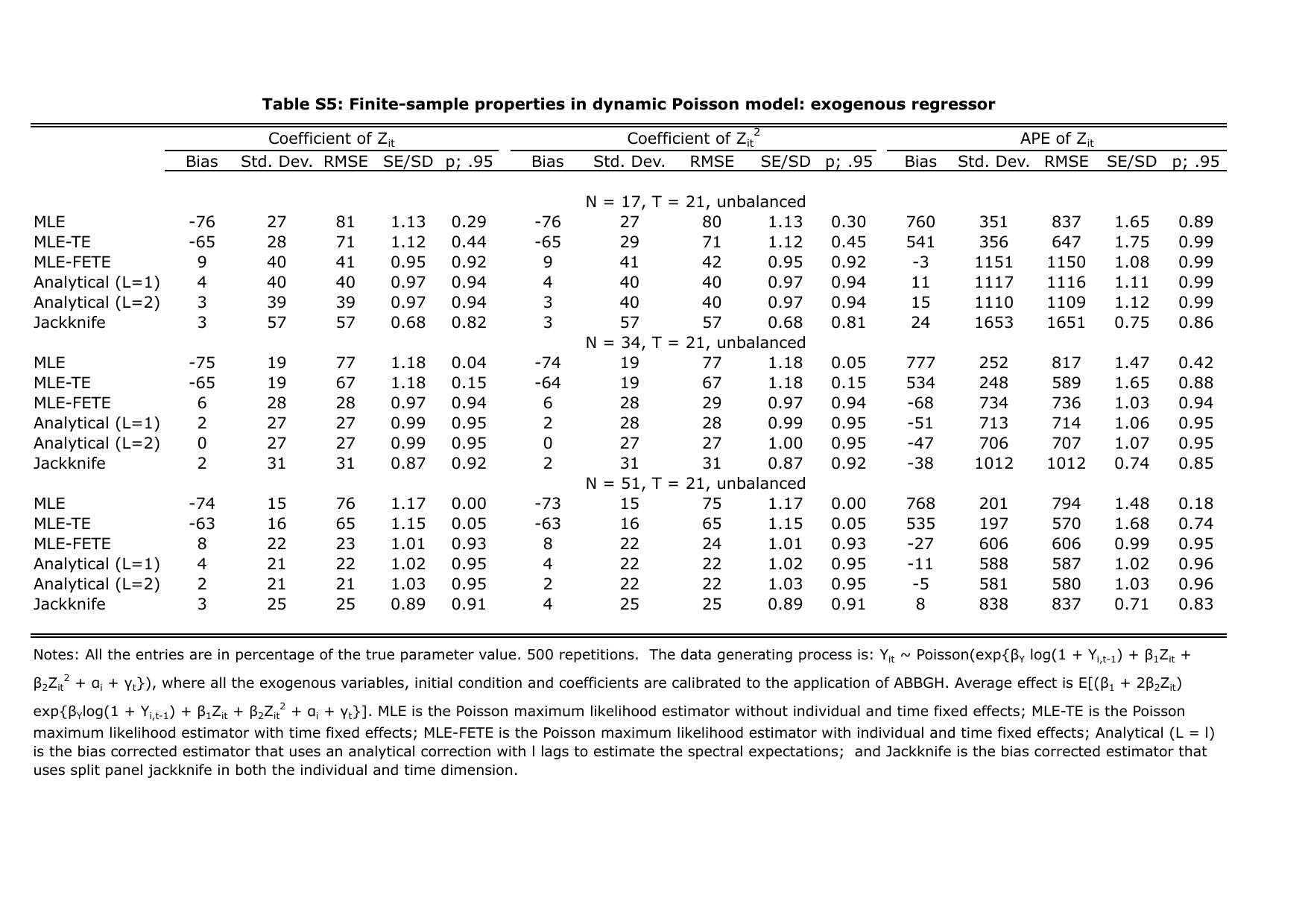}}
\end{center}
\end{table}


\end{document}